\crefname{line}{Line}{Lines}
\let\ifdraft\iffalse
  \newtheorem{remark}[theorem]{Remark}
\begin{document}
\title{The Path to Durable Linearizability}
\ifappendix
  \subtitle{(Extended Version)}
\fi

\author{Emanuele D'Osualdo}
\email{dosualdo@mpi-sws.org}
\orcid{0000-0002-9179-5827}
\affiliation{\institution{MPI-SWS}
  \city{Saarland Informatics Campus}
  \country{Germany}
}

\author{Azalea Raad}
\email{azalea.raad@imperial.ac.uk}
\affiliation{\institution{Imperial College London}
  \country{UK}
}

\author{Viktor Vafeiadis}
\email{viktor@mpi-sws.org}
\affiliation{\institution{MPI-SWS}
  \city{Saarland Informatics Campus}
  \country{Germany}
}

\begin{abstract}

There is an increasing body of literature proposing new and efficient
persistent versions of concurrent data structures ensuring that a consistent
state can be recovered after a power failure or a crash.
Their correctness is typically stated in terms of \emph{durable linearizability} (DL),
which requires that individual library operations appear to be executed
atomically in a sequence consistent with the real-time order and, moreover,
that recovering from a crash return a state corresponding to a prefix of that
sequence.
Sadly, however, there are hardly any formal DL proofs, and those that do exist
cover the correctness of rather simple persistent algorithms on specific
(simplified) persistency models.

In response, we propose a general, powerful, modular, and incremental proof technique that
can be used to guide the development and establish DL.
Our technique is
(1) \emph{general}, in that it is not tied to a specific persistency and/or consistency model,
(2) \emph{powerful}, in that it can handle the most advanced persistent algorithms in the literature,
(3) \emph{modular}, in that it allows the reuse of an existing linearizability argument, and
(4) \emph{incremental}, in that the additional requirements for establishing DL
depend on the complexity of the algorithm to be verified.
We illustrate this technique on various versions of a persistent set,
leading to the link-free set of Zuriel et al.
 \end{abstract}

\ifnoappendix
\begin{CCSXML}
<ccs2012>
   <concept>
       <concept_id>10002944.10011123.10011676</concept_id>
       <concept_desc>General and reference~Verification</concept_desc>
       <concept_significance>100</concept_significance>
       </concept>
   <concept>
       <concept_id>10011007.10010940.10010992.10010998</concept_id>
       <concept_desc>Software and its engineering~Formal methods</concept_desc>
       <concept_significance>300</concept_significance>
       </concept>
   <concept>
       <concept_id>10010520.10010521.10010528.10010536</concept_id>
       <concept_desc>Computer systems organization~Multicore architectures</concept_desc>
       <concept_significance>100</concept_significance>
       </concept>
   <concept>
       <concept_id>10003752.10010124.10010138.10010142</concept_id>
       <concept_desc>Theory of computation~Program verification</concept_desc>
       <concept_significance>500</concept_significance>
       </concept>
   <concept>
       <concept_id>10003752.10003809.10011778</concept_id>
       <concept_desc>Theory of computation~Concurrent algorithms</concept_desc>
       <concept_significance>300</concept_significance>
       </concept>
 </ccs2012>
\end{CCSXML}

\ccsdesc[100]{General and reference~Verification}
\ccsdesc[100]{Computer systems organization~Multicore architectures}
\ccsdesc[300]{Software and its engineering~Formal methods}
\ccsdesc[300]{Theory of computation~Concurrent algorithms}
\ccsdesc[500]{Theory of computation~Program verification}

\keywords{Persistency, Non-Volatile Memory, Px86, Weak Memory Models, Concurrency, Linearizability}
\fi

\maketitle

\section{Introduction}
\label{sec:intro}
\newcommand{\lint}{linearizability\xspace}
\newcommand{\Lint}{Linearizability\xspace}

\emph{Non-volatile memory}~(NVM)~\cite{nvm1,nvm2,nvm3}
is a new kind of memory, which
ensures that its contents survive crashes (\eg due to power failure)
while having performance similar to that of RAM.
As such, it has generated a lot of interest in the systems community,
with an increasing body of work proposing persistent data structures
that can be restored to a consistent state after a crash.

These persistent data structures are typically adaptations of existing
\emph{linearizable} concurrent data structures, and so their
correctness is given in terms of \emph{durable \lint} (DL) \cite{durable-lin},
an extension of \lint \cite{lin} to account for crashes.
Similar to how \lint requires all operations to appear to execute atomically in some legal total order consistent with their real-time execution order,
DL requires the same to also hold for the state recovered after a crash:
it should correspond to some legal execution of a subsequence of the operations before the crash
containing at least all the operations that completed before the crash.

The adaptations required to make a concurrent data structure persistent, however, are far from trivial,
especially when the goal is to achieve optimal performance.
The reason is that hardware implementations do not persist every write to NVM immediately,
which would automatically turn any linearizable data structure to a persistent one,
but rather put them into a buffer to be persisted at some later point.
Writes in such buffers can moreover be persisted out of order, leading to \emph{weak persistency} semantics,
which is another layer of complexity above the \emph{weak memory consistency} semantics of modern CPUs.
To ensure that writes are persisted, programs can issue a special \code{flush(x)}
instruction (\eg \code{CLFLUSH x} on Intel-x86 machines),
which blocks until all pending writes to \p{x} are persisted.
Introducing an appropriate flush instruction after \emph{every} memory access (\ie both reads and writes)
can restore the sane \emph{strict persistency}~\cite{PelleyCW14} model,
where the order in which writes persist (the ``persistency'' or ``non-volatile'' order) agrees
with the order in which they become visible to other processors in the system (the ``volatile'' order).
Doing so, however, incurs a prohibitive cost, and
so programmers of persistent libraries strive to use as few flushes as possible,
which may in turn require adding auxiliary state to the algorithm or redesigning part of it.

A natural question arises.
If persistent data structures are adaptations of concurrent ones,
can we establish their correctness by reusing the correctness proof of their concurrent analogues?
In other words, is it possible to dissect the DL requirements in a way that we can reuse the invariants established by the \lint proof?
The existing literature sadly does not provide an answer to this question.
Most papers (\eg \cite{persistent-queues,SOFT:oopsla})
come with informal proof arguments in English that do not consider the intricacies of NVM semantics,
while in the few papers that come with formal proofs \cite{ptso,parm}
their arguments are highly entangled with a specific memory consistency/persistency model,
and are not able to disentangle the concurrency aspects of the proof from the persistency ones.

In this work, we show that such modularity and reuse are possible.
We present the first formal proof methodology for verifying durable \lint of persistent algorithms.
Our proof technique enjoys the following four properties:
\begin{itemize}
\item It is \emph{modular}, in that it separates out the proof obligations concerning 
\lint from those concerning persistency and from those concerning the recovery code,
thereby allowing the reuse of an existing \lint proof.

\item It is \emph{general} in that it is not tied to a specific model
like epoch persistency \cite{durable-lin}, Px86 \cite{Px86}, or PArm \cite{parm},
but supports arbitrary models with different volatile and non-volatile orders.

\item It is \emph{powerful} in that it can handle the most advanced persistent algorithms in the literature.
We will illustrate this point by applying it to produce the first formal DL proof of
the recent link-free set of \citet{SOFT:oopsla},
an algorithm, whose \lint argument is already challenging,
as it cannot be shown correct using fixed linearization points.

\item It is \emph{incremental} in the sense that the difficulty of the additional requirements for establishing DL
is proportional to the complexity of the algorithm to be verified.
\end{itemize}
Our proof technique is captured by our \masterthm{} in \cref{sec:master-thm},
which presents a detailed methodology for establishing DL following the
``linearize-first'' implementation scheme, where the effects of an operation
are first committed to volatile memory and are later persisted,
\ie the operation first reaches its linearization point and later, what we call, its persistency point.
The key idea is to add to a proof of linearizability a number of general conditions on commutativity of the operations restricting when operations may be linearized and persisted in different orders.

We also have a dual version of our \masterthm{} suitable for the data structure implementations following the ``persist-first'' approach,
\eg SOFT~\cite{SOFT:oopsla},
where persistency points precede linearization.

\paragraph{Outline}
We start, in \cref{sec:overview}, with an informal overview of our approach.
In \cref{sec:devel}, we develop a memory-model-agnostic definition of durable \lint, which is used to specify persistent libraries.
In \cref{sec:master-thm}, we present our proof technique culminating in our \masterthm{}.
Then, in \cref{sec:eval} as an extended case study,
we obtain the first formal proof of the link-free set
as an application of the \masterthm.
We conclude with discussion of related work.
All the omitted definitions and proofs, including the full verification
of the link-free set can be found in
\ifappendix Appendix.
  In \cref{sec:appendix-prelim} we give all the details on the semantic model
  and the Px86 memory model.
  \Cref{sec:full-proof-technique} presents and proves
  the \masterthm\ and \pfmasterthm{} in full generality.
  \Cref{sec:verif-linkfree} contains the full proof of our main case study.
\else the extended version of this paper~\cite{fullversion}.
\fi 

 \ifappendix\pagebreak\fi
\section{Overview}
\label{sec:overview}

Our goal is to produce a proof of durable linearizability,
starting from a proof of standard linearizability.
To make the discussion concrete, we will use
a standard set implementation based on \citet{Harris}
as a running example.

\begin{figure}
  \small
\begin{tabular}{l@{\hspace{3em}}l}
  \begin{sourcecode}[gobble=2,lineskip=-2pt]
  record Node:
    key: $\Nat \dunion \set{+\infty,-\infty}$
    nxt: $\Bool \times \Addr_\nullptr$

  def find(h, k):
    p = h
    <|_,c|> = p.nxt
    while(1):
      if c.nxt == <|0,_|>: @\label{lp:harris:find}@
        if c.key >= k:
          if p.nxt == <|0,_|>:
            return <|p,c|>
          c = h
        p = c
      else:
        trim(p, c)
      <|_,c|> = c.nxt

  def trim(p, c):
    #flush(c)#
    <|_,s|> = c.nxt
    CAS(p.nxt, <|0,c|>, <|0,s|>)
    #flush(p)#
  \end{sourcecode}
&
  \begin{sourcecode}[gobble=2,firstnumber=last,lineskip=-2pt]
  def insert(h, k):
    while(1):
      <|p,c|> = find(h, k) @\label{lp:harris:insno}@
      if c.key == k:
        #flush(c); flush(c.orig)#
        return false
      n = alloc(Node)
      n.key = k
      n.nxt = <|0,c|>
      #n.orig = p; flush(n); flush(p.orig)#
      if CAS(p.nxt, <|0,c|>, <|0,n|>): @\label{lp:harris:insok}@
        #flush(p)#
        return true

  def delete(h, k):
    while(1):
      <|p,c|> = find(h, k) @\label{lp:harris:delno}@
      if c.key != k:
        return false
      <|b,n|> = c.nxt
      if b == 0:
        #flush(c.orig)# @\label{line:harris:del-flush-orig}@
        if CAS(c.nxt, <|0,n|>, <|1,n|>): @\label{lp:harris:delok}@
          trim(p, c)
          return true
  \end{sourcecode}
\end{tabular}
   \caption{
    A list-based set implementation (in black).
    The addition of the flushes (in \textcolor{extracode}{red}) makes it durable.
  }
  \label{fig:harris-set-code}
\end{figure}

The basic algorithm, shown in black in \cref{fig:harris-set-code},
is designed to implement a finite set of numeric keys in volatile memory.
A set~$S \subs \Key$ is represented in memory as a singly linked list of nodes.
Each node has a \p{key} field storing an element of
$ \Key \dunion \set{+\infty, -\infty} $.
Two sentinel head and tail nodes (with keys $-\infty$ and $+\infty$ resp.)
are always present.
The linked list is ordered in strict increasing order.
Each node can be marked or unmarked;
only unmarked nodes reachable from the head represent elements of~$S$,
and the marked nodes are considered deleted
(they can be lazily unlinked from the list).
The marking is stored as the least-significant bit of the \p{nxt} field,
the pointer to the next node.
In our pseudo-code we represent the \p{nxt} field explicitly
as a pair~$\tup{b, p}$ where $b \in \set{0,1}$ is the marking bit (0 for unmarked, 1 for marked as deleted) and~$p$ is the address of the next node.

The set operations are $\p{insert}(h, k)$ and $\p{delete}(h, k)$
where~$h$ is the address of the head node
(fixed throughout the lifetime of the set)
and $k \in \Key$ is the key to be inserted/deleted.
A successful insert returns \p{true};
if the~$k$ was found to be already in the set, the operation returns \p{false}.
Similarly, the return value of a delete indicates whether
the operation was successful.

\paragraph{Linearizability}
Linearizability~\cite{lin} ensures that, from the perspective of a client of the library,
each call to a library function appears as a single abstract event;
furthermore, these library events are ordered by a total order~$\lin$
which satisfies the abstract semantics of the library (\eg pops and pushes match), and respects the execution (or ``real-time'') order of calls.

A common way to prove linearizability, under sequential consistency (SC)~\cite{sc},
is through identifying the
\emph{linearization point} of each function call, namely the concrete event (\ie memory access)
in the function implementation that represents the moment
when the function's effects become observable to other operations.
Such a proof starts by identifying:
\begin{itemize}
\item A set of \emph{abstract states}, $\AbsState$,
  representing the data-type abstractly presented to the client.
Associated with the states is a set of allowed transitions
  for each operation.
\item A \emph{(volatile) representation function},
  $ \volatile \from \AbsState \to \powerset(\Mem) $,
  formalising how the specific implementation represents an abstract state in memory.
\item For each execution with events~$E$,
  an injective partial function, $\lp \from \CallId \pto E$,
  that identifies the linearization point of each call.
  The function is partial because there might be pending calls that have not reached their linearization points yet.
\end{itemize}

In our example, the abstract states are finite sets of numeric keys $ S \in \finpow(\Key) $.
The transition system on abstract states
asserts \eg that a successful insert of~$k$ is only allowed on a set not containing~$k$ and leads to a state where~$k$ is added to the set.
The representation function 
\eg constrains the memory representing a set~$S$ to be such that
all and only the keys in~$S \dunion \set{+\infty, -\infty}$ are stored in unmarked nodes in the ordered linked list reachable from the head.

The linearization points for the Harris list are as follows:
\begin{itemize}
  \item The successful \p{CAS} at Line~\ref{lp:harris:insok} linearizes
        a successful insert.
  \item A failed insert linearizes at Line~\ref{lp:harris:find}
        during the call to \p{find} of Line~\ref{lp:harris:insno}.
  \item The successful \p{CAS} at Line~\ref{lp:harris:delok} linearizes
        a successful delete.
  \item A failed delete linearizes at Line~\ref{lp:harris:find}
        during the call to \p{find} of Line~\ref{lp:harris:delno}.
\end{itemize}

Given $\volatile$ and $\lp$, linearizability can be then reduced
to an induction over the interleaving sequence of events $e_0 \dots e_n$
of an arbitrary execution.
Let $M_i$ be the memory contents before $e_i$ is executed.
Assuming $M_i \in \volatile(q)$ for some abstract state~$q$,
one must prove that:
\begin{itemize}
  \item if $e_i$ is the linearization point of an operation $\var{op}$,
        then $ M_{i+1} \in \volatile(q') $ for some $q'$
        such that $(q,q')$ is a valid transition for $\var{op}$;
  \item otherwise, the concrete step~$e_i$ preserves the abstract state,
        \ie $ M_{i+1} \in \volatile(q) $.
\end{itemize}

We refer to the proof scheme above as ``induction over execution sequences''.

\paragraph{Linearizability in weak memory models}
In declarative presentations of weak-memory models,
an execution is represented as a graph of events,
related through a number of relations (\eg $\po$, the ``program order'' in each thread) witnessing the execution's consistency.
In such models the notion of ``execution order'' is in fact weaker than
in SC.
In particular, the program order $\po$ on memory accesses
does not necessarily agree with the order in which the accesses take global effect.

To recover a global ``execution order'' in a proof of linearizability,
we can ask the prover to provide, in addition to~$\volatile$ and~$\lp$,
a strict ``volatile order''~$\vo$.
We can then use~$\vo$ to order the execution sequences in the proof of linearizability.
More precisely, we would consider
execution sequences~$e_0 \dots e_n$ that respect~$\vo$.
Any such sequence induces a memory~$\mem{e_0 \dots e_n} \in \Mem \is \Loc \pto \V$
which assigns to each location~$x$ the last value written to it in $e_0 \dots e_n$.
The linearizability proof would then be performed by induction
over \vo-respecting execution sequences~$e_0 \dots e_n$,
defining ${M_i = \mem{e_0 \dots e_{i-1}}}$.

An essential component of the traditional linearizability definition
is that linearization agrees with $\po$ between calls.
To obtain this we can require that~$\po$
between linearization points be preserved by $\vo$,
\ie $\restr{\po}{\LinPt} \subs \vo$ with
where $ \LinPt \is \set{ \lp(c) | c \in \dom(\lp) } $.

\paragraph{Non-volatile memory}

Non-volatile memory~(NVM) introduces another level of complication.
Writes to NVM persist (survive crashes) but not necessarily in the order they were executed.
For example, on Px86 (describing the Intel-x86 persistency model)~\cite{Px86},
if locations~$x$ and~$y$ are not stored in the same cache line,
two sequential writes to $x$ and~$y$ may persist in any order.
To ensure the write on~$x$ persists before that on~$y$,
one must issue a \code{flush} on~$x$ before writing to~$y$.

Px86 thus introduces the strict, total ``non-volatile order'', $\nvo$, on durable
events (\ie writes, updates or flushes)
describing in which order these events persist, 
and a set~$\Persisted$ of those events that have persisted before the crash.
The persisted events~$\Persisted$ have to be consistent with~$\nvo$,
\ie if $(e,e') \in \nvo$ and $e'\in \Persisted$ then $e \in \Persisted$.

In this setting, linearizability is not an adequate correctness criterion as
it does not account for crashes.
A linearizable data-structure without any modifications
would not be correct under Px86:
a crash might leave the persisted memory in an inconsistent state.
Specifically, if no flush is issued, there is no guarantee that any change at all
is persisted even for operations that already returned to the client.
Moreover, even if flushes are issued before returning, pending calls
might have already executed their linearization points,
making the update observable to other threads,
but their updates might not reach the NVM before the crash.
In the Harris list, for example, a key~$k$ might be inserted in the set, and observed by other concurrent inserts, but after a crash one might find that the node carrying~$k$ is not reachable from the head,
or that it is, but has an uninitialized \p{key} field.
That is, it is possible for a crash to invalidate the invariants encoded in~$\volatile$.

\begin{figure}
  \begin{tikzpicture}[
  exec chain,
  font=\small,
event size=6pt,
  event sep=.3pt,
  era sep=1.5cm,
]

\def\CHAINLEN{3}
\def\LOCNUM{4}
\foreach \LINES [count=\ERA from 1] in {
  {4/4/3, 4/4/0, 4/4/1, 4/4/4},
  {4/4/3, 4/3/3, 4/2/4, 4/3/2},
  {3/2/0, 3/4/0, 3/3/0, 3/4/0}} {
  \begin{scope}[start chain=era going below,xshift=(13*\EVSIZE+\ERASEP)*\ERA]

    \foreach \R/\P/\N [count=\LOC from 1] in \LINES {
      \node[event,init](init-\ERA-\LOC) {};
      \begin{scope}[start branch=loc going right]
      \ifnum\R>0
      \foreach \E in {1,...,\R} {
        \node[event,rec](rec-\ERA-\E-\LOC){};
      }
      \fi
      \ifnum\P>0
      \foreach \E in {1,...,\P} {
        \node[event,pers]{};
      }
      \fi
      \node[event,final](fin-\ERA-\LOC){};
      \ifnum\N>0
        \foreach \E in {1,...,\N} {
          \node[event,lost]{};
        }
      \fi
      \end{scope}
    }
    \ifnum\ERA<\CHAINLEN
\path
    (16.5*\EVSIZE,0)
      node[blindred,circle,inner sep=1pt,above] {$\lightning$}
      edge[densely dotted,draw=blindred,semithick] (16.5*\EVSIZE,-\LOCNUM*\EVSIZE)
    ;
\fi
  \end{scope}
}

\node[around loc=fin-1,ACMRed,red abs state={\durable(q_1)}](final-1){};
\node[around loc=init-2,ACMRed,red abs state={\durable(q_1)}](initial-2){};
\draw[shorten > = 1pt,ACMRed,-latex,clear=5pt]
    (final-1) to[bend left=10] (initial-2);

\node[around loc=init-1,abs state={\durable(q_0)}]{};
\node[around loc=rec-1-4,low abs state={\durable(q_0)\inters\volatile(q_0)}]{};

\node[around loc=rec-2-4,low abs state={\durable(q_1)\inters\volatile(q_1)}]{};

\foreach \ERA in {1,...,\CHAINLEN} {
  \node[anchor=south west,font=\footnotesize,inner xsep=0pt] at (init-\ERA-1.north west) {Era~\ERA};
}

\end{tikzpicture}
   \vspace*{-1ex}
  \caption{
    An execution chain with eras separated by crashes ($\,\color{blindred}\lightning\,$).
    For each era we draw the initial memory~ (\protect\evbox{init}),
    the recovery events~(\protect\evbox{rec})
    the persisted writes~(\protect\evbox{pers}, \protect\evbox{final}),
    and the writes which were executed but had not persisted yet when the crash happened (\protect\evbox{lost}).
    In each era, the last persisted writes to each location (\protect\evbox{final})
    provide the initial memory of the next era.
    In the last era every write has persisted.
}
  \label{fig:chain}
\end{figure}

To account for crashes, instead of single executions,
formal persistency models consider \emph{execution chains}:
sequences of executions, where each execution, called an \emph{era},
is abruptly terminated by a crash (with the exception of the last one).
\Cref{fig:chain} shows an example chain:
the shaded area denotes the set of events that have persisted before the crash.
The frontier of the persisted events represents the \nvo-latest
persisted writes to each location: this defines the initial memory of
the next era.
At the beginning of each era, a data-structure-specific recovery routine is
run sequentially before resuming normal execution.

To ensure correctness in the presence of crashes, 
\emph{durable linearizability} (DL)~\cite{durable-lin} requires that:
\begin{enumerate*}
  \item each execution era be linearizable;
  \item the effects of every completed (returned) call be persisted before the next crash;\footnote{
    The requirement that completed calls must have persisted is
    only achievable if the \p{flush} primitive is synchronous,
    \ie blocks until it takes effect.
    When flushes are asynchronous, the correctness criterion
    can be weakened to \emph{buffered} durable linearizability
    that removes the constraint on completed calls.
    In this paper we will only consider synchronous flushes and
    the unbuffered version of durable linearizability.}
  \label{dlitem:completed}
  \item concatenating the linearizations of all eras forms a valid linearization.
\end{enumerate*}
To achieve DL, the programmer has two main tools:
flushes, and the recovery procedure run after each crash.
Let us first focus on schemes that do not require recovery.

A brute-force way to ensure DL is by issuing a flush after each memory access. 
Specifically,
flushing after a write~$w$ ensures that~$w$ is persisted before continuing;
flushing after a read~$r$ ensures that the write observed by~$r$
is persisted before continuing.

When proving DL, this scheme ensures that $\vo$ includes $\nvo$.
As such, since persisted memory is simply $\mem{\vec{e}}$
where $ \vec{e} $ is the \nvo-respecting enumeration of the persisted events $P$
(written~$\mem{\restr{\nvo}{P}}$),
proving linearizability using $\lp$ and $\volatile$
proves that the
original volatile invariants
are now maintained in persistent memory.
When a crash occurs, the persisted memory belongs to~$\volatile(q)$
for some legal~$q$ and the post-crash execution can continue without recovery.

While this aggressive flushing strategy allows for a straightforward adaptation of linearizability to DL,
it yields poor performance.
For this reason, libraries such as FliT~\cite{flit} and Mirror~\cite{mirror} employ alternatives
that more efficiently implement a strict persistency abstraction on top of weaker models.
Conceptually, these libraries improve performance by avoiding redundant flushes on the same write.
Is it possible to do better?
Indeed, \citet{CohenGZ18} proved that any DL library can be implemented using one flush per operation,
which is far lower than what using Mirror or FliT can produce.
To approach this optimum, it is necessary to optimise flushes manually, and genuinely relax the order of persistency on writes.

\paragraph{A first optimisation}
An analysis of the Harris list example reveals that flushing during \code{find}
is not strictly necessary:
when traversing keys $k_1,\dots,k_n$ on the way to finding $k$,
the presence (or absence) of $k_i$
in the abstract set does not influence whether inserting/deleting~$k$ is legal.
Therefore, observing a key~$k_i \ne k$ in the set during traversal
does not require the insertion/marking of~$k_i$ to be persisted:
the result of inserting~$k$ does not reveal information about the presence or absence of~$k_i$.
This is the key insight of NVTraverse~\cite{nvtraverse}, proposing a flushing scheme for tree-based data structures with traverse-and-update operations, with no flushes during traversal.

The program in \cref{fig:harris-set-code}
with the inclusion of the commands in \textcolor{extracode}{red},
instantiates the scheme for the Harris list as follows.
Consider insertions:
a successful insert must first persist (flush) the new node~$n$.
  The second obvious flush needed is the one of \p{p} after the successful
  \p{CAS} which inserted~$n$.
  These flushes alone, however, are insufficient:
  when the \p{CAS} on \p{p} swings the pointer,
  we cannot be sure \p{p} is reachable from the head \emph{in persistent memory}.
  There could be a long list of pending inserts of keys~$k_1\dots k_n$ which all executed their linearization points but have not reached the final flush;
  when this is the case,
  a concurrent insert of~$k_n$ can traverse the $k_1\dots k_n$ nodes
  without flushing them,
  persist $k_n$ and $k_{n {-} 1}$, and
  report to the client that $k_n$ is already in the set.
  If a crash happens then, the node storing~$k_n$ would not be reachable from the head.
The solution is to ensure \p{p} is reachable from the head in persistent memory by flushing the node that initially made \p{p} reachable, \ie its \emph{origin}. We thus record the origin of each node in its \p{orig} field.
The other flushes are issued with analogous motivations.

This more sophisticated scheme cannot be proven by simply adapting the linearizability argument.
In fact, the order of persistency is relaxed,
and as a consequence $\nvo \not\subs \vo$,
contrary to the brute-force approach.
A synthetic example that shows this basic pattern is reproduced in \cref{fig:vo-vs-nvo}(a),
comprising two concurrent operations $\var{op_1}$ (on the left) and $\var{op_2}$ (on the right).
The two \p{CAS} instructions represent linearization points of $\var{op_1}$ and $\var{op_2}$;
\p{x} and \p{y} are distinct locations storing~$0$ initially.
\Cref{fig:vo-vs-nvo}(b) shows a possible execution graph generated by the program under Px86: every memory access leads to a node, labelled with the access' effect.
The linearization of $\var{op_2}$ is observed via a read
by $\var{op_1}$ as signified by the ``read-from'' $\rf$ edge,
but the linearization point on \p{y} does not
depend on the value read from \p{x}.
This observation of the \p{x} value, however, implies
that the \p{CAS} on \p{x} comes \vo-before the \p{CAS} on \p{y}.
The brute-force flushing strategy would mandate the issuing of the red flush (in $\var{op_2}$)
after the read of \p{x}, thus implying that $\vo$ on the \p{CAS}es
is the order in which they will be persisted (thus $\vo$ and $\nvo$ agree on them).
The read of \p{x} in $\var{op_2}$ represents a read during a traversal, and the corresponding flush would be optimized away by the NVTraverse-style optimization.
Without the red flush, the \p{CAS}es can be persisted in either order,
and thus it is possible for~$\nvo$ and $\vo$ to order them differently.
This is shown in the execution graph of \cref{fig:vo-vs-nvo}(c) generated
by the program with the red flush removed.

The overall difference induced by the optimization can be observed
if a crash happens after both \p{CAS}es have executed, and the \p{CAS} on~\p{y} has persisted.
In the unoptimized version, the \p{CAS} on \p{x} would be persisted too,
yielding $\p{x}=\p{y}=1$.
In the optimized version, we might see
$\p{x}=0$ and $\p{y}=1$ after the crash.
The question is: when is such optimisation sound?
We propose to look at the question by identifying the discrepancies
between the linearizations constructed by the (volatile) linearizability argument
using~$\volatile$, $\lp$ and $\vo$
and a DL argument built from~$\volatile$, $\lp$ and~$\nvo$.
These discrepancies can be grouped in two categories:
\begin{enumerate*}
\item two linearization points might be ordered one way by $\vo$ and the other way by $\nvo$; and
\item operations whose linearization points are reads are not meaningfully ordered by $\nvo$.
\end{enumerate*}
The proof strategy we propose is to first prove (volatile) linearizability,
and then prove some properties that entail that the legality of the $\nvo$-induced order on linearization points follows from the legality of the $\vo$-induced one.
\begin{figure}
  \raisebox{-2.5em}{(a)}
  $
    \begin{parall}\begin{threadcode}[gobble=6]
      CAS(x,0,1)
      flush(x)
      $$
      $$
    \end{threadcode}
    \PARALLEL
    \begin{threadcode}[gobble=6]
      a = *x
      #flush(x)#
      CAS(y,0,1)
      flush(y)
    \end{threadcode}
    \end{parall}
  $
  \qquad
  \raisebox{-2.5em}{(b)}
  \begin{tikzpicture}[exec graph, font=\small, baseline=(FLx),
]
  { [thread=1]
  \node [linpt] (Ux) {$\U*{x}{0}{1}$};
  \node [event] {$\FL{x}$};
  }
  { [thread=2]
  \node [right=1cm of Ux, event] (Rx) {$\R*{x}{1}$};
  \node [event,extracode] (FLx) {$\FL{x}$};
  \node [linpt] (Uy) {$\U*{y}{0}{1}$};
}
  \draw
    (Ux)
      edge[rfe] node[above,font=\tiny,inner sep=1pt]{$\rf$} (Rx)
      edge[nvo] node[above,font=\tiny,inner sep=1pt,sloped]{$\nvo$} (FLx)
      edge[vo,bend left=5pt] node[below,dep lbl]{$\vo$} (Uy.north west)
    (FLx)
      edge[nvo,transform canvas={xshift=-1ex}] node[below,dep lbl]{$\nvo$} (Uy)
  ;
\end{tikzpicture}%   \qquad
  \raisebox{-2.5em}{(c)}
  \begin{tikzpicture}[exec graph, font=\small, baseline=(FLx)]
  { [thread=1]
  \node [linpt] (Ux) {$\U*{x}{0}{1}$};
  \node [event] {$\FL{x}$};
  }
  { [thread=2]
  \node [right=1cm of Ux, event] (Rx) {$\R*{x}{1}$};
  \node [linpt] (Uy) {$\U*{y}{0}{1}$};
}
  \draw
    (Ux)
      edge[rfe] node[above,dep lbl=1pt]{$\rf$} (Rx)
      edge[vo,bend left=5pt] node[above,dep lbl=1pt]{$\vo$} (Uy)
    (Uy)
      edge[nvo,bend left=5pt] node[below,dep lbl=1pt]{$\nvo$} (Ux)
;
\end{tikzpicture}   \caption{Optimizing flushes might introduce disagreement between linearization and persistency orders.
    Figure~(a) shows two parallel operations with each \p{CAS} acting as a linearization point.
    Removing the flush in \textcolor{extracode}{red} allows~$y$ to be persisted before~$x$ is.
    Figures~(b) and (c) explain why in terms of an execution graphs generated by the program.
    Each memory access is a node labelled with
    \evtag{U} (successful \p{CAS}),
    \evtag{R} (read), or
    \evtag{FL} (flush).
    The unlabelled arrows indicate program order, $\rf$ is the ``reads-from'' order.
    Figure~(b) is generated by the program with the red flush, which forces $\nvo$ and $\vo$ to order the \p{CAS}es in the same way.
    Figure~(c) is a possible execution if the red flush is removed.
  }
  \label{fig:vo-vs-nvo}
\end{figure}
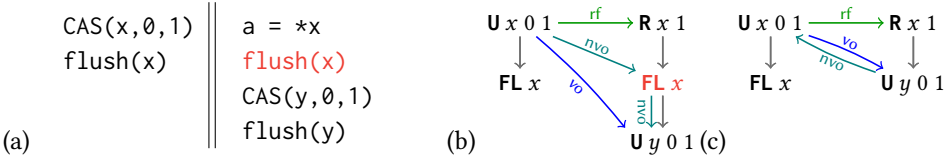
Concretely, for each execution we ask to prove:
\begin{enumerate}
  \item Linearizability of the execution using $\volatile$, $\lp$ and $\vo$.
    \label{cond:informal-lin-vo}
  \item If two linearization points $\lp(c_1)$ and $\lp(c_2)$ are such that
    $ {\lp(c_1) \nvo-> \lp(c_2)} $
    then \\ either
    $ {\lp(c_1) \vo-> \lp(c_2)} $, or
    the calls~$c_1$ and~$c_2$ \emph{abstractly} commute.
    \label{cond:informal-commut}
\end{enumerate}
By ``abstractly commute'' here we mean that,
according to the library specification,
$c_1$ followed by $c_2$ produces the same abstract state
as $c_2$ followed by $c_1$.
What these two conditions imply is that the sequence
of linearization points in $\nvo$ is legal;
therefore any prefix of it is legal.
What is left to prove is that the prefix that persisted (\ie the prefix in~$P$)
produces some abstract state~$q$ which is the one encoded in persisted memory:
\begin{enumerate}[resume]
\item $\mem{\restr{\nvo}{P}} \in \volatile(q)$.
\label{cond:informal-persisted-mem}
\end{enumerate}

For the Harris list, this strategy allows us to reuse the vanilla linearization argument
(flushes and the \p{orig} field have no bearing on that proof).
The proof of condition (\ref{cond:informal-commut}) is done by contraposition:
one shows that it is impossible for two non-commuting operations
to be ordered in opposite ways by $\vo$ and $\nvo$.
This is because an operation on a key~$k$ always flushes
the nodes relevant to $k$ before linearizing.
For example, take an insert and a \vo-subsequent delete of~$k$,
where no other operation on~$k$ took place in between in $\vo$ order.
The delete must have found the inserted node \p{c} in the linked list
(we know this from the vanilla linearization proof)
and therefore flushed the node that first made \p{c} reachable
at Line~\ref{line:harris:del-flush-orig}.
Since the linearization point of the insert is the update of the next pointer
of \p{c.orig}, the flush implies that the linearization point of the insert
is \nvo-before that of the delete.
When proving condition (\ref{cond:informal-persisted-mem}),
we already know that the \nvo-induced sequence is legal;
we can also safely ignore linearization points which are reads
because they have already been proven sound in
condition~(\ref{cond:informal-lin-vo}).

\paragraph{Decoupling recovery}
So far we have only considered schemes with a trivial recovery.
This imposes a greater onus on the implementation which has to ensure
that persisted memory satisfies the invariants at all times.
The possibility of using a non-trivial recovery opens opportunities for
much more radical optimizations.
For instance, in the Harris list,
if we know which nodes belong to the set,
the links between the nodes are indeed redundant
and can be reconstructed upon recovery.

Before examining how these more sophisticated optimisations work,
we structure our proof technique
in a way that decouples the correctness of the recovery and the correctness
of the other library operations.
To do so, we introduce a further parameter of a DL proof:
the \emph{durable} representation function
$\durable \from \AbsState \to \powerset(\Mem)$.
The idea is that $ \volatile(q) $ imposes stronger (or incomparable)
constraints on the memory than $\durable(q)$,
and the constraints of $\durable(q)$ are enough
for the recovery to repair the memory in such a way that it
belongs to $\volatile(q)$ before resuming execution.

The verification of DL then comprises two parts:
\begin{enumerate}
\item Verifying that each step of the recovery takes any
      $ M \in \durable(q) $ to some $ M' \in \durable(q) $,
      and when the recovery terminates the memory is in
      $ \durable(q) \inters \volatile(q) $.
      Note that this verification can be done under SC since the recovery is sequential.
\item Verifying that an execution of the operations, with initial memory in
      $ \durable(q) \inters \volatile(q) $
      is DL, producing a history taking $q$ to $q'$
      and producing a persisted memory in~$\durable(q')$.
\end{enumerate}
These proof obligations allow for an inductive argument
proving that any chain is durably linearizable,
as illustrated in \cref{fig:chain}.
The schemes so far had $ \durable = \volatile $.
Let us next consider a variation of the Harris list with a non-trivial
recovery called a ``link-free set''~\cite{SOFT:oopsla}.

\paragraph{Decoupling linearization and persistency points}

\begin{figure}
  \small
\begin{tabular}{l@{\hspace{3em}}l}
  \begin{sourcecode}[gobble=2,lineskip=-2pt]
  record Node:
    key: $\Nat \dunion \set{+\infty,-\infty}$
    nxt: $\Bool \times \Addr_\nullptr$
    valid: $\Bool$

  def init():
    t = alloc(Node)
    t.key = $+\infty$; t.valid = 1
    t.nxt = <|0,$\nullptr$|>
    h = alloc(Node)
    h.key = $-\infty$; h.valid = 1
    h.nxt = <|0,t|>
    return h

  def find(h, k):
    p = h
    <|_,c|> = p.nxt
    while(1):
      if c.nxt == <|0,_|>: @                      \label{lp:basic-lf:find}@
        if c.key >= k:
          if p.nxt == <|0,_|>:@                   \label{line:find-check-p}@
            return <|p,c|>
          c = h @                                 \label{line:find-restart}@
        p = c
      else:
        trim(p, c) @                              \label{line:basic-lf:find-trim}@
      <|_,c|> = c.nxt

  def trim(p, c):
    flush(c) @                                    \label{line:basic-lf:trim-flush}@
    <|_,s|> = c.nxt
    CAS(p.nxt, <|0,c|>, <|0,s|>)
  \end{sourcecode}
  &
  \begin{sourcecode}[gobble=2,firstnumber=last,lineskip=-2pt]
  def insert(h, k):
    while(1):
      <|p,c|> = find(h, k) @                      \label{lp:basic-lf:insno}@
      if c.key == k:
        c.valid = 1 @                             \label{line:basic-lf:insno-valid}@
        flush(c) @                                \label{line:basic-lf:insno-flush}@
        return false
      n = alloc(Node)
      n.key = k
      n.nxt = <|0,c|>
      if CAS(p.nxt, <|0,c|>, <|0,n|>): @          \label{lp:basic-lf:insok}@
        n.valid = 1 @                             \label{pt:basic-lf:insok}@
        flush(n)
        return true

  def delete(h, k):
    while(1):
      <|p,c|> = find(h, k) @                      \label{pt:basic-lf:delno}\label{lp:basic-lf:delno}@
      if c.key != k:
        return false
      <|_,n|> = c.nxt
      c.valid = 1 @                               \label{line:basic-lf:delok-valid}@
      if CAS(c.nxt, <|0,n|>, <|1,n|>): @          \label{pt:basic-lf:delok}\label{lp:basic-lf:delok}@
        trim(p, c) @                              \label{line:basic-lf:delok-trim}@
        return true

  def recover(nodes):
    h = init()
    for n in nodes:
      if n.valid == 1:
        <|b,_|> = n.nxt
        if b == 0 and n.key${}\in \Key$:
          seqInsert(h, n)
  \end{sourcecode}
\end{tabular}
   \caption{
    A simple link-free set implementation.
  }
  \label{fig:basic-lfset-code}
\end{figure}

The idea of the link-free set,
in \cref{fig:basic-lfset-code},
is to adopt a Harris list data structure as the volatile representation of the set, but let the persistent representation forgo the links between nodes.
More precisely, each node has a boolean \p{valid} field (set to 0 at allocation)
indicating whether the node is a \emph{persistent} member of the set, if not marked as deleted.
Provided the validity field is updated correctly,
the recovery can scan the memory for all the allocated nodes and
newly arrange all the ones that are unmarked (not deleted) and valid (persisted) into a sorted linked list.
This then eliminates the need for persisting the updates on links in any specific order. To ensure consistency between the volatile view and the persisted view,
the link-free set adopts a ``linearize first, persist second'' strategy:
a new node is first inserted in the volatile linked list,
where it becomes visible to other threads,
then its validity field is set and the node is flushed.
If another operation observes the node
(for example a concurrent insert of the same key returning false)
then it will \emph{help} persist the node by first setting its validity and then
flushing it (possibly with a benign race, generating a redundant flush),
before returning. As in the optimized version before, nodes that are traversed but do not
influence the legality of some operation need not be helped to persist.

This optimization has a number of ramifications for verification.
First, we now have~${\volatile \neq \durable}$:
the volatile representation insists that the links must describe a
sorted linked list; the durable representation only asks
that there be a unique, valid and unmarked node for each key in the set.
Second, the original linearization points
are still valid with respect to the volatile structure,
but they do not represent the point where the update they implement is made persistent.
For inserts, for instance, the update to the links is the volatile linearization point, but even if that update is persisted, the inserted node would still be seen as not part of the set after a crash, until the validity field is set and persisted.
The latter update makes the node persistently inserted;
we call this kind of update a \emph{persistency point}.
This duplicity directly reflects the difference between
$\volatile$ and $\durable$:
while linearization points induce an update
from a memory in~$\volatile(q)$ to a memory in~$\volatile(q')$,
persistency points (if persisted)
change the persisted memory from one in~$\durable(q)$ to
one in~$\durable(q')$.

We thus ask the prover to specify persistency points using a partial function
$ \pt \from \CallId \pto P $:
$\pt(c)$ is the persisted event that records the effect of the call~$c$
in persistent memory.
The function is partial because the persistency point of a call might not have been executed/persisted yet.

The persistency points of the link-free set example are as follows:
\begin{itemize}
  \item For successful inserts,
    it is the moment when the inserted node
    is first made valid,
    \ie
    on Line~\ref{pt:basic-lf:insok} of \cref{fig:basic-lfset-code} at the latest.
      \item For successful deletes,
    it coincides with the operation's linearization point,
    Line~\ref{pt:basic-lf:delok} of \cref{fig:basic-lfset-code}.
\end{itemize}

We have thus two overlaid linearization arguments:
the volatile one on $\volatile$, $\lp$, $\vo$;
and the persistent one on
$\durable$, $\pt$, $\nvo$.
On one hand, to prove DL it would suffice to
provide the argument on
$\durable$, $\pt$, $\nvo$.
On the other hand, however, proving linearizability directly on $\nvo$
is challenging.
First, as we noted above, we would need some special treatment of ``read'' operations.
Second, and more important, the reasons for the legality of the sequence
are typically justified by the volatile data structure, and not
just the persisted one.
For instance, the reason why a successful insert of~$k$ is legal is due to
the traversal of the linked list providing evidence
that no other unmarked node holding~$k$ is present.
Since the links in persisted memory might not be consistent,
this argument cannot consider persisted memory only.
As such, just as we sketched above, we propose a proof scheme
that allows most of the correctness argument to be done on $\vo$.
Then, we identify the potentially problematic reorderings of the linearization
produced by the mismatch between $\lp$ and $\pt$, and $\vo$ and $\nvo$.
For those problematic reorderings we require the operations involved to
commute.
Together, these conditions would entail that the linearization
induced by $\pt$ and $\nvo$ is legal.
Then, one needs to verify that the final abstract state reached through
this legal linearization is in fact the one recoverably encoded in the final persisted memory.

All in all, our proof technique requires to prove, roughly:
\begin{enumerate}
  \item Linearizability of the execution using~$\volatile$, $\lp$ and~$\vo$.
    \label{cond:informal-full-lin-vo}
  \item If $ \pt(c_1) \nvo-> \pt(c_2)$, then either
    $ \lp(c_1) \vo-> \lp(c_2) $ or
    $c_1$ and~$c_2$ abstractly commute.
    \label{cond:informal-full-commut}
  \item If $\lp(c_1) \vo-> \lp(c_2)$ but $\pt(c_1)=\bot$,
    then either $\pt(c_2)=\bot$ or~$c_1$ and~$c_2$ abstractly commute.
    \label{cond:informal-full-voided}
  \item Assuming the linearization induced by~$\pt$ and~$\nvo$
    on persisted events abstractly produces a state~$q'$,
    the persisted memory belongs to $\durable(q')$.
    \label{cond:informal-full-lin-nvo}
\end{enumerate}
Condition~(\ref{cond:informal-full-commut}) considers pairs of calls that have
linearized and persisted, but such that volatile and persistent linearizations
would disagree on their ordering.
In that case they are required to abstractly commute.
This allows us to perform reorderings of the volatile linearization
until the sequence respects $\nvo$, while preserving its legality.

Condition~(\ref{cond:informal-full-voided}) corrects for
what we call the ``voided'' calls: those that are linearized
in the middle of the volatile linearization, but have not persisted.
They are required to commute with all the persisted calls in front of them.
This ensures that we can move all of them at the end of the linearization
and then remove them, while preserving legality.

Our \masterthm\ (\cref{sec:master-thm})
formalises a generalisation of this scheme.
As we show in \cref{sec:eval} the scheme makes it possible to prove
the volatile linearization argument first, and exploit it to deduce
the invariants needed to show the commutation lemmas.
The necessity of proving these lemmas
can be seen as the underlying motivation for the placement of the flushes.
Finally, proving that the persisted memory representation of states is
correct can be done while assuming the sequence of persistency points is legal,
effectively making available the relevant volatile invariants
in support of the persistent argument.

Our full proof technique also supports two advanced techniques:
\emph{hindsight linearization} and \emph{persist-first implementations}.
Hindsight~\cite{hindsight} refers to linearizable operations for which
it is not possible to find a fixed event representing their linearization point.
Their correctness is thus proven ``after the fact''.
Our General \thename\ Theorem, presented in~\cite{fullversion},
supports hindsight by allowing a standard linearization
for the other operations to be carried out first,
and then adding hindsight operations,
with a limited impact on the commutation conditions.

Persist-first implementations (\eg~SOFT~\cite{Px86} and Mirror~\cite{mirror})
maintain two versions of their data,
one in persistent memory and one volatile version used for enabling fast access,
and write first to the persistent version and then update the volatile one.
This reduces the needed flushes to the lowest theoretical bound.
The commutation conditions we presented apply to the ``linearize-first''
implementations.
In \cref{sec:persist-first-master} we present a \pfmasterthm\ which,
using dual commutation conditions, applies to persist-first schemes.

 \section{Operational Model and Durable Linearizability}
\label{sec:devel}

\subsection{Preliminaries}
\label{sec:prelim}

\paragraph{Relations}
We write $ \ev{X} $ for the identity relation on~$X$,
$ \tr{\rel} $ for transitive closure,
$ \rtr{\rel} $ for reflexive transitive closure,
$ \maybe{\rel} $ for reflexive closure,
$ \inv{\rel} $ for inverse, and
$ \imm{\rel} \is
    \set{(x,y) \in \rel |
      \nexists z. (x,z) \in \rel \land (z,y) \in \rel  } $.
We say $\rel$ is \emph{acyclic} if $\tr{\rel}$ is irreflexive.

\paragraph{Sequences}
We use the notation~$\vec{e}$ to range over finite sequences,
$\len{\vec{e}}$
  for the length of the sequence,
$ \vec{e}(i) $
  for the item at position~$0 \leq i < \len{\vec{e}}$ in the sequence,
$\upto{\vec{e}}{i}$
  for the sequence~$ \vec{e}(0)\dots\vec{e}(i) $, and
$ \tailfrom{\vec{e}}{i} $ for the sequence
  $ \vec{e}(i)\dots\vec{e}(\len{\vec{e}}-1) $.
The empty sequence is denoted by~$\emptyvec$.
We sometimes implicitly coerce $\vec{e}$ to the set of its items.
Given a set~$A$ we write~$A^*$ for the set
of finite sequences of elements of~$A$.
We write~$\vec{e} \concat \vec{e}'$ for the concatenation of the two sequences.
Given sequences $\vec{e},\vec{e}' \in A^*$, we say
$\vec{e}$ is a \emph{scattered subsequence} of $\vec{e}'$,
if all the items of~$\vec{e}$ appear in $\vec{e}'$ in the same order.
The expression $ \restr{\vec{e}}{B} $ denotes the
longest scattered subsequence of~$\vec{e}$ consisting only of elements of~$B$,
\eg $ \restr{cabcbacb}{\set{a,b}} = abbab $.
For $\vec{e} \in A^*$, we also write $ \vec{e} \setminus B $ for $ \restr{\vec{e}}{A\setminus B} $.

\begin{definition}[Enumeration]
\label{def:enum}
  Given a relation $ {\rel} \subs A \times A $,
  and a finite set $X \subs A$ with~$n$ elements,
  we write $ \enum[X]{\rel} $ for the set of
  enumerations $x_0 \dots x_n$ of $X$ such that
  $\forall i,j \leq n. (x_i, x_j) \in \rel \implies i < j$.
Notice that if ${\rel}$ is an acyclic relation,
  then $\enum[X]{\rel} \ne \emptyset$.
  If $\rel$ is a strict total order on~$X$,
  then $ \enum[X]{\rel} = \set{\vec{e}} $ and we write
  $ \enum[X]{\rel} $ for $\vec{e}$ directly.
  We omit~$X$ when clear from the context.
\end{definition}

\paragraph{Partial functions}
We write $ f \from A \pto B $ if $f$ is a partial function from~$A$ to~$B$,
\ie a function of type~$ f \from A \to B \dunion \set{\bot} $.
The \emph{domain} of~$f$ is $ \dom(f) \is \set{ a \in A | f(a) \ne \bot } $.
The \emph{range} of~$f$ is $ \rng(f) \is \set{ f(a) \in A | a \in A, f(a) \ne \bot } $.
We say $f$ is finite if its domain is finite.

\subsection{Actions and Events}

We make a number of simplifying modelling choices.
First, we only model the scenario where the whole of working memory
is NVM.
Second, we abstract memory management issues and we will assume
memory is managed and garbage collected.
We thus include an atomic allocation primitive but no de-allocation.
Third, the algorithms we are interested in do not use pointer arithmetic,
so we will only model pointers as opaque references
and prove no null-dereference is possible.
Finally, we model the heap as a uniform collection
of structured records, with some fixed finite set of field names~$\Field$.
None of these choices are fundamental.

A \emph{location}~$\loc \in \Loc \is \Addr \times \Field$ is
a pair of an address $x \in \Addr$ and a field name~$\p{f} \in \Field$,
and we will write them as~$ \loc[x.f] $.
The set~$\V$ collects all possible values associated to fields.
The set of locations is partitioned into \emph{cache lines}~$\CacheLine$.
The fields of an address are assumed to fit in a single cache line,
so we postulate that:
$
  \A \CL \in \CacheLine.
\loc[x.f] \in \CL \implies \A \pr{\p{f}} \in \Field.\pr{\loc[x.f]} \in \CL.
$

The set~$\Action$ is the set of \emph{actions}~$\alpha$ which are of the form:
\begin{grammar}
  \alpha \is
    \R{x}{f}{v} |
    \W{x}{f}{v} |
    \U{x}{f}{v}{v'} |
    \MF |
    \FL{x} |
    \Alloc{x} |
    \Ret{v} |
    \Err
\end{grammar}
where~$x\in\Addr$, $\p{f}\in\Field$, $v,v'\in\V$.
We include the standard
read ($\evtag{R}$), write ($\evtag{W}$) and update ($\evtag{U}$)
actions, memory fences ($\MF$), flushes ($\evtag{FL}$),
and three non-standard actions.
Allocation actions~$\Alloc{x}$ initialise all the fields at a fresh~$x$ with zero.
Return actions~$\Ret{v}$ represent the return instruction of a library call;
we will use them as the atomic event representing
the whole invocation in the abstract traces of linearizable libraries.
An error action~$\Err$ is emitted when reading from or writing to
a location with invalid address ($\nullptr$ or not allocated).

Each action (bar~$\Ret{v}$) mentions a single address~$x$ which we can access
with~$\addrOf(\alpha)$.
Similarly,~$\locOf(\alpha)$ is the location mentioned in an action, if any.
As an exception, $\locOf(\Alloc{x})$ is the set~${\set{ \loc[x.f] | \p{f} \in \Field }}$ since an allocation initializes all fields to zero.
The value of a return action is~$\valOf(\Ret{v})\is v$.
We also speak of the \emph{read value} ($\rvalOf$) and \emph{written value} ($\wvalOf$)
of an action:
$
  \rvalOf(\R{x}{f}{v}) \is
  \rvalOf(\U{x}{f}{v}{v'}) \is
    v
$,
$
  \wvalOf(\W{x}{f}{v'}) \is
  \wvalOf(\U{x}{f}{v}{v'}) \is
    v'
$, and
$
  \wvalOf(\Alloc{x}) \is 0.
$
We assume a fixed set of operation names~$\Op$.
For the set library $\Op = \set{\p{insert}, \p{delete}}$.

We also assume an enumerable universe of \emph{events}~$\Event$
equipped with three functions:
  \begin{itemize}
    \item $ \actOf  \from \Event \to \Action $,
      associating an action to every event.
      We lift functions on actions to events in the obvious way,
      e.g.~$\locOf(e) = \locOf(\actOf(e))$
      and write $ (e \of \alpha) $ to indicate that $ \actOf(e) = \alpha $.

    \item $ \cidOf  \from \Event \to \CallId_\bot \dunion \set{\recoveryId} $,
      associating a \emph{call identifier} to every event and $\bot$ to client events.
      Here~$\CallId$ is a fixed enumerable set of call identifiers, and
      $\recoveryId$ is a special identifier reserved for the call
      to the recovery procedure;
      We require~$ \cidOf(e) \neq \bot $ if $ (e \of \Ret{\wtv}) $.

    \item $ \callOf \from \CallId \to \Call $,
      where $\Call \is (\Op \times \V^*)$,
      returns the operation called and its parameters.
  \end{itemize}

The following sets group events by their action type:
\begin{align*}
  \Updates &\is
    \set{ e \in \Event | e\of\U{x}{f}{v}{v'}}
  &
  \MFences &\is
    \set{ e \in \Event | e\of\MF}
  &
  \Flushes & \is \set{ e \in \Event | e \of \FL{x} }
  \\
  \Writes &\is \set{ e \in \Event | (e\of\W{x}{f}{v}) \lor (e\of\Alloc{x})}
  &
  \UWrites & \is \Writes \union \Updates
  &
  \Durable & \is \Writes \union \Updates \union \Flushes
  \\
  \Reads &\is \set{ e \in \Event | e\of\R{x}{f}{v} }
  &
  \UReads & \is \Reads \union \Updates
\end{align*}
We also group events based on their call identifier:
\begin{align*}
  \Rets &\is
    \set{ e \in \Event | e\of\Ret{v}, \cidOf(e) \in \CallId}
  &
  \EvOfCid{i} &\is \set{ e \in \Event | \cidOf(e)=i }
  \\
  \LibEv &\is \set{e\in \Event\setminus \Rets| \cidOf(e) \ne \bot}
  &
  \sameCid &\is \set{ (e_1, e_2) | \cidOf(e_1)=\cidOf(e_1) \ne \bot }
\end{align*}
The set~$\Rets$ collects all return events associated with calls
(excluding the one of the recovery),
the set~$\EvOfCid{i}$ collects all events of the call identified by~$i$,
the set~$\LibEv$ includes all internal library events
(returns are considered to be visible by the client).
The relation~$\sameCid$ relates all events belonging to the same call.
Subscripting a set of events with a location
selects the events for which that location is relevant:
for each of the sets of events~$\mathbb{S}$ defined above,
their location-subscripted variant is
$\mathbb{S}_{\loc} = \mathbb{S} \inters \Event_{\loc}$
and
$\mathbb{S}_{L} = \mathbb{S} \inters \Event_{L}$,
where
$\Event_{\loc} \is \set{ e \in \Event | \loc \in \locOf(e) }$, and
$\Event_{L} \is \set{ e \in \Event | L \inters \locOf(e) \ne \emptyset }$.

\subsection{Executions}

We adopt the declarative approach of weak memory model specifications,
where executions are represented using graphs of events
and dependency relations.
The events of an execution should be understood as the concrete low-level
instructions issued by a closed multi-threaded program.
In the context of the verification of a library,
this closed program would be an arbitrary client issuing both
instructions produced by calls to the library,
and arbitrary instructions on its own locations
(which are assumed to be disjoint from the ones manipulated by the library).

\begin{definition}[Execution]
\label{def:exec}
  An \emph{execution} is a structure
  $
    G = \tup{
      E, \Init, \Persisted, \po, \rf, \mo, \nvo
    }
  $ where
  \begin{itemize}
    \item $E \subs \Event$
      is a finite set of events.
      In the context of the execution~$G$,
      the sets~$\Writes$, $\Reads$, etc should be understood
      as subsets of~$E$.
      Moreover, $ G.\CallId = \cidOf(E) $.
\item $\Init \subs \Writes$
      is the set of \emph{initialisation events},
      with $
        \A e_1,e_2 \in \Init.
          e_1\ne e_2 \implies \locOf(e_1) \inters \locOf(e_2) = \emptyset
      $ and $
        \A e\in\Init.\cidOf(e)=\bot
      $.
      Moreover, there are no double allocations:
      for all $e\in E$ with $(e\of\Alloc{x})$,
      $
        \A {\pr{e} \in \Init}.
          {\addrOf(\pr{e}) \ne x}
      $ and $
        \A {\pr{e} \in E}.
          ((\pr{e}\of \Alloc{y}) \land \pr{e}\ne e) \implies x \ne y.
      $
\item $\Persisted \subs \Durable$
      is the set of \emph{persisted events},
      with $ \Init \union \Flushes \subs P $.
\item $ \po \subs E \times E $
      is the \emph{`program-order' relation},
      required to be a strict partial order
      with ${\Init \times (E \setminus \Init) \subs \po}$.
\item $\rf \subs \UWrites \times \UReads$
      is the \emph{`reads-from' relation}
      between events of the same location with matching values;
      \ie $\A (a, b) \in \rf.
              \locOf(a) {=} \locOf(b) \land \wvalOf(a) {=} \rvalOf(b)$.
      Moreover, $\rf$ is total and functional on its range,
      \ie every read or update is related to exactly one write or update.
\item $\mo \subs E \times E$
      is the \emph{`modification-order'},
      required to be a disjoint union of relations
      $\set{\mo_{\loc}}_{{\loc} \in \Loc}$,
      such that each $\mo_{\loc}$ is a strict total order on $\UWrites_{\loc}$,
      and $
        \Init_{\loc} \times (\UWrites_{\loc} \setminus \Init_{\loc})
          \subs \mo_{\loc}
      $.
\item $\nvo \subs \Durable \times \Durable$
      is the \emph{`non-volatile-order'},
      required to be a strict total order on~$\Durable$,
      such that
        $\Init \times (\Durable \setminus \Init) \subs \nvo$ and
$
          (e_1,e_2) \in \nvo \land e_2 \in \Persisted
          \implies e_1 \in \Persisted $.
  \end{itemize}
  The derived \emph{happens-before} relation
  is defined as $\hb \is \tr{(\po \union \rf)}$.
\end{definition}

A memory model is characterised by the subset of executions that are
feasible, called \emph{consistent} executions.
Different models can be used by adopting a different consistency criterion.

Although our proof technique applies independently
of the choice of consistency criterion,
we will articulate it on the Px86 memory model,
defined in full in \appendixref{sec:appendix-prelim}.
An execution~$G$ is \emph{Px86-consistent} if there exists
a strict \emph{total store order}
$ \tso \subs G.E \times G.E $ representing the global order in which durable
instructions are observed to affect the memory,
which satisfies the usual x86 axioms~\cite{x86-tso}.
In addition,
$\tso$ must satisfy the following three conditions:
\begin{align*}
  \A \CL \in \CacheLine.
\ev{\Durable_{\CL}} \seq \tso \seq \ev{\Durable_{\CL}}
  &\subs \nvo
&
  \ev{\Flushes} \seq \tso \seq \ev{\Durable}
  &\subs \nvo
&
  E \inters \Flushes &\subs P
\end{align*}
The first condition requires that all the durable events on the same cache line
be persisted in the same order in which they affected the volatile memory.
The second condition says that durable events \tso-following a flush
will be persisted after the flush (and thus after all the durable events on the flushed cache line that happened \tso-before the flush).
The third condition characterises the \emph{synchronous} flush semantics:
flushes are persisted as soon as they are included in an execution.

This model deviates slightly from Px86\textsubscript{sim}
of \citet{Px86},
where $\nvo$ only preserves $\tso$ on durable events on the same \emph{location},
not the same \emph{cache line}.
Our stronger semantics is consistent with the actual hardware implementations
\cite[§10.1.1]{snia}.
In fact, the algorithms of \cite{SOFT:oopsla} are only correct
and optimal under the stronger model we adopt in this paper.
\ifappendix
  For more details, see \cref{rm:px86}.
\else
  For instance, for the \p{insert} in \cref{fig:basic-lfset-code} it is
  crucial that the \p{valid} and \p{key} fields get persisted together,
  or a crash might leave a valid node in memory with an uninitialized key.
\fi

\begin{definition}[Memory]
\label{def:mem}
  The \emph{memory} relative to
  some sequence of events $\vec{e}$,
  is the finite partial function
  $ \mem{\vec{e}} \from \Loc \pto \V $
  defined as:
  $
    \mem{\vec{e}}(\loc) \is
      \wvalOf(\vec{e}(i))
  $
  where
  $ i = \max\set{j | \vec{e}(j) \in \UWrites_{\loc}} $.
  The function is undefined on~$\loc$ if $ \vec{e} \inters \UWrites_{\loc} = \emptyset $.
If~$\rel \subs \Event \times \Event $ is
  a strict total order on~$\UWrites$,
  then $ \enum[\UWrites]{\rel} = \set{\vec{e}} $ for some $\vec{e}$,
  and we write $ \mem{\rel} $ for $ \mem{\vec{e}\,} $.
We write~$\Mem$ for the set of finite partial functions from locations to values.
\end{definition}

A set of initial events, by definition,
contains at most one write event per location.
Memories and sets of initial events are therefore in a 1-to-1 correspondence
modulo identity of events.
By virtue of this, we shall implicitly coerce memories into sets of initial events and vice versa.

\begin{definition}[Chain]
  A \emph{chain} is a sequence~$G_0\dots G_n$ of consistent executions such that
  $G_{0}.\Init = \emptyset$,
  $G_{i+1}.\Init = \mem{\restr{G_{i}.\nvo}{G_{i}.\Persisted}}$
  for every~$0\leq i< n$, and
  $G_{n}.\Persisted = G_{n}.\Durable$.
\end{definition}

A library implementation is a pair
${\LibImpl = \tup{\LibImpl[op],\LibImpl[rec]}}$
where $\LibImpl[op]$ describes the implementation of each operation,
and $\LibImpl[rec]$ describes the implementation of the recovery.
An \emph{execution of}~$\LibImpl$ is a consistent execution
where the recovery is run sequentially at the beginning, and then
arbitrary client events and calls of operations run concurrently.
We only consider executions where libraries and clients work on
disjoint locations.
A chain of~$\LibImpl$ is a chain of executions of~$\LibImpl$.
The formal definitions are unsurprising and relegated to \appendixref{sec:appendix-prelim}.

An execution of a library implementation contains both
client events, and internal events
that are conceptually opaque to the client.
From the perspective of the client, all operation calls should be viewable
as instantaneous return events giving back control to the client,
ordered by some (legal) total order.
Abstract executions encode such a client-side view of an execution.
Since some calls might not have returned yet,
the abstract execution can insert the return events for the calls that
have not returned yet but have already conceptually carried out their work.
An important constraint is that the abstract execution sees all
return events (including the inserted ones) as being persisted.
This means that the operations need to ensure their updates have been persisted
before returning, to be consistent with their abstract execution.

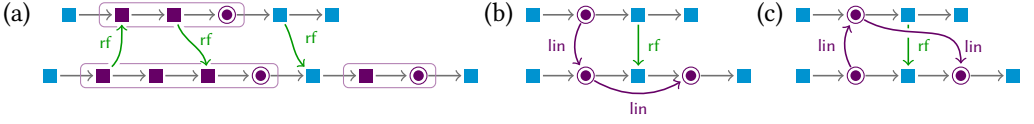
\begin{figure}
  {(a)\ }
  \begin{tikzpicture}[
  abstract execution,
  font=\small,
  baseline=(thread1-begin.south),
]

  \begin{scope}[start chain=thread1,on grid]
  \node[client event]{};
  \node[lib event](t1){};
  \node[lib event](f2){};
  \node[ret event](ret1){};
  \node[client event](f3){};
  \node[client event]{};
  \end{scope}

  \begin{scope}[start chain=thread2,on grid,xshift=-.25cm,yshift=-.8cm]
  \node[client event]{};
  \node[lib event](f1){};
\node[lib event]{};
  \node[lib event](t2){};
  \node[ret event](ret2){};
  \node[client event](t3){};
\node[lib event](c3){};
\node[ret event](ret3){};
  \node[client event]{};
  \end{scope}

  \node[around=(t1)(ret1),](c1){};
  \node[around=(f1)(ret2),](c2){};
  \node[around=(c3)(ret3),](c3){};

  \draw[dep=RF]
    (f1) edge[in=-90,clear] node[left,pos=.6,font=\tiny]{$\rf$} (t1)
    (f2) edge[in=100,out=-70,clear=1pt] node[right=1pt,pos=.35,font=\tiny]{$\rf$} (t2)
    (f3) edge[out=-60] node[right=2pt,pos=.3,font=\tiny]{$\rf$} (t3)
  ;

\end{tikzpicture} {(b)\ }
  \begin{tikzpicture}[
  abstract execution,
  font=\small,
  baseline=(thread1-begin.south),
]

  \begin{scope}[start chain=thread1,on grid]
  \node[client event]{};
\node[ret event](ret1){};
  \node[client event]{};
  \node[client event]{};
  \end{scope}

  \begin{scope}[start chain=thread2,on grid,xshift=0cm,yshift=-.8cm]
  \node[client event]{};
\node[ret event](ret2){};
  \node[client event]{};
\node[ret event](ret3){};
  \node[client event]{};
  \end{scope}

  \draw[dep=RF,font=\tiny]
    (thread1-3) edge node[right]{$\rf$} (thread2-3)
  ;
  \useasboundingbox;
  \draw[dep=LIN,shorten >=2pt,font=\tiny]
    (ret2) edge[bend right] node[below]{$\lin$} (ret3)
    (ret1) edge[bend right] node[left]{$\lin$} (ret2)
  ;

\end{tikzpicture} {(c)\ }
  \begin{tikzpicture}[
  abstract execution,
  font=\small,
  baseline=(thread1-begin.south),
]

  \begin{scope}[start chain=thread1,on grid]
  \node[client event]{};
\node[ret event](ret1){};
  \node[client event]{};
  \node[client event]{};
  \end{scope}

  \begin{scope}[start chain=thread2,on grid,xshift=0cm,yshift=-.8cm]
  \node[client event]{};
\node[ret event](ret2){};
  \node[client event]{};
\node[ret event](ret3){};
  \node[client event]{};
  \end{scope}

  \draw[dep=RF,font=\tiny]
    (thread1-3) edge node[right]{$\rf$} (thread2-3)
  ;
  \draw[dep=LIN,shorten >=2pt,font=\tiny]
    (ret2) edge[bend left] node[left]{$\lin$} (ret1)
    (ret1) edge[out=-30,in=90,looseness=1.2,clear=5pt] node[right,pos=.8]{$\lin$} (ret3)
  ;

\end{tikzpicture}   \caption{The concrete execution graph of Figure~(a) shows
    client events (\protect\evbox{fill=client}), and
    library events (\protect\evbox{fill=lib})
    and return events (\protect\evbox{fill=lib,circle,double distance=1pt,draw=lib,})
    generated by calls (encircled).
    Figures~(a) and (b) represent the only two possible
    abstract executions of Figure~(a):
    the $\rf$ edge between client events constrains
    the possible linearization orders.
  }
  \label{fig:abs-exec}
\end{figure}

\begin{definition}[Abstract execution]
\label{def:abs-exec}
  Fix an execution~$G$.
  A set of \emph{completion events} for~$G$
  is a set~$ C \subs \Rets \setminus G.\Rets $ such that
  $
    \A e\in C. \E e' \in {G.E}.
      \cidOf(e) = \cidOf(e') \land
      G.{\retOf(e')} = \bot
  $.

\begin{samepage}
  Given a set~$C$ of completion events for~$G$,
  we define the execution~$\abs[C]{G}$ as follows:
  \begin{itemize}
    \item $
      \abs[C]{G}.E = \abs{E} \is (G.E \setminus \LibEv) \dunion C
    $.
    \item $
      \abs[C]{G}.\Persisted =
        (G.\Persisted \setminus \LibEv)
    $.
    \item $
      \abs[C]{G}.\po =
        \restr{
          \tr{(
            G.\po \union C_{\po}
)}
        }{\abs{E}}
    $
    where
    $
      C_{\po} =
        (G.E \times C) \inters \sameCid
    $
    \item $
      \abs[C]{G}.\rf = \restr{G.\rf}{\abs{E}}
    $, $
      \abs[C]{G}.\mo = \restr{G.\mo}{\abs{E}}
    $, and $
      \abs[C]{G}.\nvo = \restr{G.\nvo}{\abs{E}}
    $.
  \end{itemize}
\end{samepage}

An \emph{abstract execution} of~$G$
  is a tuple $\tup{\abs[C]{G}, \lin}$
  consisting of
the execution~$\abs[C]{G}$, and
\item
    a strict total order~$\lin$ on $ P_{\Rets} \is G.\Rets \dunion C $
      such that
      $ \restr{\abs[C]{G}.\hb}{P_{\Rets}} \subs \lin $.
Henceforth we use~$\abs{G}$ as a meta-variable ranging over
  the possible abstract executions of~$G$.
\end{definition}

\Cref{fig:abs-exec} shows an example.
As shown in \cref{fig:abs-exec}(a),
an execution~$G$ will in general include client events
(\ie $ G.E \setminus \LibEv \ne \emptyset $).
In particular this means that~$\abs[C]{G}.\hb$ would include edges
between client events, and between client events and return events.
The requirement that $\lin$ should preserve those edges
encodes the idea that the linearization order should never contradict
the client-observable ordering of calls.
In the example, \cref{fig:abs-exec}(b) and \Cref{fig:abs-exec}(c)
are the only abstract executions of~$G$ that respect
$\abs[C]{G}.\hb$: the $\rf$ edge between client events
is preserved in the abstract execution,
requiring that the call of the thread at the top
is linearized before the second call of the thread at the bottom.

\begin{definition}[Histories]
  A \emph{history} is a sequence~$ \h \in \Hist \is (\Call \times \V)^* $.
  Given a sequence of events~$ e_0\dots e_n \in \Rets^* $,
  their history is defined as
  \[
    \hist(e_0\dots e_n) \is
      \tup{\callOf(\cidOf(e_0)), v_0} \dots \tup{\callOf(\cidOf(e_n)), v_n}
  \]
  where $ e_i \of \Ret{v_i} $.
  The legal histories of a library are
  specified as a set~$ \Legal \subs \Hist $.
\end{definition}

\begin{definition}[Durable linearizability]
\label{def:pers-lin}
  A library with legal histories~$\Legal$
  and implementation $ \LibImpl $ is
  \emph{linearizable} if
  for all executions~$G$ of~$\LibImpl$,
  there exist an abstract execution~$\tup{\abs{G}, \lin}$ of~$G$
  such that $ \hist(\enum{\lin}) \in \Legal $.
The library is \emph{durably linearizable}
  if for every chain~$ G_0\dots G_n $ of~$\LibImpl$ there
  are abstract executions
  $ \tup{\abs{G_0},\lin_0},\dots,\tup{\abs{G_n},\lin_n} $ such that
  each $ \tup{\abs{G_i}, \lin_i} $ is an abstract execution of~$G_i$
  and $ \hist(\enum{\lin_0} \concat \dots \concat \enum{\lin_n}) \in \Legal$.
\end{definition}

\subsection{Library Specifications}

Legal histories are all that is needed to specify the desired
abstract behaviour of a linearizable library.
We introduce an abstract-machine-based way of specifying legal histories,
that will allow us to give a more structured proof technique for
proving persistent linearizability.

\begin{definition}[Library Specification]
\label{def:lib-spec}
  A \emph{library specification} is an abstract machine
  that accepts legal histories of library calls.
  Formally, a specification is a tuple
  $ \tup{\AbsState, \Delta, \initState} $
  where
  $ \AbsState $ is a set of \emph{abstract states},
  $ \Delta \from \Call \times \V \to \powerset(\AbsState \times \AbsState) $
  is the transition relation indexed by a call and return value,
  and $\initState \in \AbsState$ is the initial abstract state.
The
  \emph{legal histories of~$ \tup{\AbsState, \Delta} $ from~$q$ to~$q'$} form
  the set~$ \LegalPath{q}{q'} \subs \Hist $,
  defined as the smallest such that
  $ \emptyvec \in \LegalPath{q}{q} $, and
  if $ \h \in \LegalPath{q}{q'} $ and $ (q',q'') \in \Delta(\var{call},v) $
  then $ \h \concat \tup{\var{call},v} \in \LegalPath{q}{q''} $.
The \emph{legal histories of $ \tup{\AbsState, \Delta, \initState} $}
  are defined as
  $
    \LegalFrom{\initState} \is
      \Union_{q\in\AbsState} \LegalPath{\initState}{q}
  $.
  If $c\in\CallId$, we may write~$\Delta(c,v)$ for $\Delta(\callOf(c),v)$.

  A specification is \emph{deterministic} if
  $ \A \tup{\var{call},v} .
    \A q,q_1,q_2.(q,q_1),(q,q_2) \in \Delta(\var{call},v)
        \implies {q_1 = q_2}
  $.
\end{definition}

The legal histories for a set data structure can be formalised as follows.
  Assume a numeric totally ordered type of keys~$\Key$.
    The legal histories of a library implementing a finite set of keys
  are the legal histories of the following (deterministic) library specification.
  The abstract states form the set
  $
    \KS \is \finpow(\Key).
  $
  The transition relation is defined as:
  \begin{align*}
    \Delta(\p{insert}, k, \p{true}) &=
      \set{(S, S \dunion \set{k}) | k \not\in S }
    &
    \Delta(\p{delete}, k, \p{true}) &=
      \set{(S, S\setminus\set{k}) | k \in S }
    \\
    \Delta(\p{insert}, k, \p{false}) &=
      \set{(S, S) | k \in S }
    &
    \Delta(\p{delete}, k, \p{false}) &=
      \set{(S, S) | k \notin S}
\end{align*}

We define a natural notion of equivalence on histories which we will use
to justify the history manipulations in our \masterthm.

\begin{definition}[Equivalent histories]
  Given a library specification $ \tup{\AbsState, \Delta, \initState} $ and
  histories~$ {\h_1,\h_2 \in \Hist} $,
  $ \h_1 \hequiv{\AbsState}{\Delta} \h_2 $
  holds when
  $
    \A q,q' \in \AbsState.
      \h_1 \in \LegalPath{q}{q'}
      \iff
      \h_2 \in \LegalPath{q}{q'}
  $.
\end{definition}

Our proof strategy for durable linearizability
exploits some notions of independence between operations:
commutativity, and the weaker voidability.

\begin{definition}[Commutativity]
\label{def:commut}
  Let $ \tup{\AbsState, \Delta, \initState} $ be a library specification and
  ${\tup{c,v},\tup{c',v'} \in \Call \times \V}$.
  We say
  $\tup{c,v}$ \emph{commutes with}~$\tup{c',v'}$,
  written~${\tup{c,v} \comm{\AbsState}{\Delta} \tup{c',v'}}$,
  if $
    \tup{c,v} \tup{c',v'}
      \hequiv{\AbsState}{\Delta}
    \tup{c',v'} \tup{c,v}.
  $
\end{definition}

For example $ \tup{\p{insert}, k, b} $
commutes with~$ \tup{\p{insert}, k', b'} $ if $k \ne k'$.

\begin{definition}[Voidable call]
\label{def:voidable}
  Let $ \tup{\AbsState, \Delta, \initState} $ be a library specification,
  ${\tup{c,v} \in \Call \times \V}$, and
  ${\h \in \Hist}$.
  We say
  $\tup{c,v}$ is \emph{\pre \h-voidable}
  if
  $
    \A q \in \AbsState.
      {(\tup{c,v} \concat \h) \in \LegalFrom{q}}
        \implies
          \h \in \LegalFrom{q}.
  $
\end{definition}

\begin{lemma}[Voidability in $\KS$]
\label{lm:ks-voidable}
  For the set specification, the following hold
  \begin{itemize}
\item
    $\tup{\p{insert},k,\p{false}}$ and
    $\tup{\p{delete},k,\p{false}}$ are \emph{\pre \h-voidable}
      for every~$\h$.
  \item
    $\tup{\p{insert},k,\p{true}}$ is \emph{\pre \h-voidable}
      if and only if
        $\h$ contains no calls to operations on the key~$k$,
        or $\tup{\p{insert},k,\p{true}} \concat \h$ is not legal.
  \item
    $\tup{\p{delete},k,\p{true}}$ is \emph{\pre \h-voidable}
      if and only if
        $\h$ contains no calls to operations on the key~$k$,
        or $\tup{\p{delete},k,\p{true}} \concat \h$ is not legal.
  \end{itemize}
\end{lemma}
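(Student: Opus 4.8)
The plan is to exploit that the set specification is \emph{deterministic}: for each call $\tup{c,v}$ the relation $\Delta(c,v)$ is a partial function on $\KS$, so from any state $q$ there is a unique run of a history $\h$, and $\h \in \LegalFrom{q}$ holds exactly when this run never gets stuck (every call along $\h$ is enabled). Two structural facts drive everything. First, a \emph{decomposition lemma}: $\h_1 \concat \h_2 \in \LegalPath{q}{q''}$ iff there is $q'$ with $\h_1 \in \LegalPath{q}{q'}$ and $\h_2 \in \LegalPath{q'}{q''}$; this is a routine induction on the definition of $\LegalPath{\cdot}{\cdot}$ and lets me peel off the leading call. Second, a \emph{non-interference} property read off the four clauses of $\Delta$: a call on key $k'$ is enabled in $q$ iff a condition on $k' \in q$ holds, and its effect changes the state only at $k'$; hence two states agreeing on the membership of every key a history touches make the history legal from one iff legal from the other. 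With this in hand, Claim 1 is immediate: both $\tup{\p{insert},k,\p{false}}$ and $\tup{\p{delete},k,\p{false}}$ are \emph{stuttering}, i.e.\ every transition has the form $(S,S)$, so by the decomposition lemma $\tup{c,\p{false}} \concat \h \in \LegalFrom{q}$ forces the leading call to be enabled and leave the state at $q$, whence $\h \in \LegalFrom{q}$; as this holds for every $q$, these calls are \pre \h-voidable for all $\h$.

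Claims 2 and 3 are symmetric, so I would treat $\tup{\p{insert},k,\p{true}}$ and indicate the dual changes for delete. For \textbf{sufficiency} I split on the disjunction. If $\h$ contains no call on $k$: whenever $\tup{\p{insert},k,\p{true}} \concat \h \in \LegalFrom{q}$, the leading call forces $k \notin q$ and moves to $q \dunion \set{k}$, so $\h \in \LegalFrom{q \dunion \set{k}}$; since $q$ and $q \dunion \set{k}$ differ only at $k$ and $\h$ contains no call on $k$, the non-interference property transfers legality back to $q$, i.e.\ $\h \in \LegalFrom{q}$. If instead $\tup{\p{insert},k,\p{true}} \concat \h$ is legal from no state, the voidability implication is vacuously true at every $q$. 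Either way the call is \pre \h-voidable.

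For \textbf{necessity} I argue the contrapositive: assume $\h$ contains a call on $k$ and $\tup{\p{insert},k,\p{true}} \concat \h$ is legal from some $q_0$, and exhibit $q_0$ itself as a witness against voidability. The leading call forces $k \notin q_0$ and the suffix $\h$ runs legally from $q_0 \dunion \set{k}$. Let the first call of $\h$ on key $k$ sit at position $j$; the prefix of $\h$ before $j$ contains no call on $k$, so along the legal run from $q_0 \dunion \set{k}$ the key $k$ is still present at step $j$, forcing the call at $j$ to be one enabled only when $k$ is present, namely $\tup{\p{insert},k,\p{false}}$ or $\tup{\p{delete},k,\p{true}}$. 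Running $\h$ from $q_0$ instead, $k \notin q_0$ and the same $k$-free prefix keeps $k$ absent up to step $j$, so that very call is disabled and the run gets stuck; hence $\h \notin \LegalFrom{q_0}$ while $\tup{\p{insert},k,\p{true}} \concat \h \in \LegalFrom{q_0}$, so the call is not \pre \h-voidable. The delete case is identical with the roles of ``present'' and ``absent'' swapped.

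The main obstacle I anticipate is this necessity direction: it requires pinning down the type of the first $k$-call from the combined run and then tracking the membership of $k$ through the $k$-free prefix in \emph{both} runs, which is exactly where the non-interference property and determinism do the work. A secondary point worth stating explicitly is the reading of ``legal'' in the lemma: since $\h$-voidability quantifies over all starting states $q$, the clause ``$\tup{\p{insert},k,\p{true}} \concat \h$ is not legal'' must mean ``legal from no abstract state'', and this is precisely the interpretation under which the sufficiency and necessity arguments above combine into the stated biconditional.
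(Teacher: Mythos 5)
Your proposal is correct, but note that there is nothing in the paper to compare it against: the lemma is stated without proof, as is its $\KVS'$ analogue (\cref{lm:kvs-voidable}) in the appendix, so your write-up in effect supplies the missing routine argument. The two ingredients you isolate are exactly the right ones: the decomposition of $\LegalPath{q}{q'}$ over concatenation (this is \cref{lm:basic-lib-specs} in the appendix) handles the first bullet and peels off the leading call, and the per-key non-interference induction — enabledness of a call on $k'$ depends only on the membership of $k'$, and its effect touches only $k'$ — is what makes both the sufficiency transfer (from $q \dunion \set{k}$ back to $q$) and the necessity argument (tracking $k$ through the $k$-free prefix in both runs, then getting stuck at the first $k$-call) go through; your case analysis of which $k$-calls are enabled with $k$ present versus absent is accurate. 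Most importantly, your closing remark about the reading of ``legal'' is not a pedantic aside but essential: under the alternative reading ``not in $\LegalFrom{\initState}$'' the stated biconditional is false — take $k' \ne k$ and $\h = \tup{\p{delete},k',\p{true}} \concat \tup{\p{delete},k,\p{true}}$, for which $\tup{\p{insert},k,\p{true}} \concat \h \notin \LegalFrom{\emptyset}$ yet $\tup{\p{insert},k,\p{true}} \concat \h \in \LegalFrom{\set{k'}}$ while $\h \notin \LegalFrom{\set{k'}}$, so the call is not \emph{\pre\h-voidable} — whereas under your reading ``legal from no abstract state'' the equivalence holds, and this is also the reading consistent with how the paper invokes the lemma (in \cref{th:linkfree:voided-voidabile}, where voidability must hold against an arbitrary reachable state, one rules out precisely the case that the combined history is legal from the relevant state and $\h$ mentions $k$). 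One cosmetic point: determinism of the set specification is convenient for speaking of ``the'' run, but your argument does not really need it, since each transition of $\Delta$ is forced state-by-state anyway; you could drop that hypothesis without change.
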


Note that operations which affect the abstract state
may still be voidable.
Moreover,
if the operation $\var{op}$ commutes with all the operations in~$\h$
then~$\var{op}$ is \pre \h-voidable,
but the converse is not necessarily true.
For example, a failed insert of~$k$ is \pre \h-voidable
when~$\h$ consists of a successful delete of~$k$,
but the two calls do not commute.
 \section{A Proof Technique for Persistent Linearizability}
\label{sec:master-thm}

In this section we formalize, through series of lemmas,
our methodology for proving durable linearizability.
The first step is to decouple the verification of recovery and of the operations,
by identifying an interface between the two in the form of the durable and recovered state representation functions.
Then linearizability is reduced to an induction over appropriate sequences of events.
Finally, \cref{sec:the-master-theorem} presents our \masterthm{}.
Throughout the section, we fix some arbitrary
library specification~$\tup{\AbsState, \Delta, \initState}$.

\subsection{Decoupling Recovery}
\label{sec:decoupling-recovery-formal}

As a first step,
we decouple the verification of recovery and of the operations,
so that they can be combined in persistently linearizable chains.
To do so, we specify two invariants, the durable and recovered memories.
Assume the library is specified using the style of~\cref{def:lib-spec}.
The proof technique we propose requires the definition of two
functions $\durable,\recovered \from \AbsState \to\powerset(\Mem)$:
\begin{itemize}
  \item $\durable(q)$ is the set of all
    durable memory representations of~$q$,
  \item $\recovered(q)$ is the set of all
    recovered memory representations of~$q$.
\end{itemize}

We say a memory~$M$ encodes a durable state~$q$ if $M \in \durable(q)$,
or that it encodes a recovered state~$q$ if $M\in\recovered(q)$.
For the recovery, $\durable(q)$ acts both as a precondition,
and as an invariant that must be preserved by each of its steps;
$ \recovered(q) $
is the postcondition of the recovery.
When verifying the operations,
one assumes that the initial memory has been recovered.
At any point in time, the code of operations
maintains the invariant~$\E q.\durable(q)$.
Technically, we start by defining when we consider a recovery
sound with respect to $\durable$ and $\recovered$.

\begin{definition}[Sound recovery]
\label{def:sound-recovery}
Given $\durable,\recovered \from \AbsState \to \powerset(\Mem)$,
  a recovery implementation~$\LibImpl[rec]$ is said
  \emph{\pre\tup{\durable,\recovered}-sound} if,
  for any execution~$G$ of~$\tup{\LibImpl[op], \LibImpl[rec]}$,
  with $G.\Init \in \durable(q)$ for some~$q\in\AbsState$,
  and with~$\LibImpl[op]$ arbitrary,
  the following hold:
  \begin{defenum}
    \item
      \label{cond:recovery-recovers}
if~$
        {G.\Rets} \inters \EvOfCid{\recoveryId} \ne \emptyset
      $ then $ {\mem{\restr{G.\po}{G.\Init\union\EvOfCid{\recoveryId}}} \in \recovered(q)} $;
    \item
      \label{cond:recovery-stutters}
      $
        \A w\in {G.\EvOfCid{\recoveryId}}.
          \A q\in \AbsState, \vec{e},\vec{e}'.
            \mem{G.\Init \concat \vec{e} \concat w \concat \vec{e}'}\in \durable(q)
              \iff
                \mem{G.\Init \concat \vec{e} \concat \vec{e}'}\in \durable(q).
        $
  \end{defenum}
\end{definition}

\Cref{cond:recovery-recovers}
considers the recovery run from a memory encoding a \emph{durable} state~$q$,
and ensures that, when the recovery returns,
the volatile memory encodes the \emph{recovered} state~$q$.
\Cref{cond:recovery-stutters}
requires that any write issued by the recovery,
preserve the durable state encoded by the memory.
More precisely, the writes of the recovery should be irrelevant for~$\durable$.
This requirement implies that any crash occurring during recovery will leave the memory
in a recoverable state, without altering the encoded abstract state.
It is also used in the verification of operations,
to argue that the abstract state of the persisted memory at the time of a crash
is not affected if some recovery events have not been persisted.

\begin{definition}[\pre\tup{\durable,\recovered}-Linearizability]
\label{def:ind-pers-lin}
  Consider
an implementation of operations~$ \LibImpl[op] $.
  We say~$\LibImpl[op]$ is \emph{\pre\tup{\durable,\recovered}-linearizable}
  if, for every~\pre\tup{\durable,\recovered}-sound~$\LibImpl[rec]$,
  all $q\in \AbsState$,
  all~$G$ execution of~$\tup{\LibImpl[op], \LibImpl[rec]}$
  with $G.\Init \in\durable(q)$,
  there is a $q'\in\AbsState$ such that:
  \begin{defenum}
    \item
      \label{cond:ops-legal}
      there is an abstract execution $\tup{\abs{G}, \lin}$ of~$G$
      with $ \hist(\enum{\lin}) \in \LegalPath{q}{q'} $;
    \item
      \label{cond:ops-recoverable}
      $\mem{\restr{G.\nvo}{G.\Persisted}} \in \durable(q')$.
  \end{defenum}
\end{definition}

\Cref{cond:ops-legal} asks to find a linearization that is
a legal history from abstract state~$q$ to~$q'$;
\Cref{cond:ops-recoverable} requires that the
persisted memory encode the same apparent final state~$q'$.

\begin{restatable}{theorem}{recdecoupling}
\label{th:rec-decoupling}
If, for some $\durable,\recovered \from \AbsState \to \Mem$
with~$ {\emptyset \in \durable(\initState)} $,
$\LibImpl[rec]$ is \pre\tup{\durable,\recovered}-sound and
$\LibImpl[op]$ is \pre\tup{\durable,\recovered}-linearizable,
then $\tup{\LibImpl[rec], \LibImpl[op]}$ is
durably linearizable.
\end{restatable}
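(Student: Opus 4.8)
The plan is to prove durable linearizability by induction over the eras of the chain, using $\tup{\durable,\recovered}$-linearizability to discharge each individual era and the chain equation $G_{i+1}.\Init = \mem{\restr{G_i.\nvo}{G_i.\Persisted}}$ to carry the durable invariant across crashes. Fix an arbitrary chain $G_0 \dots G_n$ of $\tup{\LibImpl[rec],\LibImpl[op]}$. I would prove by induction on $i$ that there exist abstract states $q_0,\dots,q_{i+1}$ and abstract executions $\tup{\abs{G_j},\lin_j}$ of $G_j$ for each $j\le i$ such that $q_0 = \initState$, $G_j.\Init \in \durable(q_j)$ for all $j\le i+1$, and $\hist(\enum{\lin_j}) \in \LegalPath{q_j}{q_{j+1}}$ for all $j\le i$. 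The key observation that makes this work is that each era $G_i$ is \emph{by construction} an execution of $\tup{\LibImpl[op],\LibImpl[rec]}$ (recovery run first, then operations), so it is exactly the kind of object to which \cref{def:ind-pers-lin} applies.

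For the base case, set $q_0 = \initState$; the chain condition gives $G_0.\Init = \emptyset$, and the hypothesis $\emptyset \in \durable(\initState)$ yields $G_0.\Init \in \durable(q_0)$. For the inductive step, assume $G_i.\Init \in \durable(q_i)$. Since $\LibImpl[rec]$ is $\tup{\durable,\recovered}$-sound, I can instantiate the universally quantified $\tup{\durable,\recovered}$-linearizability of $\LibImpl[op]$ with this very recovery, the state $q_i$, and the era $G_i$. This produces a state $q_{i+1}$, an abstract execution $\tup{\abs{G_i},\lin_i}$ of $G_i$ with $\hist(\enum{\lin_i}) \in \LegalPath{q_i}{q_{i+1}}$ (\cref{cond:ops-legal}), and $\mem{\restr{G_i.\nvo}{G_i.\Persisted}} \in \durable(q_{i+1})$ (\cref{cond:ops-recoverable}). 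The chain equation $G_{i+1}.\Init = \mem{\restr{G_i.\nvo}{G_i.\Persisted}}$ then gives $G_{i+1}.\Init \in \durable(q_{i+1})$, which closes the induction.

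To conclude, I would first record the routine composition lemma for legal paths, proved by induction on the inductive definition of $\LegalPath{}{}$ in \cref{def:lib-spec}: if $\h_1 \in \LegalPath{q}{q'}$ and $\h_2 \in \LegalPath{q'}{q''}$, then $\h_1 \concat \h_2 \in \LegalPath{q}{q''}$. Applying it $n$ times to the histories obtained above gives $\hist(\enum{\lin_0}) \concat \dots \concat \hist(\enum{\lin_n}) \in \LegalPath{\initState}{q_{n+1}} \subs \LegalFrom{\initState} = \Legal$. Since $\hist$ is defined pointwise it distributes over concatenation, so this is literally $\hist(\enum{\lin_0} \concat \dots \concat \enum{\lin_n}) \in \Legal$, and the family $\tup{\abs{G_i},\lin_i}$ witnesses durable linearizability per \cref{def:pers-lin}.

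The argument is essentially bookkeeping, so there is no deep obstacle; the one point that genuinely requires care is the syntactic alignment between \cref{cond:ops-recoverable} — which asserts $\mem{\restr{G_i.\nvo}{G_i.\Persisted}} \in \durable(q')$ — and the chain's definition of $G_{i+1}.\Init$. These must be the \emph{same} expression for the durable invariant to transfer intact across the crash, which they are by construction. It is worth noting what the soundness hypothesis does and does not contribute here: its only role at this level is to license instantiating the ``for every $\tup{\durable,\recovered}$-sound $\LibImpl[rec]$'' quantifier with the given recovery, while the actual content of \cref{cond:recovery-recovers,cond:recovery-stutters} (recovery re-establishing the volatile invariant and leaving the durable state untouched) is consumed inside the hypothesis that $\LibImpl[op]$ is $\tup{\durable,\recovered}$-linearizable. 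I also expect the terminal chain condition $G_n.\Persisted = G_n.\Durable$ to be unused by this proof, since the last era needs only \cref{cond:ops-legal}.
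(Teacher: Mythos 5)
Your proof is correct and follows essentially the same route as the paper's: an induction over the eras carrying the invariant $G_i.\Init \in \durable(q_i)$, discharging each era via \pre\tup{\durable,\recovered}-linearizability and transferring the durable state across crashes through the chain equation. The only cosmetic difference is that you keep the per-era legal paths $\hist(\enum{\lin_j}) \in \LegalPath{q_j}{q_{j+1}}$ in the induction hypothesis and compose them at the end, whereas the paper carries the concatenated history $\hist(\enum{\lin_1} \concat \dots \concat \enum{\lin_i}) \in \LegalPath{\initState}{q_i}$ directly; your side remarks (soundness used only to instantiate the quantifier, the terminal condition $G_n.\Persisted = G_n.\Durable$ being unused) are accurate.
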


\subsection{Linearizability Proofs}
\label{sec:lin-stategy-formal}

The first component of a linearizability proof is the volatile order.

\begin{definition}A \emph{volatile order of~$G$} is a transitive relation
  $\vo \subs G.E \times G.E$ such that:
    \begin{defenum}
    \item $G.\Init \times (G.E \setminus G.\Init) \subs \vo$
    \label{cond:init-min}
    \item $G.\EvOfCid{\recoveryId} \times (G.E \setminus (G.\Init \union G.\EvOfCid{\recoveryId})) \subs \vo$
    \label{cond:rec-min}
    \item $
        \ev{\UWrites \inters G.\EvOfCid{\recoveryId}} \seq
        \po \seq
        \ev{\UWrites \inters G.\EvOfCid{\recoveryId}}
          \subs \vo
      $
    \label{cond:rec-tso-resp}
  \end{defenum}
\end{definition}

If an order $\rel$ satisfies condition \ref{cond:init-min},
then for every $ \vec{e} \in G.\enum{\rel} $
we have that for some $i_0 < \len{\vec{e}}$,
$ \upto{\vec{e}}{i_0} = G.\Init $.
Henceforth, we write $\initOf(\vec{e})$ for such $i_0$.
If $\rel$ also satisfies condition \ref{cond:rec-min},
there is some $j$ such that
$
  i\leq j
  \iff
  \vec{e}(i) \in G.\Init \union G.\EvOfCid{\recoveryId}.
$
We write $\recEndOf(\vec{e})$ for such index~$j$.
Finally, if~$\rel$ also satisfies \ref{cond:rec-tso-resp} then
$
  \mem{\upto{\vec{e}}{\recEndOf(\vec{e})}}
  = \mem{\restr{G.\po}{G.\Init\union\EvOfCid{\recoveryId}}}.
$

Next, we introduce linearization strategies,
which package in a tuple the components needed to define a linearization
through the identification of linearization points.

\begin{samepage}
\begin{definition}[Linearization strategy]
\label{def:lin-strategy}
  Given an execution~$G$,
  a \emph{linearization strategy} for~$G$ is a tuple
  $ \tup{\lf,r,\rel, \alpha} $
  where:
  \begin{itemize}
    \item $ \lf \from G.\CallId \pto G.E $
          is an \emph{injective} finite partial function,
          which identifies so-called \emph{linearization event} (if any),
          for the calls of~$G$;
\item $ r \from G.\CallId \pto \Val $
          associates to each call a return value;
\item $ \rel \subs G.E \times G.E $ is an acyclic relation on events;
    \item $ \alpha \from \AbsState \to \powerset(\Mem) $
          is the state representation function.
  \end{itemize}
  We require:
  \begin{defenum}
\item $
      \A c\in\dom(\lf).
        \cidOf(\lf(c)) \ne \bot \land \cidOf(\lf(c)) \ne \recoveryId
    $,
    \item $ \dom(\lf) \subs \dom(r) $, and
    \item $ \A e \in {G.\Rets}. r(\cidOf(e)) = \valOf(e) $.
   \end{defenum}
\end{definition}
\end{samepage}

The function~$\lf$ identifies the event at which each operation is seen to have taken effect;
since not all calls might have reached that event, the function is partial.
The function~$r$ records the return values; it is required to agree with any return event present in the execution, but it might assign return values to calls that have not returned yet, but have already linearized.
The relation~$\rel$ is the one used to order the linearization events to
induce the linearized sequence of calls.
The representation function~$\alpha$ formalises how an abstract state can be
represented in memory.

We overload the `$\hist$' symbol so we can extract histories from sequences
of events, through linearization strategies.

\begin{definition}
\label{def:history-of}
  Let $ \tup{\lf,r,\rel, \alpha} $ be a linearization strategy.
We define:
  \begin{equation*}
    \histOf[\lf]{r}(\vec{e}) \is
      \begin{cases}
        \emptyvec
          \CASE \vec{e} = \emptyvec
        \\
        \tup{\callOf(c), r(c)} \concat \histOf[\lf]{r}(\vec{e}')
          \CASE \vec{e} = e \concat \vec{e}'
          \land \lf(c) = e
        \\
        \histOf[\lf]{r}(\vec{e}')
          \CASE \vec{e} = e \concat \vec{e}'
          \land \A c. \lf(c) \ne e
      \end{cases}
  \end{equation*}
\end{definition}

We defined the notion of strategy
generically because we will instantiate it in two ways
in proofs:
  once with the ``volatile'' parameters
  $\lf=\lp$, $\rel=\vo$, $\alpha=\volatile$,
  and once with the ``persistent'' parameters
  $\lf=\pt$, $\rel=\nvo$, $\alpha=\durable$.
In both cases, we want to use the strategy to validate an execution
by induction on \pre\rel-preserving sequences of events.

\begin{definition}[Validating strategy]
\label{def:valid-strategy}
  The strategy $ \tup{\lf,r,\rel, \alpha} $
  \emph{\pre\initOf-validates}~$G$ if
  for all~$ {\vec{e} \in \enum{\rel}} $, all $q_0 \in \AbsState$
  such that $ \mem{\upto{\vec{e}}{\initOf(\vec{e})}} \in \alpha(q_0) $,
  all~$ \initOf(\vec{e}) < i < \len{\vec{e}} $,
  and all~$ q \in \AbsState $,
  if
  $ \histOf[\lf]{r}(\upto{\vec{e}}{i-1}) \in \LegalPath{q_0}{q} $ and
  $ \mem{\upto{\vec{e}}{i-1}} \in \alpha(q) $,
  then:
  \begin{defenum}
    \item if $ \vec{e}(i) \ne \lf(c) $ for all~$c$, then
      $ \mem{\upto{\vec{e}}{i}} \in \alpha(q) $;
      \label{cond:stutter}\item if $ \vec{e}(i) = \lf(c) $ for some~$c$, then
      $ \mem{\upto{\vec{e}}{i}} \in \alpha(q') $
      for some~$q'$ such that
      $ (q,q') \in \Delta(c, r(c)) $.\label{cond:linpt-trans}\end{defenum}
  The strategy \emph{\pre\recEndOf-validates}~$G$ if it satisfies
  the above conditions using~$\recEndOf$ instead of~$\initOf$.
\end{definition}

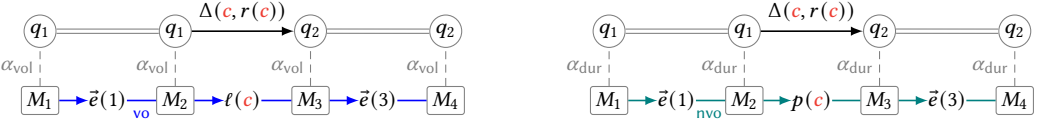
\begin{figure}
{\begin{tikzpicture}[
  linearization,
  font=\footnotesize,
  baseline=(M1.base)
]
  \def\mycall{{\color{ACMRed}c}}

  \begin{scope}[start chain=trace,on grid,join color=VO]
  \node[mem](M1){$M_1$};
  \node[event](e1){$\vec{e}(1)$};
  \node[mem](M2){$M_2$};
  \node[event,lp](e2){$\lp(\mycall)$};
  \node[mem](M3){$M_3$};
  \node[event](e3){$\vec{e}(3)$};
  \node[mem](M4){$M_4$};
  \end{scope}

  \path[VO] (e1) -- node[below,font=\tiny]{$\vo$} (M2);

  \node[above=of M1,state] (s1) {$q_1$};
  \node[above=of M2,state] (s2) {$q_1$};
  \node[above=of M3,state] (s3) {$q_2$};
  \node[above=of M4,state] (s4) {$q_2$};

  \draw
    (M1) edge[abs] node[abs fun]{$\volatile$} (s1)
    (M2) edge[abs] node[abs fun]{$\volatile$} (s2)
    (M3) edge[abs] node[abs fun]{$\volatile$} (s3)
    (M4) edge[abs] node[abs fun]{$\volatile$}
(s4)
    (s1) edge[eq] (s2)
    (s3) edge[eq] (s4)
    (s2) edge[->,semithick,black]
         node[above]{$ \Delta(\mycall, r(\mycall)) $}
    (s3)
  ;

\end{tikzpicture} }
\hfill
{\begin{tikzpicture}[
  linearization,
  font=\footnotesize,
  baseline=(M1.base)
]
  \def\mycall{{\color{ACMRed}c}}

  \begin{scope}[start chain=trace,on grid,join color=NVO]
  \node[mem](M1){$M_1$};
  \node[event](e1){$\vec{e}(1)$};
  \node[mem](M2){$M_2$};
  \node[event,lp](e2){$\pt(\mycall)$};
  \node[mem](M3){$M_3$};
  \node[event](e3){$\vec{e}(3)$};
  \node[mem](M4){$M_4$};
  \end{scope}

  \path[NVO] (e1) -- node[below,pos=.45,font=\tiny]{$\nvo$} (M2);

  \node[above=of M1,state] (s1) {$q_1$};
  \node[above=of M2,state] (s2) {$q_1$};
  \node[above=of M3,state] (s3) {$q_2$};
  \node[above=of M4,state] (s4) {$q_2$};

  \draw
    (M1) edge[abs] node[abs fun]{$\durable$} (s1)
    (M2) edge[abs] node[abs fun]{$\durable$} (s2)
    (M3) edge[abs] node[abs fun]{$\durable$} (s3)
    (M4) edge[abs] node[abs fun]{$\durable$}
(s4)
    (s1) edge[eq] (s2)
    (s3) edge[eq] (s4)
    (s2) edge[->,semithick,black]
         node[above]{$ \Delta(\mycall, r(\mycall)) $}
    (s3)
  ;

\end{tikzpicture} }
\caption{Examples of validating volatile strategy, on the left,
    and validating persistent strategy, on the right.
  }
  \label{fig:valid-strategy}
\end{figure}

The idea of \cref{def:valid-strategy}
is illustrated in \cref{fig:valid-strategy}.
When applied to volatile strategies (\cref{fig:valid-strategy}(a)),
showing that an execution is validated by the strategy
is done by induction on sequences $\vec{e}$ respecting~$\vo$.
For each position $i$ in the sequence,
assuming the \vo-induced memory~$M_i = \mem{\upto{\vec{e}}{i-1}}$
at that point encodes some state~$q$ through~$\volatile$,
one checks if the event~$\vec{e}(i)$
is a linearization point or not according to~$\lp$.
If it is, one checks that the event transforms the memory to one that encodes
some~$q'$ that is legal for the linearizing operation.
If it is not, one checks that the encoded state is preserved.
The end result is to have proven the linearization induced by the strategy
is legal.

The same proof scheme can be applied to persistent strategies
(\cref{fig:valid-strategy}(b)),
with the aim of proving that the persistency points produce a persisted memory
that encodes the expected state.

\begin{lemma}
\label{lm:validates-legal}
  Let\/ $G$ be an execution of~$\tup{\LibImpl[op],\LibImpl[rec]}$
  for some \pre\tup{\durable,\recovered}-sound~$\LibImpl[rec]$,
  with $G.\Init \in \durable(q)$: \begin{defenum}
    \item
      $\tup{\lp, r, \vo, \volatile}$
      \pre\recEndOf-validates~$G {} \implies
        \A{\vec{e} \in G.\enum{\vo}}.
          \E q'.{
            \histOf[\lp]{r}(\vec{e}) \in \LegalPath{q}{q'}
          }.
      $
    \item
      $\tup{\pt, r, \restr{\nvo}{G.\Persisted}, \durable}$
      \pre\initOf-validates~$G {} \implies
        \A{\vec{e} \in \enum[G.\Persisted]{\nvo}}.\E q'.
          \histOf[\pt]{r}(\vec{e}) \in \LegalPath{q}{q'}
          \land
          {\mem{\vec{e}} \in \durable(q')}.
      $
  \end{defenum}
\end{lemma}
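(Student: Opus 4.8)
My plan is to prove both claims with a single induction on prefixes of the chosen $\rel$-respecting enumeration, since \cref{def:valid-strategy} is exactly a one-step preservation property that I can iterate. I fix $\vec{e}$ (ranging over $G.\enum{\vo}$ for the first claim and over $\enum[G.\Persisted]{\nvo}$ for the second), write $b$ for the base index ($b = \recEndOf(\vec{e})$ in the first case, $b = \initOf(\vec{e})$ in the second), and instantiate the generic strategy components $(\lf,\rel,\alpha)$ with $(\lp,\vo,\volatile)$ and $(\pt,\restr{\nvo}{G.\Persisted},\durable)$ respectively. The induction, on $i$ with $b \le i < \len{\vec{e}}$, maintains the invariant
\[
  I(i):\quad
  \E q_i\in\AbsState.\
    \histOf[\lf]{r}(\upto{\vec{e}}{i}) \in \LegalPath{q}{q_i}
    \ \land\
    \mem{\upto{\vec{e}}{i}} \in \alpha(q_i).
\]
Taking $q' \is q_{\len{\vec{e}}-1}$ and using $\upto{\vec{e}}{\len{\vec{e}}-1} = \vec{e}$ (and, for the second claim, $\mem{\vec{e}} = \mem{\restr{G.\nvo}{G.\Persisted}}$) then delivers both conclusions.

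For the inductive step, common to both parts, I instantiate the universally quantified $q_0$ of \cref{def:valid-strategy} with the given $q$ (justified by the base case) and its intermediate state with $q_{i-1}$; the two premises then become exactly $I(i-1)$. If $\vec{e}(i)$ is not a linearization event, \cref{cond:stutter} gives $\mem{\upto{\vec{e}}{i}}\in\alpha(q_{i-1})$, while $\histOf$ distributes over concatenation with $\histOf[\lf]{r}(\vec{e}(i)) = \emptyvec$, so the history is unchanged and $I(i)$ holds with $q_i \is q_{i-1}$. If $\vec{e}(i) = \lf(c)$ — with $c$ unique since $\lf$ is injective — \cref{cond:linpt-trans} yields $q'$ with $(q_{i-1},q')\in\Delta(c,r(c))$ and $\mem{\upto{\vec{e}}{i}}\in\alpha(q')$; appending $\tup{\callOf(c),r(c)}$ and applying the closure rule defining $\LegalPath{}{}$ gives $\histOf[\lf]{r}(\upto{\vec{e}}{i})\in\LegalPath{q}{q'}$, so $I(i)$ holds with $q_i \is q'$.

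The base cases differ, and this is where the decoupling of recovery enters. For the second claim, $\restr{\nvo}{G.\Persisted}$ orders the initialisation events first (as $G.\Init \subs G.\Persisted$ and $\nvo$ puts $\Init$ before everything else), so $\upto{\vec{e}}{\initOf(\vec{e})} = G.\Init$; since persistency points are never initialisation events, $\histOf[\pt]{r}(G.\Init)=\emptyvec\in\LegalPath{q}{q}$ and $\mem{G.\Init}\in\durable(q)$ holds by hypothesis, giving $I(b)$ with witness $q$ outright. For the first claim, $\upto{\vec{e}}{\recEndOf(\vec{e})}$ consists of the initialisation and recovery events, again none a linearization event, so the history is empty; the work is to place this memory in $\volatile(q)$. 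Using the volatile-order identity $\mem{\upto{\vec{e}}{\recEndOf(\vec{e})}} = \mem{\restr{G.\po}{G.\Init\union\EvOfCid{\recoveryId}}}$ noted after the definition of volatile order, \cref{cond:recovery-recovers} of \cref{def:sound-recovery} places it in $\recovered(q)$ once the recovery has returned, whence $q_0 = q$ via the expected inclusion $\recovered(q)\subs\volatile(q)$ between the two representations; if the recovery has not yet returned, no operation has linearized, so $\histOf[\lp]{r}(\vec{e})=\emptyvec\in\LegalPath{q}{q}$ immediately.

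I expect the base case of the first claim to be the only real obstacle: the remainder is bookkeeping mechanically driven by \cref{def:valid-strategy} and the homomorphism property of $\histOf$. Pinning the starting abstract state of the volatile induction to the given $q$ is precisely what recovery soundness buys us — it is the formal counterpart of ``recovery restores the volatile invariant for the durable state $q$''. By contrast the second claim needs no recovery reasoning inside the induction: $G.\Init\in\durable(q)$ is assumed, and any persisted recovery writes occurring later are absorbed as stutter steps by \cref{cond:stutter}, their durable-state preservation having already been folded into the assumed persistent validation.
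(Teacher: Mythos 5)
Your proposal is correct and takes essentially the same route as the paper: the paper obtains this lemma as the hindsight-free special case of its Lemmas~\cref{lm:linpt-validates-legal} and~\cref{lm:perst-validates-legal}, whose proofs are precisely your induction with the strengthened invariant (legal prefix history together with $\mem{\upto{\vec{e}}{i}} \in \alpha(q_i)$), the volatile base case discharged via recovery soundness and $\recovered(q) \subseteq \volatile(q)$, and the persistent base case via $G.\Init \in \durable(q)$. Your explicit treatment of the case where the recovery has not yet returned (empty history, so $q' = q$) is a small point the paper's formal proof glosses over, but it does not change the argument.
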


\begin{remark}[Completeness]
  As shown by \cite{SchellhornDW14}, identifying linearization points is a complete technique for proving linearizability,
  provided that \begin{enumerate*}
    \item linearization points can be dependent on future events, and
    \item a single event can linearize multiple calls.
  \end{enumerate*}
  Since we define a linearization strategy given a full execution~$G$,
  the first item above is fully supported.
  Even when extended with support for hindsight, however,
  linearization strategies cannot fully accommodate the second item,
  since the maps from calls to linearization events are required to
  be injective.
  This is an obstacle when multiple writer operations linearize
  together (\eg in the elimination stack, where a push and pop cancel out).
  In principle, it is possible to allow a linearization map~$\lp$
  to return, for each call, a pair~$\tup{e,i}$ of an event and a $i\in\Nat$.
  This way, $\lp$ can be injective and still associate multiple calls
  to the same event (their relative order being determined by~$i$).
  For the sake of simplicity, we use here the simpler, incomplete
  definition.
\end{remark}

\subsection{The \thename\ Theorem}
\label{sec:the-master-theorem}

We are now ready for our \masterthm, which we first state and then explain.

\begin{theorem}[\thename\ Theorem]
\label{th:master}
  Consider a library with deterministic
  specification~$\tup{\AbsState, \Delta, \initState}$
  and operations implementation~$ \LibImpl[op] $.
  Let~$ \durable,\recovered,\volatile \from \AbsState \to \powerset(\Mem) $
  with $ \A q.\recovered(q) \subs \volatile(q) $.
  To prove
  $\LibImpl[op]$ is \pre\tup{\durable,\recovered}-linearizable,
  it is sufficient to prove the following.
  Fixing arbitrary
\pre\tup{\durable,\recovered}-sound $\LibImpl[rec]$,
    $q\in \AbsState$, and
    $G$ execution of~$\tup{\LibImpl[op], \LibImpl[rec]}$
    with $G.\Init \in\durable(q)$,
find:{\setlength{\multicolsep}{6.0pt plus 2.0pt minus 1.5pt}\begin{multicols}{2}\begin{itemize}
      \item a volatile order $\vo$
\item $ \lp \from G.\CallId \pto G.E $
\item $ r \from G.\CallId \pto \V $
  \columnbreak
      \item $ \pt \from G.\CallId \pto G.\Persisted $
      \item a persisted readers set\\$ \PRd \subs G.\CallId \setminus \dom(\pt) $
    \end{itemize}
  \end{multicols}}such that:
  \begin{defenum}[itemsep=.4\baselineskip]
    \item $
          \dom(\pt) \union \PRd \subs \dom(\lp).
        $
      \label{cond:main:pers-subs-lp}
    \item $
          \A c \in \PRd.\;
            \Delta(c,r(c)) \subs \idOn{\AbsState}.
        $
\label{cond:main:pr-id}
    \item
      $
        \A c,c'.
          \smash{\lp(c) \hb-> \lp(c')}
          \implies
          \smash{\lp(c) \vo-> \lp(c')}.
      $
\label{cond:main:vo-pres-hb}
    \item $
        \A c \in \dom(\lp).
          \E e_1,e_2 \in {G.\EvOfCid{c}}.
            \smash{e_1 \hb?-> \lp(c) \hb?-> e_2}.
       $
      \label{cond:main:linpt-hb}
    \item $\tup{\lp, r, \vo, \volatile}$
      \pre\recEndOf-validates~$G$.
      \label{cond:main:linpt-valid}
    \item $
        \cidOf(G.\Rets) \subs \dom(\pt) \union \PRd.
      $
      \label{cond:main:rets-persist}
    \item For any~$c, c'\in \dom(\pt)$, either:
      \\
      $
          \tup{\callOf(c), r(c)}
            \comm{\AbsState}{\Delta}
          \tup{\callOf(c'), r(c')}
      $ or $
          \smash{\pt(c) \nvo-> \pt(c')}
          \implies
          \smash{\lp(c) \vo-> \lp(c')}.
      $\label{cond:main:vo-nvo-agree-commute}\item
      \begin{samepage}
        For any~$c \in \dom(\lp) \setminus (\dom(\pt) \union \PRd)$,
        and all $\vec{e} \in {\enum[G.E]{\vo}}$:
        \\
        if $
          \vec{e} = \vec{e}' \concat \lp(c) \concat \vec{e}''
        $ then $
            \tup{\callOf(c), r(c)}
        $ is
        \pre \h-voidable,
        where $
          \h = \restr{\histOf[\lp]{r}(\vec{e}'')}{(\dom(\pt) \union \PRd)}
        $.
\end{samepage}
      \xdef\voidableitem{\arabic{defenumi}}
      \label{cond:main:voided-linpt-voidable}
    \item $
      \histOf[\pt]{r}(\enum[G.\Persisted]{\nvo})
      \in \LegalFrom{q}
        \implies
          \tup{\pt, r, \restr{\nvo}{G.\Persisted}, \durable}
      $ \pre\initOf-validates~$G$.
      \label{cond:main:perst-valid}
  \end{defenum}
\end{theorem}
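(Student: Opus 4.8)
The plan is to discharge the two obligations of \cref{def:ind-pers-lin} for the witnesses $\vo,\lp,r,\pt,\PRd$: produce an abstract execution whose linearization is legal from~$q$ to some~$q'$, and then show $\mem{\restr{G.\nvo}{G.\Persisted}} \in \durable(q')$ for the \emph{same}~$q'$. The overall strategy is to first extract a legal volatile history, build the abstract execution from it, and then transform that history step by step into the $\nvo$-ordered persistent history, so that the persistent strategy can certify the persisted memory through \cref{lm:validates-legal}.

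First I would run the volatile argument. Since $\vo$ is acyclic, fix $\vec{e} \in G.\enum{\vo}$; by \cref{cond:main:linpt-valid} and \cref{lm:validates-legal} the history $\histOf[\lp]{r}(\vec{e})$, which orders all of $\dom(\lp)$ by $\vo$ on linearization points, lies in $\LegalPath{q}{q_0}$ for some~$q_0$. To turn this into an abstract execution I take $C$ to be completion events for exactly the not-yet-returned calls in $\dom(\pt)\union\PRd$ — well defined because by \cref{cond:main:rets-persist} every returned call already lies there — and let $\lin$ be a linear extension of the $\vo$-order on linearization points. That such an extension respects $\restr{\abs[C]{G}.\hb}{P_{\Rets}}$ is the classical real-time-consistency step: using \cref{cond:main:vo-pres-hb} ($\vo$ preserves $\hb$ between linearization points) and \cref{cond:main:linpt-hb} (each linearization point is $\hb$-sandwiched by events of its own call), one checks that $\hb$-ordered return events have $\vo$-ordered linearization points. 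The remaining calls, those in $\dom(\lp)\setminus(\dom(\pt)\union\PRd)$, are the \emph{voided} ones; processing them from right to left and repeatedly invoking the voidability hypothesis \cref{cond:main:voided-linpt-voidable} removes them while preserving legality-from-$q$, yielding a legal history over $\dom(\pt)\union\PRd$ from~$q$ to some~$q'$. This history is exactly $\hist(\enum{\lin})$, which discharges the first obligation with this~$q'$.

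Next I would transport $q'$ to the persistent side. Starting from that $\vo$-ordered legal history over $\dom(\pt)\union\PRd$, I first delete the reader calls: by \cref{cond:main:pr-id} each $c \in \PRd$ induces only identity transitions, so removing them leaves a legal history over $\dom(\pt)$, still from~$q$ to~$q'$. It then remains to reorder this $\vo$-ordered history into the $\nvo$-order of $\histOf[\pt]{r}(\enum[G.\Persisted]{\nvo})$, which I would do by a bubble-sort through adjacent transpositions. Every swapped adjacent pair $(a,b)$ is an inversion between the two orders ($\lp(a)$ before $\lp(b)$ in the enumeration but $\pt(b)\,\nvo\,\pt(a)$); applying \cref{cond:main:vo-nvo-agree-commute} to $(b,a)$, its second disjunct is an implication with true premise whose conclusion $\lp(b)\,\vo\,\lp(a)$ would contradict the $\vo$-respecting enumeration, so the implication fails and the pair must commute under $\comm{\AbsState}{\Delta}$. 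Hence each transposition preserves legality and the reached state via $\hequiv{\AbsState}{\Delta}$, giving $\histOf[\pt]{r}(\enum[G.\Persisted]{\nvo}) \in \LegalPath{q}{q'}$. This establishes the premise of \cref{cond:main:perst-valid}, so the persistent strategy $\initOf$-validates~$G$, and \cref{lm:validates-legal} yields $\mem{\restr{G.\nvo}{G.\Persisted}} \in \durable(q'')$ with $\histOf[\pt]{r}(\enum[G.\Persisted]{\nvo}) \in \LegalPath{q}{q''}$; determinism of the specification forces $q'' = q'$, discharging the second obligation.

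I expect the main obstacle to be the reordering argument on the persistent side: one must verify that the bubble-sort only ever swaps commuting pairs, and, more delicately, that voiding, reader-removal, and reordering can all be carried out so that the three successive legal histories (the voided one over $\dom(\pt)\union\PRd$, its reader-free restriction, and the final $\nvo$-ordered one) land on the \emph{same} abstract state~$q'$. Tracking this single~$q'$ across the voidability step is the crux, since voidable calls need not be identity and \cref{def:voidable} only preserves legality from~$q$ rather than the target state; here I would lean on determinism to pin down the reached state after each manipulation. A secondary subtlety is the real-time-consistency step justifying the existence of $\lin$, which must be extracted from \cref{cond:main:vo-pres-hb,cond:main:linpt-hb} together with the well-formedness of return events in the operational model.
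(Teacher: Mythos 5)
Your proposal is correct and follows essentially the same route as the paper's proof: a legal $\vo$-induced history via the validation lemma, right-to-left removal of voided calls by \cref{cond:main:voided-linpt-voidable}, an abstract execution whose $\hb$-consistency rests on \cref{cond:main:vo-pres-hb,cond:main:linpt-hb}, reader deletion by \cref{cond:main:pr-id}, reordering into $\nvo$-order by adjacent commuting transpositions justified by contraposition of \cref{cond:main:vo-nvo-agree-commute}, and finally \cref{cond:main:perst-valid} with determinism pinning the common target state~$q'$. The only deviations are inessential (your bubble-sort versus the paper's insertion-sort, and the order in which you build the abstract execution versus filter voided calls), and the subtleties you flag — determinism tracking~$q'$ and the real-time-consistency step, which the paper discharges via \cref{lm:abs-hb-to-hb} — are exactly where the paper spends its effort.
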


The theorem follows the high-level description of \cref{sec:overview}.
To prove \pre\tup{\durable,\recovered}-linearizability of~$G$,
we have to provide two linearization strategies:
the volatile one~$\tup{\lp, r, \vo, \volatile}$
and the persistent one~$\tup{\pt, r, \restr{\nvo}{G.\Persisted}, \durable}$.
Since the persistency points must be durable events,
$\dom(\pt)$ only represents the persisted ``writer'' calls,
\ie calls that induce an abstract state change.
The persisted readers set~$\PRd$ indicates which of the ``reader'' calls
should be considered (logically) persisted.

Then the theorem asks us to check a number of conditions.
Condition \ref{cond:main:pers-subs-lp} checks
that every persisted operation has a linearization point.
Condition \ref{cond:main:pr-id}
ensures that the persistent readers are indeed readers.

The goal of conditions
\ref{cond:main:vo-pres-hb} and \ref{cond:main:linpt-hb}
is to make sure the final linearization respects
the~$\hb$ order between the calls, as required by \cref{def:abs-exec}.
In particular, we do not want the linearization to contradict
the~$\po$ ordering between calls.
In fact~$\vo$ might not preserve~$\hb$ in general:
typically $\vo$ reconstructs a global notion of time which
might contradict some $\po$ edges,
for examples the ones between write and read events.
Condition \ref{cond:main:vo-pres-hb}
states that~$\vo$ has to preserve~$\hb$ on linearization points,
which are typically either reads or updates.
Condition \ref{cond:main:linpt-hb}
requires a linearization point to be \hb-between two events of the call it linearizes.
This ensures that an~$\hb$ edge between calls implies an $\hb$ edge between
the respective linearization points.
If linearization points are local to the call they linearize,
the condition is trivially satisfied.

The conditions so far represent well-formedness constraints
that usually hold by construction.
The conditions that follow are the ones where the actual algorithmic insight is involved.

Condition \ref{cond:main:linpt-valid}
requires to prove regular volatile linearizability
using the volatile linearization strategy.
This is typically a straightforward adaptation
of the proof of a non-durable version of the same data structure
(\eg Harris list for the link-free set).
The adaptation simply needs to ensure that the events added
to make the data structure durable preserve the encoded state when executed.

When making a linearizable data structure durable,
after having decided how the persisted memory represents an abstract state (through~$\durable$),
the main design decision is which flushes to issue and when.
Conditions~\ref{cond:main:rets-persist},
\ref{cond:main:vo-nvo-agree-commute}, and
\ref{cond:main:voided-linpt-voidable}
are the ones that check that all the flushes needed for correctness are issued
by the operations.
This can help guiding optimizations,
as redundant flushes would be the ones not contributing to the proofs of these conditions.

Condition~\ref{cond:main:rets-persist} reflects the
\emph{unbuffered} nature of durable linearizability:
it requires every returned call to be considered as persisted.
The only way to ensure a writer operation is persisted before returning
is to issue a (synchronous) flush on the address of the persistency point of
the call, which we dub the ``flush before return'' policy.

Condition \ref{cond:main:vo-nvo-agree-commute}
deals with the discrepancies between the volatile and the persistent linearization orders.
Specifically, it consider two calls that have both persisted.
If they abstractly commute, their relative order in the linearization
does not matter.
Otherwise, it must be the case that they are ordered by the volatile
linearization strategy in the same order as they are by
the persistent strategy.
In implementations, this is typically achieved by making
every call execute their linearization point after having made sure that
the persistency point of earlier non-commuting calls are flushed.

Condition \ref{cond:main:voided-linpt-voidable}
deals with the second source of discrepancies between volatile and persistent
linearizations: voided calls, \ie calls that the volatile argument
sees as having linearized, but have not persisted.
A simpler (but less general) version of the condition would mirror the previous condition using commutativity:
\begin{enumerate}\em
\item[(\voidableitem\/$'$\!)]
  For any
    $ c \in \dom(\lp) \setminus (\dom(\pt) \union \PRd)$ and
    $c' \in \dom(\pt) \union \PRd$, either:
    \\
    $
      \tup{\callOf(c), r(c)}
        \comm{\AbsState}{\Delta}
      \tup{\callOf(c'), r(c')}
    $ or $
      {\lp(c') \vo-> \lp(c)}.
    $
\end{enumerate}
Condition ($\voidableitem'$) considers
the situation where the volatile linearization
places a voided call~$c$
before a persisted call~$c'$.
If such calls commuted, it would be possible to rearrange the volatile
linearization until all the voided calls appear at the end,
while preserving legality of the sequence.
Then the persistent linearization would be a prefix of the volatile one,
which is also a legal sequence.
Condition ($\voidableitem'$) therefore asks that the calls either commute,
or that the voided call is \vo-after the persisted one.

The case study of \cref{sec:eval}, however, does not satisfy ($\voidableitem'$),
but satisfies the more permissive \ref{cond:main:voided-linpt-voidable},
which uses voidability instead of commutativity:
the condition asks to prove that voided calls are voidable
with respect to the rest of the linearization ahead of them.
More precisely, one considers the linearization induced by
$\vo$ and $\lp$, and finds the linearization point of a voided call~$c$.
Then one needs to check that the call is voidable with respect to
the volatile history of non-voided calls ahead ($\h$).
This means it can be removed from the linearization without
affecting its legality.

The condition is typically enforced by making sure that possibly conflicting
calls in $\h$ issue a flush on the address of the persistency point of~$c$
before persisting themselves: if that is the case,
then either $\h$ does not contain conflicting calls, or
$c$ has been persisted, which means it is not voided.

When the conditions presented so far hold,
the legality of the volatile linearization
implies the legality of the persistent linearization.
What remains to prove is that the persistency points modify the persisted memory
so that it encodes the output state expected given the legal linearization.
This is checked by condition \ref{cond:main:perst-valid},
which allows us to \emph{assume} the persisted linearization is legal
(a fact that follows from the other conditions)
and asks us to prove by induction on $\nvo$ that each persistency point
modifies the memory so that the encoded state reflects the expected change.
Proving this does not involve flushes,
but merely checks that the persistency points enact changes that are compatible
with the durable state interpretation~$\durable$.

\subsection{The Persist-First \thename\ Theorem}
\label{sec:persist-first-master}

The \masterthm\ applies to schemes that first linearize operations and then
persist them, as codified by condition \ref{cond:main:pers-subs-lp}.
These schemes are natural as in hardware
writes are first propagated in working memory and then persisted asynchronously.
Persist-first schemes store the data structure in two redundant versions,
one used for durability and a purely volatile one.
They reverse the usual scheme by first committing the effects of an operation in the persistent representation and then linearizing it in the volatile representation.

To the best of our knowledge, every durable data structure in the literature
adopts either a linearize-first or a persist-first scheme, never mixing the two.
We therefore opted for providing a separate theorem for persist-first, although in principle it is possible to provide a joined version.

\begin{wrapfigure}[8]{R}{30ex}\vspace*{-.5em}
  \centering
  \begin{tikzpicture}[
  state/.style={
    outer sep=0pt,
    inner sep=1pt,
  },
node distance=1.5em
]
\small
\node[state](q0){$q_0$};
\node[state,above right=of q0](q1){$q_1$};
\node[state,below right=of q0](q2){$q_2$};
\node[state,below right=of q1](q3){$q_3$};
\draw[->,semithick]
  (q0)
    edge
    node[inner sep=1pt,anchor=south east,pos=.3] (c) {$c$}
    (q1)
  (q0)
    edge[densely dashed]
    node[pos=.4,inner sep=1pt,anchor=north east] (h1) {$h$}
    (q2)
  (q1)
    edge
    node[pos=.4,inner sep=1pt,anchor=south west] (h2) {$h$}
    (q3)
  ;
\path (h1) -- node[font=\large,sloped] {$\impliedby$} (h2);
\node[below=0 of q2]{Voidable};

\node[state,right=of q3](q0){$q_0$};
\node[state,above right=of q0](q1){$q_1$};
\node[state,below right=of q0](q2){$q_2$};
\node[state,below right=of q1](q3){$q_3$};
\draw[->,semithick]
  (q0)
    edge
    node[inner sep=1pt,anchor=south east,pos=.3] (h) {$h$}
    (q1)
    edge
    node[inner sep=1pt,anchor=north east,pos=.4] (c1) {$c$}
    (q2)
  (q1)
    edge[densely dashed]
    node[inner sep=1pt,anchor=south west,pos=.4] (c2) {$c$}
    (q3)
  ;
\path (c1) -- node[font=\large,sloped] {$\implies$} (c2);
\node[below=0 of q2]{Appendable};
\end{tikzpicture} \vspace*{-1em}
  \caption{Duality between voidability and appendability.}
  \label{fig:void-app-duality}
\end{wrapfigure}
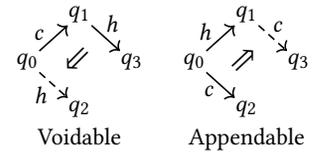

The \pfmasterthm, shown in \appendixref{sec:full-pers-first-master-theorem} coincides with the \masterthm, except for
conditions~\ref{cond:main:pers-subs-lp}, \ref{cond:main:vo-nvo-agree-commute}, and \ref{cond:main:voided-linpt-voidable}.
The obvious substitute of \ref{cond:main:pers-subs-lp} is
$ \dom(\lp) \subs \dom(\pt) $.
This excludes the presence of voided calls, but allows for what we call
``prematurely persisted'' calls:
calls that get persisted
at some point in the persistent linearization, but have not linearized yet.
To account for this,
condition~\ref{cond:main:vo-nvo-agree-commute} is modified to
additionally require any persisted and linearized call~$c$
to be persisted \nvo-before any prematurely persisted~$c'$,
unless~$c$ and~$c'$ commute.

Finally, just as condition~\ref{cond:main:voided-linpt-voidable} of
the \masterthm\ requires voided calls to be ``voidable'',
the \pfmasterthm\ requires prematurely persisted calls to be ``appendable''.

\begin{definition}[Appendable call]
\label{def:voidable}
  Let ${\tup{c,v} \in \Call \times \V}$, and
  ${\h \in \Hist}$.
  We say
  $\tup{c,v}$ is \emph{\pre \h-appendable}
  if,
  for all $q \in \AbsState$
  we have $
      \bigl(
        c \in \LegalFrom{q}
        \land
        \h \in \LegalFrom{q}
      \bigr)
        \implies
          {(\h \concat \tup{c,v}) \in \LegalFrom{q}}.
  $
\end{definition}

\Cref{fig:void-app-duality} shows in which sense voidability and appendability
can be considered duals.
For persist-first, the voidability condition~\ref{cond:main:voided-linpt-voidable} is replaced with one that requires
each prematurely persisted call~$c$ to be \pre \h-appendable
where~$\h$ is the history induced by $\vo$ and $\lp$,
that comes \vo-after the persistency point of~$c$.

 \section{Case Study: the Link-Free Set}
\label{sec:eval}

In this section we sketch how our \masterthm\ can be used
to provide a formal proof of the link-free set of \citet{SOFT:oopsla}
with respect to the Px86 memory model.
The full argument is presented in \appendixref{sec:verif-linkfree}.

\subsection{State Representation}
\label{sec:basic-lf-state-repr}

We begin by formalizing the intended memory representation of a set.
We use a distinguished address $\var{head}$ as the entry point of the data structure for the current era.
We have two overlaid representations: the recoverable and the volatile ones.
\begin{definition}[Durable state representation]
\label{def:basic-lf:recoverable}
  We define $ \durable \from \KS \to \powerset(\Mem) $
  as the function such that
  $ M \in \durable(S) $
  if and only if
  $
    M =
      M_{\lbl{s}} \dunion
      M_{\lbl{d}} \dunion
      M_{\lbl{g}}
  $
  where
\begin{align*}
    M_{\lbl{s}} &=
      \textstyle\Dunion_{k\in S} \left[
        \loc[x_k.key] \mapsto k,
        \loc[x_k.nxt] \mapsto \tup{0,\wtv},
        \loc[x_k.valid] \mapsto 1
      \right]
    \\
    M_{\lbl{d}} &=
      \textstyle\Dunion_{y\in X_{\lbl{d}}} [
        \loc[y.key] \mapsto \wtv,
        \loc[y.nxt] \mapsto \tup{1,\wtv},
        \loc[y.valid] \mapsto 1
      ]
    \\
    M_{\lbl{g}} &=
      \textstyle\Dunion_{y\in X_{\lbl{g}}} [
        \loc[y.key] \mapsto \wtv,
        \loc[y.nxt] \mapsto \tup{0,\wtv},
        \loc[y.valid] \mapsto 0
      ]
  \end{align*}
  for some sets of addresses
  $X_{\lbl{s}} = \set{x_k | k\in S}$,
  $X_{\lbl{d}}$, and
  $X_{\lbl{g}} $.
  Intuitively,
  $ M_{\lbl{s}} $ collects the nodes representing members of $S$,
  $ M_{\lbl{d}} $ collects deleted nodes,
  $ M_{\lbl{g}} $ collects garbage nodes.\footnote{For simplicity we assume head and tail nodes (\ie nodes with key $\pm\infty$) do not survive a crash; they would otherwise also accumulate as garbage nodes since recovery re-allocates them.}\end{definition}

\begin{definition}[Volatile state representation]
\label{def:basic-lf:volatile}
  We define $ \volatile \from \KS \to \powerset(\Mem) $
  as the function such that
  $ M \in \volatile(S) $
  iff
  $
    M =
      M_{\lbl{s}} \dunion
      M_{\lbl{d}} \dunion
      M_{\lbl{u}}
  $
  where
  $ x_{-\infty} = \var{head} $,
$
    S \union \set{+\infty,-\infty} = \set{k_1,\dots,k_n}
  $, and
  \begin{align*}
    M_{\lbl{s}} &=
      \textstyle\Dunion_{1 \leq i \leq n} \left[
        \loc[x_{k_i}.key] \mapsto {k_i},
        \loc[x_{k_i}.nxt] \mapsto \tup{0,\wtv},
        \loc[x_{k_i}.valid] \mapsto \wtv
      \right]
    \\
    M_{\lbl{d}} &=
      \textstyle\Dunion_{y\in X_{\lbl{d}}} [
        \loc[y.key] \mapsto \wtv,
        \loc[y.nxt] \mapsto \tup{1,\wtv},
        \loc[y.valid] \mapsto 1
      ]
    \\
    M_{\lbl{u}} &=
      \textstyle\Dunion_{y\in X_{\lbl{u}}} [
        \loc[y.key] \mapsto \wtv,
        \loc[y.nxt] \mapsto \tup{0,\wtv},
        \loc[y.valid] \mapsto 0
      ]
  \end{align*}
  for some sets of addresses
  $X_{\lbl{d}}$,
  $X_{\lbl{u}} $, and
  $ X_{\lbl{s}} = \set{x_{k_1},\dots,x_{k_n}} $
  and such that
  \begin{gather}
    \A x,y \in \Addr.
      \bigl(
        M(\loc[x.key]) < +\infty \land
        M(\loc[x.nxt]) = \tup{\wtv, y}
      \bigr)
        \implies
          M(\loc[x.key]) < M(\loc[y.key])
    \label{cond:basic-lf:sorted-links}
    \\
    \begin{multlined}
    \A i<n.\E m\geq 0.
    \E y_1,\dots,y_m\in X_{\lbl{d}}.
    \E y_{m+1} = x_{k_{i+1}}.
    \hspace{10em}\\
    \textstyle
      M(\loc[x_{k_i}.nxt]) = \tup{0,y_1}
      \land
      \bigl(
      \LAnd_{1\leq j \leq m}
        M(\loc[y_j.nxt]) = \tup{1,y_{j+1}}
      \bigr)
\end{multlined}
    \label{cond:basic-lf:members-reachable}
    \\
    \A y \in X_{\lbl{d}}.
    \E y' \in X_{\lbl{s}} \union X_{\lbl{d}}.
      M(\loc[y.nxt]) = \tup{\wtv,y'}
    \label{cond:basic-lf:marked-nxt-init}
  \end{gather}
  That is,
  $ M_{\lbl{s}} $ is a sorted linked list of
  valid unmarked nodes representing members of the store,
  possibly interleaved with deleted notes;
  $ M_{\lbl{d}} $ represents deleted nodes, and
  $ M_{\lbl{u}} $ represents uninitialised nodes.
  Note that even links in
  $ M_{\lbl{d}} $ and
  $ M_{\lbl{u}} $ are sorted,
  although they are not required to form a list.
  Moreover, the sortedness constraint~\eqref{cond:basic-lf:sorted-links}
  implies~$k_i < k_{i+1}$ for all~$i < n$.

\end{definition}

The recovery function's goal is to repair some memory in $\durable(S)$
so that it belongs to~$\volatile(S)$.
The recovered state representation is therefore
$ \recovered(S) \is \durable(S) \inters \volatile(S) $.

Showing that the recovery function is sound is easy.
The first soundness condition requires to prove that from some memory in $\durable(q)$, upon termination, the recovery produced a memory in $\recovered(q)$. This is easy to establish with standard sequential reasoning.
The second condition requires the recovery to never introduce intermediate
memories which are not in $\durable(S)$. Since the recovery only modifies the \p{nxt} fields, which are not constrained by $\durable$,
the condition follows immediately.

\subsection{Volatile Linearizability}
\label{sec:lf-lineariz}

We sketch how to prove linearizability by providing a suitable
pre-linearization strategy.
To start, we need to pick an appropriate volatile order.
For Px86 a natural choice is the \emph{global happens before} order
\cite{Alglave12}, which reconstructs the possible order of events from
a global point of view:
\[
\ghb \is \tr{\left(
\restr{\bigl(\po \setminus ((\Writes \union \Flushes) \times \Reads)\bigr)}{
        \Event\setminus \Rets
      }
 \union \mo \union (\inv{\rf}\seq\mo) \union (\rf \setminus \po)
 \right)}
\]

The \masterthm\ asks to provide a set of events~$X$
which includes all linearization points and on which $\vo$ preserves $\hb$.
For the link-free set, we let~$X=\UReads$, since $ \ev{\UReads} \seq \hb \seq \ev{\UReads} \subs \ghb $.

\begin{restatable}[$\ghb$ includes $\hb$ on $\UReads$]{lemma}{lmhbimpliesghb}
\label{lm:hb-implies-ghb}
 $ \ev{\UReads} \seq \hb \seq \ev{\UReads} \subs \ghb $.
\end{restatable}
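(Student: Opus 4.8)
The goal is to show that $\ev{\UReads} \seq \hb \seq \ev{\UReads} \subs \ghb$, i.e.\ that whenever $a \hb-> b$ with both $a$ and $b$ being reads or updates (events in $\UReads$), we already have $a \ghb-> b$. The plan is to unfold $\hb = \tr{(\po \union \rf)}$ and $\ghb$ by their definitions and argue that every $(\po \union \rf)$-path from $a$ to $b$, with both endpoints in $\UReads$, is captured by $\ghb$. Since $\ghb$ is itself a transitive closure, it suffices to take a single $\hb$-step decomposition and show each relevant basic edge lands inside the relation being closed in $\ghb$, taking care at the endpoints where the $\UReads$ restriction matters.

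**The key case analysis.** First I would recall the generators of $\ghb$: it is the transitive closure of $\restr{(\po \setminus ((\Writes \union \Flushes) \times \Reads))}{\Event \setminus \Rets} \union \mo \union (\inv{\rf} \seq \mo) \union (\rf \setminus \po)$. The only generator of $\hb$ that is potentially \emph{not} directly present in $\ghb$ is a program-order edge of the shape $(\Writes \union \Flushes) \times \Reads$, which $\ghb$ deliberately omits, and program-order edges into or out of $\Rets$ events. So I would argue: given an $\hb$-path $a = e_0 (\po\union\rf) e_1 (\po\union\rf) \cdots (\po\union\rf) e_n = b$, each $\rf$ edge is either $\rf \setminus \po$ (already a $\ghb$-generator) or an $\rf$ edge that is also a $\po$ edge; and each $\po$ edge is fine unless it is of the excluded form. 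The main work is to handle the excluded $\po$-edges from a write/flush to a read, and to handle the endpoints, which must both be in $\UReads$.

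**Handling the problematic edges at the endpoints.** The crucial observation is the $\UReads$ restriction on the two endpoints. Since $a \in \UReads$, the first edge $e_0 (\po\union\rf) e_1$ starts from a read or update, so it is never of the forbidden shape $(\Writes \union \Flushes) \times \Reads$ with $e_0$ a pure write/flush — an update is in $\UWrites$ but a $\po$-edge \emph{out of} an update is not excluded (the exclusion is on the source being in $\Writes \union \Flushes$, and an update is also a write, so I must check the convention carefully: I expect $\Updates \subs \UWrites$ but the exclusion set is $\Writes \union \Flushes$; since $\Writes$ here denotes plain writes and allocations, updates are excluded from it, so a $\po$-edge out of an update to a read \emph{is} retained in $\ghb$). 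The symmetric concern is the last edge into $b \in \UReads$. The genuinely delicate situation is an \emph{internal} write-to-read $\po$ edge $e_i \po-> e_{i+1}$ with $e_i \in \Writes\union\Flushes$ and $e_{i+1} \in \Reads$ that gets dropped by $\ghb$; here I would reconstruct an alternative $\ghb$-path, typically by using that $e_{i+1}$ reads from \emph{some} write and routing through $\mo$, $\inv{\rf}\seq\mo$, or $\rf\setminus\po$ as appropriate. I expect this internal forwarding case to be the main obstacle: one must show that dropping a $wr$-$\po$ edge never disconnects an $\hb$-path whose endpoints are both in $\UReads$, which likely relies on a standard Px86/TSO lemma (already implicit in the model's consistency axioms) relating internal read-from to $\mo$.

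**Finishing.** Once each segment of the path is shown to lie in the union of $\ghb$-generators (or in $\ghb$ itself via the forwarding argument), transitivity of $\ghb$ closes the path, giving $a \ghb-> b$. I would structure the final write-up as an induction on the length of the $\hb$-path, with the base case being a single basic edge (dispatched by the edge-type case analysis above) and the inductive step splitting at the first intermediate event, using the $\UReads$ membership of $a$ to guarantee the first edge is non-problematic and appealing to the internal-forwarding lemma for any interior write-to-read edges. The whole argument is essentially a routine but careful unfolding; the only real content is the interior $wr$-edge forwarding step, which I would isolate as the crux.
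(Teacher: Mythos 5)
Your endpoint analysis (updates are not in $\Writes$, so $\po$-edges out of updates into reads survive in $\ppo$) is correct, but the step you yourself flag as the crux --- the ``internal forwarding'' lemma --- is false, and no repair of the kind you sketch exists. Under TSO/Px86 a program-order edge from a write to a read is \emph{genuinely} not globally ordered: that is exactly what store buffering means. Concretely, take the store-buffering litmus test, with thread~1 executing a write to~$x$ followed by a read of~$y$ returning the initial value, and thread~2 symmetrically writing~$y$ and reading~$x=0$. Each read is $\fr$-before the other thread's write, so if every interior write-to-read $\po$-edge admitted a $\ghb$-path from the write to the read, the four events would form a $\ghb$-cycle in a perfectly Px86-consistent execution, contradicting \cref{lm:ghb-acyclic}. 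So there is no ``standard lemma relating internal read-from to $\mo$'' that reconnects $e_i$ to $e_{i+1}$; any attempt to prove your crux step will fail. Your induction scheme has a second, structural problem: splitting an $\hb$-path at the first intermediate event $e_1$ requires applying the induction hypothesis to $(e_1, b)$, but $e_1$ need not be in $\UReads$ (it may be a plain write), so the hypothesis --- which carries the $\UReads$ endpoint restriction --- simply does not apply.

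The fix, and the paper's actual argument, is that the problematic edges never need to be traversed at all. Write $\hb = \tr{(\po \union \rfe)}$ (internal $\rf$ is already $\po$) and induct on the \emph{number of $\rfe$ edges} in the path rather than its length. Because $\po$ is itself transitive, every maximal pure-$\po$ run collapses to a \emph{single} $\po$-edge, and only the endpoints of that composed edge matter for membership in $\ppo$: the source is always in $\UReads$ (either the path's start, or the target of the preceding $\rfe$ edge), and the target is always in $\UWrites$ (the source of the next $\rfe$ edge) or is the final $\UReads$ endpoint --- never a pure read with a write/flush source, and never a return. So each collapsed segment lies in $\ppo \subs \ghb$, each $\rfe$ edge is a $\ghb$-generator, and after the first $\rfe$ edge the target is again in $\UReads$, which is precisely what makes the induction hypothesis applicable to the tail. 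Interior write-to-read $\po$-edges disappear inside the collapsed segments instead of needing to be rerouted --- that is the one idea your proposal is missing.
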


Next, we have to pick linearization points by defining~$\lp$ and return values~$r$.

\begin{definition}[Linearization points]
\label{def:basic-lf:linpt}
  Given an execution~$G$ of the link-free set
  of \cref{fig:basic-lfset-code},
  we define the finite partial functions
  $
    \lp \from G.\CallId \pto G.\UReads
  $
  and
  $
    r \from G.\CallId \pto \V
  $
  as the smallest such that:
  \begin{itemize}
    \item if $\callOf(c) = \tup{\p{insert}, \wtv}$ then
      \begin{itemize}
        \item If there is an event $(e \of \U{p}{nxt}{\tup{0,\wtv}}{\tup{0,\wtv}})\in G.\EvOfCid{c}$ generated
at Line~\ref{lp:basic-lf:insok}
              then $ \lp(c) = e $ and $r(c) = \p{true}$.
        \item If there is an event $(r\of\Ret{\p{false}}) \in G.\EvOfCid{c}$
              then there is at least one read event of~$ G.\EvOfCid{c} $ generated by
              Line~\ref{lp:basic-lf:find} in the call to \p{find}
              at Line~\ref{lp:basic-lf:insno}; let $ (e \of \R{c}{nxt}{\tup{0,\wtv}}) $ be the \po-last such event.
              Then we set $ \lp(c) = e $ and $r(c) = \p{false}$.
      \end{itemize}
    \item if $\callOf(c) = \tup{\p{delete}, \wtv}$ then
      \begin{itemize}
        \item If there is an event $(e \of \U{c}{nxt}{\tup{0,\wtv}}{\tup{1,\wtv}})\in G.\EvOfCid{c}$ generated
              at Line~\ref{lp:basic-lf:delok},
              then $ \lp(c) = e $ and $r(c) = \p{true}$.
        \item If there is an event $(r\of\Ret{\p{false}}) \in G.\EvOfCid{c}$,
              then
              there is at least one event in~$ G.\EvOfCid{c} $ generated by
              Line~\ref{lp:basic-lf:find} in the call to \p{find}
              at Line~\ref{lp:basic-lf:delno};
              let $ (e \of \R{c}{nxt}{\tup{0,\wtv}}) $ be the \po-last such event.
              Then we set $ \lp(c) = e $ and $r(c) = \p{false}$.
      \end{itemize}
  \end{itemize}
  By construction we have
  $ \A c\in\dom(\lp).\cidOf(\lp(c)) = c $,
  $\lp$ is injective,
  $\cidOf(G.\Rets) \subs \dom(\lp)$,
  $\dom(r) = \dom(\lp)$, and
  $ (e \of \Ret{v}) \in G.\Rets \implies r(\cidOf(e)) = v $.
\end{definition}

It is easy to check that $\tup{\lp, r, \ghb, \volatile}$
is a linearization strategy.
With these parameters we could already prove standard linearizability
by showing that $\tup{\lp, r, \ghb, \volatile}$ \pre\recEndOf-validates
every execution.

\begin{theorem}
\label{th:basic-lf:lp-validates}
  Assume an arbitrary \pre\tup{\durable,\recovered}-sound~$\LibImpl[rec]$.
  If~$G$ is an execution of $\tup{\LinkFreeImpl[op],\LibImpl[rec]}$,
  then $\tup{\lp, r, \ghb, \volatile}$ \pre\recEndOf-validates~$G$.
\end{theorem}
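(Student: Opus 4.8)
The plan is to unfold Definition~\ref{def:valid-strategy} and argue by induction along the fixed $\ghb$-respecting enumeration~$\vec{e}$, discharging the single-step obligation at each position $\recEndOf(\vec{e}) < i < \len{\vec{e}}$; since abstract states are finite key sets I write~$S$ for the current state. The obligation has content only when operation events follow the recovery, which forces the recovery to have returned, so for the base state at position $\recEndOf(\vec{e})$ I apply recovery soundness to the durable initial memory $G.\Init \in \durable(q_0)$: Condition~\ref{cond:recovery-recovers} together with the identity $\mem{\upto{\vec{e}}{\recEndOf(\vec{e})}} = \mem{\restr{G.\po}{G.\Init \union \EvOfCid{\recoveryId}}}$ noted after the volatile-order definition places the post-recovery memory in $\recovered(q_0) = \durable(q_0) \inters \volatile(q_0) \subs \volatile(q_0)$. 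The inductive step is a case analysis on the action of $\vec{e}(i)$, using the hypotheses $\mem{\upto{\vec{e}}{i-1}} \in \volatile(S)$ and $\histOf[\lp]{r}(\upto{\vec{e}}{i-1}) \in \LegalPath{q_0}{S}$ that the definition supplies.

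Most events are stutter steps (Condition~\ref{cond:stutter}). Reads, flushes and fences leave $\mem$ unchanged, and among them only the designated failed-operation reads are linearization points (handled below). An allocation introduces a fresh zero-initialised node, which fits the uninitialised category $M_{\lbl{u}}$ of Definition~\ref{def:basic-lf:volatile}; the pre-CAS writes to \p{n.key} and \p{n.nxt} keep that node in $M_{\lbl{u}}$ (its validity bit is still~$0$); a write to \p{valid} targets a node whose $\volatile$-category is insensitive to the validity bit exactly where such writes occur; and the \p{trim} CAS only unlinks a \emph{marked} node, removing an element of $M_{\lbl{d}}$ from the chain without disturbing the reachable member set. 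In each subcase I would re-exhibit a decomposition $M = M_{\lbl{s}} \dunion M_{\lbl{d}} \dunion M_{\lbl{u}}$ for the \emph{same}~$S$, re-checking the sortedness and reachability clauses \eqref{cond:basic-lf:sorted-links}--\eqref{cond:basic-lf:marked-nxt-init}.

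The two writer linearization points carry the substance (Condition~\ref{cond:linpt-trans}). For a successful insert the event is $\U{p}{nxt}{\tup{0,c}}{\tup{0,n}}$; since $\mem$ is read along $\ghb$, which includes $\mo$, its read value forces $\mem{\upto{\vec{e}}{i-1}}(\loc[p.nxt]) = \tup{0,c}$, and from $\volatile(S)$, key immutability, and the \p{find} post-condition $p.key < k \leq c.key$ I would conclude that $p$ is a reachable member or the head, that $k \notin S$, and that the swing splices~$n$ in sorted position, giving $\mem{\upto{\vec{e}}{i}} \in \volatile(S \union \set{k})$ with $(S, S \union \set{k}) \in \Delta(\p{insert}, k, \p{true})$. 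The successful delete $\U{c}{nxt}{\tup{0,n}}{\tup{1,n}}$ is symmetric: it marks a reachable member~$c$ with $c.key = k$, moving it into $M_{\lbl{d}}$ and yielding $\volatile(S \setminus \set{k})$ with the matching \p{true} transition. For failed operations the linearization point is the \po-last read $\R{c}{nxt}{\tup{0,\wtv}}$ of \p{find}, which witnesses~$c$ unmarked; using $\volatile(S)$ I would show~$c$ is a reachable member with $c.key \geq k$, so that $c.key = k \iff k \in S$, making the observed branch ($c.key = k$ for a failed insert, $c.key \neq k$ for a failed delete) agree with the required identity transition in $\Delta(\cdot,\cdot,\p{false})$.

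The hard part will be the reachability arguments underlying the writer and failed-read cases: establishing that~$p$ (respectively~$c$) genuinely lies in the reachable sorted list at the linearization \emph{moment}, not merely when \p{find} first inspected it. This cannot be read off the snapshot $\volatile(S)$ alone, so I would maintain, partly as global lemmas on executions of the link-free set and partly by a parallel induction along~$\vec{e}$, the standard structural invariants of the Harris list: keys are immutable, the marking bit is monotone (it flips $0 \to 1$ and never back), and any unmarked node once reachable from the head stays reachable until its next pointer is marked. Since a successful CAS observes the current predecessor pointer and marking is monotone, $p$ (resp.~$c$) is unmarked throughout the interval from \p{find}'s inspection to the CAS and hence remains reachable; sortedness then pins down its position and rules out a competing member with key~$k$. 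The durability-specific additions---the \p{valid} field and the flushes---are orthogonal to this argument and only ever contribute stutter steps, which is exactly why the volatile linearizability proof can be inherited essentially unchanged from the non-durable Harris list.
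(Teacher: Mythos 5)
Your overall architecture is the same as the paper's: the paper omits a dedicated proof of this theorem and defers to the proof of \cref{th:linkfree:lp-validates} for the optimized algorithm, which proceeds exactly as you sketch --- a battery of structural invariants established by induction over \ghb-enumerations (key immutability, \cref{lm:immutable-key}; monotonicity of the marking bit, \cref{lm:marking-irrev}; reachability of initialised unmarked nodes, \cref{lm:member-reach}; sortedness of links, \cref{lm:sorted-links}) followed by a per-event case analysis that re-exhibits the three-part decomposition of \cref{def:basic-lf:volatile}. Your stutter cases, the two successful-CAS cases, and the failed-insert case (where only the ``$c.\p{key}=k \implies k \in S$'' direction is needed, and reachability plus key immutability deliver it) are faithful to that proof. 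One small gloss: the write in \p{makeValid} is not handled by category-insensitivity alone --- one must first show the target node is either already valid or unmarked and hence \emph{reachable} (the paper's \cref{lm:makevalid-unmarked} combined with \cref{lm:member-reach}), so that it sits in the member partition where the validity bit is unconstrained; but since you flag the reachability lemmas as the load-bearing part, this is a presentational gap only.

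The genuine gap is the failed-delete case. You discharge it with the claim that $c$ being a reachable unmarked node with $c.\p{key} \geq k$ yields $c.\p{key} = k \iff k \in S$; the direction a failed delete needs ($c.\p{key} \neq k \implies k \notin S$) is false as stated. A reachable unmarked node with key strictly greater than $k$ does not exclude a \emph{different} reachable unmarked node holding exactly $k$ closer to the head: a concurrent insert can splice a node with key $k$ immediately after $p$ between \p{find}'s traversal reads and the \po-last read at Line~\ref{lp:basic-lf:find} that \cref{def:basic-lf:linpt} designates as $\lp(c)$. In an enumeration placing that insert's CAS before $\lp(c)$ (which \ghb{} permits, since the events touch disjoint locations), the memory at $\lp(c)$ encodes some $S$ with $k \in S$, and $\Delta(\p{delete},k,\p{false})$ is not enabled, so the per-event obligation of the validation definition fails for that enumeration. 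To conclude $k \notin S$ at the linearization instant one must bring in the \emph{predecessor}: $p$ unmarked, reachable, with $p.\p{key} < k$ and \emph{adjacent} to $c$ (i.e., $\loc[p.nxt] = \tup{0,c}$) at that very moment, exploiting the re-check at Line~\ref{line:find-check-p} --- and even then the re-check occurs \po-\emph{after} the designated read, so the argument must carefully relate the two moments rather than read everything off the snapshot at $\lp(c)$. This is precisely the subtlety the paper itself highlights: dropping that re-check in the optimized algorithm makes a fixed linearization point for failed deletes impossible and forces hindsight linearization (\cref{lm:linkfree:hind-delete}). Your proposal never engages with the predecessor at the linearization moment, so as written the failed-delete case does not go through.
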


We omit the proof as it is subsumed by the proof on the optimized algorithm
provided in the extended version of this paper~\appendixref{sec:verif-linkfree}.

\subsection{Persistency Points}

We now define the final parameters needed to instantiate the \masterthm:
the persistency points through~$\pt$ and the persisted readers~$\PRd$.

\begin{definition}[Persistency points]
\label{def:basic-lf:perspt}
  Given an execution~$G$ of the link-free set in \cref{fig:basic-lfset-code},
  we define the finite partial function
  $
    \pt \from G.\CallId \pto G.\Persisted
  $
  as the smallest such that:
  \begin{itemize}
    \item if $\callOf(c) = \tup{\p{insert}, \wtv}$
      and $\lp(c) = (e \of \U{p}{nxt}{\wtv}{\tup{0, n}} )$ then\\
      $
        {\pt(c) = \min_{\nvo}\set{e' | (e'\of\W{n}{valid}{1}) \in G.\Persisted}}
      $
      (undefined if~$G.\Persisted$ contains no write to \loc[n.valid]);
    \item if $\callOf(c) = \tup{\p{delete}, \wtv}$
          and $ \lp(c) = (e \of \U{\wtv}{nxt}{\tup{0,\wtv}}{\tup{1,\wtv}}) $
          then $ \pt(c) = e $.
  \end{itemize}
  We also define $\PRd = \set{ c | (r\of\Ret{\p{false}}) \in G.\EvOfCid{c} }$.
  By construction we have that
  $\pt$ is injective,
  $\dom(\pt) \inters \PRd = \emptyset$ and
  $\dom(\pt) \union \PRd \subs \dom(\lp)$.
\end{definition}

\subsection{Applying the \thename\ Theorem}

We have now picked all the parameters we need to apply the \masterthm:
we set
  $\vo = \ghb$,
  $X = G.\UReads$,
  $\lp$ and $r$ as defined in \cref{def:basic-lf:linpt},
$\pt$ and $\PRd$ as defined in \cref{def:basic-lf:perspt}.
We can now check each condition of the \masterthm.

Conditions
  \ref{cond:main:pers-subs-lp} and
  \ref{cond:main:pr-id}
are satisfied by construction.
Condition
  \ref{cond:main:vo-pres-hb}
follows from \cref{lm:hb-implies-ghb}.
Condition
  \ref{cond:main:linpt-hb}
is straightforward from $\cidOf(\lp(c))=c$
(which allows us to pick $e_1=e_2=\lp(c)$).
Condition
  \ref{cond:main:linpt-valid}
is a direct consequence of \cref{th:basic-lf:lp-validates}.
The rest of the conditions represent the crucial steps in the correctness proof.

\paragraph{Flush before returning}
Condition~\ref{cond:main:rets-persist} reflects the
\emph{unbuffered} nature of durable linearizability:
it requires every returned call to be considered as persisted.
The only way to enforce something is persisted before returning
is to issue a (synchronous) flush on the address of the persistency point of
the call.
In our case, the ``read-only'' calls that have returned are included in~$\PRd$
by construction.
For successful inserts/deletes we can prove that
if they returned:
\begin{enumerate*}
\item
  they have issued a flush on the address of their persistency point, and
\item
  the persistency point happens \nvo-before the flush.
\end{enumerate*}
Since $ G.\Flushes \subs G.\Persisted $, we conclude that
if those calls returned, their persistency point was persisted.

\paragraph{Commuting calls}
To prove \ref{cond:main:vo-nvo-agree-commute} we focus on persisted
calls $c,c'\in \dom(\pt)$ that do not commute,
that is persisted successful inserts or deletes of the same key~$k$.
As a preliminary step, we can show that any such $c,c'$
are related by~$\ghb$ in some order.
Thus, since $\nvo$ is total on $G.\Persisted$,
condition \ref{cond:main:vo-nvo-agree-commute} can be reduced to proving that
if $ {\lp(c) \ghb-> \lp(c')} $ then $ {\pt(c) \nvo-> \pt(c')}. $
By transitivity, we can focus on calls whose linearization points
are adjacent, \ie there are no linearization points of calls on key~$k$ between $c$ and $c'$ in $\ghb$ order.
Since we already established legality of the \ghb-induced sequence of
linearization points, we can focus on legal pairs of calls.
This leaves us with two cases:
\begin{itemize}
\item $c$ is a successful insert of~$k$,
  and $c'$ a adjacent successful delete of~$k$.
  From the volatile invariants on the \ghb-induced memory,
  we know that $c'$ will be deleting the node~$n$ inserted by~$c$.
  This means that $\loc[n.valid]$ will be written before $\lp(c')$
  (either by $c$ or~$c'$).
  The earliest such write is $\pt(c)$ which would then be \nvo-before
  $\lp(c') = \pt(c')$ as desired.
\item $c$ is a successful delete of~$k$,
  and $c'$ a adjacent successful insert of~$k$.
  Wme can prove that before $\lp(c)$ is executed,
  at least one call to \p{trim} on the deleted node has been run.
  The flush at line~\ref{line:basic-lf:trim-flush} would then ensure the desired $\nvo$ order.
\end{itemize}

\paragraph{Voided calls}
We now check condition \ref{cond:main:voided-linpt-voidable},
which asks that every call~$c$ which reached its linearization point but
has not persisted must be voidable.
\Cref{lm:ks-voidable} tells us that the only possible problematic calls are
successful inserts or deletes, \ghb-followed by operations on the same key.

\begin{lemma}
  Take any~$c \in \dom(\lp) \setminus (\dom(\pt) \union \PRd)$,
  and all $\vec{e} \in {\enum[G.E]{\vo}}$
  such that $
    \vec{e} = \vec{e}' \concat \lp(c) \concat \vec{e}''
  $.
  Then $
    \tup{\callOf(c), r(c)}
  $ is \pre \h-voidable,
  where $
    \h = \restr{\histOf[\lp]{r}(\vec{e}'')}{(\dom(\pt) \union \PRd)}.
  $
\end{lemma}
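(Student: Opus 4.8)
The plan is to reduce the statement to \cref{lm:ks-voidable} and then discharge it by a flush-propagation argument. First I would pin down the shape of a voided call. By \cref{def:basic-lf:linpt} the linearization point of a \emph{failed} call is defined only in the presence of a \p{false} return event, so any $c$ with $\lp(c)$ defined and $r(c)=\p{false}$ lies in $\PRd$; hence a call $c\in\dom(\lp)\setminus(\dom(\pt)\union\PRd)$ must be a \emph{successful} insert or delete of some key~$k$ (so $r(c)=\p{true}$) whose persistency point has not persisted. Writing $n$ for the node manipulated by $c$: if $c$ is an insert then, by \cref{def:basic-lf:perspt}, \emph{no} persisted write $\W{n}{valid}{1}$ exists; if $c$ is a delete then the marking update $\U{n}{nxt}{\tup{0,\wtv}}{\tup{1,\wtv}}=\lp(c)=\pt(c)$ is not in $G.\Persisted$.

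By \cref{lm:ks-voidable}, $\tup{\callOf(c),r(c)}$ is \pre \h-voidable as soon as $\h$ contains no call on key~$k$ (the second disjunct, illegality of $\tup{\callOf(c),r(c)}\concat\h$, remains as a fallback). So the core of the argument is to show $\h$ is $k$-free. I would argue by contradiction: assuming some $c'\in\dom(\pt)\union\PRd$ with $\lp(c)\vo->\lp(c')$ operates on $k$, I show that $c$'s persistency point is in fact persisted. The main tool is \cref{th:basic-lf:lp-validates} together with \cref{lm:validates-legal}: the $\ghb$-induced linearization is legal and, at every prefix, the $\ghb$-memory lies in $\volatile(S)$ for the corresponding set~$S$. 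Using the sortedness and reachability invariants \eqref{cond:basic-lf:sorted-links}--\eqref{cond:basic-lf:marked-nxt-init}, this lets me identify precisely which node each call on $k$ touches, and in particular to locate $n$ in the list at the time $c'$ runs.

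Then comes the flush-propagation step. If $c$ is a voided insert with node~$n$, I would show that any call $c'$ on $k$ forces $n$ to be flushed \emph{after} $\loc[n.valid]=1$ has been written: while $n$ is unmarked it is the unique unmarked $k$-node, so $c'$'s \p{find} stops at $n$ and $c'$ is either a failed insert (which sets $\loc[n.valid]=1$ and flushes $n$, \cref{line:basic-lf:insno-valid,line:basic-lf:insno-flush}) or a successful delete of $n$ (which writes $\loc[n.valid]=1$ at \cref{line:basic-lf:delok-valid} and flushes $n$ via \p{trim} at \cref{line:basic-lf:trim-flush}); once $n$ is marked, any later \p{find} on $k$ reaching it calls \p{trim} and flushes it; and if $n$ is already unlinked, the \p{trim} that unlinked it flushed it. In each sub-case the write $\loc[n.valid]=1$ precedes the flush (issued by $c$ at \cref{pt:basic-lf:insok}, or by the deleting call at \cref{line:basic-lf:delok-valid}, which program-order-precedes the marking \cref{pt:basic-lf:delok} that makes $n$ trimmable). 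Since the validity write and the flush target the same node, hence the same cache line, the Px86 axiom $\ev{\Durable_{\CL}}\seq\tso\seq\ev{\Durable_{\CL}}\subs\nvo$ places the write $\nvo$-before the (synchronous, hence persisted, by $E\inters\Flushes\subs P$) flush, and $\nvo$-downward-closure of $G.\Persisted$ yields a persisted $\W{n}{valid}{1}$, contradicting that $c$ is voided. The delete case is symmetric: any call on $k$ must traverse-and-\p{trim} the marked $n$ (or it was trimmed earlier), and the same cache-line argument applied to $\lp(c)=\pt(c)$ on $\loc[n.nxt]$ forces the marking update into $G.\Persisted$.

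The hard part will be the node-tracking behind the ``every call on $k$ flushes $n$'' claim, in particular closing the re-insertion escape in which $c'$ appears to act on a fresh node $n'\neq n$. There I would use that a successful insert of $k$ can only \p{CAS} a predecessor whose \p{nxt} already bypasses $n$, which by the reachability invariant \eqref{cond:basic-lf:members-reachable} forces $n$ to have been unlinked first --- and unlinking goes through \p{trim}, i.e.\ through a flush of $n$ carrying the validity write set by the (necessarily earlier) delete of $n$. Assembling these observations shows no call on $k$ coexists with a voided $c$, so $\h$ is $k$-free and \cref{lm:ks-voidable} delivers \pre \h-voidability; the illegality disjunct is kept in reserve for residual cases where an intervening delete is itself voided and dropped from $\h$, leaving two successive successful inserts of $k$ and thus an illegal $\tup{\callOf(c),r(c)}\concat\h$.
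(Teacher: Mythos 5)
Your proposal follows essentially the same route as the paper's proof: argue by contradiction, reduce via \cref{lm:ks-voidable} (keeping the illegality disjunct as a vacuous fallback), take the first persisted $k$-operation $c'$ in $\h$, and propagate persistence by the helping/flush mechanism --- a write $\tso$-before a same-cache-line durable event is $\nvo$-before it, and $\nvo$-downward closure of $G.\Persisted$ then forces $c$'s persistency point to be persisted, contradicting $c \notin \dom(\pt) \union \PRd$. Your explicit treatment of the re-insertion escape (the first persisted $k$-operation acting on a fresh node $n' \ne n$) is a genuine plus: the paper disposes of it tersely through ``legality of the sequence'', backed in the full development by the address-tracking specification $\KVS'$, whereas you close it concretely by observing that any unlinking of $n$ goes through \p{trim}'s flush, which persists the intermediate helper's validity write on $n$ and hence, by the definition of $\pt$ for inserts as the $\nvo$-minimal \emph{persisted} write of $\W{n}{valid}{1}$ regardless of its author, already defines $\pt(c)$.

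One step is flawed as written. In the voided-insert case with $c'$ a persisted successful delete of~$k$, you derive persistence of the validity write from a flush of $n$ issued ``via \p{trim} at \cref{line:basic-lf:trim-flush}''. But $c' \in \dom(\pt)$ only means the marking CAS at \cref{pt:basic-lf:delok} lies in $G.\Persisted$ --- the model allows a durable event to persist without any flush forcing it --- and $c'$ need not have returned, so it may have stalled before reaching \p{trim}; the flush you invoke then does not exist. The conclusion survives, but by the paper's argument rather than yours: the write at \cref{line:basic-lf:delok-valid} is $\po$- (hence $\tso$-) before the CAS, $\loc[n.valid]$ and $\loc[n.nxt]$ share a cache line, so the validity write is $\nvo$-before the persisted CAS and is itself persisted. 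The same correction applies to your parenthetical crediting ``the flush issued by $c$ at \cref{pt:basic-lf:insok}'': a voided $c$ cannot have completed its own flush (it would then be persisted), so every flush in your argument must come from a helper or be replaced by a persisted update as above. With these substitutions your proof coincides with the paper's.
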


The proof proceeds by contradiction:
assume $ \tup{\callOf(c), r(c)} $ is not
\pre \h-voidable.
We have two cases: either $c$ is a successful insert or a successful delete,
of some key~$k$.
Let us focus on the insert case, as the delete case is analogous.
$\vec{e}''$ must contain the linearization point of a
\emph{persisted} call involving the same key.
Let~$c'$ be the id of such call.
We can show that if $c'$ persisted,
the persistency point of~$c$ must have been executed and persisted.
We therefore reach a contradiction with the assumption that
$c \in \dom(\lp) \setminus (\dom(\pt) \union \PRd)$.
Since we already established legality of histories induced by~$\lp$ and~$\ghb$,
we are left with the following cases:
\begin{itemize}
    \item $\callOf{c'} = \tup{\p{insert}, k}$, $r(c) = \p{false}$:
      then lines~\ref{line:basic-lf:insno-valid} and
      \ref{line:basic-lf:insno-flush} have been both executed
      so at least one event setting \p{valid} to 1 has been persisted,
      implying $c\in\dom(\pt)$.
    \item $\callOf{c'} = \tup{\p{delete}, k}$, $r(c) = \p{true}$
      then line~\ref{line:basic-lf:delok-valid} has been executed;
      since $c'$ has persisted, we know that
      the successful CAS at line~\ref{lp:basic-lf:delok} also persisted;
      since \p{c.nxt} and \p{c.valid} fit in the same cache line,
      both writes have been then persisted,
      implying $c\in\dom(\pt)$.
  \end{itemize}

\paragraph{Persisted memory correctness}
We have finally arrived at condition \ref{cond:main:perst-valid}.
Until now, we mostly considered \ghb-induced sequences of events
and invariants on memory.
When $\nvo$ was involved, we only needed to prove that some writes/flushes were
inserted in some crucial points.
This is intentional: the fundamental correctness argument typically rests on
the volatile invariants.
The \masterthm\ allows us to focus on those for as long as possible.
What the proof so far entails is the legality of the (persisted)
volatile history,
and, the fact that the \nvo-induced history is legal and equivalent to the volatile one.
What is missing is to prove that the contents of the \emph{persisted} memory
encode the output state of the (legal) history, with respect to~$\durable$.

Concretely, consider proving
$ \tup{\pt, r, \restr{\nvo}{G.\Persisted}, \durable} $
\pre\initOf-validates~$G$, for the link-free set.
The first condition that \Cref{def:valid-strategy} asks us to prove is
that events that are not persistency points preserve the encoded abstract state.
This is easy to prove:
the only non-trivial writes affect the links, which are ignored by
$\durable$;
the other fields are only updated on uninitialized nodes.
The second condition we need to prove is
that persistency points induce the desired legal transitions.
Proving this directly is challenging.
When considering the persistency point of a successful insert of~$k$,
for example,
we need to prove that when the \p{valid} field of the new node~$n$
is first set to 1, the persisted memory is in $\durable(S)$ for some~$S$
with~$k\notin S$.
The reason why this is true is exclusively due to the volatile invariants
which apply to the \ghb-induced memory, memory that may be encoding
(via $\volatile$) some different~$S'' \ne S$.
The proof strategy embedded in our \masterthm\ resolves the mismatch
by letting us \emph{assume} (by virtue of the other conditions)
that the \nvo-induced history is legal.
In particular, in the insert case,
we are allowed to assume $k \notin S$, which by~$\durable(S)$
would imply that after the persistency point, exactly one valid unmarked node
would be holding the key~$k$.
Finally, another notable simplification introduced by this proof strategy,
is that we can safely ignore the ``read-only'' operations like failed
inserts and deletes, since their place in the linearization has already
been provided through the volatile argument.

\subsection{Optimizations and Extensions}
The full link-free algorithm includes a number of further optimizations
and a wait-free \p{contains} operation.
\Appendixref{sec:verif-linkfree} presents a formal proof of the full algorithm
by using a generalization of the \masterthm.

In particular, the full algorithm optimizes the \p{find} function by removing
lines~\ref{line:find-check-p} and~\ref{line:find-restart}
from \cref{fig:basic-lfset-code}:
it is not necessary to check that the predecessor is unmarked
before returning.
This optimization introduces the need for hindsight linearization
of failed deletes.
The issue is that,
we can no longer identify a point in the program where
the~$p$ and~$c$ returned by~\p{find} are both unmarked,
adjacent and reachable.
This makes it impossible to find any particular event that can serve
as the linearization of a failed delete of some~$k$.
In fact, when we check~$p$ is unmarked it might be too early:
a node~$n$ holding~$k$ might be ahead in the list and unmarked.
Then~$n$ might have been marked before we reached it.
By the time we reach~$c$, a new~$n'$ holding~$k$ might have been added behind in the list, and so at this point we are too late to linearize the delete.
The operation is still correct:
between reading~$p$ unmarked and reading~$c$ unmarked,
there must have been a point when no unmarked node in between them
held~$k$.
This kind of ``after the fact'' argument is called a
``hindsight lemma'' in~\cite{hindsight}.

The general version of our \masterthm\ supports hindsight by means of
a partial map $\hres{\lp}(\vec{e}) \from {\CallId \pto \Nat}$,
which associates to each call $c$ that needs hindsight,
an index~$ \hres{\lp}(\vec{e})(c) $ indicating where in the
(\vo-ordered) sequence of events~$\vec{e}$
the call is logically linearized.
Crucially, $\hres{\lp}$ can be specified after $\lp$ has been defined
and has been used to prove
$\tup{\lp, r, \ghb, \volatile}$ \pre\recEndOf-validates
the execution.
The definition of $\hres{\lp}$ can therefore assume the history induced by~$\lp$
and $\vec{e}$ is legal, and find a position where the hindsight calls linearize,
in the same way we informally argued above.

  \section{Related Work}
\label{sec:related}
The literature includes attempts at strengthening and simplifying the original notion of linearizability \cite{lin}
such as strict linearizability \cite{strict-lin}, as well as sophisticated proof strategies for establishing linearizability (\eg \cite{hindsight}).
As the original definition of linearizability was based on the strong sequential consistency (SC) \cite{sc} memory model,
\citet{lin-tso} later adapted linearizability to the weaker TSO model \cite{x86-tso}, while \citet{lin-c11} adapted it to a fragment of the even weaker C11 model \cite{C11}.
\citet{yacovet} developed a general framework for specifying various correctness conditions for concurrent libraries,
including linearizability.

In order to account for the durability guarantees of implementations in the context of persistent memory, \citet{durable-lin} extended linearizability to durable linearizability (DL).
As with the original notion of linearizability, this original DL definition was tied to the strong SC model.
\citet{ptso} later developed a weak persistency model known as PTSO and adapted the notion of DL to PTSO.
\citet{parm,Px86} subsequently developed the PARMv8 and Px86 models, respectively formalising the (weak) persistency semantics of the ARMv8 and Intel-x86 architectures, and accordingly adapted DL to account for PARMv8 and Px86.
Unlike our memory-model-agnostic approach here, these DL definitions are tied to specific persistency models.

Some memory models admit the so-called data-race freedom property (DRF),
which guarantees that, in absence of data-races,
only SC behaviour is observable.When DRF applies, proving linearizability would be simpler,
but the order in which writes are persisted may still be different
from the SC order, and our technique provides a viable proof strategy for proving DL.

The existing literature includes several examples of durable libraries and data structures. 
The most notable example is PMDK \cite{pmdk}, a collection of libraries for persistent programming.
However, as of yet the PMDK libraries lack formal specifications and have not been formally verified.
\citet{persistent-queues} developed several durable queue libraries over the Px86 model; however, they provide an informal argument (in English) that their implementations are correct (satisfy DL) and do not provide a formal correctness proof. 
Similarly, \citet{SOFT:oopsla} developed two durable set implementations (over Px86), including the link-free set we verify here. 
Once again, they do not provide a formal correctness proof of their implementations, and instead present an informal argument without accounting for the intricacies of the underlying Px86 model. 

\citet{parm,Px86} develop durable variants of the Michael-Scott queue \cite{msq} over the PARMv8 and Px86 models, and formally prove that their implementations are correct.
These implementations are much simpler than those we verify here (\eg they do not involve hindsight reasoning). 
Moreover, unlike our approach here, their proofs are non-modular in that they do not separate the linearizability, persistency and recovery proof obligations.
As such, they do not provide any insights that can be adapted to reason about other durable implementations.

\citet{DerrickDDSW21} proposed a sound and complete
refinement-based proof technique for DL in the context of SC,
which they use to prove a queue from~\cite{persistent-queues}.
Their thread-local simulation technique could in principle be combined
with our \thename\ Theorem, yielding a powerful technique for DL under SC.
We leave this exploration to future work.

\begin{acks}
This work was supported by a
\grantsponsor{ERC}{European Research Council}{https://erc.europa.eu/} (ERC)
Consolidator Grant for the project ``PERSIST'' under the European Union's
Horizon 2020 research and innovation programme
(grant agreement No.~\grantnum{ERC}{101003349}).
\end{acks}

\appendix
\renewcommand{\floatpagefraction}{0.7}  \renewcommand{\dblfloatpagefraction}{0.7}

\section{Preliminary definitions}
\label{sec:appendix-prelim}

We define the set of locations relevant to an action:
\begin{align*}
  \locOf(\W{x}{f}{v}) & \is
    \locOf(\R{x}{f}{v}) \is
    \locOf(\U{x}{f}{v}{v'}) \is
    \set{\loc[x.f]}
  \\
  \locOf(\Alloc{x}) & \is \set{ \loc[x.f] | \p{f} \in \Field }
  \\
  \locOf(\Ret{v}) & \is \emptyset
\end{align*}
All actions act on a single location, except for allocation which
initialises in one go all fields.
For the actions with a single location,
we simply write~$\locOf(\alpha) = \loc$.
The value of a return action is~$\valOf(\Ret{v})\is v$.
We also speak of the \emph{read value} ($\rvalOf$) and \emph{written value} ($\wvalOf$)
of an action:
\begin{align*}
  \rvalOf(\R{x}{f}{v}) &\is
  \rvalOf(\U{x}{f}{v}{v'}) \is
    v
  &
  \wvalOf(\W{x}{f}{v'}) &\is
  \wvalOf(\U{x}{f}{v}{v'}) \is
    v'
  \\
  \rvalOf(\Alloc{x}) &\is
  \bot
  &
  \wvalOf(\Alloc{x}) &\is 0
\end{align*}
and undefined otherwise.

We will assume a fixed set of operation names~$\Op$.
For the set library
we fix~$\Op = \set{\p{insert}, \p{delete}}$.

\begin{definition}[Events]
We will assume an enumerable universe of \emph{events}~$\Event$
equipped with three functions:
  \begin{itemize}
    \item $ \actOf  \from \Event \to \Action $,
      associating an action to every event.
      All the functions on actions are lifted to events in the obvious way,
      e.g.~$\locOf(e) = \locOf(\actOf(e))$.
      We also write $ (e \of \alpha) $ to indicate that $ \actOf(e) = \alpha $.

    \item $ \cidOf  \from \Event \to \CallId_\bot \dunion \set{\recoveryId} $,
      associating a \emph{call identifier} to every event.
      Here~$\CallId$ is a fixed enumerable set of call identifiers, and
      $\recoveryId$ is a special identifier reserved for the call
      to the recovery procedure of a library;
      $\cidOf$~can be undefined ${\cidOf(e) = \bot}$
      which means~$e$ is a client event.
      We require~$ \cidOf(e) \notin \bot $ if $ (e \of \Ret{\wtv}) $.

    \item $ \callOf \from \CallId \to \Call $,
      where $\Call \is (\Op \times \V^*)$,
      associating to each call identifier a call to an operation:
      $ \callOf(i) = \tup{\p{op}, \vec{v}}$
      denotes the name~$\p{op}$ of the operation called
      and the parameters of the call~$\vec{v}$.
\end{itemize}
\end{definition}

The following sets group events by their actions:
\begin{align*}
  \Updates &\is
    \set{ e \in \Event | e\of\U{x}{f}{v}{v'}}
  &
  \MFences &\is
    \set{ e \in \Event | e\of\MF}
  \\
  \Writes &\is \set{ e \in \Event | (e\of\W{x}{f}{v}) \lor (e\of\Alloc{x})}
  &
  \UWrites & \is \Writes \union \Updates
  \\
  \Reads &\is \set{ e \in \Event | e\of\R{x}{f}{v} }
  &
  \UReads & \is \Reads \union \Updates
  \\
  \Flushes & \is \set{ e \in \Event | e \of \FL{x} }
  &
  \Durable & \is \Writes \union \Updates \union \Flushes
\end{align*}
We also group events based on their call identifier:
\begin{align*}
  \Rets &\is
    \set{ e \in \Event | e\of\Ret{v}, \cidOf(e) \in \CallId}
  &
  \EvOfCid{i} &\is \set{ e \in \Event | \cidOf(e)=i }
  \\
  \LibEv &\is \set{e\in \Event\setminus \Rets| \cidOf(e) \ne \bot}
  &
  \sameCid &\is \set{ (e_1, e_2) | \cidOf(e_1)=\cidOf(e_1) \ne \bot }
\end{align*}
The set~$\Rets$ collects all return events associated with calls
(excluding the one of the recovery),
the set~$\EvOfCid{i}$ collects all events of the call identified by~$i$,
the set~$\LibEv$ includes all internal library events
(returns are considered to be visible by the client).
The relation~$\sameCid$ relates all events belonging to the same call.

We also filter events by relevant location,
$\Event_{\loc} \is \set{ e \in \Event | \loc \in \locOf(e) }$,
or sets of locations
$\Event_{L} \is \set{ e \in \Event | L \inters \locOf(e) \ne \emptyset }$;
for each of the sets of events~$\mathbb{S}$ defined above,
their location-subscripted variant is
$\mathbb{S}_{\loc} = \mathbb{S} \inters \Event_{\loc}$
and
$\mathbb{S}_{L} = \mathbb{S} \inters \Event_{L}$.

\begin{definition}[Px86-consistent Execution]
\label{def:px86-consistency}
  An execution~$G =\tup{
    E, \Init, \Persisted, \po, \rf, \mo, \nvo
  }$ is \emph{Px86-consistent} (or \emph{consistent}, for short)
  if there exists a strict order, $\tso \subs E \times E$,
  satisfying:
  \begin{align}
    \Init \times (E \setminus \Init) &\subs \tso
    \tag{\textsc{tso-init}}
    \label{axiom:tso-init}
    \\
    \mo &\subs \tso
    \tag{\textsc{tso-mo}}
    \label{axiom:tso-mo}
    \\
    &\mathclap{\tso \text{ is total on } \Event \setminus \Reads}
    \tag{\textsc{tso-total}}
    \label{axiom:tso-total}
    \\
    \rf &\subs \tso \union \po
    \tag{\textsc{tso-rf1}}
    \label{axiom:tso-rf1}
    \\
    \A x \in \Loc.\ev{\UWrites_x} \seq (\tso \union \po) \seq \inv{\rf} \seq \ev{\UWrites_x}
    &\subs \tso
\tag{\textsc{tso-rf2}}
    \label{axiom:tso-rf2}
    \\
    (
      \ev{\UWrites \union \Reads}
      \seq \po \seq
      \ev{\UWrites \union \Reads}
    ) \setminus (\Writes \times \Reads) &\subs \tso
    \tag{\textsc{tso-po}}
    \label{axiom:tso-po}
    \\
    (
      \ev{\Event}
      \seq \po \seq
      \ev{\MFences}
    ) \union (
      \ev{\MFences}
      \seq \po \seq
      \ev{\Event}
    ) &\subs \tso
    \tag{\textsc{tso-mf}}
    \label{axiom:tso-mf}
\\
    (
      \ev{\UWrites \union \Flushes}
      \seq \po \seq
      \ev{\Flushes}
    ) \union (
      \ev{\Flushes}
      \seq \po \seq
      \ev{\UWrites \union \Flushes}
    ) &\subs \tso
    \tag{\textsc{tso-fl-wufl}}
    \label{axiom:tso-fl-wufl}
\\
    \ev{\Reads} \seq \po \seq \ev{\Flushes}
    &\subs \tso
    \tag{\textsc{tso-r-fl}}
    \label{axiom:tso-r-fl}
    \\
    \A \CL \in \CacheLine.
\ev{\Durable_{\CL}} \seq \tso \seq \ev{\Durable_{\CL}}
    &\subs \nvo
    \tag{\textsc{nvo-cl}}
    \label{axiom:nvo-cl}
\\
    \ev{\Flushes} \seq \tso \seq \ev{\Durable}
    &\subs \nvo
    \tag{\textsc{nvo-fl-d}}
    \label{axiom:nvo-fl-d}
    \\
    E \inters \Flushes &\subs P
    \tag{\textsc{unbuf-fl}}
    \label{axiom:unbuf-fl}
  \end{align}
\end{definition}

\begin{remark}
\label{rm:px86}
\Cref{def:px86-consistency} deviates slightly from Px86\textsubscript{sim}
as formalized in~\cite{Px86}.
In the original Px86\textsubscript{sim} model, condition \eqref{axiom:nvo-cl}
is replaced by two, weaker, conditions:
\begin{align}
  \A x \in \Loc.
    \ev{\Durable_{x}} \seq \tso \seq \ev{\Durable_{x}}
    &\subs \nvo
  \tag{\textsc{nvo-loc}}
  \label{axiom:nvo-loc}
  \\
  \A \CL \in \CacheLine.
    \ev{\UWrites_{\CL}} \seq \tso \seq \ev{\Flushes_{\CL}}
    &\subs \nvo
  \tag{\textsc{nvo-wu-fl}}
  \label{axiom:nvo-wu-fl}
\end{align}

Although these weaker guarantees are consistent with
the official Intel specification,
in practice hardware implementations follow the stronger model we use in this paper~\cite[§10.1.1]{snia}.

In fact, the stronger model is \emph{necessary} for the correctness
and optimality of the algorithms of~\cite{SOFT:oopsla}.
The way \eqref{axiom:nvo-cl} is typically used is by ensuring
crucial data structures (\eg the structures storing a node)
fit a single cache line, and by using alignment annotations to ensure
all the fields will be placed in the same cache line.
This way, a single flush to (any field of) a structure would
be guaranteed to persist all the \po-preceding writes to its fields,
in one go.
Take, for instance, a successful insert of the set implementation of
\cref{fig:basic-lfset-code} (the same applies to \cref{fig:lf-set-code}).
The write to \p{valid} at \cref{pt:basic-lf:insok} needs to be persisted
\emph{after} the writes to \p{key} and \p{val},
or a crash could leave in memory a valid node with an uninitialized key.
In the weaker model, an additional flush would be needed before setting the validity bit, spoiling the near-optimality claims of the algorithm.
\end{remark}

\begin{definition}[Library Implementation Executions]
\label{def:lib-impl-exec}
  A \emph{library implementation} is a pair
  ${\LibImpl = \tup{\LibImpl[op],\LibImpl[rec]}}$ with
  $
    \LibImpl[op] \from
      \mathsf{Call}
        \to \powerset(\Action^*)
  $ and
  $
    \LibImpl[rec]
      \in \powerset(\Action^*)
  $
  where
  $\LibImpl[op]$ maps a call of an operation of the library
  to a \emph{prefix-closed} set of sequences of actions,
  and $\LibImpl[rec]$ is the \emph{prefix-closed} set of sequences of actions
  generated by the recovery procedure.
  We assume no event comes after a return, and that for all sequences
  in $\LibImpl[rec]$ every return event is preceded by a memory fence.

  Recall $ \EvOfCid{i} = \set{ e \in \Event | \cidOf(e)=i } $.
  An execution~$G=\tup{E, \Init, \Persisted, \po, \rf, \mo, \nvo}$
  is an \emph{execution of}~$\LibImpl$ if:
  \begin{itemize}
    \item The library events are generated by the implementation:
      \begin{align*}
      \A c \in \CallId.
        \A (e_0\dots e_n) \in \enum[\EvOfCid{i}]{\po}.&
            \actOf(e_0) \dots \actOf(e_n) \in \LibImpl[op](\callOf(c))
      \\
        \A (e_0\dots e_n) \in \enum[\EvOfCid{{}\recoveryId}]{\po}.&
            \actOf(e_0) \dots \actOf(e_n) \in \LibImpl[rec]
      \end{align*}
    \item Calls are sequential:
        for all $e_1,e_2,e_3 \in E$ with
        $\cidOf(e_1) = \cidOf(e_3) \ne \bot$
        and $ \cidOf(e_2) \ne \cidOf(e_1) $:
      \[
          e_1 \po-> e_3 \lor e_3 \po-> e_1
          \text{\quad and\quad}
          e_1 \po-> e_3
          \land
          e_2 \po-> e_3
          \implies
          e_2 \po-> e_1
      \]
    \item Returns come last, if at all:
      \[
        \sameCid \inters (\ev{\Rets}\seq\po) = \emptyset
        \text{\quad and\quad}
        \cidOf(e') \ne \cidOf(e) \ne \bot \land \smash{e \po!-> e'}
        \implies
          e \in \Rets
      \]
      We write $ \retOf(e) \is v$ if there is
      an event $e' \in E$ with $ e' \of \Ret{v} $ and
      $\cidOf(e)=\cidOf(e')$;
      $ \retOf(i) \is \bot$ otherwise.
    \item Recovery \po-precedes all other events:
      \[
        G.\EvOfCid{\recoveryId} \times (E \setminus (\Init \union \EvOfCid{\recoveryId}))
          \subs \po
        \text{\quad and\quad}
        G.\EvOfCid{\recoveryId} \ne \emptyset
      \]
    \item Client and library work on disjoint locations:
      \[
        \A e \in \Event\setminus \LibEv, e' \in \LibEv.
          \locOf(e) \ne \locOf(e').
      \]
  \end{itemize}
We also say $G$ is an execution of~$\LibImpl[rec]$
  if it is an execution of~$\tup{\fun \wtv.\emptyset, \LibImpl[rec]}$.
  A \emph{chain of}~$\LibImpl$ is a Px86-consistent chain of executions of~$\LibImpl$.
\end{definition}

\begin{lemma}[Basic properties of library specifications]
\label{lm:basic-lib-specs}
  For any library specification~$\tup{\AbsState, \Delta, \initState}$
  and for every~$q,q' \in \AbsState$,
  we have the following:
  \begin{defenum}[itemsep=.5\baselineskip]
    \item $
            \LegalPath{q}{q'} =
              \Union_{q''\in\AbsState}
               \LegalPath{q}{q''}
               \concat
               \LegalPath{q''}{q'}
          $
    \item $ \LegalFrom{q} $ is prefix-closed
  \end{defenum}
\end{lemma}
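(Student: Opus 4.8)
The plan is to prove both parts by induction on the way legal histories are generated, noting that the generating rules of \cref{def:lib-spec} append the new call--return pair on the \emph{right}; hence every induction peels the last element off a history. Before starting, I would record the basic fact that $\emptyvec \in \LegalPath{q''}{q'}$ holds if and only if $q'' = q'$: the only rule producing the empty history is the base case, which forces the two endpoints to coincide. This observation pins down the base cases below.

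For \textbf{Part 1} I would treat the two inclusions separately. The inclusion $\supseteq$ is the ``joining'' direction: given $\h_1 \in \LegalPath{q}{q''}$ and $\h_2 \in \LegalPath{q''}{q'}$, I would show $\h_1 \concat \h_2 \in \LegalPath{q}{q'}$ by induction on $\h_2$. If $\h_2 = \emptyvec$ then $q''=q'$ by the observation above and $\h_1\concat\h_2 = \h_1 \in \LegalPath{q}{q'}$. If $\h_2 = \h_2' \concat \tup{\var{call},v}$ arises from $\h_2' \in \LegalPath{q''}{q_0}$ with $(q_0,q') \in \Delta(\var{call},v)$, then by the induction hypothesis $\h_1 \concat \h_2' \in \LegalPath{q}{q_0}$, and one more application of the generating rule appends $\tup{\var{call},v}$ to land in $\LegalPath{q}{q'}$. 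The reverse inclusion $\subseteq$ is immediate: any $\h \in \LegalPath{q}{q'}$ equals $\h \concat \emptyvec$ with $\emptyvec \in \LegalPath{q'}{q'}$, so choosing $q''=q'$ places $\h$ in the right-hand union.

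For \textbf{Part 2} it suffices to establish the dual ``splitting'' statement, from which prefix-closedness follows at once: if $\h \in \LegalPath{q}{q'}$ then every prefix of $\h$ lies in $\LegalPath{q}{q''}$ for some $q''$, and hence in $\LegalFrom{q}$. I would prove this by rule induction on the derivation of $\h \in \LegalPath{q}{q'}$. In the base case $\h = \emptyvec$ its only prefix is $\emptyvec \in \LegalPath{q}{q}$. In the inductive case $\h = \h_0 \concat \tup{\var{call},v}$ with $\h_0 \in \LegalPath{q}{q_0}$ and $(q_0,q')\in\Delta(\var{call},v)$, a prefix of $\h$ is either a prefix of $\h_0$---handled by the induction hypothesis applied to $\h_0$---or $\h$ itself, which already lies in $\LegalPath{q}{q'} \subs \LegalFrom{q}$. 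Since an arbitrary prefix of a history $\h \in \LegalFrom{q}$ arises this way, $\LegalFrom{q}$ is prefix-closed.

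Both arguments are routine structural inductions, so I do not anticipate a genuine obstacle; the only points requiring a little care are (i) keeping the direction of induction aligned with the right-appending shape of the generating rules, and (ii) using the characterisation $\emptyvec \in \LegalPath{q''}{q'} \iff q''=q'$ to discharge the base cases. Non-determinism of $\Delta$ is harmless throughout, since every step needs only the \emph{existence} of a suitable intermediate state $q''$, never its uniqueness.
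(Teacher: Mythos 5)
Your proof is correct: the paper states this lemma without proof, treating it as an immediate consequence of the inductive definition, and your argument is precisely the routine one that omission presupposes. The two structural inductions aligned with the right-appending generation rule, together with the key observation that $\emptyvec \in \LegalPath{q''}{q'}$ forces $q''=q'$ (since only the base rule produces the empty history in the smallest family), discharge both parts without any gap.
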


\begin{definition}
  Given a library specification $ \tup{\AbsState, \Delta, \initState} $ and
  histories~$ {\h_1,\h_2 \in \Hist} $,
  we define~$ {\h_1 \himplies{\AbsState}{\Delta} \h_2} $
  to hold when
  $
    \A q,q' \in \AbsState.
      \h_1 \in \LegalPath{q}{q'}
      \implies
      \h_2 \in \LegalPath{q}{q'}
  $.
  Note that
  $ \h_1 \hequiv{\AbsState}{\Delta} \h_2 $
  if and only if
  $ \h_1 \himplies{\AbsState}{\Delta} \h_2 $ and
  $ \h_2 \himplies{\AbsState}{\Delta} \h_1 $.
\end{definition}

\begin{lemma}
\label{lm:basic-hist-equiv}
  The relation $ \hequiv{\AbsState}{\Delta} $
  is a congruence with respect to concatenation of histories.
\end{lemma}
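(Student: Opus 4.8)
The plan is to show that $\hequiv{\AbsState}{\Delta}$ is an equivalence relation that is moreover compatible with concatenation, these two facts together being exactly what ``congruence with respect to concatenation'' means. That $\hequiv{\AbsState}{\Delta}$ is an equivalence relation is immediate from its definition: since $\h_1 \hequiv{\AbsState}{\Delta} \h_2$ unfolds to the biconditional $\A q,q'.\,(\h_1 \in \LegalPath{q}{q'} \iff \h_2 \in \LegalPath{q}{q'})$, reflexivity, symmetry and transitivity are inherited directly from those of ``$\iff$''. It remains to prove compatibility: assuming $\h_1 \hequiv{\AbsState}{\Delta} \h_1'$ and $\h_2 \hequiv{\AbsState}{\Delta} \h_2'$, I must show $\h_1 \concat \h_2 \hequiv{\AbsState}{\Delta} \h_1' \concat \h_2'$.

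The key ingredient is a \emph{boundary-splitting} property of $\LegalPath{}{}$: for all histories $\h_a,\h_b$ and states $q,q'$,
\[
  \h_a \concat \h_b \in \LegalPath{q}{q'}
    \iff
  \E q''.\; \h_a \in \LegalPath{q}{q''} \land \h_b \in \LegalPath{q''}{q'}.
\]
The ``$\Leftarrow$'' direction is immediate from \cref{lm:basic-lib-specs}(1), as $\LegalPath{q}{q''}\concat\LegalPath{q''}{q'}$ is one of the summands of the union it provides. For ``$\Rightarrow$'' I would first unfold the inductive definition of legal paths into its run characterization: $\tup{c_1,v_1}\dots\tup{c_n,v_n} \in \LegalPath{q}{q'}$ holds exactly when there is a sequence $q = p_0,\dots,p_n = q'$ with $(p_{i-1},p_i)\in\Delta(c_i,v_i)$ for each $i$ (a routine induction on $n$ using \cref{def:lib-spec}); the state $q'' = p_{\len{\h_a}}$ then witnesses the split. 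Equivalently, one proves ``$\Rightarrow$'' directly by induction on $\h_b$, inverting the last application of the generating rule.

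With this in hand, compatibility follows by a direct chain of equivalences. For arbitrary $q,q'$,
\[
  \h_1 \concat \h_2 \in \LegalPath{q}{q'}
    \iff
  \E q''.\; \h_1 \in \LegalPath{q}{q''} \land \h_2 \in \LegalPath{q''}{q'},
\]
and since $\h_1 \hequiv{\AbsState}{\Delta} \h_1'$ and $\h_2 \hequiv{\AbsState}{\Delta} \h_2'$ hold for \emph{every} pair of source/target states, each conjunct may be replaced by its primed counterpart, giving $\E q''.\, \h_1' \in \LegalPath{q}{q''} \land \h_2' \in \LegalPath{q''}{q'}$, which by the splitting property again is $\h_1' \concat \h_2' \in \LegalPath{q}{q'}$. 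As $q,q'$ were arbitrary, $\h_1\concat\h_2 \hequiv{\AbsState}{\Delta} \h_1'\concat\h_2'$, as required. I expect the only real obstacle to be the ``$\Rightarrow$'' half of the splitting property: \cref{lm:basic-lib-specs}(1) only guarantees a split at \emph{some} position, whereas here the split must fall exactly at the $\h_1/\h_2$ boundary, which is precisely why the run characterization (or an explicit induction on $\h_b$) is needed rather than a bare appeal to \cref{lm:basic-lib-specs}(1).
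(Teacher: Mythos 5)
Your proof is correct. The paper states \cref{lm:basic-hist-equiv} without proof (it is filed among the routine ``basic'' facts of \cref{sec:appendix-prelim}), so there is no authorial argument to diverge from; what you give is precisely the standard proof the paper implicitly relies on. Two points in your write-up deserve explicit endorsement. First, your observation that \cref{lm:basic-lib-specs}(1) does \emph{not} by itself yield the required splitting is accurate and is the only genuinely delicate spot: as a set equation, its ``$\subseteq$'' direction is trivially witnessed by the empty prefix (take $q''=q$, $\h_a=\emptyvec$), so it guarantees a split at \emph{some} position but not at the prescribed $\h_1/\h_2$ boundary, which is what the congruence argument needs. Second, your run characterization of $\LegalPath{q}{q'}$ (equivalently, the induction on $\h_b$ inverting the last application of the generating rule of \cref{def:lib-spec} --- inversion is sound because only the second rule produces nonempty histories) is the right strengthening, and once the boundary-splitting biconditional is in hand, the chain of equivalences replacing each conjunct by its $\hequiv{\AbsState}{\Delta}$-counterpart closes the proof exactly as you state. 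No gaps.
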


 \section{A Proof Technique for Persistent Linearizability}
\label{sec:full-proof-technique}

In this section we provide proofs for the theorems and lemmas stated
in \cref{sec:master-thm}.

\subsection{Decoupling Recovery}

\recdecoupling*
\begin{proof}
  Let and $ G_1 \dots G_n $ be a chain of~$\tup{\LibImpl[rec], \LibImpl[op]}$
  with~$n>0$.
  Define $ M_0 = \emptyset $,
     and $ M_i = \mem{\restr{G_i.\nvo}{G_i.\Persisted}} $.
  Note that for all~$ 0 < i \leq n $, $ G_i.\Init = M_{i-1} $.
  We prove, for all~$0 < i\leq n$, that there exist
  abstract executions
  $ \tup{\abs{G}_1,\lin_1},\dots,\tup{\abs{G}_i,\lin_i} $
  of $G_1 ,\dots, G_i$ respectively,
  such that, for some $q \in \AbsState$,
  $
    \hist(\enum{\lin_1} \concat \dots \concat \enum{\lin_i})
      \in \LegalPath{\initState}{q}
  $ and $
    M_i \in \durable(q)
  $.
  \begin{induction}
    \step[Base case~$i=1$]
      We can choose~$ q = \initState $;
      we then have $  \emptyvec \in \LegalPath{\initState}{\initState} $ and $M_0 = \emptyset \in \durable(\initState)$.
    \step[Induction step~$i+1$]
      The induction hypothesis gives us
      $ \tup{\abs{G}_1,\lin_1},\dots,\tup{\abs{G}_i,\lin_i} $
      and a state~$q_i$
      such that  $
        \hist(\enum{\lin_1} \concat \dots \concat \enum{\lin_i})
          \in \LegalPath{\initState}{q_i}
      $
      and $
        M_i \in \durable(q_i)
      $.
      By \pre\tup{\durable,\recovered}-linearizability of $\LibImpl[op]$,
      and $ G_{i+1}.\Init = M_{i} \in \durable(q_i) $,
      there is a state~$q_{i+1}$ and abstract execution
      $\tup{\abs{G}_{i+1}, \lin_{i+1}}$ of~$G_{i+1}$
      such that
      $ \hist(\enum{\lin}) \in \LegalPath{q_i}{q_{i+1}} $
      and
      $\mem{\restr{G.\nvo}{G.\Persisted}} \in \durable(q_{i+1})$.
      Since
       \[
         \left.
         \begin{matrix*}[r]
          \hist(\enum{\lin_1} \concat \dots \concat \enum{\lin_i})
            \in \LegalPath{\initState}{q_i}
          \\\hist(\enum{\lin_{i+1}})
            \in \LegalPath{q_i}{q_{i+1}}
         \end{matrix*}
         \right\}
          \implies
          \hist(\enum{\lin_1} \concat \dots \concat \enum{\lin_i} \concat \enum{\lin_{i+1}})
            \in \LegalPath{\initState}{q_{i+1}}
      \]
      we proved our statement.
      \qedhere
  \end{induction}
\end{proof}

\subsection{Linearizability Proofs}

Linearization strategies, as defined in \cref{def:lin-strategy},
are not general enough to support advanced proofs like the one we present
in \cref{sec:verif-linkfree}.

Strategies for the volatile linearization argument might need to account for
calls that are linearized in hindsight.
These are calls for which no particular event can be seen as linearizing them,
yet it is possible to find a legal position for them in the linearization.
To express this we add to strategies a component $\hres{p}$ which we call
\emph{hindsight resolution}, which takes in input a sequence of events
and a call id and returns the index (if any) at which the call can be thought
of logically taking effect.

\begin{definition}\label{def:full-lin-strategy}
  A \emph{full linearization strategy} for~$G$
  is a tuple
  $ \tup{\lf,\hres{\lf},r,\rel, \alpha} $
  such that:
  \begin{defenum}
  \item $ \tup{\lf,r,\rel, \alpha} $
    is a linearization strategy for~$G$,
  \item $\hres{\lf} \from \enum{\rel} \to (G.\CallId \pto \Nat)$,
    called \emph{hindsight resolution},
  \item $\A \vec{e} \in \enum{\rel}.
           \dom(\hres{\lf}(\vec{e})) \inters \dom(\lf) = \emptyset $,
    \label{cond:hind-dom}
\end{defenum}
\end{definition}

Calls linearized by hindsight do not modify the state,
but they might not be abstractly enabled at every state.
For example, a failing delete of~$k$ can only be linearized
when the memory encodes some state~$S$ with $k\notin S$.
Therefore, a full strategy validates an execution, if the strategy it extends validates it and if, additionally, the hindsight calls are linearized
at a time where they can abstractly take a step (as mandated by $\Delta$).

\begin{definition}[Validating full strategy]
\label{def:valid-strategy}
  A \emph{linearization full strategy}
  $ \tup{\lf,\hres{\lf},r,\rel, \alpha} $
  \emph{\pre\recEndOf-validates}~$G$ if,
  the strategy
  $ \tup{\lf,r,\rel, \alpha} $
  \emph{\pre\recEndOf-validates}~$G$ and,
  for all $\vec{e} \in {\enum{\rel}}$
  and all $c \in \dom(\hres{\lf}(\vec{e}))$:
  \begin{defenum}
    \item $
        \hres{\lf}(\vec{e})(c) > \recEndOf(\vec{e})
      $, and\label{cond:hind-after-rec}
    \item $
      \A q.
        \mem*{\upto{\vec{e}}{\hres{\lf}(\vec{e})(c)-1}} \in \alpha(q)
          \implies
            (q,q) \in \Delta(\callOf(c), r(c))
      $.\label{cond:hind-linpt}\end{defenum}
\end{definition}

Since strategies are special cases of full strategies (with $\hres{\lf}(\wtv)=\emptyset$)
we abuse of terminology and call both just ``strategies''.

We define functions
to extract histories from sequences
of events, through (full) linearization strategies.

\begin{definition}
\label{def:history-maps}
  Let $ \tup{\lf,r,\rel, \alpha} $ be a linearization strategy.
We define the \emph{call-identifier history function}
  $ \cidHistOf[\lf] \from (\Event \dunion \CallId)^* \to \CallId^* $
  as follows:
  \begin{equation*}
    \cidHistOf[\lf](\vec{s}) \is
      \begin{cases}
        \emptyvec
          \CASE \vec{s} = \emptyvec
        \\
        \cidHistOf[\lf](\vec{s}')
        \concat c
          \CASE \vec{s} = \vec{s}' \concat c
          \land c \in \CallId
        \\
        \cidHistOf[\lf](\vec{s}')
        \concat c
          \CASE \vec{s} = \vec{s}' \concat e
          \land \lf(c) = e \in \Event
        \\
        \cidHistOf[\lf](\vec{s}')
          \CASE \vec{s} = \vec{s}' \concat e
          \land e \in \Event
          \land \A c. \lf(c) \ne e
      \end{cases}
  \end{equation*}
  Note that the function is well-defined since~$\lf$ is injective,
  so there is no ambiguity as for which~$c$ can satisfy the condition of the third case.

  From a sequence of call identifiers~$ c_1 \dots c_n \in \CallId^* $,
  we can extract the history of calls:
  \begin{align*}
    \histOf{r}(c_1 \dots c_n) & \is
      \tup{\callOf(c_1), r(c_1)}
        \dots
      \tup{\callOf(c_n), r(c_n)}
    \\
    \histOf[\lf]{r}(\vec{s}) & \is
      \histOf{r}(\cidHistOf[\lf](\vec{s}))
  \end{align*}

  Let $\hres{\lf}$ be a hindsight resolution, and $n=\len{\vec{e}}$;
  we define:
  \begin{align*}
    \cidHistOf[\lf,\hres{\lf}](\vec{e}) & \is
      \cidHistOf[\lf](\hres{\lf}[\vec{e}])
    &
    \hres{\lf}(\vec{e})_{@i} &\is
      \set{c | \hres{\lf}(\vec{e})(c) = i}
    \\
    \histOf[\lf,\hres{\lf}]{r}(\vec{e}) & \is
      \histOf{r}(\cidHistOf[\lf,\hres{\lf}](\vec{e}))
    &
    \hres{\lf}[\vec{e}] &\is
      \hres{\lf}(\vec{e})_{@0} \concat \vec{e}(0)
      \cdots
      \hres{\lf}(\vec{e})_{@n} \concat \vec{e}(n)
  \end{align*}
  Note that we are implicitly coercing the set $ \hres{\lf}(\vec{e})_{@i} $
  to a sequence of its elements.
  Since the actual sequence of choice is irrelevant,
  we assume one is picked arbitrarily, and any property
  stated for $ \hres{\lf}(\vec{e})_{@i} $ holds for any choice of sequence.
\end{definition}

\begin{definition}
  \label{def:orig}
  We define the \emph{origin} index of a call:
  \[
    \orig_{\lf,\hres{\lf}\!}(\vec{e}, c) \is
    \begin{cases}
      i
      \CASE \vec{e}(i) = \lf(c)
      \\
      \hres{\lf}(\vec{e})(c)
      \CASE c \in \dom(\hres{\lf}(\vec{e}))
    \end{cases}
  \]
\end{definition}

\begin{lemma}
\label{lm:orig-mono}
  Let $ \vec{c} $ be a scattered subsequence of
  $ \cidHistOf[\lf,\hres{\lf}](\vec{e}) $.
If $
    \orig_{\lf,\hres{\lf}\!}(\vec{e}, \vec{c}(i_1))
    < \orig_{\lf,\hres{\lf}\!}(\vec{e}, \vec{c}(i_2))
  $ then $
    i_1 < i_2.
  $
\end{lemma}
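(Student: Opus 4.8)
The plan is to reduce the claim to a monotonicity statement about the full sequence $\vec{d} \is \cidHistOf[\lf,\hres{\lf}](\vec{e})$, namely that $\orig_{\lf,\hres{\lf}\!}(\vec{e}, \cdot)$ is non-decreasing along $\vec{d}$, and then transfer it along the scattered embedding of $\vec{c}$ into $\vec{d}$. Before anything else I would note that $\orig_{\lf,\hres{\lf}\!}(\vec{e}, c)$ is well-defined and unambiguous for every call $c$ occurring in $\vec{d}$: every such $c$ lies in $\dom(\lf)$ or in $\dom(\hres{\lf}(\vec{e}))$, and by condition~\ref{cond:hind-dom} these two sets are disjoint, so exactly one branch of \cref{def:orig} applies.

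Next I would unfold \cref{def:history-maps}: $\vec{d} = \cidHistOf[\lf](\hres{\lf}[\vec{e}])$, where $\hres{\lf}[\vec{e}]$ is the concatenation of the blocks $B_i \is \hres{\lf}(\vec{e})_{@i} \concat \vec{e}(i)$ listed in increasing order of $i$. A one-line induction on its defining recursion shows $\cidHistOf[\lf]$ distributes over concatenation (it inspects only the last element and recurses on the prefix), so $\vec{d} = \cidHistOf[\lf](B_0) \concat \cdots \concat \cidHistOf[\lf](B_n)$. The key local fact is then that every call identifier emitted by $\cidHistOf[\lf](B_i)$ has origin exactly $i$: the identifiers in $\hres{\lf}(\vec{e})_{@i}$ are copied verbatim and each satisfies $\hres{\lf}(\vec{e})(c) = i$, so the second branch of \cref{def:orig} gives origin $i$; while $\vec{e}(i)$ is either dropped or replaced by the unique $c$ with $\lf(c) = \vec{e}(i)$, for which the first branch gives origin $i$. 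Since the blocks are concatenated with increasing index, this yields the desired monotonicity: for positions $j_1 \leq j_2$ of $\vec{d}$ we have $\orig_{\lf,\hres{\lf}\!}(\vec{e}, \vec{d}(j_1)) \leq \orig_{\lf,\hres{\lf}\!}(\vec{e}, \vec{d}(j_2))$.

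Finally I would argue by contraposition. As $\vec{c}$ is a scattered subsequence of $\vec{d}$, there is a strictly increasing index map $\phi$ with $\vec{c}(i) = \vec{d}(\phi(i))$. If $i_1 \geq i_2$ then $\phi(i_1) \geq \phi(i_2)$, whence by the monotonicity above $\orig_{\lf,\hres{\lf}\!}(\vec{e}, \vec{c}(i_1)) \geq \orig_{\lf,\hres{\lf}\!}(\vec{e}, \vec{c}(i_2))$, contradicting the hypothesis that the former is strictly smaller; hence $i_1 < i_2$. The entire argument is essentially bookkeeping, and the one point demanding care is the block-index characterisation of $\orig$ --- correctly matching each call identifier emitted by $\cidHistOf[\lf]$ to the block of $\hres{\lf}[\vec{e}]$ it came from, and confirming that both the hindsight contributions and the $\lf$-image contribution of block $i$ carry origin $i$. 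Once that is in place, the monotonicity and the transfer to $\vec{c}$ are immediate.
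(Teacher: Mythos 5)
Your proof is correct and takes essentially the same route as the paper's: the paper likewise reduces the claim to the full sequence $\cidHistOf[\lf,\hres{\lf}](\vec{e})$ by observing that scattered subsequences preserve order, and then appeals to the definition for the monotonicity of $\orig$ along it. Your block decomposition of $\hres{\lf}[\vec{e}]$ merely spells out the step the paper dismisses as straightforward from the definition, and does so correctly --- including the point that disjointness of $\dom(\lf)$ and $\dom(\hres{\lf}(\vec{e}))$ makes $\orig$ unambiguous, and that each call emitted from the block at index $i$ has origin exactly $i$.
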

\begin{proof}
  Since scattered subsequences preserve the order, it is enough to prove
  the claim for $ c = \cidHistOf[\lf,\hres{\lf}](\vec{e}) $,
  which follows straightforwardly from
  the definition of $ \cidHistOf[\lf,\hres{\lf}] $.
\end{proof}

\Cref{lm:validates-legal} is a simple corollary of the following
lemmas, which in addition take hindsight resolution into account.

\begin{lemma}
\label{lm:linpt-validates-legal}
  Let~$G$ be an execution of~$\tup{\LibImpl[op],\LibImpl[rec]}$
  for some \pre\tup{\durable,\recovered}-sound~$\LibImpl[rec]$,
  with $G.\Init \in \durable(q)$.
  Assume the linearization strategy
  $\tup{\lp, \hres{\lp}, r, \vo, \volatile}$
  \pre\recEndOf-validates~$G$.
  Then,
  for all~$\vec{e} \in G.\enum{\vo}$,
  ${
    \histOf[\lp,\hres{\lp}]{r}(\vec{e}) \in \LegalPath{q}{q'}
  }$
  for some~$q'$.
\end{lemma}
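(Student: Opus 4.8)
The plan is to fix an arbitrary $\vec{e} \in G.\enum{\vo}$ and prove the claim by induction along $\vec{e}$, accumulating the legal history one event at a time while tracking the abstract state witnessed by the current volatile memory. Write $n = \len{\vec{e}} - 1$ and $j_0 = \recEndOf(\vec{e})$, and recall that $\histOf[\lp,\hres{\lp}]{r}(\vec{e}) = \histOf{r}(\cidHistOf[\lp](\hres{\lp}[\vec{e}]))$, where in the interleaved sequence $\hres{\lp}[\vec{e}]$ the hindsight calls resolved to position $i$ appear immediately before $\vec{e}(i)$. For $j_0 \le i \le n$ I would let $P_i$ be the prefix of $\hres{\lp}[\vec{e}]$ ending at $\vec{e}(i)$, and carry the invariant that there is a state $q_i$ with (1) $\histOf[\lp]{r}(\upto{\vec{e}}{i}) \in \LegalPath{q}{q_i}$, (2) $\histOf{r}(\cidHistOf[\lp](P_i)) \in \LegalPath{q}{q_i}$, and (3) $\mem{\upto{\vec{e}}{i}} \in \volatile(q_i)$. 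The point of keeping both (1) and (2) with a \emph{common} end-state is that the only difference between the two histories is the splicing-in of hindsight calls, which \cref{cond:hind-linpt} forces to be abstract self-loops.

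For the base case $i = j_0$, I would observe that positions $\le j_0$ are exactly the initial and recovery events; none is in the range of $\lp$ (linearization points carry a call identifier in $\CallId$, never $\recoveryId$ or $\bot$), and by \cref{cond:hind-after-rec} no hindsight call resolves there, so both histories reduce to $\emptyvec \in \LegalPath{q}{q}$. For (3), \cref{cond:rec-tso-resp} gives $\mem{\upto{\vec{e}}{j_0}} = \mem{\restr{G.\po}{G.\Init \union \EvOfCid{\recoveryId}}}$, the memory produced by running the recovery sequentially from $G.\Init \in \durable(q)$; recovery soundness (\cref{cond:recovery-recovers}) places this in $\recovered(q)$, and the standing assumption $\recovered(q) \subs \volatile(q)$ puts it in $\volatile(q)$. (If the recovery has not returned there are simply no later events, $j_0 = n$, and the empty history witnesses the claim with $q' = q$.) So the invariant holds with $q_{j_0} = q$.

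For the inductive step $i-1 \to i$ (with $j_0 < i \le n$), passing from $P_{i-1}$ to $P_i$ first appends the hindsight calls resolved to $i$ and then $\vec{e}(i)$. For each such hindsight call $c$, I would apply \cref{cond:hind-linpt} at $q_{i-1}$ — legitimate because it is phrased in terms of $\mem{\upto{\vec{e}}{i-1}} \in \volatile(q_{i-1})$, which is invariant (3) at $i-1$ — to get $(q_{i-1},q_{i-1}) \in \Delta(\callOf(c), r(c))$, so these self-loops extend history (2) legally without moving its end-state and leave (1) untouched. Then I would invoke the underlying validating-strategy property at position $i$, whose hypotheses are precisely invariants (1) and (3) at $i-1$: if $\vec{e}(i)$ is not a linearization point, \cref{cond:stutter} keeps the memory in $\volatile(q_{i-1})$ and neither history changes, so $q_i = q_{i-1}$; if $\vec{e}(i) = \lp(c)$, \cref{cond:linpt-trans} yields $q_i$ with $(q_{i-1},q_i) \in \Delta(c,r(c))$ and $\mem{\upto{\vec{e}}{i}} \in \volatile(q_i)$, and appending $\tup{\callOf(c), r(c)}$ extends both histories to $q_i$.

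Taking $i = n$ then gives $\histOf[\lp,\hres{\lp}]{r}(\vec{e}) \in \LegalPath{q}{q_n}$, so $q' = q_n$ witnesses the claim. I expect the main obstacle to be exactly the synchronization bookkeeping between (1) and (2): the validating-strategy conditions speak only about the plain history $\histOf[\lp]{r}$, so the whole argument rests on the self-loop property of hindsight calls guaranteeing that the augmented history stays legal and ends at the same state as the plain one, letting me reuse the plain validating step unchanged. A secondary subtlety is the base case, where recovery soundness only guarantees membership in $\recovered(q)$, so I must rely on $\recovered(q) \subs \volatile(q)$ to anchor the induction at the intended state $q$ rather than at some arbitrary state merely compatible with the recovered memory.
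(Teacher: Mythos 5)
Your proof is correct and follows essentially the same route as the paper's: an induction along the $\vo$-enumeration starting at $\recEndOf(\vec{e})$, strengthened with the invariant $\mem{\upto{\vec{e}}{i}} \in \volatile(q_i)$, anchoring the base case via recovery soundness together with $\recovered(q) \subs \volatile(q)$, splicing in hindsight calls as abstract self-loops via the hindsight-validation condition, and advancing via the stutter/transition conditions of the validating strategy. The one refinement over the paper's proof is that you explicitly carry the plain history $\histOf[\lp]{r}(\upto{\vec{e}}{i})$ as a separate invariant with the same end-state $q_i$ — the paper's induction hypothesis mentions only the augmented history, even though the stutter/transition conditions formally take the plain history as antecedent — so your bookkeeping makes explicit a step the paper leaves implicit (justified there by the fact that deleting self-loops preserves legality and endpoints).
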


\begin{proof}
  We prove that, for all $i < \len{\vec{e}}$,
  $
    \E q_i.
      \histOf[\lp,\hres{\lp}]{r}(\upto{\vec{e}}{i})
        \in \LegalPath{q}{q_i}
  $,
  which implies our goal.

  Note that, by condition \ref{cond:hind-after-rec},
  $\hres{\lp}(\vec{e})(c) > \recEndOf(\vec{e}) $,
  for all~$c\in\dom(\hres{\lp})$.

  We consider two cases:
  \begin{casesplit}
    \case[$ i < \recEndOf(\vec{e}) $]
      Since $\A c.\lp(c) \ne \recoveryId$, we know that $
        \histOf[\lp,\hres{\lp}]{r}(\upto{\vec{e}}{i}) =
        \emptyvec \in \LegalPath{q}{q}
      $
      so we can pick~$q_i=q$.
    \case[$ i \geq \recEndOf(\vec{e}) $]
      We strengthen the claim by also showing
      $ \mem{\upto{\vec{e}}{i}} \in \volatile(q_i) $
      and proceed by induction on~$i$.
      \begin{induction}
        \step[Base case~$ i = \recEndOf(\vec{e}) $]
          As we argued above, we can pick~$q_i=q$
          since $
            \histOf[\lp,\hres{\lp}]{r}(\upto{\vec{e}}{i}) =
            \emptyvec \in \LegalPath{q}{q}.
          $
          By $G.\Init \in \durable(q)$
          and soundness of $\LibImpl[rec]$, we also have that
          \[
            \mem{\upto{\vec{e}}{i}}
            = \mem{\restr{G.\tso}{G.\Init\union\EvOfCid{\recoveryId}}}
            \in \recovered(q) \subs \volatile(q).
          \]
        \step[Inductive step~$ i > \recEndOf(\vec{e}) $]
          By induction hypothesis we have,
          for some $q_{i-1}$,
          $ \mem{\upto{\vec{e}}{i-1}} \in \volatile(q_{i-1}) $
          and $
            \histOf[\lp,\hres{\lp}]{r}(\upto{\vec{e}}{i-1})
              \in \LegalPath{q}{q_{i-1}}.
          $
          First note that
          \[
            \histOf[\lp,\hres{\lp}]{r}(\upto{\vec{e}}{i}) =
            \histOf[\lp,\hres{\lp}]{r}(\upto{\vec{e}}{i-1})
            \concat
            \hres{\lp}(\vec{e})_{@i}
            \concat
            \var{op}
          \]
          where~$\var{op}$ is such that either:
          \begin{enumerate}[label=\roman*),leftmargin=3em]
            \item $\var{op} = \emptyvec$
              if $\vec{e}(i)$ is not a linearization point, or
            \item $\var{op} = \tup{\callOf(c),r(c)}$
              if $\vec{e}(i)$ is the linearization point of~$c$.
          \end{enumerate}
          Before we analyse the two cases,
          we observe that by condition~\ref{cond:hind-linpt} and the induction hypothesis,
          $
            \histOf[\lp,\hres{\lp}]{r}(\upto{\vec{e}}{i-1})
            \concat
            \hres{\lp}(\vec{e})_{@i}
            \in
            \LegalPath{q}{q_{i-1}}.
          $

          We now consider the two cases above.
          \begin{casesplit}
            \case[$ \A c.\lp(c) \ne \vec{e}(i) $]
              Then we can pick~$q_i = q_{i-1}$.
By condition~\ref{cond:stutter}
              we know~$ \mem{\upto{\vec{e}}{i}} \in \volatile(q_{i-1}) $.
              Moreover we get
              \[
                \histOf[\lp,\hres{\lp}]{r}(\upto{\vec{e}}{i})
                = \histOf[\lp,\hres{\lp}]{r}(\upto{\vec{e}}{i-1})
                  \concat
                  \hres{\lp}(\vec{e})_{@i}
                \in \LegalPath{q}{q_{i}}.
              \]
            \case[$ \E c.\lp(c) = \vec{e}(i) $]
              By condition~\ref{cond:linpt-trans}
we know that there is a~$q'$
              such that $ \mem{\upto{\vec{e}}{i}} \in \volatile(q') $
              and $ (q_{i-1},q') \in \Delta(\callOf(c), r(c)) $.
              By letting $q_{i} = q'$ we obtain
              \[
                \histOf[\lp,\hres{\lp}]{r}(\upto{\vec{e}}{i})
                = \histOf[\lp,\hres{\lp}]{r}(\upto{\vec{e}}{i-1})
                  \concat
                  \hres{\lp}(\vec{e})_{@i}
                  \concat
                  \tup{\callOf(c),r(c)}
                \in \LegalPath{q}{q_{i}}.
              \qedhere
              \]
          \end{casesplit}
      \end{induction}
  \end{casesplit}
\end{proof}

\begin{lemma}
\label{lm:perst-validates-legal}
  Consider an execution~$G$ of~$\tup{\LibImpl[op],\LibImpl[rec]}$
  for some \pre\tup{\durable,\recovered}-sound~$\LibImpl[rec]$,
  with $G.\Init \in \durable(q)$.
  Assume the linearization strategy
  $\tup{\pt, r, \restr{\nvo}{G.\Persisted}, \durable}$
  \pre\initOf-validates~$G$.
  Then, for all~$\vec{e} \in \enum[G.\Persisted]{\nvo}$,
  for some~$q'$,
  $
    \histOf[\pt]{r}(\vec{e}) \in \LegalPath{q}{q'}
  $ and $
    \mem{\vec{e}} \in \durable(q').
  $
\end{lemma}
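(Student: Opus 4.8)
The plan is to follow the template of the proof of \cref{lm:linpt-validates-legal}, but instantiated with the persistent strategy $\tup{\pt, r, \restr{\nvo}{G.\Persisted}, \durable}$ and anchored at $\initOf$ rather than $\recEndOf$. Since $\nvo$ is strict and total on $\Durable \supseteq G.\Persisted$, the relation $\restr{\nvo}{G.\Persisted}$ is a strict total order on $G.\Persisted$, so there is exactly one $\vec{e} \in \enum[G.\Persisted]{\nvo}$, which I fix. First I would record that $\initOf(\vec{e})$ is well-defined: by \cref{def:exec} we have $G.\Init \times (\Durable \setminus G.\Init) \subs \nvo$ and $G.\Init \subs G.\Persisted$, so $\restr{\nvo}{G.\Persisted}$ orders all of $G.\Init$ strictly before every other persisted event; hence $\upto{\vec{e}}{\initOf(\vec{e})} = G.\Init$ and, via the memory/initial-events correspondence, $\mem{\upto{\vec{e}}{\initOf(\vec{e})}} = G.\Init \in \durable(q)$ by hypothesis.

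The core is an induction showing that for every $\initOf(\vec{e}) \leq i < \len{\vec{e}}$ there is a state $q_i$ with $\histOf[\pt]{r}(\upto{\vec{e}}{i}) \in \LegalPath{q}{q_i}$ and $\mem{\upto{\vec{e}}{i}} \in \durable(q_i)$; taking $i = \len{\vec{e}} - 1$ and $q' = q_i$ then yields the claim, since $\upto{\vec{e}}{\len{\vec{e}}-1} = \vec{e}$. For the base case $i = \initOf(\vec{e})$ I set $q_i = q$: the prefix consists solely of initialisation events, none of which lies in $\rng(\pt)$ because $\cidOf(\pt(c)) \neq \bot$ for every $c \in \dom(\pt)$ (recall $\tup{\pt, r, \restr{\nvo}{G.\Persisted}, \durable}$ is a linearization strategy, so \cref{def:lin-strategy} applies), whereas init events carry $\cidOf = \bot$. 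Thus $\histOf[\pt]{r}(\upto{\vec{e}}{i}) = \emptyvec \in \LegalPath{q}{q}$, and $\mem{\upto{\vec{e}}{i}} \in \durable(q)$ was just established.

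For the inductive step at index $i > \initOf(\vec{e})$, the induction hypothesis provides $q_{i-1}$ with $\histOf[\pt]{r}(\upto{\vec{e}}{i-1}) \in \LegalPath{q}{q_{i-1}}$ and $\mem{\upto{\vec{e}}{i-1}} \in \durable(q_{i-1})$; these are precisely the antecedents needed to invoke the assumption that the strategy \pre\initOf-validates $G$ (\cref{def:valid-strategy}) at position $i$, with the global $q$ playing the role of $q_0$ and $q_{i-1}$ the role of the intermediate state. If $\vec{e}(i)$ is not a persistency point, condition \ref{cond:stutter} gives $\mem{\upto{\vec{e}}{i}} \in \durable(q_{i-1})$ and the history is unchanged, so I take $q_i = q_{i-1}$. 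If instead $\vec{e}(i) = \pt(c)$ for some $c$, condition \ref{cond:linpt-trans} yields some $q'$ with $\mem{\upto{\vec{e}}{i}} \in \durable(q')$ and $(q_{i-1}, q') \in \Delta(c, r(c))$; taking $q_i = q'$ and noting $\histOf[\pt]{r}(\upto{\vec{e}}{i}) = \histOf[\pt]{r}(\upto{\vec{e}}{i-1}) \concat \tup{\callOf(c), r(c)}$, legality is preserved by the definition of $\LegalPath{q}{q'}$.

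I do not expect a genuine obstacle, since the argument is mechanical once the \pre\initOf-validation hypothesis is available; the only points demanding care are the two bookkeeping facts that neutralise the base case and the recovery phase. Namely, that the $\nvo$-enumeration begins with exactly $G.\Init$ (so the starting memory is in $\durable(q)$), and that any persisted recovery events — being non-persistency-points, as $\cidOf(\pt(c)) \neq \recoveryId$ — fall into the stutter case \ref{cond:stutter} and thus leave both the history and the encoded state untouched; their $\durable$-preservation is already subsumed by the validation assumption (and, at the level of the \masterthm, underwritten by \cref{cond:recovery-stutters} of sound recovery). Unlike \cref{lm:linpt-validates-legal}, no hindsight resolution intervenes here, so the induction carries no additional hindsight obligations.
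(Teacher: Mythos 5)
Your proposal is correct and matches the paper's own proof, which is given exactly as ``a straightforward adaptation of the proof of \cref{lm:linpt-validates-legal}'': you carry out precisely that adaptation, replacing $\recEndOf$ by $\initOf$, anchoring the base case at $G.\Init \in \durable(q)$, dropping the hindsight obligations, and running the same stutter/transition induction from \cref{def:valid-strategy}. The bookkeeping facts you flag (the $\nvo$-enumeration starting with exactly $G.\Init$, init and recovery events not being persistency points) are the right ones and are handled as the paper intends.
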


\begin{proof}
  By a straightforward adaptation of the proof of \cref{lm:linpt-validates-legal}.
\end{proof}

\begin{lemma}
\label{lm:abs-hb-to-hb}
  Given~$G$ and a set~$C$ of completion events for~$G$,
  let~$\ret_1, \ret_2 \in G.\Rets \dunion C$.
  We have:
  \[
   \ret_1 \hb[{\abs[C]{G}}]-> \ret_2
   \quad\implies\quad
   \ret_1 \hb[G]-> \min\nolimits_{\po}(G.\EvOfCid{\cidOf(\ret_2)}) .
  \]
\end{lemma}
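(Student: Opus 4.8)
The plan is to fix an arbitrary $\abs[C]{G}.\hb$-path from $\ret_1$ to $\ret_2$, lift it edge-by-edge to a $G.\hb$-path, and show that its last step lands on (or can be redirected to) $m \is \min_{\po}(G.\EvOfCid{c_2})$, where $c_2 \is \cidOf(\ret_2)$. Recall $\hb \is \tr{(\po \union \rf)}$, and that a completion event in $C$ is a return event living outside $G.E$.

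I would first establish two structural facts. \emph{(Sinks.)} A completion event $e \in C$ is a return, hence never a source of an $\rf$-edge (as $\rf \subs \UWrites \times \UReads$); it lies outside $G.E$, so no $G.\po$-edge leaves it; and every $C_{\po}$-edge has its source in $G.E$, so no $C_{\po}$-edge leaves it either. Thus $C$-events have no outgoing $\abs[C]{G}.\po$- or $\abs[C]{G}.\rf$-edge, so in any $\abs[C]{G}.\hb$-path a $C$-event can occur only as the final element. In particular, if $\ret_1 \in C$ the hypothesis is vacuous, so we may assume $\ret_1 \in G.\Rets$, and every non-final event of the path lies in $G.E \setminus \LibEv$. \emph{(Decomposition.)} Since $C$-events are already sinks of $G.\po \union C_{\po}$, any $(a,b) \in \tr{(G.\po \union C_{\po})}$ with $b \in G.E$ uses no $C$-event and hence lies in $\tr{(G.\po)} = G.\po$; and for $b \in C$, peeling off the (necessarily final) $C_{\po}$-edge yields some $b' \in G.E$ with $\cidOf(b') = \cidOf(b)$ and either $a = b'$ or $a \po-> b'$.

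Next I would lift the path $\ret_1 = e_0, \dots, e_k = \ret_2$. Every edge $(e_i,e_{i+1})$ with $i < k-1$ joins events of $G.E$, so it is either an $\abs[C]{G}.\rf$-edge (hence $(e_i,e_{i+1}) \in G.\rf$) or an $\abs[C]{G}.\po$-edge into $G.E$ (hence $(e_i,e_{i+1}) \in G.\po$, by Decomposition); either way it is a $G.\hb$-edge, so $\ret_1 \hb[G]-> e_{k-1}$ (and $e_{k-1} = \ret_1$ if $k=1$). For the last edge, $\ret_2$ is a return, so $(e_{k-1},\ret_2)$ cannot be $\rf$ and is therefore $\abs[C]{G}.\po$; by Decomposition there is $b' \in G.\EvOfCid{c_2}$ with $e_{k-1} = b'$ or $e_{k-1} \po-> b'$ (taking $b' = \ret_2$ when $\ret_2 \in G.\Rets$). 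I would then note $\cidOf(e_{k-1}) \neq c_2$, since a non-client event of $c_2$ in $G.E \setminus \LibEv$ could only be $c_2$'s return, which is $\ret_2 \neq e_{k-1}$ when $\ret_2 \in G.\Rets$ and is absent from $G$ when $\ret_2 \in C$ (the corner case $k=1$ with $\cidOf(\ret_1)=c_2$ is excluded because it would force $\ret_1 = \ret_2$). Hence $e_{k-1} \neq b'$, so $e_{k-1} \po-> b'$. Because $G.\EvOfCid{c_2}$ is $\po$-totally ordered with minimum $m$ (by ``calls are sequential''), either $b' = m$, giving $e_{k-1} \po-> m$ directly, or $m \po-> b'$, in which case the sequentiality clause $m \po-> b' \land e_{k-1} \po-> b' \implies e_{k-1} \po-> m$ gives $e_{k-1} \po-> m$. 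Composing, $\ret_1 \hb[G]-> e_{k-1} \po-> m$, i.e.\ $\ret_1 \hb[G]-> m$, as required.

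The main obstacle is precisely this last edge: one must recognise that it is forced to be a program-order edge, unfold the transitive closure defining $\abs[C]{G}.\po$ to expose a concrete witness $b'$ of call $c_2$, and then invoke the sequentiality axiom to pull $b'$ back to the $\po$-minimal event $m$ of the call. Everything else — that $C$-events are sinks and that internal $\rf$- and $\po$-edges transfer verbatim to $G$ — is routine bookkeeping.
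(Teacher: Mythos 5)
Your proposal is correct and takes essentially the same route as the paper's proof: completion events are sinks, so the interior of the $\abs[C]{G}.\hb$-path transfers verbatim to $G.\hb$, the final edge is forced to be (abstract) program order and unfolds to a witness $b'$ of call $c_2$, and the calls-are-sequential axiom pulls $b'$ back to the $\po$-minimal event. Your uniform Decomposition fact for $\tr{(G.\po \union C_{\po})}$ merely replaces the paper's case split on whether $\ret_2 \in C$ (and incidentally makes its aside about $\rfe$-edges and client/library location-disjointness unnecessary, while making explicit the step $\cidOf(e_{k-1}) \ne c_2$ that the paper asserts without justification).
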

\begin{proof}
  Let us first prove that
  $\hat{e} = \min\nolimits_{\po}(G.\EvOfCid{\cidOf(\ret_2)})$
  exists and is unique.
  Its uniqueness follows from the fact that calls are sequential
  (cf.~\cref{def:lib-impl-exec}).
  We consider two cases:
  \begin{casesplit}
  \case[$ \ret_2 \notin C $]
    Then $ \ret_2 \in G.\EvOfCid{\cidOf(\ret_2)} \ne \emptyset$,
    so $\hat{e}$ exists.
    Consider the edges justifying
    $(\ret_1, \ret_2) \in \hb[{\abs[C]{G}}]$:
    \[
      \ret_1 =
      e_0 \relto{{\abs[C]{G}}.\po \union {\abs[C]{G}}.\rfe}
      \dots
      \relto{{\abs[C]{G}}.\po \union {\abs[C]{G}}.\rfe}
      e_{n-1} \relto{{\abs[C]{G}}.\po \union {\abs[C]{G}}.\rfe}
      e_n
      \po[G]->
      \ret_2.
    \]
    Since elements of~$C$ have no outgoing edges,
    $ e_0,\dots,e_n \in G.E \setminus \LibEv $.
    As a consequence,
    all the edges between them are also edges of $\hb[G]$:
    \[
      \ret_1 =
      e_0 \hb[G]->
      \dots
      \hb[G]->
      e_{n-1} \hb[G]->
      e_n
      \po[G]->
      \ret_2.
    \]
    Now we have $ \cidOf(e_n) \ne \cidOf(\hat{e}) = \cidOf(\ret_2) $,
    and~$ \hat{e} \po[G]-> \ret_2 $.
    Therefore, by sequentiality of calls,
    $ \cidOf(e_n) \po[G]-> \hat{e} $, which proves the claim.

  \case[$ \ret_2 \in C $]
    Then if there was no other event of the same call,
    we would have no incoming edge in $\ret_2$
    which contradicts $(\ret_1, \ret_2) \in \hb[{\abs[C]{G}}]$.
    Therefore, there is some $e' \in G.\EvOfCid{\cidOf(\ret_2)} \ne \emptyset$, so $\hat{e}$ exists.
    Since the only edges to $\ret_2$ are the ones added by $C_{\po}$
    from events with call id~$\cidOf(\ret_2)$,
    we have that $(\ret_1,e') \in \hb[{\abs[C]{G}}]$.
    Now consider the last edge to reach $e'$:
    \[
      \ret_1
      \hb[G]->
      e''
      \po[G]->
      e'
      \relto{C_{\po}}
      \ret_2.
    \]
    Note that the edge between~$e'' \notin \LibEv$ and $e'\in \LibEv$
    cannot be a $\rfe$ edge as clients do not share locations with libraries.

    Now, if $\hat{e} = e'$ we are done.
    Otherwise we have $ \cidOf(e'') \ne \cidOf(\hat{e}) = \cidOf(e') $,
    and~$ \hat{e} \po[G]-> e' $.
    Therefore, by sequentiality of calls,
    $ \cidOf(e'') \po[G]-> \hat{e} $, which proves the claim.
    \qedhere
  \end{casesplit}
\end{proof}

\begin{definition}[Completion induced by~$\PC$ and~$r$]
\label{def:compl-from-pt}
  Given an execution~$G$,
  a set of persisted call ids
  $\PC \subs G.\CallId$ and
  $r \from G.\CallId \pto \Val$,
  we say a set~$C$ of completion events for~$G$ is
  \emph{induced by}~$\PC$ and~$r$ if
  $ C = \set{ (e_1 \of \Ret{r(c_1)}), \dots, (e_n \of \Ret{r(c_n)}) } $
  where
  $ \PC \setminus \cidOf(G.\Rets) = \set{c_1, \dots, c_n} $,
  and $\A 0<i\leq n.\cidOf(e_i) = c_i$.
Since the set of completion events induced by some~$\PC$ and~$r$
  is unique up to the identity of events,
  we write~$\Compl{\PC}{r}$ for such a set using an arbitrary choice for
  the identity of the events.
\end{definition}

\subsection{The \thename\ Theorem}
\label{sec:full-master-theorem}

\Cref{th:master} is a special case of the following general
theorem.

\begin{theorem}[General \thename\ Theorem]
\label{th:master-full}
  Consider a library with deterministic
  specification~$\tup{\AbsState, \Delta, \initState}$
  and operations implementation~$ \LibImpl[op] $.
  Let~$ \durable,\recovered,\volatile \from \AbsState \to \powerset(\Mem) $
  with $ \A q.\recovered(q) \subs \volatile(q) $.
  To prove
  $\LibImpl[op]$ is \pre\tup{\durable,\recovered}-linearizable,
  it is sufficient to prove the following.
  Fixing arbitrary
\pre\tup{\durable,\recovered}-sound $\LibImpl[rec]$,
          $q\in \AbsState$, and
          $G$ execution of~$\tup{\LibImpl[op], \LibImpl[rec]}$
          with $G.\Init \in\durable(q)$,
find:
\begin{multicols}{2}
  \begin{itemize}
    \item a volatile order $\vo$
    \item a set of events $\HBEv \subs G.E$
    \item $ \lp \from G.\CallId \pto \HBEv $
    \item $ \hres{\lp} \from \enum[G.E]{\vo} \to (G.\CallId \pto \Nat) $
\columnbreak
    \item $ \pt \from G.\CallId \pto G.\Persisted $
    \item $ r \from G.\CallId \pto \V $
    \item a persisted readers set\\$ \PRd \subs G.\CallId \setminus \dom(\pt) $
  \end{itemize}
\end{multicols}
  such that:
  \begin{defenum}[itemsep=.4\baselineskip]
    \item $
          \dom(\pt) \subs \dom(\lp).
        $
      \label{cond:pers-subs-lp}
    \item $
        \A \vec{e} \in {\enum[G.E]{\vo}}.
          \dom(\hres{\lp}(\vec{e})) =
            \PRd \setminus \dom(\lp).
        $
      \label{cond:hind-subs-rets}
    \item $
          \A c \in \PRd.\;
            \Delta(\callOf(c),r(c)) \subs \idOn{\AbsState}.
        $
\label{cond:perst-ret-id}
    \item $
        \ev{\HBEv} \seq \hb \seq \ev{\HBEv}
          \subs \vo
      $
      \label{cond:vo-pres-hb}
    \item $
        \A c \in \dom(\lp).
          \E e_1,e_2 \in {G.\EvOfCid{c}}.
            \smash{e_1 \hb?-> \lp(c) \hb?-> e_2}.
       $
      \label{cond:linpt-hb}
    \item For any~$\vec{e} \in {\enum[G.E]{\vo}}$ and
          $c\in\dom(\hres{\lp}(\vec{e}))$:
      \begin{itemize}
      \item $
          \E i_1,i_2.
            \vec{e}(i_1), \vec{e}(i_2) \in \HBEv
            \land
            c=\cidOf(\vec{e}(i_1))=\cidOf(\vec{e}(i_2))
            \land
            i_1 \leq \hres{\lp}(\vec{e})(c) \leq i_2.
      $
      \end{itemize}
      \label{cond:hind-hb}
    \item $\tup{\lp, \hres{\lp}, r, \vo, \volatile}$
      \pre\recEndOf-validates~$G$.
      \label{cond:linpt-valid}
    \item $
        \cidOf(G.\Rets) \subs \dom(\pt) \union \PRd.
      $
      \label{cond:rets-persist}
    \item For any~$c \in \dom(\lp) \setminus (\dom(\pt) \union \PRd)$,
          and all $\vec{e} \in {\enum[G.E]{\vo}}$:
      \begin{itemize}
      \item If $
          \hres{\lp}[\vec{e}] = \vec{e}' \concat \lp(c) \concat \vec{e}''
        $ then $
            \tup{\callOf(c), r(c)}
        $ is
        \pre \h-voidable,\\
        where $
          \h = \restr{\histOf[\lp]{r}(\vec{e}'')}{(\dom(\pt) \union \PRd)}
        $.
      \end{itemize}
      \label{cond:voided-linpt-commute}
    \item For any~$c, c'\in \dom(\pt)$, either:
      \begin{itemize}
      \item $
          \tup{\callOf(c), r(c)}
            \comm{\AbsState}{\Delta}
          \tup{\callOf(c'), r(c')}
        $, or
      \item $
          \smash{\pt(c) \nvo-> \pt(c')}
          \implies
          \smash{\lp(c) \vo-> \lp(c')}.
        $
      \end{itemize}
      \label{cond:vo-nvo-agree-commute}\item $
      \histOf[\pt]{r}(\enum[G.\Persisted]{\nvo})
      \in \LegalFrom{q}
        \implies
          \tup{\pt, r, \restr{\nvo}{G.\Persisted}, \durable}
      $ \pre\initOf-validates~$G$.
      \label{cond:perst-valid}
  \end{defenum}
\end{theorem}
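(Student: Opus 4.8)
The plan is to unfold the definition of \pre\tup{\durable,\recovered}-linearizability (\cref{def:ind-pers-lin}) and exhibit a single target state $q'$ together with an abstract execution $\tup{\abs{G},\lin}$ whose linearization is legal from $q$ to $q'$ and whose persisted memory lies in $\durable(q')$. The backbone is to extract a legal \emph{volatile} history from the $\vo$-order, prune it down to the persisted calls, and then transport it into the persistent $\nvo$-order, carefully tracking how the reached abstract state evolves so that the state justifying the abstract linearization coincides with the one encoded in persisted memory.

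First I would fix any $\vec{e} \in G.\enum{\vo}$ and set $\vec{c} = \cidHistOf[\lp,\hres{\lp}](\vec{e})$. Condition~\ref{cond:linpt-valid} lets me invoke \cref{lm:linpt-validates-legal}, yielding $\histOf{r}(\vec{c}) \in \LegalPath{q}{q_0}$ for some $q_0$. Writing $\PC \is \dom(\pt) \union \PRd$ for the persisted calls, conditions~\ref{cond:pers-subs-lp} and~\ref{cond:hind-subs-rets} give that the calls occurring in $\vec{c}$ are exactly $\PC$ together with the \emph{voided} calls $V \is \dom(\lp)\setminus\PC$. I then delete the voided calls from $\vec{c}$, processing them from $\vo$-latest to $\vo$-earliest. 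At each stage the subsequence following the current voided call $c$ consists solely of persisted calls and equals precisely the history $\h$ of condition~\ref{cond:voided-linpt-commute}; so $\h$-voidability of $\tup{\callOf(c),r(c)}$, combined with prefix-closedness and composition of legal paths (\cref{lm:basic-lib-specs}), lets me drop $c$ while keeping the remainder in $\LegalFrom{q}$. The result is $\h_1 \is \histOf{r}(\restr{\vec{c}}{\PC}) \in \LegalPath{q}{q_1}$ for some $q_1$, and I set $q' \is q_1$.

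Next I check both requirements against $q_1$. For the abstract execution I take $C = \Compl{\PC}{r}$ (\cref{def:compl-from-pt}) and let $\lin$ order $P_{\Rets} = G.\Rets \dunion C$ by the $\vo$-position of each call's origin in $\hres{\lp}[\vec{e}]$, so that $\hist(\enum{\lin}) = \h_1 \in \LegalPath{q}{q_1}$, discharging the first requirement. That $\lin$ refines $\abs[C]{G}.\hb$ follows by feeding \cref{lm:abs-hb-to-hb} into conditions~\ref{cond:linpt-hb} and~\ref{cond:hind-hb} (each origin sits $\hb$-between two $\HBEv$-events of its call) and condition~\ref{cond:vo-pres-hb} ($\hb$ restricted to $\HBEv$ is contained in $\vo$), so any abstract-$\hb$ edge between returns forces the corresponding origins into $\vo$-order. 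For the persisted-memory requirement I bridge from $\h_1$ to the writer-only $\nvo$-history by two endpoint-preserving moves: deleting the readers $\PRd$ keeps $q_1$ fixed because, by condition~\ref{cond:perst-ret-id}, their only transition is the identity; and reordering the surviving writers $\dom(\pt)$ from $\vo$-order into $\nvo$-order proceeds by adjacent transpositions, each swapping a pair on which $\vo$ and $\nvo$ disagree, which by condition~\ref{cond:vo-nvo-agree-commute} must commute, so by congruence of $\hequiv{\AbsState}{\Delta}$ (\cref{lm:basic-hist-equiv}) the endpoint is preserved. The resulting sequence is exactly $\histOf[\pt]{r}(\enum[G.\Persisted]{\nvo})$, hence it too lies in $\LegalPath{q}{q_1}$. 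Its membership in $\LegalFrom{q}$ discharges the hypothesis of condition~\ref{cond:perst-valid}, and \cref{lm:perst-validates-legal} then produces $q_2$ with $\histOf[\pt]{r}(\enum[G.\Persisted]{\nvo}) \in \LegalPath{q}{q_2}$ and $\mem{\restr{G.\nvo}{G.\Persisted}} \in \durable(q_2)$; determinism of $\tup{\AbsState,\Delta,\initState}$ forces $q_2 = q_1$, giving the second requirement.

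The main obstacle I anticipate is this last endpoint-matching argument: the abstract linearization and the persisted memory are justified by two genuinely different histories—all persisted calls in $\vo$-order versus writers only in $\nvo$-order—and correctness hinges on showing they reach the \emph{same} abstract state. The reader-deletion and commuting-reorder steps each preserve the endpoint only by virtue of conditions~\ref{cond:perst-ret-id} and~\ref{cond:vo-nvo-agree-commute} respectively, and it is determinism alone that finally equates the state delivered by \cref{lm:perst-validates-legal} with $q_1$. A secondary delicate point is the right-to-left voided-call deletion, where the precise shape of $\h$ in condition~\ref{cond:voided-linpt-commute}—the persisted tail rather than the full tail—is exactly what makes the inductive removal go through.
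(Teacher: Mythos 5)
Your proposal is correct and takes essentially the same route as the paper's own proof of the General \thename\ Theorem: extract the legal $\vo$-induced history via \cref{lm:linpt-validates-legal} (Step~1), strip the voided calls from $\vo$-latest to $\vo$-earliest using condition~\ref{cond:voided-linpt-commute} (Step~2), build $\lin$ from $\Compl{\dom(\pt)\union\PRd}{r}$ and discharge the $\hb$-refinement via \cref{lm:abs-hb-to-hb} together with conditions~\ref{cond:vo-pres-hb}, \ref{cond:linpt-hb} and~\ref{cond:hind-hb} (Step~3), drop the readers by condition~\ref{cond:perst-ret-id} (Step~4), permute the surviving writers into $\nvo$-order by commuting swaps justified through condition~\ref{cond:vo-nvo-agree-commute} (Step~5), and close with condition~\ref{cond:perst-valid}, \cref{lm:perst-validates-legal} and determinism of the specification (Step~6). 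The two delicate points you flag---that $\h$ in the voided-call deletion is exactly the \emph{persisted} tail, and that determinism alone equates the state delivered by the persistent strategy with the one reached by the abstract linearization---are precisely where the paper's Steps~2 and~6 do the corresponding work.
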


Before producing a proof, let us explain the conditions of the theorem.
Most conditions are similar to \cref{th:master}.
To prove \pre\tup{\durable,\recovered}-linearizability of~$G$,
we have to provide two linearization strategies:
the volatile one~$\tup{\lp, \hres{\lp}, r, \vo, \volatile}$
and the persistent one~$\tup{\pt, r, \restr{\nvo}{G.\Persisted}, \durable}$.
Since the persistency points must be durable events,
$\dom(\pi)$ only represents the persisted ``writer'' calls,
\ie calls that induce an abstract state change.
The persisted readers set $\PRd$ indicates which of the ``reader'' calls
should be considered (logically) persisted.
Finally, we need to provide a set of events~$X$ which includes all the linearization events; we will explain the meaning of this set shortly.

Then the theorem asks us to check a number of conditions.
Condition \ref{cond:pers-subs-lp} checks
that every persisted writer has a linearization point.
Condition \ref{cond:hind-subs-rets}
says that all the readers that persisted which do not have a linearization point
are linearized by hindsight.
Condition \ref{cond:perst-ret-id}
ensures that the persistent readers are indeed readers.

The goal of conditions
\ref{cond:vo-pres-hb}, \ref{cond:linpt-hb} and \ref{cond:hind-hb}
is to make sure the final linearization respects
the~$\hb$ order between the calls, as required by \cref{def:abs-exec}.
In particular, we do not want the linearization to contradict
the~$\po$ ordering between calls.
In fact~$\vo$ might not preserve~$\hb$ in general:
typically $\vo$ reconstructs a global notion of time which
might contradict some $\po$ edges,
for examples the ones between write and read events.
Condition \ref{cond:vo-pres-hb}
states that~$\vo$ has to preserve~$\hb$ on~$X$.
Typically, linearization points are reads or updates,
and one can set $X = \UReads$.
Condition \ref{cond:linpt-hb}
requires a linearization point to be \hb-between two events of the call it linearizes.
This ensures that an~$\hb$ edge between calls implies an $\hb$ edge between
the respective linearization points.
If linearization points are local to the call they linearize the condition
is trivially satisfied.
Condition \ref{cond:hind-hb}
achieves the same effect for hindsight linearizations.

The conditions so far represent well-formedness constraints
that usually hold by construction.
The conditions that follow are the ones where the actual algorithmic insight is involved.

Condition \ref{cond:linpt-valid}
requires to prove regular volatile linearizability
using the volatile linearization strategy.
This is typically a straightforward adaptation
of the proof of a non-durable version of the same data structure
(\eg Harris list for the link-free set).
The adaptation simply needs to ensure that the events added
to make the data structure durable preserve the encoded state when executed.

When making a linearizable data structure durable,
after having decided how the persisted memory represents an abstract state (through~$\durable$),
the main design decision is which flushes to issue and when.
Conditions~\ref{cond:rets-persist},
\ref{cond:voided-linpt-commute}, and
\ref{cond:vo-nvo-agree-commute}
are the ones that check that all the flushes needed for correctness are issued
by the operations.
This can help guiding optimizations,
as redundant flushes would be the ones not contributing to the proofs of these conditions.

Condition~\ref{cond:rets-persist} reflects the
\emph{unbuffered} nature of durable linearizability:
it requires every returned call to be considered as persisted.
The only way to enforce a writer operation is persisted before returning
is to issue a (synchronous) flush on the address of the persistency point of
the call, which we dub the ``flush before return'' policy.

Condition \ref{cond:voided-linpt-commute}
handles voided calls, \ie calls that have been linearized by the volatile argument somewhere the sequence, but have not persisted.
What the condition asks is to prove that voided calls are voidable
with respect to the rest of the linearization ahead of them.
More precisely, one considers the linearization induced by
$\vo$ and the hindsight resolution ($\hres{\lp}[\vec{e}]$)
and finds the linearization point of a voided call $c$.
Then one needs to check that the call is voidable with respect to
the volatile history of non-voided calls ahead ($\h$).
This is typically enforced by making sure that possibly conflicting calls
in $\h$ issue a flush on the address of the persistency point of~$c$
before persisting themselves: if that is the case,
then either $\h$ does not contain conflicting calls, or
$c$ has been persisted, which means it is not voided.

Condition \ref{cond:vo-nvo-agree-commute}
considers the case of two persisted calls that do not abstractly commute
(\eg two calls on the same key).
For those pairs of calls,
we have to prove that the persistent linearization strategy
and the volatile one agree on the ordering of the calls.
This is also achieved by issuing flushes strategically:
an operation has to make sure to flush the persistency point of earlier
calls that do not commute with it.

When the conditions presented so far hold,
the legality of the volatile linearization
implies the legality of the persistent linearization.
What remains to prove is that the persistency points modify the persisted memory
so that it encodes the output state expected given the legal linearization.
This is checked by condition \ref{cond:perst-valid},
which allows us to assume the persisted linearization is legal
(a fact that follows from the other conditions)
and asks us to prove by induction on $\nvo$ that each persistency point
modifies the memory so that the encoded state reflects the expected change.
Proving this does not involve flushes,
but merely checks that the persistency points enact changes that are compatible
with the durable state interpretation~$\durable$.

\begin{proof}[Proof of the {\hyperref[th:master-full]{General \thename\ Theorem}}]
  We prove that the assumptions of the theorem
  imply $\LibImpl[op]$ is \pre\tup{\durable,\recovered}-linearizable,
  by finding a~$q'$ such that
  there is an abstract execution $\tup{\abs{G}, \lin}$ of~$G$
  with $ \hist(\enum{\lin}) \in \LegalPath{q}{q'} $, and
  $\mem{\restr{G.\nvo}{G.\Persisted}} \in \durable(q')$.
  \begin{proofsteps}
  \step[Step~1: a legal \vo-induced linearization]
    Pick an arbitrary~$ \vec{e}_1 \in G.\enum{\vo} $
    (which exists by acyclicity of $\vo$).
    By applying \cref{lm:linpt-validates-legal} to \cref{cond:linpt-valid},
    we obtain that
    \begin{equation}
      \histOf[\lp,\hres{\lp}]{r}(\vec{e}_1) \in \LegalFrom{q}.
      \label{prop:master:e1-legal}
    \end{equation}
    Let $ \vec{c}_1 = \cidHistOf[\lp](\hres{\lp}[\vec{e}_1]) $;
    by definition, we have $
      \histOf[\lp,\hres{\lp}\!]{r}(\vec{e}_1) = \histOf{r}(\vec{c}_1).
    $
  \step[Step~2: filtering voided calls]
    Let $ {V = \dom(\lp) \setminus (\dom(\pt) \union \PRd)} $,
    \ie the set of calls that were voided by the crash.
    Note that, by \ref{cond:hind-subs-rets},
    no call that is linearized by~$\hres{\lp}$ is voided.
    Let $ \vec{c}_2 = \vec{c}_1 \setminus V $,
    \ie the calls that have been persisted and survive the crash.
    We will now show that
    \begin{equation}
      \histOf{r}(\vec{c}_2) \in \LegalFrom{q}
      \label{prop:master:c2-legal}
    \end{equation}

    The idea is that for each of the events in~$V$,
    \ref{cond:voided-linpt-commute} applies,
    and we can use it to iteratively remove them from the sequence.

    To prove it, first note that $
    \histOf{r}(\vec{c}_2)
    =
    \histOf[\lp,\hres{\lp}\!]{r}
      (\vec{e}_1 \setminus \lp(V))
    =
    \histOf[\lp]{r}
      (\hres{\lp}[\vec{e}_1] \setminus \lp(V)).
    $
    For any $n\leq \card{V}$, let $v_1, \dots, v_n \in \lp(V)$ be the last~$n$
    occurrences of elements of~$\lp(V)$ in $\vec{e}_1$, ordered from last to first.
    We prove by induction that for all $n\leq \card{V}$,
    $
      \histOf[\lp]{r}
        (\hres{\lp}[\vec{e}_1] \setminus \set{v_1,\dots,v_n})
      \in \LegalFrom{q}.
    $
    \begin{induction}
    \step[Base case $n=0$] Trivial from \eqref{prop:master:e1-legal}.
    \step[Inductive step~$n>0$]
      Let~$j$ be such that $\hres{\lp}[\vec{e}_1](j) = v_n$.
      We have, by induction hypothesis, that
      $
        \histOf[\lp]{r}
          (\hres{\lp}[\vec{e}_1] \setminus \set{v_1,\dots,v_{n-1}})
        \in \LegalFrom{q}.
      $
      Moreover,
      \[
        \hres{\lp}[\vec{e}_1] \setminus \set{v_1,\dots,v_{n-1}}
        =
          \upto{\hres{\lp}[\vec{e}_1]}{j-1}
        \concat
        \lp(v_n) \concat
        \bigl(
          \tailfrom{\hres{\lp}[\vec{e}_1]}{j+1}.
            \setminus \set{v_1,\dots,v_{n-1}}
        \bigr)
      \]
      This implies that there is some $q' \in \AbsState$ such that
      $
        \histOf[\lp]{r}
        (\upto{\hres{\lp}[\vec{e}_1]}{j-1})
          \in \LegalPath{q}{q'}
      $ and $
        \histOf[\lp]{r}(
          \lp(v_n) \concat (
            \tailfrom{\hres{\lp}[\vec{e}_1]}{j+1}
            \setminus \set{v_1,\dots,v_{n-1}}
          ))
          \in \LegalFrom{q'}.
      $
      Notice that by construction
      \[
        \histOf[\lp]{r}(
          \tailfrom{\hres{\lp}[\vec{e}_1]}{j+1}
          \setminus \set{v_1,\dots,v_{n-1}}
        )
        =
        \restr{
          \histOf[\lp]{r}(
            \tailfrom{\hres{\lp}[\vec{e}_1]}{j+1}
          )
        }{(\dom{\pt}\union \PRd)}.
      \]
      By \ref{cond:voided-linpt-commute}, we then have
      $
        \histOf[\lp]{r}
          (\tailfrom{\hres{\lp}[\vec{e}_1]}{j+1}
            \setminus \set{v_1,\dots,v_{n-1}})
            \in \LegalFrom{q'}.
      $
      We therefore have:
      \begin{align*}
        \histOf[\lp]{r}
          (\hres{\lp}[\vec{e}_1] \setminus \set{v_1,\dots,v_n})
        &=
        \histOf[\lp]{r}
          (\upto{\hres{\lp}[\vec{e}_1]}{j-1}
          \concat
          (\tailfrom{\vec{e}_1}{j+1} \setminus \set{v_1,\dots,v_{n-1}}))
        \\&=
          \histOf[\lp]{r}(\upto{\hres{\lp}[\vec{e}_1]}{j-1})
          \concat
          \histOf[\lp]{r}(\tailfrom{\vec{e}_1}{j+1} \setminus \set{v_1,\dots,v_{n-1}})
          \\&
        \in \LegalFrom{q}
      \end{align*}
    \end{induction}
  \step[Step~3: the abstract execution]
    We now construct an abstract execution of~$G$ from $\vec{c}_2$.
    Let $C = \Compl{\dom(\pt) \union \PRd}{r}$,
    and $P_{\Rets} = G.\Rets \union C$.
    The sequence~$\vec{c}_2$ induces
    the relation~$\lin  \subs P_{\Rets} \times P_{\Rets}$
    defined as
    \[
      \lin \is
        \set{ (\ret_1,\ret_2) |
                \E i,j.i < j\land
                \vec{c}_2(i) = \cidOf(\ret_1)\land
                \vec{c}_2(j) = \cidOf(\ret_2)
}.
    \]
    By \ref{cond:pers-subs-lp}, \ref{cond:hind-subs-rets}, and \ref{cond:rets-persist}, we have:
    \begin{equation}
      \ret \in P_{\Rets}
      \iff
      \E i. \vec{c}_2(i) = \cidOf(\ret),
      \label{cond:linpt-verif:lin-c2-dom}
    \end{equation}
    which makes~$\lin$ a strict total order on~$P_{\Rets}$.
    Moreover:
    \begin{equation}
      \hist(\enum{\lin}) = \histOf{r}(\vec{c}_2)
\label{cond:linpt-verif:lin-c2}
    \end{equation}
    To show $ \tup{\abs[C]{G}, \lin} $ is a strong
    abstract execution of~$G$ we have to prove that
    $ \restr{\abs[C]{G}.\hb}{P_{\Rets}} \subs \lin $.
    Take $(\ret_1, \ret_2) \in \restr{\abs[C]{G}.\hb}{P_{\Rets}}$.
    By \eqref{cond:linpt-verif:lin-c2-dom} there are~$i_1$ and~$i_2$ such that
    $ \vec{c}_1(i_1) = \cidOf(\ret_1) $
    and
    $ \vec{c}_2(i_2) = \cidOf(\ret_2) $.
    To prove $(\ret_1, \ret_2) \in \lin$ we have to prove~$i_1 < i_2$.
    By \cref{lm:orig-mono}, it suffices to show that
    $
      \orig_{\lp,\hres{\lp}\!}(\vec{e}_1, \vec{c}_2(i_1))
      < \orig_{\lp,\hres{\lp}\!}(\vec{e}_1, \vec{c}_2(i_2))
    $, that is the linearization points of
    $\vec{c}_2(i_1)$ and $\vec{c}_2(i_2)$
    appear in the required order in~$\vec{e}_1$.
    We will show that there exist some~$i_1',i_2'$ such that:
    \begin{align}
\orig_{\lp,\hres{\lp}\!}(\vec{e}_1, \vec{c}_2(i_1)) \leq i_1' \land{}&
        \vec{e}_1(i_1') \in \HBEv
        \land
        \vec{e}_1(i_1') \hb[G]?-> \ret_1
      \label{cond:linpt-verif:post-i1}
      \\
i_2' \leq \orig_{\lp,\hres{\lp}\!}(\vec{e}_1, \vec{c}_2(i_2)) \land{}&
        \vec{e}_1(i_2') \in \HBEv
        \land
        \min_{\po}(G.\EvOfCid{\cidOf(\ret_2)}) \hb[G]?-> \vec{e}_1(i_2')
      \label{cond:linpt-verif:pre-i2}
    \end{align}
    Then, by \cref{lm:abs-hb-to-hb}, we have
    $ (\ret_1, \min_{\po}(G.\EvOfCid{\cidOf(\ret_2)})) \in \hb[G] $,
    which combined with the above gives us
    $ {(\vec{e}_1(i_1'), \vec{e}_1(i_2')) \in \hb[G]} $;
    by \ref{cond:vo-pres-hb} we obtain
    $ {(\vec{e}_1(i_1'), \vec{e}_1(i_2')) \in \vo} $
    which implies $ i_1' < i_2' $.
    All in all, we get $ \orig_{\lp,\hres{\lp}\!}(\vec{e}_1, \vec{c}_2(i_1)) \leq i_1' < i_2' \leq \orig_{\lp,\hres{\lp}\!}(\vec{e}_1, \vec{c}_2(i_2)) $
    as desired.
    To conclude the step we then prove \eqref{cond:linpt-verif:post-i1} and \eqref{cond:linpt-verif:pre-i2}.
    \begin{casesplit}
      \case*[Proof of~\eqref{cond:linpt-verif:post-i1}]
        Let $ \orig_{\lp,\hres{\lp}\!}(\vec{e}_1, \vec{c}_2(i_1)) = j $.
        According to the definition of~$\orig$ (\cref{def:orig}),
        we must consider two cases.
        \begin{casesplit}
        \case[$ \vec{e}_1(j) = \lp(\vec{c}_2(i_1)) $]
          By~\ref{cond:linpt-hb} we have that there is some $ e $
          with $ \cidOf(e) = \vec{c}_2(i_1) $ and
          $ \lp(\vec{c}_2(i_1)) \hb?-> e $.
          Since $e$ has the same call id of $\ret_1$ we also know
          $ e \po?-> \ret_1 $.
          Overall, we obtain that $\lp(\vec{c}_2(i_1)) \in \HBEv$
          and $ \lp(\vec{c}_2(i_1)) \hb[G]-> \ret_1 $,
          and therefore we can set~$i_1' = j$ and prove the claim.
        \case[$ \hres{\lp}(\vec{e}_1)(\vec{c}_2(i_1)) = j $]
          By~\ref{cond:hind-hb} we have that there is some $ j' $
          with
            $ \vec{e}_1(j') \in \HBEv $,
            $ \cidOf(\vec{e}_1(j')) = \vec{c}_2(i_1) $, and
            $ j \leq j' $.
          From the call ids, we know $ \vec{e}_1(j') \po-> \ret_1 $
          and so in particular $ \vec{e}_1(j') \hb[G]-> \ret_1 $.
          Therefore we can set~$i_1' = j'$ and prove the claim.
        \end{casesplit}
      \case*[Proof of~\eqref{cond:linpt-verif:pre-i2}]
        Similar to the previous case,
        by doing a symmetric case analysis.
    \end{casesplit}

  \step[Step~4: irrelevance of readers]
    Before we can compare the sequence~$ \vec{c}_2 $ with that
    induced by~$\nvo$, we need to remove from it the ``persisted readers'',
    since they are not meaningfully ordered by~$\nvo$.
    Let~$ \vec{c}_3 \is \vec{c}_2 \setminus \PRd $.
    By \cref{cond:perst-ret-id},
removing calls from~$\PRd$ preserves legality of histories:
    \begin{equation}
      \histOf{r}(\vec{c}_2)
        \himplies{\AbsState}{\Delta}
      \histOf{r}(\vec{c}_3).
      \label{prop:master:c2-equiv-c3}
    \end{equation}
    Note that, by \ref{cond:rets-persist}, \ref{cond:pers-subs-lp} and
    \ref{cond:hind-subs-rets},
    the calls in $\vec{c}_3$ are exactly those in~$\dom(\pt)$.

  \step[Step~5: the \nvo-induced linearization is legal]
    Let~$ \vec{e}_\pers = G.\enum[\rng(\pt)]{\nvo} $,
    be the \nvo-ordered sequence of persisted persistency points.
    By letting $
      \vec{c}_\pers = \cidOf(\vec{e}_\pers(0)) \dots \cidOf(\vec{e}_\pers(n))
    $, where $n=\len{\vec{e}_\pers}-1$,
    we have:
    \[
      \histOf[\pt]{r}(\enum{G.\restr{\nvo}{G.\Persisted}})
      =
      \histOf[\pt]{r}(\vec{e}_\pers)
      =
      \histOf{r}(\vec{c}_\pers)
    \]
We now prove that
    \begin{equation}
      \histOf{r}(\vec{c}_\pers)
        \hequiv{\AbsState}{\Delta}
      \histOf{r}(\vec{c}_3).
      \label{prop:master:cp-equiv-c3}
    \end{equation}
Note that, since~$\nvo$ orders all events in $\dom(\pt)$,
    $\vec{c}_3$ is a permutation of $\vec{c}_\pers$.
    We prove, for all~$i \leq n$,
    that there exists a $\vec{c}$
    which is a scattered subsequence of $ \vec{c}_\pers $
    with $ \len{\vec{c}} = i $
    such that:\footnote{We abuse of notation and in an expression
      $ \vec{c}_1 \setminus \vec{c}_2 $
      we implicitly coerce~$\vec{c}_2$ to the set of its items.}
    \begin{equation}
      \A \pr{c} \in (\vec{c}_3 \setminus \vec{c}),\dpr{c} \in \vec{c}.
        \smash{\lp(\pr{c}) \vo/-> \lp(\dpr{c})}
      \quad\text{and}\quad
      \histOf{r}(\vec{c} \concat (\vec{c}_3 \setminus \vec{c}))
        \hequiv{\AbsState}{\Delta}
      \histOf{r}(\vec{c}_3).
      \label{goal:linpt-verif:insert}
    \end{equation}
We proceed by induction on~$i$.
    Intuitively, $\vec{c}$ represents an \nvo-respecting
    subsequence of calls constructed so far and we are inserting in it
    the elements of $ \vec{c}_3 $ one by one.
    Every time we have to swap calls to insert the element,
    the swap will be shown to preserve the equivalence of the histories
    by appealing to \cref{cond:vo-nvo-agree-commute}.
    \begin{induction}
      \step[Base case~$i=0$]
        Trivial as $
          \emptyvec \concat (\vec{c}_3\setminus\emptyset)
          = \vec{c}_3.
        $
      \step[Induction step~$i+1$]
        The induction hypothesis gives us that if $i\leq n$ then
        there exists a $\vec{c}^i$ with $ \len{\vec{c}^i}=i $
        such that
        \begin{gather}
          \A \pr{c} \in (\vec{c}_3 \setminus \vec{c}^i),\dpr{c} \in \vec{c}^i.
            \smash{\lp(\pr{c}) \vo/-> \lp(\dpr{c})}
          \label{prop:master:ind-hyp-vo}
          \\
          \histOf{r}(\vec{c}^i \concat (\vec{c}_3 \setminus \vec{c}^i))
            \hequiv{\AbsState}{\Delta}
          \histOf{r}(\vec{c}_3).
          \label{prop:master:ind-hyp-equiv}
        \end{gather}
        We now construct a~$\vec{c}$ that proves the statement for~$i+1$.
        Assume $ i+1 \leq n $; then $\len{\vec{c}^i}=i<n$ and therefore
        $ \len{(\vec{c}_3 \setminus \vec{c}^i)} > 0 $.
        Let $ c = (\vec{c}_3 \setminus \vec{c}^i)(0) $.
        First note that, by construction,
        the sequence of linearization points of the calls in $\vec{c}_3$
        respects $\vo$,
        and thus:
        \[
          \A \pr{c}\in\vec{c}_3\setminus(\vec{c}^i \union \set{c}).
          \lp(\pr{c}) \vo/-> \lp(c).
        \]
        and so if we construct~$\vec{c}$ by inserting~$c$ in $\vec{c}^i$,
        the first conjunct of \eqref{goal:linpt-verif:insert} will hold.

        Our goal is to insert~$c$ in~$\vec{c}^i$ so that
        the resulting sequence is \nvo-respecting and
        keeps the histories equivalent.
        We do this by considering~$ \vec{c}^i \concat c $ and
        swapping~$c$ with its predecessor until we do not violate any
        $\nvo$ edges.
        If~$i=0$ that is trivial as $ \vec{c}^i = \emptyvec $
        and the desired sequence is just~$c$.
        Otherwise,
        let~$c'$ be the last item of $\vec{c}^i$,
        \ie $c' = \vec{c}^i(i-1)$.
        If $ \pt(c') \nvo-> \pt(c) $, $ \vec{c}^i \concat c $ is the desired sequence.
        If not, then $ \pt(c) \nvo-> \pt(c') $, so by \ref{cond:vo-nvo-agree-commute}
        we have two cases:
        either $ {\lp(c) \vo-> \lp(c')} $
        or $
            \tup{\callOf(c), r(c)}
              \comm{\AbsState}{\Delta}
            \tup{\callOf(c'), r(c')}
          $.
        The first case is not possible as it contradicts
        the induction hypothesis~\eqref{prop:master:ind-hyp-vo}.
        In the remaining case, we can swap $c$ and $c'$ obtaining
        some sequence~$\upto{\vec{c}^i}{i-2} \concat c c'$
        with
        $
          \histOf{r}(
              \upto{\vec{c}^i}{i-2} \concat c c' \concat
              (\vec{c}_3 \setminus (\vec{c}^i \union \set{c}))
            )
            \hequiv{\AbsState}{\Delta}
          \histOf{r}(\vec{c}^i \concat (\vec{c}_3 \setminus \vec{c}^i)).
        $
        By iterating the argument on~$c$ and $ \upto{\vec{c}^i}{i-2} $
        we can find a suitable position for~$c$ while preserving the history equivalences.
    \end{induction}

  \step[Step~6: persisted memory encodes the final state]
    By \eqref{prop:master:c2-legal},
       \eqref{prop:master:c2-equiv-c3}, and
       \eqref{prop:master:cp-equiv-c3},
    we obtain
    \[
      \LegalFrom{q} \ni
      \histOf{r}(\vec{c}_2)
        \himplies{\AbsState}{\Delta}
      \histOf{r}(\vec{c}_3)
        \hequiv{\AbsState}{\Delta}
      \histOf{r}(\vec{c}_\pers)
        =
      \histOf[\pt]{r}(\enum{G.\restr{\nvo}{G.\Persisted}})
\]
    This implies that there exists a~$q'$ such that
    \begin{align*}
      \histOf{r}(\vec{c}_2)
      \in \LegalPath{q}{q'}
      &&
      \histOf[\pt]{r}(\enum{\restr{G.\nvo}{G.\Persisted}})
      \in \LegalPath{q}{q'}
    \end{align*}
    We can then apply \ref{cond:perst-valid} to get that
    $ \tup{\pt, r, G.\restr{\nvo}{G.\Persisted}, \durable} $
    \pre\initOf-validates~$G$.
    By \cref{lm:perst-validates-legal} this implies
    that for some~$q''$:
    \begin{align*}
      \histOf[\pt]{r}(\enum{\restr{G.\nvo}{G.\Persisted}}) \in \LegalPath{q}{q''}
      &&
      \mem{\enum{\restr{G.\nvo}{G.\Persisted}}} \in \durable(q'').
    \end{align*}
    Since $\Delta$ is assumed to be deterministic, we have~$q' = q''$.
    From the equivalences above and \eqref{cond:linpt-verif:lin-c2}
    we then have:
    \begin{align*}
      \hist(\enum{\lin}) =
      \histOf{r}(\vec{c}_2) \in \LegalPath{q}{q'}
      &&
      \mem{\enum{\restr{G.\nvo}{G.\Persisted}}}
      \in \durable(q').
    \end{align*}
    which completes the proof that
    $\LibImpl[op]$ is \pre\tup{\durable,\recovered}-linearizable.
  \qedhere
  \end{proofsteps}
\end{proof}

\subsection{The Persist-First \thename\ Theorem}
\label{sec:full-pers-first-master-theorem}

\begin{theorem}[Persist-First \thename\ Theorem]
\label{th:pf-master}
  Consider a library with deterministic
  specification~$\tup{\AbsState, \Delta, \initState}$
  and operations implementation~$ \LibImpl[op] $.
  Let~$ \durable,\recovered,\volatile \from \AbsState \to \powerset(\Mem) $
  with $ \A q.\recovered(q) \subs \volatile(q) $.
  To prove
  $\LibImpl[op]$ is \pre\tup{\durable,\recovered}-linearizable,
  it is sufficient to prove the following.
  Fixing arbitrary:
  \begin{itemize}
    \item \pre\tup{\durable,\recovered}-sound $\LibImpl[rec]$,
    \item $q\in \AbsState$, and
    \item $G$ execution of~$\tup{\LibImpl[op], \LibImpl[rec]}$
          with $G.\Init \in\durable(q)$,
  \end{itemize}
  find:
  \begin{itemize}
    \item a volatile order $\vo$
    \item a set of events $\HBEv \subs G.E$
    \item $ \lp \from G.\CallId \pto \HBEv $
    \item $ \hres{\lp} \from \enum[G.E]{\vo} \to (G.\CallId \pto \Nat) $
    \item $ \pt \from G.\CallId \pto G.\Persisted $
    \item $ r \from G.\CallId \pto \V $
    \item a persisted readers set $ \PRd \subs G.\CallId \setminus \dom(\pt) $
  \end{itemize}
  such that:
  \begin{defenum}[itemsep=.4\baselineskip]
    \item $
          \dom(\lp) \subs \dom(\pt).
        $
      \label{pf-master:pers-subs-lp}
    \item $
        \A \vec{e} \in {\enum[G.E]{\vo}}.
          \dom(\hres{\lp}(\vec{e})) =
            \PRd \setminus \dom(\lp).
        $
      \label{pf-master:hind-subs-rets}
    \item $
          \A c \in \PRd.\;
            \Delta(\callOf(c),r(c)) \subs \idOn{\AbsState}.
        $
\label{pf-master:perst-ret-id}
    \item $
        \ev{\HBEv} \seq \hb \seq \ev{\HBEv}
          \subs \vo
      $
      \label{pf-master:vo-pres-hb}
    \item $
        \A c \in \dom(\lp).
          \E e_1,e_2 \in {G.\EvOfCid{c}}.
            \smash{e_1 \hb?-> \lp(c) \hb?-> e_2}.
       $
      \label{pf-master:linpt-hb}
    \item For any~$\vec{e} \in {\enum[G.E]{\vo}}$ and
          $c\in\dom(\hres{\lp}(\vec{e}))$:
      \begin{itemize}
      \item $
          \E i_1,i_2.
            \vec{e}(i_1), \vec{e}(i_2) \in \HBEv
            \land
            c=\cidOf(\vec{e}(i_1))=\cidOf(\vec{e}(i_2))
            \land
            i_1 \leq \hres{\lp}(\vec{e})(c) \leq i_2.
      $
      \end{itemize}
      \label{pf-master:hind-hb}
    \item $\tup{\lp, \hres{\lp}, r, \vo, \volatile}$
      \pre\recEndOf-validates~$G$.
      \label{pf-master:linpt-valid}
    \item $
        \cidOf(G.\Rets) \subs \dom(\lp) \union \PRd.
      $
      \label{pf-master:rets-persist}
    \item For all $c \in \dom(\pt) \setminus \dom(\lp)$
        and $\vec{e} \in {\enum[G.E]{\vo}}$,
        if $ \vec{e} = \vec{e}' \concat \pt(c) \concat \vec{e}'' $ then:
        \begin{itemize}
          \item $\tup{\callOf(c), r(c)}$ is \pre\histOf[\lp]{r}(\vec{e}'')-appendable.
          \item if\/ $
            \histOf[\lp]{r}(\vec{e}')
              \in \LegalPath{q}{q'}
            $ for some~$q'$,
then $
              \E q''.
                (q',q'') \in \Delta(\callOf(c), r(c))
            $.
        \end{itemize}
        Additionally, for all
        $ c, c' \in \dom(\pt) \setminus \dom(\lp) $,
        if $ \pt(c') \nvo-> \pt(c) $
        then $\tup{\callOf(c), r(c)}$ is \pre\tup{\callOf(c'), r(c')}-appendable.
      \label{pf-master:appendable}
    \item For any~$c, c'\in \dom(\pt)$, either:
      \begin{itemize}
      \item $
          \tup{\callOf(c), r(c)}
            \comm{\AbsState}{\Delta}
          \tup{\callOf(c'), r(c')}
        $, or
      \item
        $ c,c'\in \dom(\lp) $
        and
        $
          \smash{\pt(c) \nvo-> \pt(c')}
          \implies
          \smash{\lp(c) \vo-> \lp(c')}
        $, or
      \item
        $ c\in\dom(\lp), c'\notin \dom(\lp) $
        and
        $ \smash{\pt(c) \nvo-> \pt(c')} $, or
      \item
        $ c, c' \notin \dom(\lp) $.
\end{itemize}
      \label{pf-master:vo-nvo-agree-commute}\item $
      \histOf[\pt]{r}(\enum[G.\Persisted]{\nvo})
      \in \LegalFrom{q}
        \implies
          \tup{\pt, r, \restr{\nvo}{G.\Persisted}, \durable}
      $ \pre\initOf-validates~$G$.
      \label{pf-master:perst-valid}
  \end{defenum}
\end{theorem}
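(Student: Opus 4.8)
The plan is to adapt the proof of the General \thename\ Theorem (\cref{th:master-full}) almost verbatim, keeping its six-step skeleton and reusing \cref{lm:linpt-validates-legal,lm:perst-validates-legal,lm:abs-hb-to-hb,lm:orig-mono}, and to dualize only the two places where the linearize-first hypothesis $\dom(\pt)\subs\dom(\lp)$ was used. As before, I would fix an arbitrary $\vec{e}_1 \in G.\enum{\vo}$ and apply \cref{lm:linpt-validates-legal} to \ref{pf-master:linpt-valid}, obtaining a legal volatile linearization $\vec{c}_1 = \cidHistOf[\lp](\hres{\lp}[\vec{e}_1])$, which by \ref{pf-master:hind-subs-rets} ranges over $\dom(\lp) \union \PRd$ and satisfies $\histOf{r}(\vec{c}_1) \in \LegalFrom{q}$. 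Step~1 is thus unchanged.

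The first dualization replaces the voided-call step. Since \ref{pf-master:pers-subs-lp} now gives $\dom(\lp)\subs\dom(\pt)$, there are no voided calls to delete; instead there is a set $W = \dom(\pt)\setminus\dom(\lp)$ of \emph{prematurely persisted} calls that $\vec{c}_1$ does not mention but whose effects the persisted memory records. Where the original argument iteratively \emph{removed} voided calls from the tail using voidability, I would iteratively \emph{insert} the calls of $W$ at the tail of $\vec{c}_1$, in $\nvo$-order of their persistency points, using the two clauses of \ref{pf-master:appendable}: for $c\in W$ with $\vec{e}_1 = \vec{e}' \concat \pt(c) \concat \vec{e}''$, the enabledness clause makes $c$ enabled at the state $q'$ reached by $\histOf[\lp]{r}(\vec{e}')$, and appendability of $c$ with respect to $\histOf[\lp]{r}(\vec{e}'')$ lets me append $c$ after the linearized suffix; because readers are identity transitions (\ref{pf-master:perst-ret-id}) and the $W$-calls lie outside $\dom(\lp)$, neither readers nor previously inserted $W$-calls alter $\histOf[\lp]{r}(\vec{e}'')$, so each append preserves legality, and the pairwise-appendability clause fixes the $\nvo$-order among the $W$-calls themselves. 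This yields a sequence $\vec{c}_2$ over $\dom(\pt)\union\PRd$ with $\histOf{r}(\vec{c}_2)\in\LegalFrom{q}$; this is the mirror image of the voided-call removal, with appendability playing the dual role of voidability (\cref{fig:void-app-duality}).

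For the abstract-execution step I would take $C = \Compl{\dom(\pt) \union \PRd}{r}$ and let $\lin$ be induced by $\vec{c}_2$, then reprove $\restr{\abs[C]{G}.\hb}{P_{\Rets}} \subs \lin$. For return and reader events this is the original argument through \cref{lm:abs-hb-to-hb}, \ref{pf-master:vo-pres-hb}, \ref{pf-master:linpt-hb}, and \ref{pf-master:hind-hb}. The genuinely new case is an edge $\ret_1 \hb[{\abs[C]{G}}]-> \ret_c$ whose target is the completion event of some $c\in W$: such $c$ has no linearization point, so neither $\lp$ nor $\hres{\lp}$ is available, but none is needed, since $\ret_c\in C$ has only incoming $C_{\po}$ edges and hence no outgoing $\hb$-edge, so no constraint $\ret_c \hb[{\abs[C]{G}}]-> \ret_2$ exists and placing every $W$-call at the tail of $\vec{c}_2$ discharges all incoming edges. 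Steps~4--5 then proceed: after removing readers to get $\vec{c}_3$ over $\dom(\pt)$ (legal by \ref{pf-master:perst-ret-id}), I would permute $\vec{c}_3$ into $\vec{c}_\pers = \enum[G.\Persisted]{\nvo}$ via the four-way case split of \ref{pf-master:vo-nvo-agree-commute}, where linearized/linearized pairs behave as in \cref{th:master-full}, a non-commuting linearized/premature pair already satisfies $\pt(c)\nvo->\pt(c')$ and so is correctly oriented with the premature call at the tail, and premature/premature pairs---whose relative order the commutation condition leaves unconstrained---are already in $\nvo$-order thanks to Step~2. This gives $\histOf{r}(\vec{c}_3)\hequiv{\AbsState}{\Delta}\histOf[\pt]{r}(\enum[G.\Persisted]{\nvo})$, so the persistent history is legal, which unlocks \ref{pf-master:perst-valid} and, via \cref{lm:perst-validates-legal} and determinism of $\Delta$, places the persisted memory in $\durable(q')$ for the same $q'$ reached by $\hist(\enum{\lin}) = \histOf{r}(\vec{c}_2)$.

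I expect the main obstacle to be the careful division of labor between \ref{pf-master:appendable} and \ref{pf-master:vo-nvo-agree-commute} in Steps~2 and~5. Because the commutation condition is vacuous on premature/premature pairs, the $\nvo$-order among $W$-calls cannot be recovered by swapping and must instead be established \emph{when} they are inserted, which forces the insertion to proceed in $\nvo$-order and to lean on the pairwise-appendability clause for legality of that order; dually, the Step-5 permutation may then only move premature calls leftward past \emph{commuting} linearized calls. Verifying that these two clauses together cover exactly the moves needed to turn the tail-placed, $\nvo$-ordered $W$-calls into their final $\nvo$-positions---without ever needing an un-justified swap---is the crux where the persist-first argument truly diverges from the proof already given, and is where I would concentrate the formal bookkeeping.
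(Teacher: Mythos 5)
Your proposal is correct and takes essentially the same route as the paper's own proof: the identical six-step skeleton, with Step~2 appending the prematurely persisted calls in $\nvo$-order of their persistency points via the enabledness and pairwise clauses of condition~\ref{pf-master:appendable}, Step~3 resting on the observation that premature completions have no outgoing $\hb$-edges (so tail placement discharges all constraints), and Step~5 first permuting the linearized prefix and then inserting the already $\nvo$-ordered premature calls, moving them leftward only past commuting linearized calls via the case split of condition~\ref{pf-master:vo-nvo-agree-commute}. The division of labor you identify as the crux is exactly how the paper's proof is organised, so no further comparison is needed.
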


\begin{proof}
  We prove that the assumptions of the theorem
  imply $\LibImpl[op]$ is \pre\tup{\durable,\recovered}-linearizable,
  by finding a~$q'$ such that
  there is an abstract execution $\tup{\abs{G}, \lin}$ of~$G$
  with $ \hist(\enum{\lin}) \in \LegalPath{q}{q'} $, and
  $\mem{\restr{G.\nvo}{G.\Persisted}} \in \durable(q')$.
  \begin{proofsteps}
  \step[Step~1: a legal \vo-induced linearization]
    Pick an arbitrary~$ \vec{e}_1 \in G.\enum{\vo} $
    (which exists by acyclicity of $\vo$).
    By applying \cref{lm:linpt-validates-legal} to \cref{pf-master:linpt-valid},
    we obtain that, for some~$q_1$:
    \begin{equation}
      \histOf[\lp,\hres{\lp}]{r}(\vec{e}_1)
        \in \LegalPath{q}{q_1}.
      \label{prop:pf-master:e1-legal}
    \end{equation}
    Let $ \vec{c}_1 = \cidHistOf[\lp](\hres{\lp}[\vec{e}_1]) $;
    by definition, we have $
      \histOf[\lp,\hres{\lp}\!]{r}(\vec{e}_1) = \histOf{r}(\vec{c}_1).
    $
  \step[Step~2: adding prematurely persisted calls]
  By~\ref{pf-master:appendable} we know each call~$c$
  in $\dom(\pt) \setminus \dom(\lp)$ (the prematurely persisted)
  is enabled at $q_1$,
  that is~$\E q_c. (q_1,q_c) \in \Delta(\callOf(c), r(c))$.
  Let
  $\vec{c}_1' = \enum{\rel[cidnvo]} $
  where
  $
    \rel[cidnvo] = \set{
      (c,c') | c,c'\in\dom(\pt) \setminus \dom(\lp),
               \pt(c) \nvo-> \pt(c')
    }.
  $
  A simple induction using the second part of~\ref{pf-master:appendable}
  shows that $ \vec{c}_1' \in \LegalFrom{q_1} $.
  This sequence can be appended to $\vec{c}_1$ obtaining
  $ \vec{c}_2 = \vec{c}_1 \concat \vec{c}_1' $ with
  \begin{equation}
    \histOf{r}(\vec{c}_2) \in \LegalFrom{q}
    \label{prop:pf-master:c2-legal}
  \end{equation}
  \step[Step~3: the abstract execution]
    We now construct an abstract execution of~$G$ from $\vec{c}_2$.
    Let $C = \Compl{\dom(\pt) \union \PRd}{r}$,
    and $P_{\Rets} = G.\Rets \union C$.
    The sequence~$\vec{c}_2$ induces
    the relation~$\lin  \subs P_{\Rets} \times P_{\Rets}$
    with
    \begin{equation}
      \hist(\enum{\lin}) = \histOf{r}(\vec{c}_2)
\label{prop:pf-master:lin-c2}
    \end{equation}
    as in the proof of \masterthm.
    To show $ \tup{\abs[C]{G}, \lin} $ is a strong
    abstract execution of~$G$ we can proceed exactly
    as in the proof of \masterthm,
    with the added observation that
    by \ref{pf-master:rets-persist} there are no
    edges of $\restr{\abs[C]{G}.\hb}{P_{\Rets}}$
    starting from a prematurely persisted call.

  \step[Step~4: irrelevance of readers]
    Before we can compare the sequence~$ \vec{c}_2 $ with that
    induced by~$\nvo$, we need to remove from it the ``persisted readers'',
    since they are not meaningfully ordered by~$\nvo$.
    Let~$ \vec{c}_3 \is \vec{c}_2 \setminus \PRd $.
    By \cref{pf-master:perst-ret-id},
removing calls from~$\PRd$ preserves legality of histories:
    \begin{equation}
      \histOf{r}(\vec{c}_2)
        \himplies{\AbsState}{\Delta}
      \histOf{r}(\vec{c}_3).
      \label{prop:pf-master:c2-equiv-c3}
    \end{equation}
    Note that, by \ref{pf-master:rets-persist}, \ref{pf-master:pers-subs-lp} and
    \ref{pf-master:hind-subs-rets},
    the calls in $\vec{c}_3$ are exactly those in~$\dom(\pt)$.
    More precisely,
    $ \vec{c}_3 = \vec{c}_3' \concat \vec{c}_1'$
    where $\vec{c}_3' \subs \dom(\lp) \inters \dom(\pt)$.

  \step[Step~5: the \nvo-induced linearization is legal]
    Let~$ \vec{e}_\pers = G.\enum[\rng(\pt)]{\nvo} $,
    be the \nvo-ordered sequence of persisted persistency points.
    By letting $
      \vec{c}_\pers = \cidOf(\vec{e}_\pers(0)) \dots \cidOf(\vec{e}_\pers(n))
    $, where $n=\len{\vec{e}_\pers}-1$,
    we have:
    \[
      \histOf[\pt]{r}(\enum{G.\restr{\nvo}{G.\Persisted}})
      =
      \histOf[\pt]{r}(\vec{e}_\pers)
      =
      \histOf{r}(\vec{c}_\pers)
    \]
We now prove that
    \begin{equation}
      \histOf{r}(\vec{c}_\pers)
        \hequiv{\AbsState}{\Delta}
      \histOf{r}(\vec{c}_3).
      \label{prop:pf-master:cp-equiv-c3}
    \end{equation}
    Note that, since~$\nvo$ orders all events in $\dom(\pt)$,
    $\vec{c}_3$ is a permutation of $\vec{c}_\pers$.

    We can proceed exactly as in the proof of \masterthm,
    to obtain a permutation $\vec{c}_3''$ of $\vec{c}_3'$ with
    \begin{equation}
      \A i<j<\len{\vec{c}_3''}. \pt(\vec{c}_3''(i)) \nvo-> \pt(\vec{c}_3''(j))
      \quad\text{and}\quad
      \histOf{r}(\vec{c}_3')
        \hequiv{\AbsState}{\Delta}
      \histOf{r}(\vec{c}_3'').
      \label{prop:pf-master:c3p-equiv-c3pp}
    \end{equation}
    Then, by induction on the length of $\vec{c}_1'$,
    we can insert the items of $\vec{c}_1'$ into
    $ \vec{c}_3'' $, respecting the order they have in~$\vec{c}_\pers$.

    Formally, we prove, for all~$ \len{\vec{c}_3''} \leq i \leq n$,
    that there exists a $\vec{c}$
    which is a scattered subsequence of $ \vec{c}_\pers $
    with $ \len{\vec{c}} = i $
    such that:
    \begin{equation}
      \A c' \in \vec{c} \setminus \dom(\lp),
         c'' \in (\vec{c}_1' \setminus \vec{c}).
        \pt(c') \nvo-> \pt(c'')
      \quad\text{and}\quad
      \histOf{r}(\vec{c} \concat (\vec{c}_1' \setminus \vec{c}))
        \hequiv{\AbsState}{\Delta}
      \histOf{r}(\vec{c}_3).
      \label{goal:pf-master:insert}
    \end{equation}
We proceed by induction on~$i$.
    Intuitively, $\vec{c}$ represents an \nvo-respecting
    subsequence of calls constructed so far and we are inserting in it
    the elements of $ \vec{c}_1' $ one by one.
    Every time we have to swap calls to insert the element,
    the swap will be shown to preserve the equivalence of the histories
    by appealing to \cref{pf-master:vo-nvo-agree-commute}.
    \begin{induction}
      \step[Base case~$i=0$]
        By letting $ \vec{c} = \vec{c}_3'' $ and using
        \eqref{prop:pf-master:c3p-equiv-c3pp}.
      \step[Induction step~$i+1$]
        The induction hypothesis gives us that if $i\leq n$ then
        there exists a $\vec{c}^i$ with $ \len{\vec{c}^i}=i $
        such that
        \begin{gather}
          \A c' \in \vec{c}^i \setminus \dom(\lp),
             c'' \in (\vec{c}_1' \setminus \vec{c}^i).
            \pt(c') \nvo-> \pt(c'')
          \label{prop:pf-master:ind-hyp-nvo}
          \\
          \histOf{r}(\vec{c}^i \concat (\vec{c}_1' \setminus \vec{c}^i))
            \hequiv{\AbsState}{\Delta}
          \histOf{r}(\vec{c}_3).
          \label{prop:pf-master:ind-hyp-equiv}
        \end{gather}
        We now construct a~$\vec{c}$ that proves the statement for~$i+1$.
        Assume $ i+1 \leq n $; then $\len{\vec{c}^i}=i<n$ and therefore
        $ \len{(\vec{c}_3 \setminus \vec{c}^i)} > 0 $.
        Let $ c = (\vec{c}_1' \setminus \vec{c}^i)(0) $.
        First note that, by construction,
        the sequence of persistency points of the calls in $\vec{c}_1'$
        respects $\nvo$,
        and thus:
        \[
          \A \pr{c}\in\vec{c}_1'\setminus(\vec{c}^i \union \set{c}).
          \lp(\pr{c}) \nvo-> \lp(c).
        \]
        and so if we construct~$\vec{c}$ by inserting~$c$ in $\vec{c}^i$,
        the first conjunct of \eqref{goal:pf-master:insert} will hold.

        Our goal is to insert~$c$ in~$\vec{c}^i$ so that
        the resulting sequence is \nvo-respecting and
        keeps the histories equivalent.
        We do this by considering~$ \vec{c}^i \concat c $ and
        swapping~$c$ with its predecessor until we do not violate any
        $\nvo$ edges.
        If~$i=0$ that is trivial as $ \vec{c}^i = \emptyvec $
        and the desired sequence is just~$c$.
        Otherwise,
        let~$c'$ be the last item of $\vec{c}^i$,
        \ie $c' = \vec{c}^i(i-1)$.
        If $ \pt(c') \nvo-> \pt(c) $, $ \vec{c}^i \concat c $ is the desired sequence.
        If not, recall~$c \notin \dom(\lp)$,
        therefore we have two cases, either
        $ c' \notin \dom(\lp) $ or $c'\in\dom(\lp)$.
        In the former case, by induction hypothesis we have
        $\pt(c') \nvo-> \pt(c)$ and we are done.
        In the latter case we have $c'\in\dom(\lp)$ and
        by \ref{pf-master:vo-nvo-agree-commute},
        this gives us two possibilities:
        either $ {\lp(c') \nvo-> \lp(c)} $
        or $
            \tup{\callOf(c), r(c)}
              \comm{\AbsState}{\Delta}
            \tup{\callOf(c'), r(c')}
          $.
        In first case we are again done.
        In the remaining case, we can swap $c$ and $c'$ obtaining
        some sequence~$\upto{\vec{c}^i}{i-2} \concat c c'$
        with
        $
          \histOf{r}(
              \upto{\vec{c}^i}{i-2} \concat c c' \concat
              (\vec{c}_1' \setminus (\vec{c}^i \union \set{c}))
            )
            \hequiv{\AbsState}{\Delta}
          \histOf{r}(\vec{c}^i \concat (\vec{c}_3 \setminus \vec{c}^i)).
        $
        By iterating the argument on~$c$ and $ \upto{\vec{c}^i}{i-2} $
        we can find a suitable position for~$c$ while preserving the history equivalences.
    \end{induction}

  \step[Step~6: persisted memory encodes the final state]
    By \eqref{prop:pf-master:c2-legal},
       \eqref{prop:pf-master:c2-equiv-c3}, and
       \eqref{prop:pf-master:cp-equiv-c3},
    we obtain
    \[
      \LegalFrom{q} \ni
      \histOf{r}(\vec{c}_2)
        \himplies{\AbsState}{\Delta}
      \histOf{r}(\vec{c}_3)
        \hequiv{\AbsState}{\Delta}
      \histOf{r}(\vec{c}_\pers)
        =
      \histOf[\pt]{r}(\enum{G.\restr{\nvo}{G.\Persisted}})
\]
    This implies that there exists a~$q'$ such that
    \begin{align*}
      \histOf{r}(\vec{c}_2)
      \in \LegalPath{q}{q'}
      &&
      \histOf[\pt]{r}(\enum{\restr{G.\nvo}{G.\Persisted}})
      \in \LegalPath{q}{q'}
    \end{align*}
    We can then apply \ref{pf-master:perst-valid} to get that
    $ \tup{\pt, r, G.\restr{\nvo}{G.\Persisted}, \durable} $
    \pre\initOf-validates~$G$.
    By \cref{lm:perst-validates-legal} this implies
    that for some~$q''$:
    \begin{align*}
      \histOf[\pt]{r}(\enum{\restr{G.\nvo}{G.\Persisted}}) \in \LegalPath{q}{q''}
      &&
      \mem{\enum{\restr{G.\nvo}{G.\Persisted}}} \in \durable(q'').
    \end{align*}
    Since $\Delta$ is assumed to be deterministic, we have~$q' = q''$.
    From the equivalences above and \eqref{prop:pf-master:lin-c2}
    we then have:
    \begin{align*}
      \hist(\enum{\lin}) =
      \histOf{r}(\vec{c}_2) \in \LegalPath{q}{q'}
      &&
      \mem{\enum{\restr{G.\nvo}{G.\Persisted}}}
      \in \durable(q').
    \end{align*}
    which completes the proof that
    $\LibImpl[op]$ is \pre\tup{\durable,\recovered}-linearizable.
  \qedhere
  \end{proofsteps}
\end{proof} \section{Verification of Link-Free Set}
\label{sec:verif-linkfree}

In this section we formally prove durable linearizability
of a persistent key-value store implementation called
the \acronym{link-free set}{\LinkFree},
proposed in~\cite{SOFT:oopsla}.
The pseudo-code is reproduced in \cref{fig:lf-set-code}.
The only deviations from the original are the omission of C11-relevant memory order annotations,and the representation of validity as a single bit instead of two.
The latter change is justified by the assumption that the memory manager (which we do not model explicitly) initializes allocated memory with zeroes.

\begin{figure}[p]
  \small
\begin{tabular}{l@{\hspace{3em}}l}
  \begin{sourcecode}[gobble=2,lineskip=-2pt]
  record Node:
    key: $\Nat \dunion \set{+\infty,-\infty}$
    nxt: $\Bool \times \Addr_\bot$
    val: $\Val$
    valid: $\Bool$
    insFl: $\Bool$
    delFl: $\Bool$

  def init():
    t = alloc(Node)
    t.key = $+\infty$; t.valid = 1
    t.nxt = <|0,$\nullptr$|>
    h = alloc(Node)
    h.key = $-\infty$; h.valid = 1
    h.nxt = <|0,t|>
    return h

  def find(h, k):
    p = h
    <|_,c|> = p.nxt
    while(1):
      if c.nxt == <|0,_|>:
        if c.key >= k:
          return <|p,c|>
        p = c
      else:
        trim(p, c)
      <|_,c|> = c.nxt

  def trim(p, c):
    flushDel(c)
    <|_,s|> = c.nxt
    CAS(p.nxt, <|0,c|>, <|0,s|>)

  def makeValid(c):
    if c.valid == 0:
      c.valid = 1
  \end{sourcecode}
  &
  \begin{sourcecode}[gobble=2,firstnumber=last,lineskip=-2pt]
  def insert(h, k, v):
    while(1):
      <|p,c|> = find(h, k)
      if c.key == k:
        makeValid(c) @\label{pt:lfset:insno}@
        flushIns(c)
        return false
      n = alloc(Node)
      n.key = k
      n.val = v
      n.nxt = <|0,c|>
      if CAS(p.nxt, <|0,c|>, <|0,n|>):
        makeValid(n) @\label{pt:lfset:insok}@
        flushIns(n)
        return true

  def delete(h, k):
    while(1):
      <|p,c|> = find(h, k) @\label{pt:lfset:insno}@
      if c.key != k:
        return false
      <|_,n|> = c.nxt
      makeValid(c)
      if CAS(c.nxt, <|0,n|>, <|1,n|>): @\label{pt:lfset:delok}@
        trim(p, c)
        return true

  def flushIns(c):
    if c.insFl == 0:
      flush(c)
      c.insFl = 1

  def flushDel(c):
    if c.delFl == 0:
      flush(c)
      c.delFl = 1
  \end{sourcecode}
\end{tabular}
   \caption{
    The optimized Link-free set implementation.
  }
  \label{fig:lf-set-code}
\end{figure}

\subsection{Key-value Stores}

We write~$\Key$ and~$\Val$ for the set of
possible \emph{keys} and \emph{values} respectively.
$\Key$ is assumed to be totally ordered by $<$.
In the original implementation they are
both represented as~\code{long} values.
The only constraint on the choice of the concrete types
is that the whole node structure must fit in a single cache line.

\begin{definition}[$\KVS$ Specification]
  The abstract states form the set
  $
    \KVS \is \Key \pto \Val.
  $
  The transition relation is defined as:
  \begin{align*}
    \Delta_\KVS(\p{insert}, k, \p{true}) &=
      \set{(\kvs, \kvs \dunion \set{\tup{k,v}}) | k \not\in \dom(\kvs) }
    \\
    \Delta_\KVS(\p{delete}, k, \p{true}) &=
      \set{(\kvs, \kvs\setminus\set{\tup{k,\kvs(k)}}) | k \in \dom(\kvs) }
    \\
    \Delta_\KVS(\p{insert}, k, \p{false}) &=
      \set{(\kvs, \kvs) | k \in \dom(\kvs) }
    \\
    \Delta_\KVS(\p{delete}, k, \p{false}) &=
      \set{(\kvs, \kvs) | k \notin \dom(\kvs)}
\end{align*}
\end{definition}

We omit the \p{contains} operation as its treatment is subsumed by
the treatment of \p{delete}, which also needs hindsight linearization,
see \cref{sec:contains}.

While the above is a perfectly adequate specification,
we do not prove linearizability directly against it,
but to a specification that annotates each key with the address of the node
holding it.
This allows us to strengthen our invariants,
and entails the original goal since linearizability against the more detailed specification implies linearizability against the more abstract one.
More specifically, the enhanced specification allows us to
prove that between two contiguous operations on the same key~$k$,
the same node is continuously representing~$k$ in the set.

\begin{definition}[$\KVS'$ Specification]
  The abstract states form the set
  $
    \KVS' \is \Key \pto (\Addr \times \Val).
  $
  To avoid notational noise we use $\kvs$ to range over $\KVS'$ too.
  The transition relation is defined as:
  \begin{align*}
    \Delta(\p{insert}, \tup{k,v}, \p{true}) &=
      \set{ (\kvs, \kvs\dunion\tup{k,\tup{x,v}}) |
                k \not\in \dom(\kvs), x \text{ fresh in } \kvs }
    \\
    \Delta(\p{insert}, \tup{k,v}, \p{false}) &=
      \set{
        (\kvs, \kvs) | k \in \dom(\kvs)
      }
    \\
    \Delta(\p{delete}, k, \p{true}) &=
      \set{
        (\kvs, \kvs\setminus\tup{k,\tup{x,v}}) | \kvs(k) = \tup{x,v}
      }
    \\
    \Delta(\p{delete}, k, \p{false}) &=
      \set{
        (\kvs, \kvs) | k \notin \dom(\kvs)
      }
\end{align*}
\end{definition}

\begin{remark}
\label{rm:spec-determinism}
The caveat in using $\KVS'$ is that while $\KVS$ is deterministic,
the choice of the new addresses in inserts makes $\KVS'$ non-deterministic.
This is not a fundamental problem:
the non-determinism we introduce is inessential.
Formally,
we can define an equivalence relation on~$\KVS'$,
\[
  (\kvs \sim \kvs') \is
    \A k,v. \kvs(k) = \tup{\wtv, v} \iff \kvs_i(k) = \tup{\wtv, v}
\]
obtaining that the original specification is equivalent
to the quotient of the enhanced one:
$ (\KVS, \Delta_\KVS) = (\KVS', \Delta) / {\sim} $.
The quotient of a transition system is defined as
having equivalence classes as states and the quotient
transition relation relates two classes if any of the elements of the classes are related in the original transition relation.

Our \masterthm\ can be applied to non-deterministic specifications
to prove durable linearizability with respect to a deterministic quotient
of the specifications.
\end{remark}

\begin{lemma}[Voidability in $\KVS'$]
\label{lm:kvs-voidable}
  For the $\KVS'$ specification, the following hold
  \begin{itemize}
\item
    $\tup{\p{insert},k,v,\p{false}}$ is \emph{\pre h-voidable}
      for every~$h$
  \item
    $\tup{\p{delete},k,\p{false}}$ is \emph{\pre h-voidable}
      for every~$h$
  \item
    $\tup{\p{insert},k,v,\p{true}}$ is \emph{\pre h-voidable}
      if and only if
        $h$ contains no calls to operations on the key~$k$,
        or $\tup{\p{insert},k,v,\p{true}} \concat h$ is not legal.
  \item
    $\tup{\p{delete},k,\p{true}}$ is \emph{\pre h-voidable}
      if and only if
        $h$ contains no calls to operations on the key~$k$,
        or $\tup{\p{delete},k,\p{true}} \concat h$ is not legal.
  \end{itemize}
\end{lemma}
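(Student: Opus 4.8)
The plan is first to remove the only feature distinguishing $\KVS'$ from the address-free store $\KVS$, namely the fresh-address side-condition on successful inserts, and then to argue combinatorially in $\KVS$, mirroring \cref{lm:ks-voidable}. Since $\Addr$ is infinite and every reachable state has finite domain, a fresh address is always available, so this condition never blocks a transition and never affects which histories are legal. Formally, via the equivalence $\sim$ and the quotient $(\KVS,\Delta_\KVS) = (\KVS',\Delta)/{\sim}$ of \cref{rm:spec-determinism}, I would show that a history is legal from $\kvs$ in $\KVS'$ exactly when it is legal from the corresponding address-forgetting state in $\KVS$. As the quotient map is surjective and voidability (\cref{def:voidable}) is phrased purely through $\LegalFrom{\cdot}$, a call is $h$-voidable in $\KVS'$ iff it is $h$-voidable in $\KVS$. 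Because the stored value in $\KVS$ is an inert payload never mentioned by a legality precondition (these depend only on whether $k \in \dom(\kvs)$), all four claims reduce to membership-only reasoning.

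\textbf{Failed operations and a locality lemma.} For $\tup{\p{insert},k,v,\p{false}}$ and $\tup{\p{delete},k,\p{false}}$ voidability is immediate: both induce identity transitions, so if $\tup{c,v} \concat h \in \LegalFrom{q}$ then the first step returns to $q$ and $h \in \LegalFrom{q}$ follows. For the successful operations I would first establish a locality lemma: if $h$ contains no call on $k$ and two $\KVS$-states $q_1,q_2$ agree on every key other than $k$, then $h \in \LegalFrom{q_1} \iff h \in \LegalFrom{q_2}$, and no such $h$ alters the membership of $k$. This is a routine induction on $h$, using that a call on $k' \neq k$ has a precondition depending only on $k' \in \dom$, which coincides in $q_1$ and $q_2$; the reduction to $\KVS$ is exactly what makes this clean, as there is no longer any address to track.

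\textbf{Successful operations.} I would treat $\tup{\p{insert},k,v,\p{true}}$, the delete-true case being symmetric. For the $(\Leftarrow)$ direction: if $h$ has no call on $k$ and $\tup{\p{insert},k,v,\p{true}} \concat h \in \LegalFrom{q}$, then $k \notin \dom(q)$, the insert moves to a state agreeing with $q$ off $k$, and the locality lemma transfers legality of $h$ back to $q$; if instead the prefixed history is illegal, the voidability implication is vacuously true. For $(\Rightarrow)$ I argue by contraposition: assuming $h$ contains a call on $k$ and $\tup{\p{insert},k,v,\p{true}} \concat h \in \LegalFrom{q_0}$, we have $k \notin \dom(q_0)$ and $k \in \dom$ after the insert. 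Inspecting the first $k$-call of $h$, the preceding calls touch only other keys and so preserve $k \in \dom$ along the post-insert run, forcing that call to be enabled when $k \in \dom$, i.e.\ a failed insert or a successful delete. But replaying $h$ directly from $q_0$ keeps $k \notin \dom$ up to that point, where the same call is disabled; hence $h \notin \LegalFrom{q_0}$, contradicting voidability. For delete-true the roles of $k \in \dom$ and $k \notin \dom$ are exchanged throughout.

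\textbf{Main obstacle.} The one genuinely delicate point is the address non-determinism of $\KVS'$: in the $(\Leftarrow)$ argument for delete-true, replaying inserts of other keys could reuse the address just freed by deleting $k$, so a locality lemma stated on the nose in $\KVS'$ fails. Isolating and justifying the reduction to $\KVS$ (\cref{rm:spec-determinism}) is therefore the crux; once it is in place, every remaining step is the same membership bookkeeping already used for $\KS$ in \cref{lm:ks-voidable}.
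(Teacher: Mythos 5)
Your proposal is correct, and there is little in the paper to compare it against: \cref{lm:kvs-voidable} is stated without proof (as is its set-specification analogue, \cref{lm:ks-voidable}), both being treated as routine. Your route --- quotient out addresses via \cref{rm:spec-determinism}, prove a locality lemma (a history with no calls on~$k$ neither depends on nor alters the entry at~$k$), dispatch the failed operations as identity transitions, and handle the successful ones by contraposition on the first $k$-call in~$h$ --- is sound. The contraposition is the right mechanism and you have it exactly: in \emph{every} run from~$q_0$, the non-$k$ prefix of~$h$ preserves the membership status of~$k$, so the first $k$-call (which, by legality of the post-insert/post-delete run, must be enabled under the opposite membership status) is uniformly disabled, giving $h \notin \LegalFrom{q_0}$ and hence a witness against voidability. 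You also correctly isolate the one point where $\KVS'$ genuinely differs from $\KS$: in the $(\Leftarrow)$ direction for $\tup{\p{delete},k,\p{true}}$, a naive on-the-nose replay in $\KVS'$ can reuse the address freed by the delete, so legality must be transferred either through the quotient or by re-choosing fresh addresses; note that the insert-true direction needs no such care, since dropping an entry only weakens freshness constraints.

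One small imprecision: \cref{def:voidable} quantifies over \emph{all} $q \in \AbsState$, whereas you justify fresh-address availability by saying every \emph{reachable} state has finite domain. Since $\KVS' = \Key \pto (\Addr \times \Val)$ is not formally restricted to finite maps, a state whose entries exhaust $\Addr$ breaks the quotient equivalence, and in fact breaks the delete-true clause as literally stated: deleting~$k$ can free the only unused address, so an $h$ containing only an insert of $k' \ne k$ can be legal after the delete but not before it, despite containing no $k$-calls. Your proof should therefore make the finiteness (or address-co-infiniteness) assumption explicit as a restriction on $\AbsState$ rather than an observation about reachability --- harmless, since only finite stores ever arise where the lemma is applied (\cref{th:linkfree:voided-voidabile}), but needed for the statement to be true under universal quantification. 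With that stipulation, every step goes through.
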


\subsection{Global-Happens-Before}

\begin{definition}[Global Happens Before]
We define the following derived orders between events:
\label{def:ghb}
 \begin{align*}
   \rfe & \is \rf \setminus \po &
\ppo & \is
     \restr{\bigl(\po \setminus ((\Writes \union \Flushes) \times \Reads)\bigr)}{
       \Event\setminus \Rets
     }
\\
   \fr & \is \inv{\rf}\seq\mo &
\ghb & \is \tr{(\ppo \union \mo \union \fr \union \rfe)}
 \end{align*}
\end{definition}

The $ \ppo $ relation is the ``preserved program order'',
\ie the program order edges that are obeyed globally.
The $ \rb $ relation is the ``read before'' relation,
also known as the ``from-read'' relation.

\begin{lemma}[\citet{Alglave12}]
\label{lm:ghb-acyclic}
 For every Px86-consistent execution~$G$, $G.\ghb$ is irreflexive.
\end{lemma}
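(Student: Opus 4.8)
The plan is to exploit the total store order $\tso$ whose existence is guaranteed by Px86-consistency (\cref{def:px86-consistency}). Since $\tso$ is a \emph{strict} order it is irreflexive and transitive, hence acyclic, so it would suffice to show $\ghb \subseteq \tso$: then $\ghb = \tr{(\ppo \union \mo \union \fr \union \rfe)} \subseteq \tr{\tso} = \tso$, and irreflexivity of $\ghb$ follows from that of $\tso$. Because $\tso$ is transitive, this reduces to checking that each of the four generators is contained in $\tso$. Two of these are immediate: $\mo \subseteq \tso$ is exactly \eqref{axiom:tso-mo}, and for $\rfe = \rf \setminus \po$ the axiom \eqref{axiom:tso-rf1} gives $\rf \subseteq \tso \union \po$, so removing the $\po$ part leaves $\rfe \subseteq \tso$.

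The inclusion $\ppo \subseteq \tso$ I would establish by a case analysis on the types of the two (program-ordered, non-return) endpoints of a $\ppo$ edge. The point is that $\ppo$ is, by design, the $\po$-fragment that survives globally: it discards precisely the pairs in $(\Writes \union \Flushes) \times \Reads$, which are exactly the orderings $\tso$ is not required to preserve. Concretely, if the target is an $\MFences$ event (or the source is), the edge is in $\tso$ by \eqref{axiom:tso-mf}; if the target is a $\Flushes$ event, it is covered by \eqref{axiom:tso-fl-wufl} when the source is in $\UWrites \union \Flushes$ and by \eqref{axiom:tso-r-fl} when the source is a $\Reads$; if both endpoints lie in $\UWrites \union \Reads$ then, since the $\ppo$ restriction already removes $(\Writes \union \Flushes) \times \Reads \supseteq \Writes \times \Reads$, axiom \eqref{axiom:tso-po} applies. (Here it matters that $\Updates$ are not in $\Writes$, so an update-to-read $\po$ edge is kept by $\ppo$ and delivered by \eqref{axiom:tso-po}.)

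The crux is the from-read relation $\fr = \inv{\rf}\seq\mo$. Given $a \fr b$, there is $w$ with $w \rf a$ and $w \mo b$, so $w,b \in \UWrites_{\loc}$ for a common location and $w \tso b$ by \eqref{axiom:tso-mo}. The argument I would run first shows that no write can sit the ``wrong way'': if $(b,a) \in \tso \union \po$ then, applying \eqref{axiom:tso-rf2} to $b \in \UWrites_{\loc} \relto{\tso\union\po} a \inv{\rf} w \in \UWrites_{\loc}$, we get $b \tso w$, contradicting $w \tso b$; hence $(b,a) \notin \tso$. When $a$ is an \emph{update} (a non-read), totality of $\tso$ on $\Event \setminus \Reads$ \eqref{axiom:tso-total} then forces $a \tso b$, so $\fr$ restricted to update sources lands in $\tso$. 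The genuine obstacle is that for a \emph{pure read} $a$ on a different thread, $\tso$ need not order $a$ and $b$ at all, so $\fr \not\subseteq \tso$ and the naive ``all generators in $\tso$'' strategy breaks down here.

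To close this gap I would not prove $\ghb \subseteq \tso$ but instead eliminate pure reads from any hypothetical $\ghb$-cycle, reducing it to a $\tso$-cycle. The enabling observation is that the only $\ghb$-generators that can \emph{enter} a pure read are $\ppo$ and $\rfe$, since $\mo$ and $\fr$ both target $\UWrites$; and the only ones that can \emph{leave} it are $\ppo$ and $\fr$. This leaves four combinations to shortcut: an $\rfe$-in/$\fr$-out pair collapses because $\rf$ is functional, forcing the two mediating writes to coincide so that the incoming write already $\mo$-precedes the $\fr$-target; an $\rfe$-in/$\ppo$-out pair composes via \eqref{axiom:tso-po}, \eqref{axiom:tso-r-fl} or \eqref{axiom:tso-mf}; and a $\ppo$-in/$\ppo$-out pair merges by transitivity of $\po$ (the incoming source being a read, update or fence keeps the composite out of $(\Writes\union\Flushes)\times\Reads$). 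I expect the hard case to be $\ppo$-in/$\fr$-out: here I would use \eqref{axiom:tso-rf2} with the incoming event when it is an update on the read's location, and otherwise trace the incoming $\po$-chain back until it is entered by an $\rfe$ edge or reaches a $\UWrites$ event, invoking the functionality of $\rf$ again. This last reduction is exactly the delicate part of the x86-TSO soundness argument of \citet{Alglave12}, and is where the bulk of the care is needed.
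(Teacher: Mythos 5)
The paper does not actually prove this lemma: it is imported wholesale from \citet{Alglave12}, so your sketch has to be judged against the standard soundness argument rather than a proof in the text. Your overall route is the right one: embedding $\mo$, $\rfe$, and $\ppo$ into $\tso$ via the axioms is correct (including the observation that update-to-read $\po$ edges survive because $\Updates \not\subseteq \Writes$), and so is your identification of $\fr$ out of pure reads as the one generator that escapes $\tso$. The fact you derive on the way --- that $(b,a) \notin \tso \union \po$ whenever $w \rf a$ and $(w,b) \in \mo$, by \eqref{axiom:tso-rf2} against \eqref{axiom:tso-mo} --- is exactly the pivot of the whole proof. (One caveat you inherit from the paper: with $\fr \is \inv{\rf}\seq\mo$ taken literally, an update reading from its $\mo$-predecessor produces a reflexive $\fr$ edge, so the lemma is false unless $\fr$ excludes the identity, as is standard; your totality step for update-sourced $\fr$ tacitly assumes $a \ne b$.)

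The genuine gap is your closing recipe for the $\ppo$-in/$\fr$-out case. ``Tracing the incoming $\po$-chain back until it is entered by an $\rfe$ edge or reaches a $\UWrites$ event, invoking the functionality of $\rf$ again'' does not close it: if the chain into the read $a$ starts at a fence, functionality of $\rf$ is inapplicable; if an $\rfe$ edge enters the chain at an \emph{earlier, different} read $r$ on another location, the write feeding $r$ bears no $\rf$-relationship to the write $w$ feeding $a$, so functionality again yields nothing; and the cycle need not supply a $\po$-chain at all, since the predecessor of $a$ may be reached through $\mo$ or through the target of an earlier $\fr$ edge. The fix is already in your toolkit and needs no location analysis or backward $\po$-tracing. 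Note first that every $\tso$-generator entering a pure read ($\ppo$ from a read, update or fence; $\rfe$ from a write) lies in $\tso$, so along the cycle $a$ has a $\tso$-predecessor; following the cycle backwards through the (finitely many) read-only $\tso$ edges must reach a non-read $v$ with $(v,a) \in \tso$ by transitivity, because $\fr$-targets and $\rfe$-sources are writes and a cycle of reads under $\po$ alone is impossible. Now \eqref{axiom:tso-total} makes $v$ and $b$ $\tso$-comparable, and $(b,v) \in \tso$ would give $(b,a) \in \tso$, which you have already refuted via \eqref{axiom:tso-rf2}; hence $(v,b) \in \tso$, the read $a$ drops out of the cycle, and induction on the number of $\fr$ edges reduces everything to a cycle in the strict order $\tso$ --- a contradiction. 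In short: the one case you flagged as hard is closed not by $\rf$-functionality but by the same totality-plus-\eqref{axiom:tso-rf2} contradiction you already used for update-sourced $\fr$ edges, applied to an arbitrary non-read predecessor.
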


\begin{lemma}
  \label{lm:wrw-ghb}
  $
    \ev{\UWrites} \seq \rf \seq \ev{\Reads} \seq \po \seq \ev{\UWrites}
      \subs \ghb \inters \tso.
  $
\end{lemma}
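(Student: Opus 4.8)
The plan is to fix an arbitrary witnessing triple and chase it through both target relations. So suppose $w_1 \in \UWrites$, $r \in \Reads$, and $w_2 \in \UWrites$ with $(w_1, r) \in \rf$ and $(r, w_2) \in \po$; I must show $(w_1, w_2) \in \ghb$ and $(w_1, w_2) \in \tso$. The single most useful preliminary observation is that the trailing edge $(r, w_2)$ is already ``good'' for both relations: since its source $r$ is a plain read (hence in neither $\Writes$ nor $\Flushes$) and its target $w_2 \in \UWrites$ is not a read, the pair lies in neither $(\Writes \union \Flushes) \times \Reads$ nor $\Writes \times \Reads$, and neither endpoint is a return. Consequently $(r, w_2) \in \ppo \subs \ghb$, and by \eqref{axiom:tso-po} also $(r, w_2) \in \tso$.

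Next I would case-split on whether the reads-from edge is external or internal, i.e.\ on whether $(w_1, r) \in \po$. In the external case $(w_1, r) \in \rfe = \rf \setminus \po$, so for $\ghb$ I simply compose this $\rfe$ edge with the $\ppo$ edge above to land inside $\tr{(\ppo \union \mo \union \fr \union \rfe)} = \ghb$. For $\tso$, axiom \eqref{axiom:tso-rf1} gives $\rf \subs \tso \union \po$, and since this edge is not in $\po$ it must lie in $\tso$; composing with the preliminary edge $(r, w_2) \in \tso$ and using transitivity of $\tso$ yields $(w_1, w_2) \in \tso$.

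The delicate case --- and the one worth flagging as the main obstacle --- is the internal reads-from $(w_1, r) \in \po$ with $w_1$ a plain write, because then $(w_1, r) \in \Writes \times \Reads$, so the $\po$-edge into the read is deliberately dropped from both $\ppo$ and the $\tso$-guarantee of \eqref{axiom:tso-po}; neither relation orders $w_1$ directly before $r$. The resolution is to bypass $r$ entirely: transitivity of $\po$ yields $(w_1, w_2) \in \po$, and since $w_2 \notin \Reads$ this edge escapes both filters. Hence $(w_1, w_2) \in \ppo \subs \ghb$ and, by \eqref{axiom:tso-po} again, $(w_1, w_2) \in \tso$. I expect the only care needed here is the membership bookkeeping --- confirming that $\Updates$ is disjoint from $\Writes$, $\Flushes$, and $\Reads$, so that update events are never inadvertently caught by the filters --- after which both conjuncts follow uniformly. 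Note that the internal/external split lines up across the two goals, so a single case analysis discharges $\ghb$ and $\tso$ together.
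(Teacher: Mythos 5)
Your proof is correct and is exactly the routine argument the paper leaves implicit for this lemma: handle the trailing edge $(r,w_2)$ once via $\ppo\subs\ghb$ and \eqref{axiom:tso-po}, split the $\rf$ edge into external (compose $\rfe$ with $\ppo$ for $\ghb$, and \eqref{axiom:tso-rf1} with transitivity for $\tso$) and internal (bypass $r$ by transitivity of $\po$, the composite $(w_1,w_2)$ escaping both read-targeted filters since $w_2\notin\Reads$). One cosmetic remark: in the internal case you needn't single out $w_1\in\Writes$ as the delicate subcase — an internal $\rf$ from an update poses no difficulty either way — but your bypass argument covers both source kinds uniformly, so nothing is missing.
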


\begin{lemma}
  \label{lm:rwr-ghb}
  $
    \ev{\Reads} \seq \po \seq \ev{\UWrites} \seq \rf \seq \ev{\Reads}
      \subs \ghb \inters \tso.
  $
\end{lemma}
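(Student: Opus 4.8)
The statement to establish is that any chain $a \po-> w \rf-> b$ with $a \in \Reads$, $w \in \UWrites$ and $b \in \Reads$ satisfies $(a,b) \in \ghb \inters \tso$; this is the symmetric counterpart of \cref{lm:wrw-ghb}, so I would prove it by the same style of bookkeeping over the Px86 axioms. The first thing I would record is that the leading edge $a \po-> w$ is a read-to-(update-or-)write edge, and hence is always preserved. Indeed, since $a \in \Reads$ we have $a \notin \Writes$ and $a \notin \Flushes$ (the action classes are disjoint), and neither $a$ nor $w$ is a return; therefore $(a,w) \notin (\Writes \union \Flushes) \times \Reads$ and $(a,w) \in \ppo$, whence $(a,w) \in \ghb$. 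The same edge is covered by \eqref{axiom:tso-po}: as $a,w \in \UWrites \union \Reads$ and $(a,w) \notin \Writes \times \Reads$ (because $a$ is a read), we get $(a,w) \in \tso$.

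The plan is then to case-split on the trailing edge $w \rf-> b$ using \eqref{axiom:tso-rf1}, which gives $\rf \subs \tso \union \po$. In the case $(w,b) \notin \po$, we have on the one hand $(w,b) \in \rf \setminus \po = \rfe$, and on the other hand $(w,b) \in \tso$ by \eqref{axiom:tso-rf1}. Combining $(a,w) \in \ppo$ with $(w,b) \in \rfe$ — both generators of $\ghb$ — yields $(a,b) \in \ghb$, while $(a,w),(w,b) \in \tso$ give $(a,b) \in \tso$ by transitivity of the store order.

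In the complementary case $w \po-> b$, transitivity of $\po$ gives $a \po-> b$. This is a read-to-read $\po$-edge, so again $(a,b) \notin (\Writes \union \Flushes) \times \Reads$ (since $a \in \Reads$) and $a,b \notin \Rets$, giving $(a,b) \in \ppo \subs \ghb$; and since $(a,b) \notin \Writes \times \Reads$, \eqref{axiom:tso-po} applied to the derived edge gives $(a,b) \in \tso$. In both cases we obtain $(a,b) \in \ghb \inters \tso$, as required.

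I expect no serious obstacle: the argument is routine once the right axioms are lined up. The only genuinely delicate point is that $b$ is a read, so $\tso$ need not order it against $a$ by totality — \eqref{axiom:tso-total} constrains only $\Event \setminus \Reads$. This is precisely why I route the $\tso$ part through the $\rf \subs \tso \union \po$ disjunction rather than through totality: in the $\tso$-branch I recover $(a,b) \in \tso$ by transitivity, and in the $\po$-branch from \eqref{axiom:tso-po} applied to the $\po$-edge $a \po-> b$ that $\po$-transitivity produces.
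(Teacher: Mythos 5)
Your proof is correct, and it matches the intended argument: the paper states \cref{lm:rwr-ghb} without an explicit proof, treating it as routine bookkeeping over the Px86 axioms, and your derivation---the leading read-to-write edge lands in $\ppo \subs \ghb$ and in $\tso$ via \eqref{axiom:tso-po}, while the trailing $\rf$ edge splits via \eqref{axiom:tso-rf1} into an external case ($\rfe \subs \ghb$ plus transitivity of the strict order $\tso$) and an internal case that collapses by $\po$-transitivity to a read-to-read $\po$ edge handled by the same two clauses---is exactly that routine argument, with all the side conditions ($a \notin \Writes \union \Flushes$, no $\Rets$ endpoints, $(a,b) \notin \Writes \times \Reads$) checked where they matter. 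Your closing remark about why \eqref{axiom:tso-total} cannot order $a$ against the read $b$, which forces the detour through \eqref{axiom:tso-rf1}, correctly identifies the one genuinely delicate point.
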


\begin{lemma}
  \label{lm:wrfl-tso}
  $
    \ev{\UWrites} \seq \rf \seq \ev{\Reads} \seq \po \seq \ev{\Flushes}
      \subs \tso.
  $
\end{lemma}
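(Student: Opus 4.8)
The plan is to take arbitrary events with $(w,r) \in \rf$, $w \in \UWrites$, $r \in \Reads$, $(r,f) \in \po$, and $f \in \Flushes$, and to establish $(w,f) \in \tso$ by splitting on whether the reads-from edge is \emph{internal} or \emph{external}, i.e.\ on whether $(w,r) \in \po$. This split is the whole point of the argument, and it lets me avoid any coherence reasoning.

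First, in the internal case $(w,r) \in \po$: since $\po$ is a strict partial order (Definition~\ref{def:exec}) and $(r,f) \in \po$, transitivity gives $(w,f) \in \po$. As $w \in \UWrites \subs \UWrites \union \Flushes$ and $f \in \Flushes$, the pair $(w,f)$ belongs to $\ev{\UWrites \union \Flushes} \seq \po \seq \ev{\Flushes}$, so axiom~\eqref{axiom:tso-fl-wufl} yields $(w,f) \in \tso$ at once, with no reference to $r$'s position in $\tso$ at all.

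Second, in the external case $(w,r) \notin \po$, so $(w,r) \in \rf \setminus \po$, which is $\rfe$ of Definition~\ref{def:ghb}. By axiom~\eqref{axiom:tso-rf1} we have $\rf \subs \tso \union \po$, and since $(w,r) \notin \po$ this forces $(w,r) \in \tso$. Independently, from $r \in \Reads$, $(r,f) \in \po$, and $f \in \Flushes$, axiom~\eqref{axiom:tso-r-fl} gives $(r,f) \in \tso$. Composing these and using transitivity of the strict order $\tso$ delivers $(w,f) \in \tso$. The two cases are exhaustive, so $(w,f) \in \tso$ in all cases, which proves the inclusion.

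The only conceptual obstacle is that a pure read $r$ need not be $\tso$-comparable with the write it reads from, since $\tso$ is total only on $\Event \setminus \Reads$; consequently a direct from-read/coherence argument through~\eqref{axiom:tso-rf2} would be delicate precisely because of store forwarding, where the write that is read from is $\po$-before, and $\tso$-after, the read. The internal/external split sidesteps this entirely: the store-forwarding situation falls into the internal case and is handled purely through program order via~\eqref{axiom:tso-fl-wufl}, while the external case forces the reads-from edge into $\tso$ via~\eqref{axiom:tso-rf1}, after which~\eqref{axiom:tso-r-fl} and transitivity finish. I therefore expect essentially no calculation beyond invoking these three axioms together with transitivity of $\po$ and of $\tso$.
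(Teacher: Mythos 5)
Your proof is correct, and it is essentially the canonical argument here: the paper states \cref{lm:wrfl-tso} without proof, treating it as a routine consequence of the Px86 axioms, and your internal/external split on whether $(w,r)\in\po$ is exactly the argument it implicitly relies on. Both branches check out against \cref{def:px86-consistency} — the internal case lands in $\ev{\UWrites \union \Flushes}\seq\po\seq\ev{\Flushes}$ so \eqref{axiom:tso-fl-wufl} applies, and the external case combines \eqref{axiom:tso-rf1} (which forces $(w,r)\in\tso$ once $(w,r)\notin\po$) with \eqref{axiom:tso-r-fl} and transitivity of the strict order $\tso$ — and your observation that the split is what sidesteps the store-forwarding issue (a read need not be $\tso$-comparable with the write it reads from, since $\tso$ is total only on $\Event\setminus\Reads$) correctly identifies why a direct coherence argument through \eqref{axiom:tso-rf2} is not the right tool.
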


\lmhbimpliesghb*

\begin{proof}
  Define $\hb(n)$ to be the relation such that
  $(e,e') \in \hb(n)$ if there is a sequence of edges in $ \po \union \rfe $
  leading from~$e$ to~$e'$ that contains exactly~$n$ $\rfe$ edges.
  Clearly, $\hb = \Union_{n\in\Nat} \hb(n)$.
  The statement of the lemma is then equivalent to proving
  $
    \A n. \A e,e'\in\UReads.
      (e,e') \in \hb(n) \implies (e,e') \in \ghb $.
  We prove this by induction on~$n$.
  \begin{induction}
    \step[Base case~$n=0$]
      We have~$\hb(0) = \po$, and
      $ \ev{\UReads}\seq \po \seq \ev{\UReads} \subs \ppo \subs \ghb $.
    \step[Induction step~$n+1$]
      Consider $e,e' \in \UReads$ with $ (e,e') \in \hb(n+1) $.
      There is a path from $e$ to $e'$ containing at least one $\rfe$ edge;
      let us consider the first such edge:
      \[
        e \relto{\maybe{\hb(0)}} w \rfe-> r \relto{\maybe{\hb(n)}} e'.
      \]
      Since $w\in \UWrites$ and $\hb(0)=\po$,
      we can conclude that either $e=w$ or $ (e,w) \in \ppo \subs \ghb $.
      Moreover, since $ r \in \UReads $,
      we have $ (w,r) \in \rfe \subs \ghb $.
      Finally, either $r=e'$, or $ (r,e') \in \hb(n) $, which by induction hypothesis implies $ (r,e') \in \ghb $.
      We obtain
      \[
        e \ghb?-> w \ghb-> r \ghb?-> e'
      \]
      which implies $ (e,e') \in \ghb $.
      \qedhere
  \end{induction}
\end{proof}

\begin{lemma}
\label{lm:uread-ghb-last}
  Let $\vec{e} \in G.\enum{\ghb} $ and
  $ (\vec{e}(j),\vec{e}(i)) \in \rf $ for~$0<j<i<\len{\vec{e}}$.
  Then $\mem{\upto{\vec{e}}{i-1}}(\locOf(\vec{e}(i))) = \rvalOf(\vec{e}(i))$.
\end{lemma}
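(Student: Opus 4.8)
The plan is to show that $\vec{e}(j)$ is precisely the last write to $\loc \is \locOf(\vec{e}(i))$ occurring in the prefix $\upto{\vec{e}}{i-1}$; the conclusion then follows immediately from the definition of memory (\cref{def:mem}) together with the matching-value condition on $\rf$. First I would record two facts. Since $(\vec{e}(j),\vec{e}(i)) \in \rf \subs \UWrites \times \UReads$ relates events of the same location with matching values, we have $\vec{e}(j) \in \UWrites_{\loc}$, $\vec{e}(i) \in \UReads_{\loc}$, and $\wvalOf(\vec{e}(j)) = \rvalOf(\vec{e}(i))$. Moreover, since $j < i$ we have $j \leq i-1$, so $\vec{e}(j)$ genuinely occurs in $\upto{\vec{e}}{i-1}$ and is a candidate for the $\max$ in \cref{def:mem}. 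It therefore suffices to prove that no write to $\loc$ sits strictly between positions $j$ and $i$.

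The core of the argument is a case split using totality of modification order. Suppose, for contradiction, there is some $w = \vec{e}(k) \in \UWrites_{\loc}$ with $j < k < i$. As $\mo_{\loc}$ is a strict total order on $\UWrites_{\loc}$ and $w \neq \vec{e}(j)$, either $(w,\vec{e}(j)) \in \mo$ or $(\vec{e}(j),w) \in \mo$. In the first case, since $\mo \subs \ghb$ and $\vec{e} \in G.\enum{\ghb}$ respects $\ghb$ (which is acyclic by \cref{lm:ghb-acyclic}), we obtain $k < j$, contradicting $j < k$. In the second case, combining $(\vec{e}(i),\vec{e}(j)) \in \inv{\rf}$ with $(\vec{e}(j),w) \in \mo$ yields $(\vec{e}(i),w) \in \inv{\rf} \seq \mo = \fr \subs \ghb$, whence $i < k$, contradicting $k < i$. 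Hence no such $w$ exists.

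Consequently every $\loc$-write in the prefix other than $\vec{e}(j)$ has index below $j$, so $j = \max\set{m | \vec{e}(m) \in \UWrites_{\loc},\ m \leq i-1}$ and $\mem{\upto{\vec{e}}{i-1}}(\loc) = \wvalOf(\vec{e}(j)) = \rvalOf(\vec{e}(i))$, as required. I do not expect a genuine obstacle: this is the standard coherence observation that a $\ghb$-respecting enumeration computes exactly the $\rf$-consistent memory. The only points demanding care are bookkeeping ones: that $\vec{e}(i) \in \UReads$ is what licenses forming the $\fr$ edge in the second case (the lemma would fail for a pure-write target), and that the index arithmetic $j \leq i-1 < i$ places $\vec{e}(j)$ inside the prefix while confining any competing write to the open interval $(j,i)$.
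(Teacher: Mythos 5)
Your proof is correct and follows essentially the same route as the paper's: assume a competing write $\vec{e}(k)$ to the same location with $j<k<i$, split on the totality of $\mo_{\loc}$, and derive an index contradiction via $\mo \subs \ghb$ in one case and via the from-read edge $\inv{\rf}\seq\mo = \fr \subs \ghb$ in the other. The additional bookkeeping you spell out (value matching of $\rf$, the placement $j \leq i-1$) is implicit in the paper's proof and harmless to include.
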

\begin{proof}
  Assume, towards a contradiction, that~$j$ is not the most recent write to
  $\loc = \locOf(\vec{e}(i))$ in $\upto{\vec{e}}{i-1}$,
  \ie that there is some~$k$ with $j < k < i$ such that
      $\vec{e}(k) \in \UWrites_{\loc}$.
  We have two cases:
  \begin{casesplit}
  \case[$ \vec{e}(j) \mo-> \vec{e}(k) $]
    We have $ \vec{e}(i) \rb-> \vec{e}(k) $
    which implies $ \vec{e}(i) \ghb-> \vec{e}(k) $
    which entails~$i<k$ yielding a contradiction.
  \case[$ \vec{e}(k) \mo-> \vec{e}(j) $]
    We have $ \vec{e}(k) \ghb-> \vec{e}(j) $
    which entails~$k<j$ yielding a contradiction.
  \qedhere
  \end{casesplit}
\end{proof}

\begin{lemma}[Updates read from last write]
\label{lm:upd-read-last}
  Let $\vec{e} \in G.\enum{\ghb} $ and
  $ \vec{e}(i) \in \Updates_{\loc} $ for~$0<i<\len{\vec{e}}$.
  Then $\mem{\upto{\vec{e}}{i}}(\loc) = \wvalOf(\vec{e}(i))$
  and  $\mem{\upto{\vec{e}}{i-1}}(\loc) = \rvalOf(\vec{e}(i))$.
\end{lemma}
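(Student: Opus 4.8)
The plan is to handle the two equalities separately. The first is immediate from the definition of memory, and the second reduces to the already-proved \cref{lm:uread-ghb-last} once the reads-from source of $\vec{e}(i)$ is shown to lie strictly before position~$i$.

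For $\mem{\upto{\vec{e}}{i}}(\loc) = \wvalOf(\vec{e}(i))$ I would simply unfold \cref{def:mem}. Since $\vec{e}(i) \in \Updates_{\loc} \subs \UWrites_{\loc}$ and $i$ is the last index of the prefix $\upto{\vec{e}}{i}$, we have $\max\set{k | k \leq i \land \vec{e}(k) \in \UWrites_{\loc}} = i$, whence $\mem{\upto{\vec{e}}{i}}(\loc) = \wvalOf(\vec{e}(i))$ directly.

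For $\mem{\upto{\vec{e}}{i-1}}(\loc) = \rvalOf(\vec{e}(i))$, the key step is to find the write from which $\vec{e}(i)$ reads and to show it occurs before position~$i$. Since $\rf$ is total and functional on its range (\cref{def:exec}) and $\vec{e}(i) \in \Updates \subs \UReads$, there is a unique $w = \vec{e}(j)$ with $(w, \vec{e}(i)) \in \rf$, and by the reads-from matching condition $\wvalOf(w) = \rvalOf(\vec{e}(i))$. I would then argue $(w, \vec{e}(i)) \in \ghb$, which forces $j < i$ because $\vec{e}$ enumerates $G.E$ consistently with $\ghb$. There are two cases. If $(w, \vec{e}(i)) \in \rf \setminus \po = \rfe$, then it is a $\ghb$ edge by definition. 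If instead $(w, \vec{e}(i)) \in \rf \inters \po$, then both endpoints lie outside $\Rets$ (they are a write and an update) and, crucially, $\vec{e}(i)$ is an \emph{update} and hence not in $\Reads$; therefore $(w, \vec{e}(i)) \notin (\Writes \union \Flushes) \times \Reads$, so this $\po$ edge is retained in $\ppo \subs \ghb$.

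With $j < i$ and $(\vec{e}(j), \vec{e}(i)) \in \rf$ in hand, the second equality is precisely the conclusion of \cref{lm:uread-ghb-last}. The one boundary to check is $j = 0$, which lies just outside that lemma's stated hypothesis $0 < j$; but its proof only reasons about a hypothetical intervening write $\vec{e}(k)$ with $j < k < i$ via $\mo$ and never uses $0 < j$, so the argument applies verbatim. I expect the main obstacle to be exactly this case distinction on the reads-from edge—in particular the observation that an update does not count as a read, so the $\ppo$ exclusion of write-to-read program-order edges does not delete an internal $\rf$ edge into $\vec{e}(i)$; everything else is a direct unfolding of the definitions.
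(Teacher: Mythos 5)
Your proof is correct and takes essentially the same route as the paper's: the first equality by unfolding \cref{def:mem}, and the second by locating the reads-from source of $\vec{e}(i)$ $\ghb$-before position~$i$ and invoking \cref{lm:uread-ghb-last}. Your case split on $\rfe$ versus an internal $\po$ edge is exactly the justification of the inclusion $\rf\seq\ev{\Updates} \subs \ghb$ that the paper asserts without proof, and your observation that the $j=0$ boundary lies outside the stated hypothesis of \cref{lm:uread-ghb-last} but is harmless (its proof never uses $0<j$) patches a detail the paper's one-line proof glosses over.
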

\begin{proof}
  That $\mem{\upto{\vec{e}}{i}}(\loc) = \wvalOf(\vec{e}(i))$
  follows directly from \cref{def:mem}.
  $\mem{\upto{\vec{e}}{i-1}}(\loc) = \rvalOf(\vec{e}(i))$
  follows from \cref{lm:uread-ghb-last}
  noting that $ \rf\seq\ev{\Updates} \subs \ghb $
  and therefore if $(\vec{e}(j), \vec{e}(i)) \in \rf$ then~$j<i$.
\end{proof}

\subsection{Node structure}

The heap is assumed to be a uniform collection of nodes of type \p{Node}:

\begin{center}
\begin{tabular}{c@{\hspace{5em}}c}
{\begin{sourcecode}
struct Node {
  long key;
  byte valid;
  bool insFl;
  bool delFl;
  struct Node* nxt;
}
\end{sourcecode}}
&
{\begin{sourcecode*}
record Node {
  key   : $\Key \dunion \set{+\infty, -\infty}$
  nxt   : $ \Bool \times \Addr_\nullptr $
  valid : $\Bool$
  insFl : $\Bool$
  delFl : $\Bool$
}
\end{sourcecode*}}
\end{tabular}
\end{center}

On the left is the actual C structure.
On the right is the record we are going to use to represent each node.

The algorithm uses the maximum and minimum possible \code{long} values
as sentinel values in the \p{key} field.
We therefore represent \p{key} as a natural
plus the sentinel values $\pm\infty$;
the set data structure will only be able to store elements with keys in $\Nat$.
The \p{nxt} field stores a pointer to the next node
and a deletion mark as its least significant bit.
The record therefore represents \p{nxt} as a pair $ \tup{b, p} $
where~$b\in\Bool$ is the mark bit
and~$p\in\Addr_\nullptr \is \Addr \dunion \set{\nullptr}$ is the actual address of the next node, or null~$(\nullptr)$.
We say a node at $n$ is \emph{marked (for deletion)} if
$ \loc[n.nxt] = \tup{1, \wtv} $.
The \p{valid} field in the actual implementation is used to store two bits,
and consider the node as valid if
the two bits coincide and $\p{nxt} \ne \nullptr$;
we simplify the scheme by having a single bit
and let a node be valid if its \p{valid} field is 1;
since we assume allocated memory is zeroed,
any newly minted node will be automatically initially invalid.

The initial (empty) state consists of two nodes, the head~$\var{head}$ and tail~$t$ with
$\loc[\var{head}.nxt] = t$, $\loc[\var{head}.key] = -\infty$, and $\loc[t.key] = +\infty$.
Whether they are persisted or not is irrelevant, they are simply ignored and freshly re-allocated by the recovery.
The address $\var{head}$ of the head node acts as the entry address for the entire set data structure.

\subsection{Events generated by the implementation}

We characterise the possible sets of events produced by the library operations
in \cref{fig:traces-find,fig:traces-ins,fig:traces-del}.
We write them in pseudocode where vertical sequences of events signify
\po!-ordering, and any free variable is to be considered existentially quantified.
Implicitly, the traces are also closed by truncating them with an error event
in case of null dereferences.
That is, each line that accesses some address~$x$ is to be understood as a shortcut for a check for~$x$ being null followed by the actual access,~e.g.:
\[
  \begin{traceblock*}
  \IF x = \nullptr\\
  \begin{traceblock}
    \Err
  \end{traceblock}\\
  \ELSE\\
  \begin{traceblock}
    \R{x}{f}{\wtv} \WHERE \text{or write, or update}\\
    \COMMENT \text{rest of code}
  \end{traceblock}
  \end{traceblock*}
\]
Part of the correctness proof will show that executions leading to $\Err$ are impossible.

Let~$\pref(X)$ be the set of prefixes of the sequences in~$X$.
We then let
$ \LinkFreeImpl[op] $
be such that
\begin{align*}
  \LinkFreeImpl[op](\p{insert}, k, v) &\is
    \pref(\ref{op:insert-ok}(k,v) \union \ref{op:insert-no}(k,v))
  \\
  \LinkFreeImpl[op](\p{delete}, k) &\is
    \pref(\ref{op:delete-ok}(k) \union \ref{op:delete-no}(k))
\end{align*}

\begin{figure*}[p]
\begin{align*}
&
\begin{trace}{FIND}{k, p, c}
\label{op:find}
  \COMMENT n_0 = \var{head}\\
  \FOR i = {0,\dots, m{-}1}\\
  \begin{traceblock}
    \R{n_i}{nxt}{\tup{0,\wtv}} \\
    \R{n_i}{key}{k_i} \WHERE k_i < k \\
    \ref{op:garbage}(n_i,n_{i+1})
  \end{traceblock}\\
  \COMMENT p = n_m \\
  \LINE{r_{\p{p}}}\label{ev:find:rp}
  \R{p}{nxt}{\tup{0,\wtv}}\\
  \R{p}{key}{k_m} \WHERE k_m < k \\
  \ref{op:garbage}(p,c)\\
  \LP{\p{\tiny FIND}}\label{lp:find}
  \R{c}{nxt}{\tup{0,\wtv}}\\
  \R{c}{key}{k'} \WHERE k \leq k' \\
\end{trace} \qquad&
&
\begin{trace}{GARBAGE}{n, n'}
\label{op:garbage}
  \COMMENT n_0 = n, d \geq 1\\
  \R{n_0}{nxt}{\tup{\wtv,n_1}}\\
  \FOR i = {1,\dots, d{-}1}\\
  \begin{traceblock}
    \R{n_i}{nxt}{\tup{1,\wtv}}\\
    \ref{op:trim}(n_0, n_i)\\
    \R{n_i}{nxt}{\tup{\wtv, n_{i+1}}}
  \end{traceblock}\\
  \COMMENT n_d = n'
\end{trace} \\[.5\baselineskip]
&
\begin{trace}{TRIM}{p, c}
\label{op:trim}
  \ref{op:flush-del}(c)\\
  \R{c}{nxt}{\tup{\wtv,s}}\\
\begin{either}
  \LINE{w_{\tiny\p{TRIM}}}\label{ev:trim-w-nxt}
      \U{p}{nxt}{\tup{0,c}}{\tup{0,s}}\\
      \R{p}{nxt}{\tup{b,c'}}\WHERE b=1 \lor c'\ne c
    \end{either}
\end{trace} &&
\begin{trace}{FLUSH\_DEL}{n}
\label{op:flush-del}
  \R{n}{delFl}{b}\\
  \IF b=0\\
  \begin{traceblock}
    \FL{n}\\
    \W{n}{delFl}{1}\\
  \end{traceblock}
\end{trace} \end{align*}
\caption{Auxiliary procedures}
\label{fig:traces-find}
\end{figure*}
\begin{figure*}[p]
\begin{align*}
&
\begin{trace}{INSERT\_OK}{k, v}
\label{op:insert-ok}
\FOR i = {1,\dots,m} \\
  \begin{traceblock}
    \ref{op:find}(k, p_i, c_i)\\
    \R{c_i}{key}{k_i}
      \WHERE k_i \ne k\\
    \ref{op:newnode}(\wtv, k, v, c_i)\\
    \R{p_i}{nxt}{\tup{b_i, n_i}}
      \WHERE b_i=1 \lor n_i\ne c_i\\
  \end{traceblock}\\
\ref{op:find}(k, p, c)\\
  \R{c}{key}{k''} \WHERE k\ne k''\\
  \ref{op:newnode}(n, k, v, c)\\
  \LP{1}\label{lp:insert-ok}
  \U{p}{nxt}{\tup{0, c}}{\tup{0, n}}\\
  \ref{op:makevalid}(n)\\
  \ref{op:flush-ins}(n)\\
  \Ret{\p{true}}
\end{trace}
 \qquad&
&
\begin{trace}{INSERT\_NO}{k, v}
\label{op:insert-no}
\FOR i = {1,\dots,m} \\
    \begin{traceblock}
      \ref{op:find}(k, p_i, c_i)\\
      \R{c_i}{key}{k_i}
        \WHERE k_i \ne k\\
      \ref{op:newnode}(\wtv, k, v, c_i)\\
      \R{p_i}{nxt}{\tup{b_i, n_i}}
        \WHERE b_i=1 \lor n_i\ne c_i\\
    \end{traceblock}
  \\
\LP{2}\label{lp:insert-no}
  \ref{op:find}(k, p, c)\\
  \R{c}{key}{k}\\
  \ref{op:makevalid}(c)\\
  \ref{op:flush-ins}(c)\\
  \Ret{\p{false}}
\end{trace}
 \end{align*}
\begin{align*}
&
\begin{trace}{NEWNODE}{n, k, v, c}
\label{op:newnode}
  \Alloc{n}\\
\W{n}{key}{k}\\
  \W{n}{val}{v}\\
  \LINE{w_{\p{nxt}}}\label{ev:newnode-w-nxt}
  \W{n}{nxt}{\tup{0,c}}
\end{trace}
 &&
\begin{trace}{MAKEVALID}{n}
\label{op:makevalid}
  \R{n}{valid}{b}\\\IF b = 0\\
   \PT{1}\label{pt:makevalid}
   \begin{traceblock}
     \W{n}{valid}{1}\\\end{traceblock}
\end{trace} &&
\begin{trace}{FLUSH\_INS}{n}
\label{op:flush-ins}
  \R{n}{insFl}{b}\\
  \IF b=0\\
  \begin{traceblock}
    \FL{n}\\
    \W{n}{insFl}{1}\\
  \end{traceblock}
\end{trace}
 \end{align*}
\caption{Events generated by insertions}
\label{fig:traces-ins}
\end{figure*}

\begin{figure*}[tp]
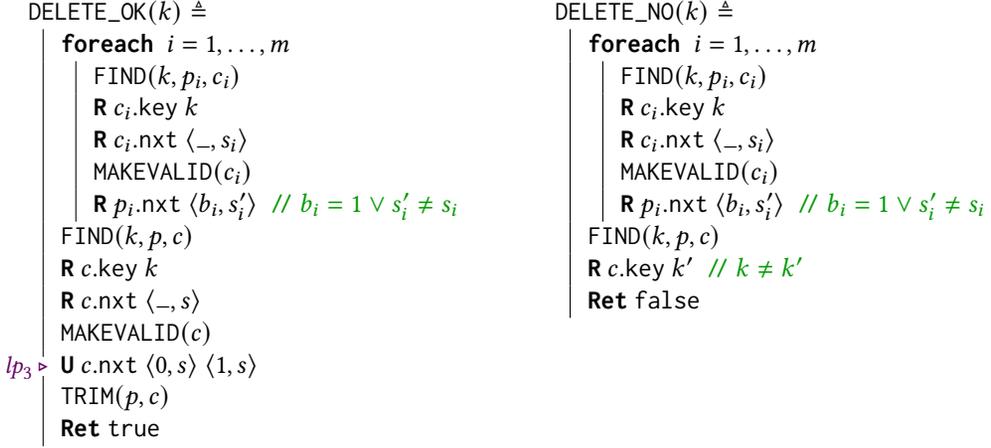

\begin{align*}
&
\begin{trace}{DELETE\_OK}{k}
\label{op:delete-ok}
  \FOR i = {1,\dots,m} \\
  \begin{traceblock}
    \ref{op:find}(k, p_i, c_i)\\
    \R{c_i}{key}{k}\\
    \R{c_i}{nxt}{\tup{\wtv,s_i}}\\
    \ref{op:makevalid}(c_i)\\
    \R{p_i}{nxt}{\tup{b_i, s'_i}}
      \WHERE b_i=1 \lor s'_i\ne s_i\\
  \end{traceblock}\\
  \ref{op:find}(k, p, c)\\
  \R{c}{key}{k}\\
  \R{c}{nxt}{\tup{\wtv,s}}\\
  \ref{op:makevalid}(c)\\
  \LP{3}\label{lp:delete-ok}
  \U{c}{nxt}{\tup{0, s}}{\tup{1, s}}\\
  \ref{op:trim}(p,c)\\
  \Ret{\p{true}}
\end{trace} \qquad&
&
\begin{trace}{DELETE\_NO}{k}
\label{op:delete-no}
  \FOR i = {1,\dots,m} \\
  \begin{traceblock}
    \ref{op:find}(k, p_i, c_i)\\
    \R{c_i}{key}{k}\\
    \R{c_i}{nxt}{\tup{\wtv,s_i}}\\
    \ref{op:makevalid}(c_i)\\
    \R{p_i}{nxt}{\tup{b_i, s'_i}}
      \WHERE b_i=1 \lor s'_i\ne s_i\\
  \end{traceblock}\\
  \ref{op:find}(k, p, c)\\
  \R{c}{key}{k'} \WHERE k \ne k' \\
  \Ret{\p{false}}
\end{trace}
 \end{align*}
\caption{Events generated by deletions}
\label{fig:traces-del}
\end{figure*}

\subsection{State representation}

We now define the durable, volatile, and recovered
representation functions.

\begin{definition}[Durable state representation for \LinkFree]
\label{def:linkfree:recoverable}
  We define $ \durable \from \KVS' \to \powerset(\Mem) $
  as the function such that
  $ M \in \durable(\kvs) $
  if and only if
  $
    M =
      \durM_{\lbl{s}} \dunion
      \durM_{\lbl{d}} \dunion
      \durM_{\lbl{g}}
  $
  where
  $
    \kvs(k) = \tup{x_k, v_k}
  $ and
  \begin{align*}
    \durM_{\lbl{s}} &=
      \Dunion_{k\in\dom(\kvs)} \left[
      \begin{matrix*}[l]
        \loc[x_k.key] \mapsto k,
        &
        \loc[x_k.val] \mapsto v_k,
        &
        \loc[x_k.nxt] \mapsto \tup{0,\wtv},
        \\
        \loc[x_k.valid] \mapsto 1,
        &
        \loc[x_k.insFl] \mapsto \wtv,
        &
        \loc[x_k.delFl] \mapsto 0
      \end{matrix*}
      \right]
    \\
    \durM_{\lbl{d}} &=
      \Dunion_{y\in X_{\lbl{d}}} [
        \loc[y.nxt] \mapsto \tup{1,\wtv},
        \loc[y.valid] \mapsto 1,
        \loc[y.$\wtv$] \mapsto \wtv
      ]
    \\
    \durM_{\lbl{g}} &=
      \Dunion_{y\in X_{\lbl{g}}} [
        \loc[y.nxt] \mapsto \tup{0,\wtv},
        \loc[y.valid] \mapsto 0,
        \loc[y.delFl] \mapsto 0,
        \loc[y.$\wtv$] \mapsto \wtv
      ]
  \end{align*}
  for some sets of addresses
$ X_{\lbl{d}}$, and
  $X_{\lbl{g}} $.
  Intuitively,
  $ \durM_{\lbl{s}} $ collects the nodes representing members of the store,
  $ \durM_{\lbl{d}} $ collects deleted nodes,
  $ \durM_{\lbl{g}} $ collects garbage nodes.
\end{definition}

We define the recovered states as the ones where the
volatile and durable representations agree on the encoded set.
Formally we define some volatile state representation~$\volatile$
and define~$\recovered(q) \is \durable(q) \inters \volatile(q)$.

\begin{definition}[Volatile and recovered state representation for \LinkFree]
\label{def:linkfree:volatile}
\label{def:linkfree:recovered}
  We define $ \volatile \from \KVS' \to \powerset(\Mem) $
  as the function shuch that
  $ M \in \volatile(\kvs) $
  if and only if
  $
    M =
      \volM_{\lbl{s}} \dunion
      \volM_{\lbl{d}} \dunion
      \volM_{\lbl{u}}
  $
  where
  $
    \kvs(k) = \tup{x_k, v_k}
  $,
  $ x_{-\infty} = \var{head} $,
  $ x_{+\infty} = \var{tail} $,
  $
    \dom(\kvs) \union \set{+\infty,-\infty} = \set{k_1,\dots,k_n}
  $, and
  \begin{align*}
    \volM_{\lbl{s}} &=
      \Dunion_{1 \leq i \leq n} \left[
      \begin{matrix*}[l]
        \loc[x_{k_i}.key] \mapsto {k_i},
        &
        \loc[x_{k_i}.val] \mapsto v_{k_i},
        &
        \loc[x_{k_i}.nxt] \mapsto \tup{0,\wtv},
        \\
        \loc[x_{k_i}.valid] \mapsto \wtv,
        &
        \loc[x_{k_i}.insFl] \mapsto \wtv,
        &
        \loc[x_{k_i}.delFl] \mapsto 0
      \end{matrix*}
      \right]
    \\
    \volM_{\lbl{d}} &=
      \Dunion_{y\in X_{\lbl{d}}} [
        \loc[y.nxt] \mapsto \tup{1,\wtv},
        \loc[y.valid] \mapsto 1,
        \loc[y.$\wtv$] \mapsto \wtv
      ]
    \\
    \volM_{\lbl{u}} &=
      \Dunion_{y\in X_{\lbl{u}}} [
        \loc[y.nxt] \mapsto \tup{0,\wtv},
        \loc[y.valid] \mapsto 0,
        \loc[y.delFl] \mapsto 0,
        \loc[y.$\wtv$] \mapsto \wtv
      ]
  \end{align*}
  for some sets of addresses
  $X_{\lbl{d}}$,
  $X_{\lbl{u}} $, and
  $ X_{\lbl{s}} = \set{x_{k_1},\dots,x_{k_n}} $
  and such that
  \begin{gather}
    \A x,y \in \Addr.
      \bigl(
        \volM(\loc[x.key]) < +\infty \land
        \volM(\loc[x.nxt]) = \tup{\wtv, y}
      \bigr)
        \implies
          \volM(\loc[x.key]) < \volM(\loc[y.key])
    \label{cond:linkfree:sorted-links}
    \\
    \begin{multlined}
    \A i<n.\E m\geq 0.
    \E y_1,\dots,y_m\in X_{\lbl{d}}.
    \E y_{m+1} = x_{k_{i+1}}.
    \\
    \textstyle
      \volM(\loc[x_{k_i}.nxt]) = \tup{0,y_1}
      \land
      \bigl(
      \LAnd_{1\leq j \leq m}
        \volM(\loc[y_j.nxt]) = \tup{1,y_{j+1}}
      \bigr)
\end{multlined}
    \label{cond:linkfree:members-reachable}
    \\
    \A y \in X_{\lbl{d}}.
    \E y' \in X_{\lbl{s}} \union X_{\lbl{d}}.
      \volM(\loc[y.nxt]) = \tup{\wtv,y'}
    \label{cond:linkfree:marked-nxt-init}
  \end{gather}
  That is,
  $ \volM_{\lbl{s}} $ is a sorted linked list of
  valid unmarked nodes representing members of the store,
  possibly interleaved with deleted notes;
  $ \volM_{\lbl{d}} $ represents deleted nodes, and
  $ \volM_{\lbl{u}} $ represents uninitialised nodes.
  Note that even links in
  $ \volM_{\lbl{d}} $ and
  $ \volM_{\lbl{u}} $ are sorted,
  although they are not required to form a list.
  Moreover, the sortedness constraint~\eqref{cond:linkfree:sorted-links}
  implies~$k_i < k_{i+1}$ for all~$i < n$.

  The recovered state representation for \LinkFree{} is
  $ \recovered(\kvs) \is \durable(\kvs) \inters \volatile(\kvs) $.
\end{definition}

\begin{lemma}
\label{lm:linkfree:volatile-members-reach}
  For all~$M \in \volatile(\kvs)$,
  if $ \kvs(k) = \tup{x,v} $ then
  $ \reach(M,x) $ and $ M(\loc[x.nxt]) = \tup{0,\wtv} $
\end{lemma}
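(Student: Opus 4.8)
The plan is to unfold \cref{def:linkfree:volatile} and read both conjuncts directly off the structure it imposes on $M$. Since $\kvs(k) = \tup{x,v}$ we have $k \in \dom(\kvs) \subs \Key$, so $k \neq \pm\infty$. Writing $\dom(\kvs) \union \set{+\infty,-\infty} = \set{k_1,\dots,k_n}$ in the increasing order guaranteed by the sortedness remark following \eqref{cond:linkfree:sorted-links}, we obtain $k = k_i$ and $x = x_{k_i}$ for some $1 < i < n$, with $x \in X_{\lbl{s}}$. The second conjunct is then immediate: by the definition of $\volM_{\lbl{s}}$ every member node satisfies $\loc[x_{k_i}.nxt] \mapsto \tup{0,\wtv}$, and since $\volM_{\lbl{s}}$, $\volM_{\lbl{d}}$ and $\volM_{\lbl{u}}$ have pairwise disjoint domains, $M$ agrees with $\volM_{\lbl{s}}$ at $\loc[x.nxt]$, giving $M(\loc[x.nxt]) = \tup{0,\wtv}$.

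For the reachability conjunct I would prove, by induction on $j$, that every member node $x_{k_j}$ is reachable from $\var{head}$ in $M$. The base case is $x_{k_1} = x_{-\infty} = \var{head}$, which is trivially reachable from $\var{head}$. For the step, condition \eqref{cond:linkfree:members-reachable} supplies, for each $j < n$, a finite chain with $M(\loc[x_{k_j}.nxt]) = \tup{0,y_1}$ and $M(\loc[y_\ell.nxt]) = \tup{1,y_{\ell+1}}$ for $1 \leq \ell \leq m$, where $y_1,\dots,y_m \in X_{\lbl{d}}$ and $y_{m+1} = x_{k_{j+1}}$. Each of these equalities is a single next-pointer edge of $M$, so appending the $m{+}1$ edges to a path witnessing reachability of $x_{k_j}$ yields a path witnessing reachability of $x_{k_{j+1}}$. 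Instantiating the induction at $j = i$ gives $\reach(M,x)$, completing the proof.

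I do not expect a genuine obstacle here, as the argument is a direct unfolding of the representation invariant. The one point requiring a little care is that the edges furnished by \eqref{cond:linkfree:members-reachable} are edges of the full memory $M$ rather than of $\volM_{\lbl{s}}$ alone: the intermediate nodes $y_\ell$ lie in $X_{\lbl{d}}$, so their \p{nxt} fields are recorded in $\volM_{\lbl{d}} \subs M$, and hence the whole chain is traversable in $M$. With this observation the induction goes through unchanged, and both conjuncts follow.
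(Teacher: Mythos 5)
Your proposal is correct and takes essentially the same route as the paper, whose entire proof is the citation ``By \eqref{cond:linkfree:sorted-links} and \eqref{cond:linkfree:members-reachable}'': your induction along the member indices, splicing together the marked-node chains that \eqref{cond:linkfree:members-reachable} provides between consecutive members, is precisely the unfolding the paper leaves implicit. Your observation that the intermediate $X_{\lbl{d}}$ edges are recorded in $\volM_{\lbl{d}}$ and hence traversable in the full memory $M$ (together with the fact that $\reach$ ignores the mark bit) is sound and is the only point of the one-line proof that genuinely needed spelling out.
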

\begin{proof}
  By \eqref{cond:linkfree:sorted-links}
  and \eqref{cond:linkfree:members-reachable}.
\end{proof}

\begin{lemma}
\label{lm:linkfree:volatile-reach-sd}
  For all~$M \in \volatile(\kvs)$,
  if~$ \reach(M,x) $ then
  $ x \in X_{\lbl{s}} \dunion X_{\lbl{d}} $
  (for $X_{\lbl{s}},X_{\lbl{d}}$ as in \cref{def:linkfree:volatile}).
\end{lemma}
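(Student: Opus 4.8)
The plan is to prove the statement by induction on the derivation of $\reach(M,x)$, reading $\reach(M,\cdot)$ as the least predicate containing $\var{head}$ and closed under following the address component of the $\p{nxt}$ field to a non-null successor. The invariant I would maintain is exactly the desired conclusion, $x \in X_{\lbl{s}} \dunion X_{\lbl{d}}$, so the real content is a closure property: the class $X_{\lbl{s}} \dunion X_{\lbl{d}}$ is closed under taking $\p{nxt}$-successors. The base case is immediate, since $\var{head} = x_{-\infty} \in X_{\lbl{s}}$ by \cref{def:linkfree:volatile}.

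For the inductive step, suppose $\reach(M,z)$ with $z \in X_{\lbl{s}} \dunion X_{\lbl{d}}$ and $M(\loc[z.nxt]) = \tup{\wtv, z'}$ for some address $z' \neq \nullptr$; I must show $z' \in X_{\lbl{s}} \dunion X_{\lbl{d}}$. I would split on the two disjuncts for $z$. If $z \in X_{\lbl{d}}$, then \eqref{cond:linkfree:marked-nxt-init} directly yields $z' \in X_{\lbl{s}} \union X_{\lbl{d}}$. If $z \in X_{\lbl{s}}$, write $z = x_{k_i}$; for $i < n$ the reachability constraint \eqref{cond:linkfree:members-reachable} pins $M(\loc[x_{k_i}.nxt]) = \tup{0, y_1}$ with $y_1$ either equal to $x_{k_{i+1}}$ (when the chain of intervening deleted nodes is empty) or an element of $X_{\lbl{d}}$, so $z' = y_1 \in X_{\lbl{s}} \union X_{\lbl{d}}$ in either case. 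Together these two observations give the closure property and discharge every situation except $i = n$, i.e.\ $z$ being the tail sentinel $x_{+\infty}$.

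The main obstacle is precisely this last case: ruling out that reachability escapes past the tail into an uninitialised node of $X_{\lbl{u}}$. Here I would combine two facts. First, by the sortedness invariant \eqref{cond:linkfree:sorted-links} the keys along any reachability path strictly increase as long as they remain below $+\infty$, so the path is acyclic and, since $x_{+\infty}$ is the unique address carrying key $+\infty$, the only conceivable way to leave $X_{\lbl{s}} \union X_{\lbl{d}}$ is through the successor of $x_{+\infty}$. Second, the tail sentinel satisfies $M(\loc[x_{+\infty}.nxt]) = \tup{0,\nullptr}$ by construction of the representation (it is installed by $\p{init}$ and never repointed), so it has no non-null successor and reachability terminates there. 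Consequently no step of the induction ever exits $X_{\lbl{s}} \dunion X_{\lbl{d}}$, which establishes the claim. I expect the bookkeeping around the tail sentinel's null $\p{nxt}$ pointer to be the only delicate point; every other case is a direct reading of the three structural constraints of \cref{def:linkfree:volatile}.
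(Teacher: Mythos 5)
Your induction is essentially the paper's argument spelled out: the paper disposes of this lemma in one line (``trivial by condition~\eqref{cond:linkfree:members-reachable} and $\var{head} \in X_{\lbl{s}}$''), and your base case plus the two inductive cases for $z \in X_{\lbl{s}}$ (via~\eqref{cond:linkfree:members-reachable}) and $z \in X_{\lbl{d}}$ (via~\eqref{cond:linkfree:marked-nxt-init}) are the evident unpacking. You are in fact more careful than the paper, which does not even cite~\eqref{cond:linkfree:marked-nxt-init}, although it is needed for reachable deleted nodes whose successor is not pinned by a member-to-member chain. Your sortedness digression, on the other hand, is superfluous: the case analysis alone already shows that the only conceivable exit from $X_{\lbl{s}} \dunion X_{\lbl{d}}$ is through the successor of the tail, with no appeal to~\eqref{cond:linkfree:sorted-links} or acyclicity.

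The tail case, however, is where your proof has a genuine gap --- and you half-noticed it. You discharge it by asserting $M(\loc[x_{+\infty}.nxt]) = \tup{0,\nullptr}$ ``by construction of the representation (installed by \p{init} and never repointed)''. That is an operational fact about \emph{executions} of the implementation; it is not available here, because the lemma quantifies over arbitrary $M \in \volatile(\kvs)$, and \cref{def:linkfree:volatile} constrains the tail's link only to $\tup{0,\wtv}$: the sortedness condition~\eqref{cond:linkfree:sorted-links} explicitly exempts nodes with key $+\infty$, and~\eqref{cond:linkfree:members-reachable} only covers $i < n$. Hence a memory in which the tail points into $X_{\lbl{u}}$ (or at an entirely unconstrained address) satisfies every clause of the definition while falsifying the conclusion, so the step you need simply cannot be derived from the cited constraints. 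Closing it requires either strengthening the representation predicate with the missing clause $M(\loc[x_{+\infty}.nxt]) = \tup{0,\nullptr}$ (which is what \p{init} and the recovery establish and the operations preserve), or restricting the reachability claim to paths that do not pass the tail. To be fair, the paper's one-line proof silently skips exactly the same case, so you have put your finger on an imprecision in \cref{def:linkfree:volatile} rather than misread the intended argument --- but as a proof of the lemma as stated, your handling of the tail does not go through, and the facts you invoke to save it belong to the execution-invariant proofs (cf.\ \cref{th:linkfree:lp-validates}), not to this static lemma.
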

\begin{proof}
  Trivial by condition~\eqref{cond:linkfree:members-reachable} and
  the fact that~$\var{head} \in X_{\lbl{s}}$
  from \cref{def:linkfree:volatile}.
\end{proof}

From here on, $G$ will refer to some arbitrary execution
of $\tup{\LinkFreeImpl[op],\LibImpl[rec]}$
for some~\pre\tup{\durable,\recovered}-sound recovery $\LibImpl[rec]$,
with $G.\Init \in \durable(\kvs)$ for some $\kvs\in\KVS'$.
In particular, for executions where the
recovery has not returned we do not need to prove anything
(soundness of recovery takes care of those).
So, for each such~$G$ we can assume, for some $\kvs\in\KVS'$, that
$ {\mem{\restr{G.\tso}{G.\Init\union G.\EvOfCid{\recoveryId}}} \in \recovered(\kvs)} $
which means that:
\begin{equation}
  \A e \in \enum[G.E]{\ghb}.
    \mem{\upto{\vec{e}}{\initOf(\vec{e})}} \in \durable(\kvs)
    \land
    \mem{\upto{\vec{e}}{\recEndOf(\vec{e})}} \in \recovered(\kvs)
  \label{prop:linkfree:init-and-rec}
\end{equation}

Since we are treating recovery opaquely,
we will treat the events in $G.\Init\union G.\EvOfCid{\recoveryId}$ uniformly,
and thus we will use the shortcut:
\[
  G.E_0 \is G.\Init\union G.\EvOfCid{\recoveryId}
\]

Furthermore, all the considered addresses will be library addresses,
unless otherwise specified.

\subsection{Invariants}
\label{sec:invariants}

We begin our formal analysis of \LinkFree\ by proving some baseline invariants,
which we will use in proving linearizability.
Informally, we want to prove the following invariants on \ghb-induced memory:
\begin{enumerate}[label=(I\arabic*),leftmargin=2.5\parindent]
\item For each pair of nodes $n_1,n_2$ such that $n_1\p{.nxt} =\tup{\wtv, n_2}$,
        we have $ n_1\p{.key} < n_2\p{.key} $.
        \label{inv:sorted}
  \item Every node that is unmarked and pointed at, is reachable from the head.
        \label{inv:reach}
  \item If $ \loc[n.insFl] = 1 $ then $ n $ is valid
  \item If $ \loc[n.delFl] = 1 $ then $ n $ is valid and marked
\item The fields $ \p{key}, \p{val} $ are assigned to exactly once.
    \label{inv:key-val-once}
  \item Once marked a node stays marked and the next pointer of a marked node is never updated.
    \label{inv:marking-irrev}
  \item Once valid a node stays valid.
    \label{inv:valid-irrev}
\end{enumerate}

\begin{lemma}[Single assignments~\ref{inv:key-val-once}]
\label{lm:unique-writes}
  In $G.E \setminus G.E_0$,
  there is at most one event $ (e_1\of\W{x}{key}{\wtv}) $,
  at most one event $ (e_2\of\W{x}{val}{\wtv}) $, and
  at most one event $ (e_3\of\W{x}{nxt}{\wtv}) $,
  for each address~$x$.
  However, these events can only exist if
  $ G.E_0 \inters \UWrites_x = \emptyset $.
  Moreover, if these events exist, then
    $ (e_1,e_2) \in \po! $ and
    $ (e_2,e_3) \in \po! $.
\end{lemma}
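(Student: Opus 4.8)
The statement is Lemma~\ref{lm:unique-writes}: in any execution $G$ of the link-free set, the per-address writes to \p{key}, \p{val}, and \p{nxt} are each at most one (outside the initial/recovery events $G.E_0$), such writes can only exist when $G.E_0$ contains no update/write to that address, and when present they occur in the program order $e_1 \mathrel{\po!} e_2 \mathrel{\po!} e_3$. The plan is to argue purely by inspecting the shape of the event sequences generated by the implementation, as characterised in \cref{fig:traces-find,fig:traces-ins,fig:traces-del}, together with the no-double-allocation constraint of \cref{def:exec} and the disjointness of library and client locations from \cref{def:lib-impl-exec}.

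First I would observe that the only library code that writes to the \p{key}, \p{val}, or \p{nxt} fields of an address~$x$ (other than marking updates to \p{nxt}, which I treat separately) is the \ref{op:newnode} procedure, whose body is exactly $\Alloc{x}$, $\W{x}{key}{\wtv}$, $\W{x}{val}{\wtv}$, $\W{x}{nxt}{\wtv}$ in \po-order. The crucial point is that \ref{op:newnode}$(n, \ldots)$ is immediately preceded by $\Alloc{n}$, and \cref{def:exec} forbids double allocations: for any $e \of \Alloc{x}$ in $G.E$, no other allocation event targets $x$, and no initial event allocates $x$. Hence each address at which \p{key}/\p{val}/\p{nxt} get written is freshly allocated by a unique allocation event, so there can be at most one invocation of \ref{op:newnode} targeting $x$, yielding at most one write of each of the three fields. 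Moreover, since the allocation is fresh and $G.E_0$ contains only initial writes and recovery writes on previously-recovered addresses (which are disjoint from freshly allocated ones), the existence of these writes forces $G.E_0 \inters \UWrites_x = \emptyset$, establishing the second clause. The \po-ordering $e_1 \mathrel{\po!} e_2 \mathrel{\po!} e_3$ then follows directly from the literal \po-sequencing of the three write events inside the \ref{op:newnode} trace (with $\po!$ denoting immediate program order, as the writes are adjacent in the generated action sequence).

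The main subtlety—and the step I expect to require the most care—is the treatment of \p{nxt}. Unlike \p{key} and \p{val}, the \p{nxt} field is also the target of the marking updates $\U{c}{nxt}{\tup{0,s}}{\tup{1,s}}$ (at \ref{lp:delete-ok}) and the trim updates $\U{p}{nxt}{\tup{0,c}}{\tup{0,s}}$ (at \ref{ev:trim-w-nxt}), as well as the insert CAS $\U{p}{nxt}{\tup{0,c}}{\tup{0,n}}$. The lemma's claim is specifically about \emph{writes} $\W{x}{nxt}{\wtv}$, i.e.\ plain writes and not updates, so I would be careful to note that the $e_3$ in the statement ranges over $\Writes$, and the only plain write to \p{nxt} is the one in \ref{op:newnode}; all the pointer-swinging operations are \emph{updates}, which belong to $\Updates$ and are excluded from the $\W{x}{nxt}{\wtv}$ count. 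I would make this reading explicit to avoid any apparent conflict with \ref{inv:marking-irrev}, which separately governs those updates. With that distinction pinned down, uniqueness of the \p{nxt} \emph{write} reduces to the same fresh-allocation argument as for \p{key} and \p{val}, and the proof closes by a routine case analysis over the operation traces confirming that no other code path emits a plain write to any of these three fields.
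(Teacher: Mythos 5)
Your proposal is correct and takes essentially the same approach as the paper's own (two-sentence) proof, which simply observes that all such write events are generated by \ref{op:newnode} on a freshly allocated address and that all other events affecting \p{nxt} are updates rather than writes. Your elaboration---invoking the no-double-allocation clause of \cref{def:exec} for uniqueness and the disjointness from $G.E_0$, reading off the $\po!$ chain from the \ref{op:newnode} trace, and pinning down the write/update distinction for \p{nxt}---is precisely the detail the paper leaves implicit.
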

\begin{proof}
  All such write events are generated by~\ref{op:newnode}
  and apply to a freshly allocated address.
  Note that all other events affecting \p{nxt} are updates.
\end{proof}

\begin{definition}[Initialised address]
\label{def:init-addr}
  We say an address~$x$ is \emph{initialised at~$e \in G.E$} if
  there exists~$e_0 \in G.E$ such that
  $ e_0 \in G.E_0 \inters \UWrites_{\loc[x.nxt]} $ or
  \[
    \underbrace{
    \Alloc{x} \po-> \W{x}{nxt}{\wtv}
    }_{\ref{op:newnode}(x,\wtv[3])}
    \po!->
    (e_0 \of \U{y}{nxt}{\wtv}{\tup{0,x}})
    \ghb?-> e
  \]
\end{definition}

\begin{lemma}
\label{lm:uread-nxt-ghb-last}
  Let $\vec{e} \in G.\enum{\ghb} $ and
  $ \vec{e}(i) \in \Reads_{\loc[x.nxt]} $
  for some~$\recEndOf(\vec{e})<i<\len{\vec{e}}$.
  Then $\mem{\upto{\vec{e}}{i-1}}(\loc[x.nxt]) = \rvalOf(\vec{e}(i))$.
\end{lemma}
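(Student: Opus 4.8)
The plan is to reduce the statement to \cref{lm:uread-ghb-last}. Since $\rf$ is functional and total on reads, let $w = \vec{e}(j)$ be the unique event with $(w,\vec{e}(i)) \in \rf$; then $w \in \UWrites_{\loc[x.nxt]}$ and, as $\vec{e}(i)$ is a plain read (not an update), $w \neq \vec{e}(i)$, so $j \neq i$. Once I establish $j < i$, \cref{lm:uread-ghb-last} applies verbatim with $\locOf(\vec{e}(i)) = \loc[x.nxt]$ and yields $\mem{\upto{\vec{e}}{i-1}}(\loc[x.nxt]) = \rvalOf(\vec{e}(i))$, which is the goal. Thus the entire content of the lemma is the claim that the write observed by $\vec{e}(i)$ precedes it in the \ghb-enumeration, and I would argue this by assuming $i < j$ and deriving a contradiction in each possible case.

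Three cases are immediate. If $(w,\vec{e}(i)) \in \rfe$ then $w \ghb-> \vec{e}(i)$, forcing $j < i$. Otherwise $(w,\vec{e}(i)) \in \rf \inters \po$, so $w \po-> \vec{e}(i)$ and in particular $w$ and $\vec{e}(i)$ lie in the same thread. If $w \in G.E_0$ then $j \leq \recEndOf(\vec{e}) < i$. If $w \in \Updates$ then, since updates lie outside $\Writes \union \Flushes$, the edge $w \po-> \vec{e}(i)$ is not of the removed shape $(\Writes \union \Flushes) \times \Reads$ and so survives in $\ppo \subs \ghb$, again giving $j < i$. Each contradicts $i < j$, leaving only the possibility that $w$ is a plain write with $w \notin G.E_0$.

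This is the crux. By \cref{lm:unique-writes} and the shape of \ref{op:newnode}, the only plain write to $\loc[x.nxt]$ outside $G.E_0$ is emitted (as an $\Alloc{x}$ or the subsequent $\W{x}{nxt}{\tup{0,c}}$) by the thread $T$ that allocated $x$, and within the same \p{insert} call it is program-order followed by the linking update $u \of \U{p}{nxt}{\tup{0,c}}{\tup{0,x}}$. I would show that such a $u$ must exist and satisfy $w \po-> u \po-> \vec{e}(i)$: a node allocated in a failed loop iteration is never linked and its \p{nxt} field is never re-read by $T$ (by inspection of \ref{op:insert-ok}), while a fresh, still-unpublished $x$ cannot be read by any other thread; hence for $\vec{e}(i)$ to read $\loc[x.nxt]$ at all, $x$ must have been linked by the \p{CAS} $u$ of its own successful insertion, and since nothing after $u$ in that call reads $\loc[x.nxt]$, the read $\vec{e}(i)$ sits in a strictly later step of $T$. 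Now $w \po-> u$ is a write-to-update edge and $u \po-> \vec{e}(i)$ is an update-to-read edge; neither has the form $(\Writes \union \Flushes) \times \Reads$, so both are preserved in $\ppo$, and by transitivity $w \ppo-> u \ppo-> \vec{e}(i)$ yields $w \ghb-> \vec{e}(i)$, hence $j < i$, contradicting $i < j$.

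Having excluded $i < j$ in every case, I would conclude $j < i$ and finish by \cref{lm:uread-ghb-last}. The main obstacle is exactly the plain-write case: it is the only one that is not a purely syntactic consequence of the shape of $\ppo$, relying instead on two structural facts about \p{insert} — that the \p{nxt} field of an unlinked fresh node is never read, and that the linking \p{CAS} lies in program order between a node's initialising write and any later read of its \p{nxt}. The delicate point is avoiding circularity: the reachability invariant \ref{inv:reach} is itself established using this lemma, so the argument must stay at the level of program structure and the single-assignment fact \cref{lm:unique-writes}, rather than appealing to the set invariants.
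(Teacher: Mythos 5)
Your proof is correct and takes essentially the same route as the paper's: a case split on the $\rf$-source of $\vec{e}(i)$ (external edge, initial/recovery event, update, or a plain \ref{op:newnode} write), establishing $j<i$ in each case and then invoking \cref{lm:uread-ghb-last}. Your plain-write crux --- that an internal read-from of the fresh node's initialising write forces the linking CAS to lie in program order between them, so that both edges survive in $\ppo$ and yield $\vec{e}(j) \ghb-> \vec{e}(i)$ --- is precisely what the paper compresses into the observation that $x$ is initialised at $\vec{e}(i)$ in the sense of \cref{def:init-addr}, which you correctly unfold (including why abandoned or unpublished nodes can never be the source) without circular appeal to the reachability invariants.
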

\begin{proof}
  Let $ (\vec{e}(j),\vec{e}(i)) \in \rf $.
  If $ \vec{e}(j) \in G.E_0 \union \Updates $ or $ (\vec{e}(j),\vec{e}(i)) \in \rfe $ then $(\vec{e}(j),\vec{e}(i)) \in \ghb$ so $j<i$
  and \cref{lm:uread-ghb-last} proves the claim.
  Otherwise, $(\vec{e}(j),\vec{e}(i)) \in \rfi \subs \po$ and $\vec{e}(j) \in \Writes$. The only writes to \p{nxt} are generated by \ref{op:newnode},
  which implies~$x$ is initialized at~$ \vec{e}(i) $
  and we have
  $ \vec{e}(j) \po-> (e_0 \of \U{y}{nxt}{\wtv}{\wtv}) \po-> \vec{e}(i) $,
  from which $(\vec{e}(j),\vec{e}(i)) \in \ghb$ so $j<i$
  and \cref{lm:uread-ghb-last} again proves the claim.
\end{proof}

\begin{lemma}
\label{lm:uread-key-ghb-last}
  Let $\vec{e} \in G.\enum{\ghb} $ and
  $ \vec{e}(i) \in \Reads_{\loc[x.key]} $
  for some~$\recEndOf(\vec{e})<i<\len{\vec{e}}$.
  Then $\mem{\upto{\vec{e}}{i-1}}(\loc[x.key]) = \rvalOf(\vec{e}(i))$.
\end{lemma}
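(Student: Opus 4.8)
The statement to prove, \cref{lm:uread-key-ghb-last}, is the exact analogue of \cref{lm:uread-nxt-ghb-last} but for the \p{key} field instead of the \p{nxt} field. So my plan is to mirror the proof of \cref{lm:uread-nxt-ghb-last} as closely as possible, adjusting only the specifics of how \p{key} is written versus how \p{nxt} is written.

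First I would let $(\vec{e}(j),\vec{e}(i)) \in \rf$ be the unique write that $\vec{e}(i)$ reads from (it exists and is unique since $\rf$ is total and functional on $\UReads$). The goal is to show $j < i$, because then $\vec{e}(j)$ is the most recent write to $\loc[x.key]$ in $\upto{\vec{e}}{i-1}$ and \cref{lm:uread-ghb-last} immediately yields $\mem{\upto{\vec{e}}{i-1}}(\loc[x.key]) = \rvalOf(\vec{e}(i))$. As in the \p{nxt} case, I would split on whether the read-from edge is ``external'' or lands on an initialization/update. If $\vec{e}(j) \in G.E_0 \union \Updates$ or $(\vec{e}(j),\vec{e}(i)) \in \rfe$, then directly $(\vec{e}(j),\vec{e}(i)) \in \ghb$, giving $j<i$ and closing the case via \cref{lm:uread-ghb-last}.

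The remaining case is $(\vec{e}(j),\vec{e}(i)) \in \rfi \subs \po$ with $\vec{e}(j) \in \Writes$. Here is the one genuine difference from the \p{nxt} argument, and where I expect the only subtlety: the writes to \p{key} are generated solely by \ref{op:newnode} (by \cref{lm:unique-writes}, there is at most one such write per address, and it only exists when $x$ is freshly allocated in this era). I would use the trace structure of \ref{op:newnode}, namely $\Alloc{x} \po-> \W{x}{key}{\wtv} \po-> \W{x}{val}{\wtv} \po-> \W{x}{nxt}{\tup{0,c}}$, together with the fact that a freshly allocated node becomes reachable only through an update $(e_0 \of \U{y}{nxt}{\wtv}{\tup{0,x}})$ that \po-follows the \p{nxt}-write (\cref{def:init-addr}). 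Since $\vec{e}(i)$ reads $\loc[x.key]$ and $x$ is thus initialised at $\vec{e}(i)$, I get the chain $\vec{e}(j) \po-> (e_0 \of \U{y}{nxt}{\wtv}{\tup{0,x}}) \po-> \vec{e}(i)$ (the \p{key}-write \po-precedes the \p{nxt}-write, which \po-precedes the reachability-establishing update, which in turn \po-precedes any read of $x$'s fields by the reading call). This gives $(\vec{e}(j),\vec{e}(i)) \in \ghb$, hence $j<i$, and \cref{lm:uread-ghb-last} again proves the claim.

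The main obstacle, such as it is, is justifying the $\po$-chain $\vec{e}(j) \po-> e_0 \po-> \vec{e}(i)$ in this second case: I must argue that the update $e_0$ making $x$ reachable lies \po-between the \p{key}-write and the read $\vec{e}(i)$. The \p{nxt}-case proof handles this by appeal to \cref{def:init-addr}, and the identical reasoning transfers, since \ref{op:newnode} writes \p{key} strictly before \p{nxt} (by the $\po!$ chain in \cref{lm:unique-writes}), and $e_0$ follows the \p{nxt}-write. Because the argument is otherwise structurally identical to \cref{lm:uread-nxt-ghb-last}, the proof can legitimately be stated as ``by a straightforward adaptation of the proof of \cref{lm:uread-nxt-ghb-last}, using that writes to \p{key} are generated by \ref{op:newnode} and \po-precede the corresponding write to \p{nxt}.''
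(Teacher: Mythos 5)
Your proposal is correct and takes essentially the same route as the paper: the paper's entire proof of \cref{lm:uread-key-ghb-last} reads ``Analogous to the proof of \cref{lm:uread-nxt-ghb-last}'', and your adaptation is exactly the intended instantiation. In particular, your two-case split (the $G.E_0 \cup \Updates$/$\rfe$ case going directly into $\ghb$ and then \cref{lm:uread-ghb-last}, versus the $\rfi$ case resolved through the \ref{op:newnode} write chain $\W{x}{key}{\wtv} \po-> \W{x}{nxt}{\wtv}$ followed by the publishing update of \cref{def:init-addr}) spells out precisely the detail the paper leaves implicit.
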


\begin{proof}
  Analogous to the proof of \cref{lm:uread-nxt-ghb-last}.
\end{proof}

\begin{lemma}[Pointed-at nodes are initialised]
\label{lm:pointed-at-init}
  Let~$\vec{e} \in G.\enum{\ghb}\!$,
  and~$\recEndOf(\vec{e})<i<\len{\vec{e}}$.
  If\/ $ (\vec{e}(i) \of \R{x}{nxt}{\tup{\wtv,y}}) $
  then~$ y $ is initialised at~$\vec{e}(i)$.
\end{lemma}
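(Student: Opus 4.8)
The plan is to prove the statement by strong induction on the position $i$, locating the write that $\vec{e}(i)$ reads from via \cref{lm:uread-nxt-ghb-last} and then case-splitting on the line of the implementation that produced it. First I would record a \emph{monotonicity} observation about \cref{def:init-addr}: if $y$ is initialised at some event $e$ and $e \ghb?-> e'$, then $y$ is initialised at $e'$. Indeed, the first disjunct of the definition does not mention the target event at all, and the only occurrence of the event in the second disjunct is the trailing edge $e_0 \ghb?-> e$, so transitivity of $\ghb$ transports it to $e'$.

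Now fix $\vec{e}(i) \of \R{x}{nxt}{\tup{\wtv,y}}$ with $y \in \Addr$, and let $w$ be its $\rf$-source, so $\wvalOf(w) = \tup{\wtv,y}$ by the matching-value requirement on $\rf$; by \cref{lm:uread-nxt-ghb-last} $w = \vec{e}(j)$ with $j < i$. The only writers of a $\p{nxt}$ field are events in $G.E_0$ and the writes/updates in \ref{op:newnode}, \ref{op:trim}, \ref{op:insert-ok}, and \ref{op:delete-ok}, so I case-split on which produced $w$. In the \ref{op:insert-ok} case, $w \of \U{x}{nxt}{\tup{0,c}}{\tup{0,y}}$ is the linking CAS, immediately $\po$-preceded in its call by the block \ref{op:newnode}, which supplies $\Alloc{y} \po-> \W{y}{nxt}{\wtv} \po!-> w$; taking $e_0 = w$ and noting $w \ghb?-> \vec{e}(i)$ (directly when $w \rfe-> \vec{e}(i)$, and through the $\ppo$ edge $w \po-> \vec{e}(i)$ otherwise, since an update is not in $\Writes \union \Flushes$) discharges \cref{def:init-addr} outright. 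In the \ref{op:newnode}, \ref{op:trim}, and \ref{op:delete-ok} cases the event producing the pointer $y$ is $\po$-preceded within the same call by a read $\R{z}{nxt}{\tup{\wtv,y}}$: for \ref{op:newnode} this is the last pointer read of the preceding \ref{op:find}/\ref{op:garbage} traversal that reaches $c = y$; for \ref{op:trim} and \ref{op:delete-ok} it is the read $\R{z}{nxt}{\tup{\wtv,s}}$ with $s = y$. Call this read $e' = \vec{e}(i')$; it is a library event, so $i' > \recEndOf(\vec{e})$, and since a read-to-write $\po$ edge lies in $\ppo$ we get $e' \ghb-> w$, hence $i' < j < i$. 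The induction hypothesis yields $y$ initialised at $e'$, and the combination $e' \po-> w \rf-> \vec{e}(i)$ gives $e' \ghb-> \vec{e}(i)$ (directly in the $\rfe$ case, and via the read-to-read $\ppo$ edge $e' \po-> \vec{e}(i)$ in the $\rfi$ case), so monotonicity transfers initialisation to $\vec{e}(i)$.

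The remaining case, $w \in G.E_0 \inters \UWrites_{\loc[x.nxt]}$, is the main obstacle. Here $w$ is the latest write to $\loc[x.nxt]$ before $i$ and $w$ sits at a position $\leq \recEndOf(\vec{e})$, so $\mem{\upto{\vec{e}}{\recEndOf(\vec{e})}}(\loc[x.nxt]) = \tup{\wtv,y}$, and by \eqref{prop:linkfree:init-and-rec} this memory lies in $\recovered(\kvs) \subs \volatile(\kvs)$. The goal is to exhibit $e_0 \in G.E_0 \inters \UWrites_{\loc[y.nxt]}$, i.e.\ that $M(\loc[y.nxt])$ is defined in this recovered memory, which feeds the first disjunct of \cref{def:init-addr}. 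For $x \in X_{\lbl{s}}$ (a member with $x.key < +\infty$) this follows from the sortedness constraint \eqref{cond:linkfree:sorted-links}, which forces $M(\loc[y.key])$, hence the whole node $y$, to be present; for $x \in X_{\lbl{d}}$ (marked) it follows from \eqref{cond:linkfree:marked-nxt-init}, which places $y \in X_{\lbl{s}} \dunion X_{\lbl{d}}$. The delicate subcase is $x \in X_{\lbl{u}}$, whose $\p{nxt}$-target is unconstrained and may dangle. I would rule it out by \cref{lm:linkfree:volatile-reach-sd}, which shows $X_{\lbl{u}}$ nodes are unreachable in the recovered memory, together with the fact that the library reads a $\p{nxt}$ field only after reaching its node along a chain of pointer reads from $\var{head}$ --- so no read $\R{x}{nxt}{\cdot}$ with $x \in X_{\lbl{u}}$ is ever generated. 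Turning this reachability intuition into a precise argument (tracking that $\vec{e}(i)$ is itself reached through $\ghb$-earlier pointer reads, to which the induction hypothesis applies, and excluding the degenerate $y = \nullptr$ target of the tail) is where the real work lies; the rest is routine inspection of the traces of \cref{fig:traces-find,fig:traces-ins,fig:traces-del}.
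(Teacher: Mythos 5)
Your overall architecture is exactly the paper's: induction on the position~$i$, a case analysis on the $\rf$-source of~$\vec{e}(i)$, the induction hypothesis applied to the pointer read that \po-precedes the writing event in the \ref{op:newnode}, \ref{op:trim} and \ref{op:delete-ok} cases, the \ref{op:insert-ok} CAS discharging \cref{def:init-addr} directly via its preceding \ref{op:newnode} block, and transport of initialisation along~$\ghb$ (your explicit monotonicity remark, which the paper uses silently with ``initialised at~$r$, and hence at~$\vec{e}(i)$''). Your handling of the in-era cases is correct and, on the $\rfi$/$\rfe$ distinction, somewhat more careful than the paper's. Dropping the paper's $\Alloc{x}$ case (which the paper refutes with an $\fr$-cycle) is defensible under your standing assumption $y \in \Addr$: an allocation writes zero, i.e.\ $\tup{0,\nullptr}$, so value-matching on~$\rf$ excludes it --- although your claim that the only writers of a \p{nxt} field are $G.E_0$ and the four code sites literally forgets that allocations write \p{nxt} too, and your appeal to \cref{lm:uread-nxt-ghb-last} for ``$w=\vec{e}(j)$ with $j<i$'' is a loose citation (the lemma states a memory fact; fortunately you never rely on $j<i$ in the one case where it could fail, the $\rfi$ plain-write case, since there you route through the read~$e'$).

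The genuine gap is the one you flag yourself: the $X_{\lbl{u}}$ subcase of the $E_0$ case, which you leave as ``where the real work lies''. This difficulty is self-inflicted. Constraint~\eqref{cond:linkfree:sorted-links} is quantified over \emph{all} $x,y\in\Addr$, not only list members --- \cref{def:linkfree:volatile} explicitly remarks that even the links in $M_{\lbl{d}}$ and $M_{\lbl{u}}$ are sorted --- so the sortedness argument you already used for $x \in X_{\lbl{s}}$ applies verbatim to uninitialised nodes with key below $+\infty$: the inequality forces $M(\loc[y.key])$ to be defined, hence~$y$ is an allocated node of the recovered memory, all of whose fields (in particular $\loc[y.nxt]$) are written in~$G.E_0$, which is precisely the first disjunct of \cref{def:init-addr}. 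This is why the paper closes the entire $E_0$ case in one line from $\mem{\upto{\vec{e}}{\recEndOf(\vec{e})}} \in \recovered(\kvs)$; the only residue is nodes with key~$+\infty$, whose target is $\nullptr$ and is excluded by $y\in\Addr$, exactly as in your tail remark. Worse, the repair you sketch would not go through as stated: the invariant ``the library reads a \p{nxt} field only after reaching its node from $\var{head}$ in the current memory'' is false under concurrency --- traversals routinely read marked nodes that have already been unlinked, which is precisely what forces the hindsight reasoning for failed deletes later in the development --- and any in-era reachability fact would have to come from \cref{lm:member-reach} or from knowing the current memory lies in $\volatile(\kvs)$, both of which sit strictly downstream of \cref{lm:pointed-at-init} (\cref{lm:u-init} and \cref{lm:member-reach} depend on it), so your plan is circular unless the reachability invariant is folded into this very induction. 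Replacing the reachability detour with the global reading of~\eqref{cond:linkfree:sorted-links} closes your proof and makes it coincide with the paper's.
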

\begin{proof}
  By induction on~$i$.
Assume the statement holds for every~$i'$
      with $\recEndOf(\vec{e}) < i' < i$.
      We do a case analysis on the label
      of the possible source $e$ of the \rf-edge
      to~$(\vec{e}(i)  \of \R{x}{nxt}{\tup{\wtv,y}})$.
      \begin{casesplit}
        \case[$ e \in G.E_0$]
          By assumption~\eqref{prop:linkfree:init-and-rec},
          $\mem{\upto{\vec{e}}{\recEndOf(\vec{e})}} = \wvalOf(e)$;
          From $\mem{\upto{\vec{e}}{\recEndOf(\vec{e})}} \in \recovered(\kvs)$
          we have another event
            $ e' \in G.E_0 \inters \Writes_{\loc[y.nxt]} $,
          which proves~$y$ is initialised at~$\vec{e}(i)$.
\case[$ (e \of \Alloc{x}) $]
          We prove this is impossible, by considering two subcases:
          \begin{casesplit}
            \case*[$x=\var{head}$]
              Trivial as, according to $\recovered$,
              $\var{head}$ is allocated in $G.E_0$.
            \case*[$x\ne \var{head}$]
              Every read to $x.\p{nxt}$ (except the ones on~$x=\var{head}$)
              is \po-preceded by a read
              $ r \of \R{z}{nxt}{\tup{\wtv,x}} $ for some~$z$.
              Since $(r,\vec{e}(i)) \in \po \subs \ghb$,
              $ r = \vec{e}(j) $ for some~$j < i$.
              We can then apply the induction hypothesis on~$r$
              and get that~$x$ is initialised at~$r$.
              $ x $ cannot be part of the nodes created at the initial events,
              since those do not have allocation events, and allocation events
              cannot reuse addresses from the initialisation.
              This gives us
              \[
                (e \of \Alloc{x})
                \po->
                (e_1 \of \W{x}{nxt}{\wtv})
                \po!->
                (e_0 \of \U{y}{nxt}{\wtv}{\tup{0,x}})
                \ghb->
                r
                \ghb->
                \vec{e}(i)
              \]
              Since
              $ (e,\vec{e}(i)) \in \rf $
              and $ (e,e_1) \in \mo $,
              we have $ (\vec{e}(i),e_1) \in \fr \subs \ghb $,
              which is a contradiction as that would cause
              $ \ghb $ to have a cycle.
          \end{casesplit}
\case[$ (e \of \W{x}{nxt}{\tup{0,y}}) $
              from \ref{op:newnode}]
          By construction~$e$ is \po-preceded (in \ref{op:find})
          by an event $ (r \of \R{z}{nxt}{\tup{\wtv,y}}) $.
          We have $ r \po-> e \rf-> \vec{e}(i) $ and so
          $ r \ghb-> \vec{e}(i) $.
          This means we can apply the induction
          hypothesis on~$r$ obtaining that~$y$ is initialised at~$r$,
          and hence at~$\vec{e}(i)$.
\case[$ (e \of \U{x}{nxt}{\tup{\wtv,\wtv}}{\tup{\wtv,y}}) $
              from \ref{op:insert-ok}]
          We have~$(e, \vec{e}(i)) \in \ghb$ by~\cref{lm:hb-implies-ghb}.
          The update is \po-preceded by \ref{op:newnode}$(y,\wtv[3])$
          which gives us the desired events to show~$y$ is initialised
          at~$\vec{e}(i)$.
\case[$ (e \of \U{x}{nxt}{\tup{\wtv,z}}{\tup{\wtv,y}}) $
              from \ref{op:trim}]
          The update is \po-preceded by an event
          $ (r \of \R{x}{nxt}{\tup{\wtv,y}}) $.
          Since~$ r \ghb-> e \ghb-> \vec{e}(i) $,
          we can apply the induction hypothesis on~$r$
          and obtain that~$y$ is initialised at~$e$ and hence at~$\vec{e}(i)$.
\case[$ (e \of \U{x}{nxt}{\tup{0,y}}{\tup{1,y}}) $
              from \ref{op:delete-ok}]
          Similar to the previous case as~$e$ is also~\po-preceded by
          an event $ (r \of \R{x}{nxt}{\tup{\wtv,y}}) $.
\qedhere
      \end{casesplit}
\end{proof}

\begin{lemma}[Reached nodes are initialised]
\label{lm:reach-init}
  For all~$\vec{e} \in G.\enum{\ghb}$,
  for all~$\recEndOf(\vec{e})<i<\len{\vec{e}}$,
  if $ \vec{e}(i) \in \Reads_{\loc[x.nxt]} \union \Reads_{\loc[x.key]} $,
  then~$ x $ is initialised at~$\vec{e}(i)$.
\end{lemma}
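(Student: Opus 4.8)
The plan is to reduce \cref{lm:reach-init} to the already-established \cref{lm:pointed-at-init}, which tells us that the \emph{target} of any next-pointer read is initialised. First I would record a monotonicity observation about \cref{def:init-addr}: the property ``$x$ is initialised at $e$'' depends on the event $e$ only through the trailing $\ghb?->e$ in its second disjunct, and not at all through its first disjunct. Hence if $x$ is initialised at some $e'$ and $e' \ghb-> e$, then $x$ is initialised at $e$; this lets us transport initialisedness forward along $\ghb$.

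Fix $\vec{e} \in G.\enum{\ghb}$ and an index $\recEndOf(\vec{e})<i<\len{\vec{e}}$ with $\vec{e}(i) \in \Reads_{\loc[x.nxt]} \union \Reads_{\loc[x.key]}$, and split on whether $x = \var{head}$. In the head case, the address $\var{head}$ is allocated and initialised by the recovery, so $\recovered(\kvs)$ guarantees an event in $G.E_0 \inters \UWrites_{\loc[\var{head}.nxt]}$; this witnesses the first disjunct of \cref{def:init-addr} directly, independently of $i$.

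For the non-head case ($x \ne \var{head}$) I would appeal to the structure of the traces in \cref{fig:traces-find,fig:traces-ins,fig:traces-del}: every access to a field of a node other than the head is generated inside a traversal, and is therefore \po-preceded within the same call by a read $r \of \R{z}{nxt}{\tup{\wtv,x}}$ of the next pointer that produced $x$ (in \ref{op:find} this is the preceding next-pointer read, and the nodes reached inside \ref{op:garbage} are obtained by the same chain of next-pointer reads). Since both $r$ and $\vec{e}(i)$ are reads in the same call, $(r,\vec{e}(i)) \in \ppo \subs \ghb$, so $r = \vec{e}(j)$ for some $j < i$; moreover $r$ is a library (non-recovery) event, so $\recEndOf(\vec{e}) < j$ and \cref{lm:pointed-at-init} applies to $\vec{e}(j) = r$. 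This yields that $x$ is initialised at $r$, and the monotonicity observation then gives that $x$ is initialised at $\vec{e}(i)$, as required.

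The main obstacle is the structural claim in the non-head case, namely that no node other than $\var{head}$ is ever read ``out of the blue'': this must be checked against all the trace schemas, confirming that each read of $\p{nxt}$ or $\p{key}$ of a node $x \ne \var{head}$ is genuinely \po-preceded by a next-pointer read yielding $x$, and that the resulting edge survives into $\ppo$ (it does, since both endpoints are reads, so the edge is neither of the excluded $(\Writes\union\Flushes)\times\Reads$ form nor incident to a return). Everything else is bookkeeping on top of \cref{lm:pointed-at-init}.
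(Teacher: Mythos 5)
Your proof is correct and takes essentially the same route as the paper's: the same case split on $x = \var{head}$ versus $x \ne \var{head}$, with the head case settled by the recovery's initialisation events and the non-head case discharged by locating the \po-preceding next-pointer read $r \of \R{z}{nxt}{\tup{\wtv,x}}$, applying \cref{lm:pointed-at-init} to it, and transporting initialisedness forward along $\ghb$. Your additional care --- justifying the $\ghb$ edge explicitly via $\ppo$ (both endpoints are reads, neither a return) and checking $\recEndOf(\vec{e}) < j$ before invoking \cref{lm:pointed-at-init} --- merely makes explicit bookkeeping that the paper leaves implicit.
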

\begin{proof}
  We consider two cases:
  \begin{casesplit}
    \case[$x=\var{head}$]
      By construction $\var{head}$ is initialised
      in~$G.E_0$.
    \case[$x\ne\var{head}$]
      Every such read is \po-preceded by a read
      $ (r \of \R{z}{nxt}{\tup{\wtv,x}}) $.
      Note that $(r,\vec{e}(i)) \in \ghb$.
      By~\cref{lm:pointed-at-init} we have that~$x$
      is initialised at~$r$ and hence at~$\vec{e}(i)$ too.
      \qedhere
  \end{casesplit}
\end{proof}

\begin{lemma}[Next is only set to initialised nodes]
\label{lm:w-nxt-init}
  For all~$\vec{e} \in G.\enum{\ghb}$,
  and all~$\recEndOf(\vec{e})<i<\len{\vec{e}}$,
  and all~$x,y \in \Addr$,
  if $ \vec{e}(i) \in \UWrites_{\loc[x.nxt]}$
  and $ \wvalOf(\vec{e}(i)) = \tup{\wtv,y} $,
  then~$ y $ is initialised at~$\vec{e}(i)$.
\end{lemma}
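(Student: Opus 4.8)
The plan is to prove the lemma by a direct case analysis on how the write or update $\vec{e}(i)$ to $\loc[x.nxt]$ is generated, reducing each case to the already-established \cref{lm:reach-init} and \cref{lm:pointed-at-init}. First I would record a monotonicity observation: if $y$ is initialised at some event $e'$ and $e' \ghb?-> \vec{e}(i)$, then $y$ is initialised at $\vec{e}(i)$. This is immediate from \cref{def:init-addr}, since its first disjunct does not mention the target event at all, while the second ends in a chain $\dots \ghb?-> e'$ that extends to $\dots \ghb?-> \vec{e}(i)$ by transitivity of $\ghb$. Since $\recEndOf(\vec{e}) < i$, the event $\vec{e}(i)$ is not in $G.E_0$, so by the characterisation of generated events in \cref{fig:traces-find,fig:traces-ins,fig:traces-del} it must be one of exactly four shapes: a write from \ref{op:newnode}, the successful-insert update of \ref{op:insert-ok}, the update of \ref{op:trim}, or the marking update of \ref{op:delete-ok} (the failing operations and the helper procedures never write a \p{nxt} field).

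Then I would dispatch three of the four cases by citing a prior lemma. If $\vec{e}(i) \of \W{x}{nxt}{\tup{0,y}}$ comes from \ref{op:newnode}, the enclosing call passed $y$ as the current node returned by \ref{op:find}, so $\vec{e}(i)$ is \po-preceded by a read $\R{y}{nxt}{\tup{0,\wtv}}$ generated at the end of that \p{find}; read-to-write edges lie in $\ppo \subs \ghb$, so by \cref{lm:reach-init} $y$ is initialised at that read, hence at $\vec{e}(i)$ by monotonicity. If $\vec{e}(i) \of \U{x}{nxt}{\tup{0,\wtv}}{\tup{0,y}}$ comes from \ref{op:trim}, it is \po-preceded by the read $\R{x}{nxt}{\tup{\wtv,y}}$ of that \p{trim}; since read-to-update edges also lie in $\ppo \subs \ghb$, \cref{lm:pointed-at-init} gives that $y$ is initialised at the read, hence at $\vec{e}(i)$. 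The delete case, $\vec{e}(i) \of \U{x}{nxt}{\tup{0,y}}{\tup{1,y}}$ from \ref{op:delete-ok}, is identical, using the read $\R{x}{nxt}{\tup{\wtv,y}}$ that \po-precedes the marking update.

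The remaining case, $\vec{e}(i) \of \U{x}{nxt}{\tup{0,\wtv}}{\tup{0,y}}$ at the linearisation point of \ref{op:insert-ok}, is the only one handled by building the initialisation chain directly, and is where I expect the only real care to be needed. Here $y$ is the freshly allocated node, and the update is \po-preceded by the call to \ref{op:newnode} allocating $y$, which supplies $\Alloc{y} \po-> \W{y}{nxt}{\wtv}$ immediately \po-before $\vec{e}(i)$. Since $\vec{e}(i)$ itself is an update $\U{x}{nxt}{\wtv}{\tup{0,y}}$ that swings $x$'s next pointer to point to $y$, taking the witness $e_0 = \vec{e}(i)$ and using $\vec{e}(i) \ghb?-> \vec{e}(i)$ reflexively matches exactly the second disjunct of \cref{def:init-addr} for $y$ at $\vec{e}(i)$. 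The main obstacle is thus not conceptual but organisational: ensuring the case analysis is exhaustive, and checking in each case that the relevant preceding read or allocation is genuinely $\ghb$-ordered before $\vec{e}(i)$, which always follows from $\ppo$ retaining every program-order edge except those from writes or flushes into reads.
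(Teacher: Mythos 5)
Your proposal is correct and follows essentially the same route as the paper's proof: a case analysis on the events that write to $\loc[x.nxt]$, with the three list-manipulation cases discharged via \cref{lm:pointed-at-init} (or the equivalent \cref{lm:reach-init}) applied to a \po-preceding read, and the \ref{op:insert-ok} CAS handled by observing that $\vec{e}(i)$ itself, together with the \po-preceding \ref{op:newnode} events, instantiates the second disjunct of \cref{def:init-addr} with $e_0 = \vec{e}(i)$ and the reflexive $\ghb?$ step. Your use of \cref{lm:reach-init} on the read $\R{y}{nxt}{\tup{0,\wtv}}$ from \p{find} in the \ref{op:newnode} case, where the paper instead applies \cref{lm:pointed-at-init} to the read $\R{z}{nxt}{\tup{\wtv,y}}$ that discovered $y$, is an immaterial variation.

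Two small repairs are needed. First, your exhaustiveness claim (``exactly four shapes'') misses one case: under the paper's conventions an allocation $\Alloc{x}$ belongs to $\UWrites_{\loc[x.nxt]}$, since $\locOf(\Alloc{x})$ covers all fields of $x$ and $\wvalOf(\Alloc{x}) = 0$. The paper lists this case explicitly and dismisses it as trivial because the written value is $\tup{\wtv,\nullptr}$ and $\nullptr \notin \Addr$, so the lemma's hypothesis $y \in \Addr$ is vacuous; you should add this (vacuous) case rather than assert it cannot arise. Second, in the \ref{op:trim} case the read supplying $y$ is $\R{c}{nxt}{\tup{\wtv,y}}$ on the \emph{trimmed} node $c$, not a read of $\loc[x.nxt]$ as you wrote ($x$ is the predecessor whose pointer the CAS updates); this is only a naming slip, since \cref{lm:pointed-at-init} quantifies over the address appearing in the read, but as written the event you cite does not exist in \ref{op:trim}'s trace.
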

\begin{proof}
  We consider all possible writes to~$\loc[x.nxt]$.
  \begin{casesplit}
    \case[$ (\vec{e}(i) \of \Alloc{x}) $]
      Trivial as $ \wvalOf(\vec{e}(i)) = \tup{\wtv,\nullptr} $.
    \case[$ (\vec{e}(i) \of \W{x}{nxt}{\tup{0,y}}) $
          from \ref{op:newnode}]
      The event is \po-preceded by a read
      $ r \of \R{z}{nxt}{\tup{\wtv,y}} $.
      Since $ (r,\vec{e}(i)) \in \ghb $,
      by \cref{lm:pointed-at-init}~$y$ is initialised at~$r$
      and thus at $\vec{e}(i)$.
    \case[$ (\vec{e}(i) \of \U{x}{nxt}{\wtv}{\tup{\wtv,y}}) $
          from \ref{op:insert-ok}]
      The events generated by the \ref{op:newnode}$(y,\wtv[3])$
      \po-preceding $\vec{e}(i)$ are the witnesses
      to show~$y$ is initialised.
    \case[$ (\vec{e}(i) \of \U{x}{nxt}{\wtv}{\tup{\wtv,y}}) $
          from \ref{op:trim}]
      Similar to the \ref{op:newnode} case.
    \case[$ (\vec{e}(i) \of \U{x}{nxt}{\tup{0,y}}{\tup{1,y}}) $
          from \ref{op:delete-ok}]
      Similar to the \ref{op:newnode} case.
    \qedhere
  \end{casesplit}
\end{proof}

\begin{lemma}[Next is only updated on initialised nodes]
\label{lm:u-init}
  For all~$\vec{e} \in G.\enum{\ghb}$,
  and all~$\recEndOf(\vec{e})<i<\len{\vec{e}}$,
  and all~$x \in \Addr$,
  if\/ $ \vec{e}(i) \in \Updates_{\loc[x.nxt]}$
  then~$ x $ is initialised at~$\vec{e}(j)$ for some~$j$
  with $\recEndOf(\vec{e})<j<i$,
  and $ \vec{e}(j) \ghb-> \vec{e}(i) $.
  Note that his implies~$x$ is initialised at $\vec{e}(i)$ too.
\end{lemma}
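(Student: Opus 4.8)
I want to show that if $\vec{e}(i)$ updates $\loc[x.nxt]$, then $x$ was already initialised at some earlier $\ghb$-point $\vec{e}(j)$, with $\vec{e}(j) \ghb-> \vec{e}(i)$.

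\textbf{Approach.} The plan is to proceed by a case analysis on which line of the implementation generates the update $\vec{e}(i) \in \Updates_{\loc[x.nxt]}$. By inspection of the traces in \cref{fig:traces-ins,fig:traces-del,fig:traces-find}, the only actions producing an \emph{update} (as opposed to a plain write) on a \p{nxt} field are: the successful-insert CAS $\U{x}{nxt}{\tup{0,c}}{\tup{0,n}}$ at \ref{op:insert-ok}, the trim CAS $\U{x}{nxt}{\tup{0,c}}{\tup{0,s}}$ at \ref{op:trim}, and the successful-delete CAS $\U{x}{nxt}{\tup{0,s}}{\tup{1,s}}$ at \ref{op:delete-ok}. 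In each of these three cases I need to exhibit the initialising event for~$x$ and establish the $\ghb$ ordering.

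\textbf{Key steps.} In every one of the three cases, the update on $\loc[x.nxt]$ is $\po$-preceded, within the same operation, by a read of the form $(r \of \R{x}{nxt}{\tup{\wtv,\wtv}})$ — this comes from the enclosing call to \ref{op:find}, which reads $x.\p{nxt}$ before the CAS is ever attempted (for inserts and deletes, $x$ is the predecessor/current node returned by \p{find}; for trim, $x$ is the predecessor \p{p}). I would first argue that $(r, \vec{e}(i)) \in \po \subs \ppo$ (the read-before-update edge is preserved since it is not a $(\Writes \union \Flushes)\times\Reads$ edge and neither endpoint is a return), hence $(r,\vec{e}(i)) \in \ghb$, so $r = \vec{e}(j')$ for some $j' < i$ with $\recEndOf(\vec{e}) < j'$. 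Then I apply \cref{lm:reach-init} (or \cref{lm:pointed-at-init}, depending on whether I route through the predecessor's next-read) to the read $r$ to conclude that $x$ is initialised at $r = \vec{e}(j')$. Since being initialised at an earlier $\ghb$-point is inherited — the defining chain for \cref{def:init-addr} ends at $e_0 \ghb?-> r \ghb-> \vec{e}(i)$ — I obtain that $x$ is initialised at some $\vec{e}(j)$ with $\recEndOf(\vec{e}) < j < i$ and $\vec{e}(j) \ghb-> \vec{e}(i)$, which is exactly the claim. The final remark that $x$ is initialised at $\vec{e}(i)$ too follows because the initialisation witness for $x$ is $\ghb$-before $\vec{e}(j)$ and hence $\ghb$-before $\vec{e}(i)$.

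\textbf{Main obstacle.} The three cases are structurally nearly identical and largely routine once the read-before-update $\po$ edge is identified; the only subtlety is being careful that the update is genuinely preceded by a \p{nxt}-read of \emph{the same address}~$x$ in each trace. For the successful delete this is immediate ($c$ is read in \p{find} and then CASed), but for the insert and trim cases I must confirm the predecessor node $\p{p}$ (resp.\ $x$) whose \p{nxt} is CASed is precisely the node whose \p{nxt} was read in \ref{op:find}, rather than a re-read after a restart; the trace structure in \cref{fig:traces-ins,fig:traces-del} guarantees this because each successful CAS is the last event of a loop iteration that began with the matching \p{find}. I expect the bookkeeping of matching addresses across the \p{find}/CAS boundary to be the only place requiring genuine care, while the $\ghb$-ordering and the appeal to \cref{lm:reach-init}/\cref{lm:pointed-at-init} is mechanical.
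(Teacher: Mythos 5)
Your proposal is correct and takes essentially the same route as the paper's proof: a case analysis on the three \p{nxt}-updating CASes (\ref{op:insert-ok}, \ref{op:trim}, \ref{op:delete-ok}), in each case exhibiting a \po-preceding read whose edge to the update lies in $\ppo \subs \ghb$, and then discharging initialisation via \cref{lm:reach-init}/\cref{lm:pointed-at-init}. The only (harmless) difference is in the insert case, where the paper routes through a pointed-at read $(r \of \R{z}{nxt}{\tup{\wtv,x}})$ and \cref{lm:pointed-at-init} with a separate $x=\var{head}$ subcase, whereas you uniformly use the direct read of \loc[x.nxt] at \ref{ev:find:rp} in \ref{op:find} together with \cref{lm:reach-init}, which absorbs the head case.
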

\begin{proof}
  By case analysis on the update events.
  \begin{casesplit}
    \case[$ (\vec{e}(i) \of \U{x}{nxt}{\wtv}{\wtv}) $
          from \ref{op:insert-ok}]
      The event is \po-preceded by \ref{op:find}$(\wtv,x,\wtv)$.
      If $x=\var{head}$ the statement is proven as~$\var{head}$
      is always initialised.
      Otherwise, there is an event~$ (r\of \R{z}{nxt}{\tup{\wtv,x}}) $
      \po-preceding~$\vec{e}(i)$.
      We thus have~$ r \ghb-> \vec{e}(i) $ and $ r = \vec{e}(j) $
      for some $\recEndOf(\vec{e})<j<i$.
      By~\cref{lm:pointed-at-init} we have~$x$ is initialised at~$\vec{e}(j)$.
    \case[$ (\vec{e}(i) \of \U{x}{nxt}{\wtv}{\tup{\wtv,y}}) $
          from \ref{op:trim}]
      The event is \po-preceded in all instances of \ref{op:trim},
      by a read event $ (r \of \R{x}{nxt}{\wtv}) $.
      So, similarly to the previous case,
      we can apply~\cref{lm:reach-init} and conclude~$x$ is initialised at~$r$.
    \case[$ (\vec{e}(i) \of \U{x}{nxt}{\tup{0,y}}{\tup{1,y}}) $
          from \ref{op:delete-ok}]
      The event is \po-preceded by a read event $ (r \of \R{x}{nxt}{\wtv}) $.
      So, similarly to the previous case,
      we can apply~\cref{lm:reach-init} and conclude~$x$ is initialised at~$r$.
    \qedhere
  \end{casesplit}
\end{proof}

\begin{lemma}
\label{lm:newnode-init}
  For all~$\vec{e} \in G.\enum{\ghb}$,
  and all~$\recEndOf(\vec{e})<i<\len{\vec{e}}$,
  if $ \addrOf(\vec{e}(i)) = n \in \Addr  $ then
  either $ \vec{e}(i) $ is generated by \ref{op:newnode}$(n, k, v, c)$,
  or~$n$ is initialised at~$\vec{e}(i)$.
\end{lemma}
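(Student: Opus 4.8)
For all $\vec{e} \in G.\enum{\ghb}$, and all $\recEndOf(\vec{e})<i<\len{\vec{e}}$, if $\addrOf(\vec{e}(i)) = n \in \Addr$ then either $\vec{e}(i)$ is generated by \ref{op:newnode}$(n,k,v,c)$, or $n$ is initialised at $\vec{e}(i)$.

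The plan is to do a case analysis on the syntactic origin of the event $\vec{e}(i)$, using the characterization of implementation-generated events in \cref{fig:traces-find,fig:traces-ins,fig:traces-del}. Every event $\vec{e}(i)$ with $\recEndOf(\vec{e})<i$ is generated by one of the operation traces (it is neither an initialisation nor a recovery event, since those all lie at indices $\le \recEndOf(\vec{e})$ by the definition of $\recEndOf$). So I would enumerate the action shapes that can mention address $n$: reads/writes/updates on the fields \p{key}, \p{val}, \p{nxt}, \p{valid}, \p{insFl}, \p{delFl}, allocations $\Alloc{n}$, and flushes $\FL{n}$. The goal reduces to showing that, apart from the write events constituting \ref{op:newnode}$(n,\wtv,\wtv,\wtv)$ themselves, every other access to an address $n$ is preceded in the trace by evidence that makes $n$ initialised at $\vec{e}(i)$.

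First I would dispatch the allocation case: if $\vec{e}(i) \of \Alloc{n}$, this event together with the \po-following $\W{n}{nxt}{\wtv}$ is precisely (the beginning of) \ref{op:newnode}$(n,k,v,c)$, so the first disjunct holds. The writes to \p{key} and \p{val} in \ref{op:newnode} likewise fall under the first disjunct. For the remaining accesses I would invoke the already-established initialisation lemmas. Reads of \loc[n.nxt] and \loc[n.key] are handled directly by \cref{lm:reach-init}, which gives that $n$ is initialised at $\vec{e}(i)$. Updates of \loc[n.nxt] are covered by \cref{lm:u-init}, and writes to \loc[n.nxt] that are not the \ref{op:newnode} write are updates, hence also covered (or covered by \cref{lm:w-nxt-init} for the initialised-target side, though here we need the \emph{source} address $n$ to be initialised). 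The residual work is the accesses to the auxiliary fields \p{valid}, \p{insFl}, \p{delFl}: each of these, inspecting the traces \ref{op:makevalid}, \ref{op:flush-ins}, \ref{op:flush-del}, is \po-preceded within the same operation by a \p{nxt}-access or a \p{find} traversal that reaches $n$, so by the same reachability/initialisation lemmas applied to that earlier event, $n$ is initialised there and hence at $\vec{e}(i)$ by \ghb-monotonicity.

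The main obstacle I anticipate is not any single deep argument but rather the bookkeeping: ensuring that for \emph{every} field-access shape in \emph{every} trace, the \po-preceding witness event is itself \ghb-ordered before $\vec{e}(i)$ (so that it appears at an earlier index in $\vec{e}$ and the initialisation-at-that-event transports forward), and confirming that the \p{makeValid}/\p{flush} auxiliary routines indeed only ever operate on a node that the enclosing operation has already reached via \p{find} or holds as a freshly allocated $n$. The cleanest framing is to observe that $\po \subs \ghb$ on the relevant $\UReads$/$\UWrites$ pairs (via \cref{lm:hb-implies-ghb} and the definition of $\ppo$), so any \po-earlier access to $n$ within the same call is $\vec{e}(j)$ for some $j<i$ and transfers initialisation via \cref{def:init-addr}'s closure under \ghb. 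I would therefore structure the proof as: (i) the event is operation-generated; (ii) if it is the \ref{op:newnode} allocation or the \p{key}/\p{val}/first \p{nxt} write, take the first disjunct; (iii) otherwise exhibit a \po-preceding (hence \ghb-preceding) access to $n$ and close by the appropriate lemma among \cref{lm:reach-init}, \cref{lm:u-init}, and \cref{lm:pointed-at-init}.
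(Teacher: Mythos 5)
Your overall decomposition coincides with the paper's: events of \ref{op:newnode}$(n,k,v,c)$ go into the first disjunct, reads of $\loc[n.nxt]$ and $\loc[n.key]$ are discharged by \cref{lm:reach-init}, updates on $\loc[n.nxt]$ by \cref{lm:u-init}, and initialisation is transported forward along $\po \subs \ghb$ using the $\ghb$-closure built into \cref{def:init-addr}.

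There is, however, one family of events for which your closing step (iii) fails as stated: the events of \ref{op:makevalid}$(n)$ and \ref{op:flush-ins}$(n)$ inside a \emph{successful} insert, acting on the freshly allocated node~$n$. For these events none of your three cited lemmas applies: there is no \po-preceding \emph{read} of $\loc[n.nxt]$ or $\loc[n.key]$ (the only preceding access to $\loc[n.nxt]$ is the \emph{write} in \ref{op:newnode}, which \cref{lm:reach-init} does not cover), no event of the form $\R{x}{nxt}{\tup{\wtv,n}}$ exists yet (no other thread has read a pointer to~$n$, so \cref{lm:pointed-at-init} is silent), and there is no update on $\loc[n.nxt]$ by this call (so \cref{lm:u-init} is silent). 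Your phrase ``holds as a freshly allocated~$n$'' gestures at the case but does not close it: freshness alone does not make a node initialised under \cref{def:init-addr}. What does is the successful CAS at \ref{lp:insert-ok}, $\U{p}{nxt}{\tup{0,c}}{\tup{0,n}}$, which is exactly the witness~$e_0$ demanded by the definition and which \po-precedes (hence $\ghb$-precedes) all the \ref{op:makevalid}/\ref{op:flush-ins} events in the trace \ref{op:insert-ok}. The paper's proof singles this case out explicitly for precisely this reason. With that one repair --- instantiating \cref{def:init-addr} directly with the linking CAS instead of appealing to the three lemmas --- your argument is complete and matches the paper's.
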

\begin{proof}
  For the events of
  \ref{op:newnode}$(n,k,v,c)$
  the claim holds trivially.
Any event~$e$ generated in
  \ref{op:makevalid}$(n)$ and \ref{op:makevalid}$(n)$
  in \ref{op:insert-ok}$(k,v)$,
  is \po-preceded by a call to
  \ref{op:newnode}$(n,k,v,c)$
  and comes \po-after \ref{lp:insert-ok};
  that makes~$n$ initialised at~$e$.
Every other event that acts on
  $ \loc[n.key] $, $ \loc[n.val] $, $ \loc[n.valid] $,
  $ \loc[n.insFl] $, and $ \loc[n.delFl] $
  is \po-preceded by a read to~$ \loc[n.nxt] $,
  so the claim follows from \cref{lm:reach-init}.
The claim holds for reads and updates acting on $\loc[n.nxt]$
  by \cref{lm:u-init,lm:reach-init}.
\end{proof}

\begin{definition}[Memory safety]
\label{def:mem-safety}
  A call with id~$c \in G.\CallId \dunion \set{\recoveryId} $
  is \emph{memory safe}
  if for all $e \in G.\EvOfCid{c}$,
  $\actOf(e) \ne \Err$ and if $ e \in \UWrites $ then
  there is some~$e_0$ with $e_0 \in G.\Init \lor (e_0 \of \Alloc{\wtv})$
  and $(e_0,e) \in \mo?$. \end{definition}

\begin{lemma}[Memory safety]
\label{lm:mem-safety}
  All calls in~$G.\CallId$ are memory safe.
\end{lemma}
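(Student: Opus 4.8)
The plan is to obtain \cref{lm:mem-safety} as a corollary of the initialisation lemmas already established along a $\ghb$-enumeration, namely \cref{lm:pointed-at-init,lm:reach-init,lm:newnode-init}, which carry the inductive weight. Fix an operation call $c \in G.\CallId$ and an event $e \in G.\EvOfCid{c}$; I would verify the two clauses of \cref{def:mem-safety} in turn. The key standing observation is that $\nullptr$ is never \emph{initialised}: by \cref{def:init-addr} an initialised address $x$ comes with either an event $\Alloc{x}$ or a $G.E_0$-write to $\loc[x.nxt]$, and neither exists for $\nullptr$. Hence every initialised address is a genuine, allocated, non-null location.

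For the clause $\actOf(e) \neq \Err$, recall that the traces are closed by truncating with $\Err$ exactly on a null or unallocated dereference, so it suffices to show every dereferenced address is legal. I would argue by contradiction: suppose $e \of \Err$. By the structure of the trace figures the offending address $x$ is either $\var{head}$, a node freshly produced by \ref{op:newnode}, or the target $x$ of a preceding read $\R{y}{nxt}{\tup{\wtv,x}}$. The first is allocated in $G.E_0$; the second is non-null by the allocation semantics; and in the third, \cref{lm:pointed-at-init} makes $x$ initialised, so $x \neq \nullptr$ and $x$ is allocated. In all cases $x$ is a legal address, so the $\Err$ branch is unreachable, contradicting $e \of \Err$. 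Thus no operation call emits $\Err$.

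For the write clause, suppose $e \in \UWrites$ acts on $\loc[n.f]$. If $e$ is itself $\Alloc{n}$, take $e_0 = e$ and use reflexivity of $\mo?$. Otherwise, by \cref{lm:newnode-init}, either $e$ is generated by \ref{op:newnode}$(n,\dots)$ — so the same call contains $\Alloc{n}$ with $\Alloc{n} \ppo-> e \subs \ghb$ (a write–write $\po$ edge) — or $n$ is initialised at $e$, so \cref{def:init-addr} yields either a matching $\Alloc{n}$ reaching $e$ through its $\po;\ghb$ witness, or, for a node already present at the start of the era, an init write $e_0 \in G.\Init$ on $\loc[n.f]$. Here I would use that $G.\Init \in \durable(\kvs)$ writes \emph{every} durable field of every node (in particular $\loc[n.f]$), and that recovery only rewrites $\p{nxt}$ on such nodes or freshly allocates $\var{head}/\var{tail}$. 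To orient $\mo$: when $e_0 \in G.\Init$ the inclusion $\Init_{\loc} \times (\UWrites_{\loc}\setminus\Init_{\loc}) \subs \mo$ gives $e_0 \mo-> e$; when $e_0 = \Alloc{n}$, both events lie on $\loc[n.f]$ (allocation initialises all fields) and $\mo$ is total there, so it suffices to exclude $e \mo-> e_0$, which would give $e \ghb-> \Alloc{n}$ (by $\mo \subs \ghb$) and hence a $\ghb$-cycle with the established $\Alloc{n} \ghb-> e$, contradicting \cref{lm:ghb-acyclic}. Either way $(e_0, e) \in \mo?$.

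The bulk of the difficulty lives in the prior lemmas, so what remains is mostly a routine but lengthy case analysis over the trace figures confirming that every dereference traces back to $\var{head}$, a fresh allocation, or a pointed-at read. The genuine obstacles I expect are two corner cases: (i) at the recovery boundary, tying an operation write on a pre-existing node to a $G.\Init$ backing write of the \emph{same} field, which leans on the shape of $\durable(\kvs)$ and on recovery only touching $\p{nxt}$; and (ii) checking that the tail node, whose $\p{nxt}$ is $\nullptr$, is never dereferenced — which holds because the key-ordering guards ($c.\p{key} \geq k$ with tail key $+\infty$) force \ref{op:find} to stop at the tail rather than follow its null link.
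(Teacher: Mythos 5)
Your proposal is correct and takes essentially the same route as the paper, whose own proof is only a three-line sketch: error-freedom by induction over $\ghb$-enumerations using \cref{lm:pointed-at-init} (together with \cref{lm:w-nxt-init}, a sibling of the \cref{lm:reach-init} you invoke), and the write clause as a direct consequence of \cref{lm:newnode-init}. Your extra details --- orienting $\mo$ via $\ghb$-acyclicity (\cref{lm:ghb-acyclic}), the recovery-boundary backing writes from $\durable(\kvs)$, and the tail/$\nullptr$ guard --- are correct elaborations of steps the paper leaves implicit.
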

\begin{proof}
  Absence of error can be proven by proving that
    for all~$\vec{e} \in G.\enum{\ghb}$ and
    all~$i<\len{\vec{e}}$,
    $ \vec{e}(i) \ne \Err $,
  using an induction on~$i$,
  combined with~\cref{lm:pointed-at-init,lm:w-nxt-init}.

  That every written-to address is allocated
  is straightforward consequence of \cref{lm:newnode-init}.
\end{proof}

\begin{lemma}
\label{lm:init-key-val}
  For all~$\vec{e} \in G.\enum{\ghb}$,
  for all~$\recEndOf(\vec{e})<i<\len{\vec{e}}$,
  if~$y$ is initialised at $ \vec{e}(i) $
  then for all~$j$ with $
    i \leq j \leq \len{\vec{e}}
  $
  we have
  $
    \mem{\upto{\vec{e}}{i}}(\loc[y.key]) =
    \mem{\upto{\vec{e}}{j}}(\loc[y.key])
  $
  and
  $
    \mem{\upto{\vec{e}}{i}}(\loc[y.val]) =
    \mem{\upto{\vec{e}}{j}}(\loc[y.val])
  $.
\end{lemma}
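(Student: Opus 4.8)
The plan is to prove \cref{lm:init-key-val} by showing that once an address~$y$ is initialised at position~$i$, its \p{key} and \p{val} fields are never overwritten at any later position, so the \ghb-induced memory on these two locations remains constant from~$i$ onwards. The core observation is that, by \cref{lm:unique-writes} (invariant \ref{inv:key-val-once}), for each address~$x$ there is at most one write to $\loc[x.key]$ and at most one to $\loc[x.val]$ in $G.E \setminus G.E_0$, and these can only exist if there are no writes to~$x$ among the initial/recovery events. All such writes are generated by \ref{op:newnode}, which \po-precedes the allocation-linking update that makes the node reachable. Thus the plan reduces to arguing that the unique write to $\loc[y.key]$ (resp.\ $\loc[y.val]$) must \ghb-precede the initialisation witness for~$y$, hence appears at or before position~$i$ in any \ghb-enumeration, and therefore no write to these locations occurs after position~$i$.

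First I would unfold \cref{def:init-addr}: since~$y$ is initialised at $\vec{e}(i)$, there is a witness chain ending in an update $(e_0 \of \U{z}{nxt}{\wtv}{\tup{0,y}})$ with $e_0 \ghb?\!-> \vec{e}(i)$, where $e_0$ is \po-preceded by the \ref{op:newnode} events $\Alloc{y} \po\!-> \W{y}{nxt}{\wtv}$ (or $y$ is initialised through $G.E_0$). I would split into the two cases of \cref{def:init-addr}. In the $G.E_0$ case, \cref{lm:unique-writes} tells us there can be \emph{no} writes to $\loc[y.key]$ or $\loc[y.val]$ outside $G.E_0$, so the memory value is fixed by the initial/recovery events and is therefore constant for all $j \geq i$ by \eqref{prop:linkfree:init-and-rec} and the fact that $i > \recEndOf(\vec{e})$. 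In the \ref{op:newnode} case, the unique writes $w_{\lbl{key}} \of \W{y}{key}{\wtv}$ and $w_{\lbl{val}} \of \W{y}{val}{\wtv}$ satisfy $w_{\lbl{key}} \po\!-> w_{\lbl{val}} \po\!-> w_{\lbl{nxt}} \po\!-> e_0$ (using the \po-ordering from \cref{lm:unique-writes} together with the structure of \ref{op:newnode}), and hence $w_{\lbl{key}}, w_{\lbl{val}} \in \ppo \subs \ghb$-before~$e_0 \ghb?\!-> \vec{e}(i)$. This forces both writes to occur at positions $\leq i$ in the enumeration~$\vec{e}$.

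Having located both writes at or before position~$i$, I would then invoke uniqueness again: since these are the \emph{only} writes to $\loc[y.key]$ and $\loc[y.val]$ in the entire execution, for every $j$ with $i \leq j \leq \len{\vec{e}}$ there is no write to either location in the interval $\tailfrom{\vec{e}}{i+1}$ up to position~$j$. By \cref{def:mem}, $\mem{\upto{\vec{e}}{j}}(\loc[y.key])$ is determined by the \ghb-last write to $\loc[y.key]$ in $\upto{\vec{e}}{j}$, which coincides with the \ghb-last such write in $\upto{\vec{e}}{i}$; the same holds for $\loc[y.val]$. This yields $\mem{\upto{\vec{e}}{i}}(\loc[y.key]) = \mem{\upto{\vec{e}}{j}}(\loc[y.key])$ and likewise for \p{val}, as required.

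The main obstacle I anticipate is the careful handling of the \po-ordering between the \ref{op:newnode} writes and the reachability-establishing update~$e_0$, and in particular making rigorous that the \po-chain from $w_{\lbl{key}}$ through $w_{\lbl{nxt}}$ to the linking update translates cleanly into \ghb edges via~$\ppo$. Because $\ppo$ excludes $(\Writes \union \Flushes) \times \Reads$ edges, one must check that each consecutive pair in the chain is actually retained in $\ppo$ (all the relevant steps are write-to-write or write-to-update, so they survive, but this needs explicit verification). A secondary subtlety is confirming that the \ref{op:newnode} writes genuinely fall within the range $\recEndOf(\vec{e}) < \cdot < \len{\vec{e}}$ so that we are not confusing them with initialisation writes; this follows from \cref{lm:unique-writes}'s stipulation that post-$G.E_0$ writes and $G.E_0$ writes to the same address are mutually exclusive, but the case split must be maintained consistently throughout.
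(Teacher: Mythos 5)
Your proposal is correct and takes essentially the same route as the paper's proof: both unfold \cref{def:init-addr} into the $G.E_0$ and \ref{op:newnode} cases, locate the unique write to $\loc[y.key]$ (resp.\ $\loc[y.val]$) at some position $i_0 < i$ in the enumeration, and conclude by \cref{lm:unique-writes} that no later write can change these locations. The $\ppo \subs \ghb$ check for the write-to-write \po-chain that you flag as a subtlety is precisely the justification the paper leaves implicit, so your more explicit treatment is a faithful elaboration rather than a divergence.
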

\begin{proof}
  We prove the statement for~\p{key}, the one for~\p{val} is analogous.
  Since $y$ is initialised at~$\vec{e}(i)$,
  either~$\loc[y.key]$ is written to by an event in $G.E_0$,
  or by the write event in \ref{op:newnode}$(y,\wtv[3])$.
  In either case there is a~$i_0<i$ such that
  $(\vec{e}(i_0) \of \W{y}{key}{k})$.
  By \cref{lm:unique-writes} $\vec{e}(i_0)$ is the only such write event
  of~$\vec{e}$, from which we get
  $ \mem{\upto{\vec{e}}{i}}(\loc[y.key]) = k =
    \mem{\upto{\vec{e}}{j}}(\loc[y.key]) $.
\end{proof}

\begin{lemma}
\label{lm:reading-nxt-key-val}
  For all~$\vec{e} \in G.\enum{\ghb}$,
  for all~$\recEndOf(\vec{e})<i<\len{\vec{e}}$,
  if\/ $ (\vec{e}(i) \of \R{x}{nxt}{\tup{\wtv,y}}) $
  then for all~$j$ with $
    i \leq j \leq \len{\vec{e}}
  $
  we have
  $
    \mem{\upto{\vec{e}}{i}}(\loc[y.key]) =
    \mem{\upto{\vec{e}}{j}}(\loc[y.key]) \ne \bot
  $
  and
  $
    \mem{\upto{\vec{e}}{i}}(\loc[y.val]) =
    \mem{\upto{\vec{e}}{j}}(\loc[y.val]) \ne \bot
  $.
\end{lemma}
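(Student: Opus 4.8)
The plan is to obtain this lemma as a short corollary of \cref{lm:pointed-at-init,lm:init-key-val}: it is essentially \cref{lm:init-key-val} specialised to nodes that are the target of a \p{nxt}-read, strengthened with the freshness guarantee that the \p{key} and \p{val} fields are already defined (non-$\bot$) at position~$i$. The stability half will come for free from the cited lemmas; the only genuinely new content is the non-$\bot$-ness, which I will read off directly from how nodes become initialised.

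First I would apply \cref{lm:pointed-at-init} to the read $(\vec{e}(i) \of \R{x}{nxt}{\tup{\wtv,y}})$; since $\recEndOf(\vec{e}) < i < \len{\vec{e}}$, this yields that $y$ is initialised at $\vec{e}(i)$ (and in particular $y \in \Addr$). Feeding this into \cref{lm:init-key-val} immediately gives, for all $i \le j \le \len{\vec{e}}$,
\[
  \mem{\upto{\vec{e}}{i}}(\loc[y.key]) = \mem{\upto{\vec{e}}{j}}(\loc[y.key])
  \qquad\text{and}\qquad
  \mem{\upto{\vec{e}}{i}}(\loc[y.val]) = \mem{\upto{\vec{e}}{j}}(\loc[y.val]),
\]
which is precisely the stability part of the claim.

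It then remains to show both sides are non-$\bot$. Unfolding \cref{def:init-addr}, ``$y$ initialised at $\vec{e}(i)$'' means either (i) some $e_0 \in G.E_0 \inters \UWrites_{\loc[y.nxt]}$ exists, or (ii) $\Alloc{y} \po-> \W{y}{nxt}{\wtv} \po!-> (e_0 \of \U{z}{nxt}{\wtv}{\tup{0,y}}) \ghb?-> \vec{e}(i)$. In case (i), \eqref{prop:linkfree:init-and-rec} gives $\mem{\upto{\vec{e}}{\recEndOf(\vec{e})}} \in \recovered(\kvs)$, and by the form of recovered memories (\cref{def:linkfree:recoverable,def:linkfree:volatile}) a node with a defined \p{nxt} field also has defined \p{key} and \p{val} fields; hence there are writes to $\loc[y.key]$ and $\loc[y.val]$ at positions $\le \recEndOf(\vec{e}) < i$. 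In case (ii), the \ref{op:newnode} block producing $e_0$ supplies $\W{y}{key}{\wtv} \po-> \W{y}{val}{\wtv} \po-> \W{y}{nxt}{\wtv}$, and these write-to-write and write-to-update \po-edges lie outside $(\Writes \union \Flushes) \times \Reads$, so they are preserved in $\ppo \subs \ghb$; chaining with $\W{y}{nxt}{\wtv} \po!-> e_0 \ghb?-> \vec{e}(i)$ places both writes $\ghb$-before $\vec{e}(i)$, i.e.\ at positions strictly below $i$ in $\vec{e}$. Either way $\mem{\upto{\vec{e}}{i}}(\loc[y.key]) \ne \bot$ and $\mem{\upto{\vec{e}}{i}}(\loc[y.val]) \ne \bot$, which closes the argument.

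I do not expect a real obstacle here, as the whole proof is bookkeeping on top of the two cited lemmas. The one point needing a moment's care is case (ii), namely verifying that the \ref{op:newnode} writes to \p{key} and \p{val} genuinely survive into $\ghb$ — which holds only because their eventual \po-successor $e_0$ is an update rather than a plain read, so none of the intervening edges has the excluded $\Writes \times \Reads$ shape. This is exactly the observation already implicit in the proof of \cref{lm:init-key-val}, so it can be cited rather than re-derived.
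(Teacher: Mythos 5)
Your proposal is correct and takes essentially the same route as the paper, whose entire proof is ``By \cref{lm:pointed-at-init,lm:init-key-val}.'' Your explicit unfolding of \cref{def:init-addr} to justify the non-$\bot$ claims is a sound elaboration of what the paper leaves implicit: the statement of \cref{lm:init-key-val} does not itself assert definedness, but its proof exhibits the common value as one actually written before position~$i$ (either in $G.E_0$, forced by the shape of $\recovered(\kvs)$, or by the \ref{op:newnode} writes carried into $\ghb$ exactly as you argue), which is the same observation you make.
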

\begin{proof}
  By \cref{lm:pointed-at-init,lm:init-key-val}.
\end{proof}

\begin{lemma}
\label{lm:immutable-key}
  For all~$\vec{e} \in G.\enum{\ghb}$,
  and all~$\recEndOf(\vec{e})<i<\len{\vec{e}}$,
  if $ (\vec{e}(i) \of \R{x}{key}{k}) $
  then for all~$j$ with $
    i \leq j \leq \len{\vec{e}}
  $
  we have
  $
    \mem{\upto{\vec{e}}{j}}(\loc[x.key]) = k
  $.
\end{lemma}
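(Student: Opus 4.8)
The plan is to assemble three earlier results: \cref{lm:reach-init}, which guarantees that the address being read is initialised; \cref{lm:uread-key-ghb-last}, which pins the value seen by a \p{key}-read to the most recent preceding write in the $\ghb$-enumeration; and \cref{lm:init-key-val}, which shows that once an address is initialised its \p{key} field is frozen for the rest of $\vec{e}$. The claim is then essentially the conjunction of these, so the work is in threading them together and handling the indices.

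First I would note that $\vec{e}(i) \in \Reads_{\loc[x.key]}$, so \cref{lm:reach-init} applies and tells us that $x$ is initialised at $\vec{e}(i)$. Since $\recEndOf(\vec{e}) < i$ we have $i \geq 1$, hence the prefix $\upto{\vec{e}}{i-1}$ is well-defined and \cref{lm:uread-key-ghb-last} gives $\mem{\upto{\vec{e}}{i-1}}(\loc[x.key]) = \rvalOf(\vec{e}(i)) = k$.

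Next I would establish the value at position $i$ itself. Because $\vec{e}(i)$ is a read, it does not belong to $\UWrites_{\loc[x.key]}$; hence by \cref{def:mem} extending the prefix by $\vec{e}(i)$ does not change the most recent write to $\loc[x.key]$, so $\mem{\upto{\vec{e}}{i}}(\loc[x.key]) = \mem{\upto{\vec{e}}{i-1}}(\loc[x.key]) = k$. Finally, since $x$ is initialised at $\vec{e}(i)$, \cref{lm:init-key-val} yields $\mem{\upto{\vec{e}}{j}}(\loc[x.key]) = \mem{\upto{\vec{e}}{i}}(\loc[x.key]) = k$ for every $j$ with $i \leq j \leq \len{\vec{e}}$, which is exactly the claim.

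I do not expect a genuine obstacle here: the argument is a routine composition of prior lemmas. The only point requiring minor care is the observation that a read event leaves $\mem$ at $\loc[x.key]$ unchanged, so that the value $k$ witnessed at position $i-1$ carries over to position $i$; this is precisely the small bridge needed between the ``last write before the read'' phrasing of \cref{lm:uread-key-ghb-last} and the ``frozen from position $i$ onward'' phrasing of \cref{lm:init-key-val}.
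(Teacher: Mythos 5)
Your proof is correct, and it reaches the claim by a slightly different routing through the paper's lemma graph than the paper's own proof does. The paper argues directly by the case split ``either the key-read is \po-preceded by a read $\R{y}{nxt}{\tup{\wtv,x}}$, or $x=\var{head}$'': in the first case it invokes \cref{lm:reading-nxt-key-val} (itself a combination of \cref{lm:pointed-at-init} and \cref{lm:init-key-val}), and in the second it observes that $\loc[\var{head}.key]$ is only written in $G.E_0$. You instead invoke \cref{lm:reach-init}, whose proof already encapsulates exactly that head/non-head case analysis, and then compose it with \cref{lm:uread-key-ghb-last} and \cref{lm:init-key-val}. The mathematical substance is identical, but your decomposition is arguably cleaner and, in one respect, more complete: the paper's two-line proof only cites the ``frozen from the \p{nxt}-read onward'' fact and leaves implicit the bridge identifying the frozen value with the value~$k$ actually returned by the read, whereas you make that bridge explicit via \cref{lm:uread-key-ghb-last} together with the observation that a read event leaves $\mem{\wtv}$ unchanged at $\loc[x.key]$, so the value at position $i-1$ carries over to position $i$ where \cref{lm:init-key-val} takes over. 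All side conditions check out: the hypothesis $\recEndOf(\vec{e})<i<\len{\vec{e}}$ of the lemma matches what \cref{lm:reach-init}, \cref{lm:uread-key-ghb-last} and \cref{lm:init-key-val} require, and $\R{x}{key}{k}$ indeed places $\vec{e}(i)$ in $\Reads_{\loc[x.key]}$ as \cref{lm:reach-init} demands.
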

\begin{proof}
  Every read of~$\loc[x.key]$ is either
  \po-preceded by a read $ \R{y}{nxt}{\tup{\wtv,x}} $
  or~$x=\var{head}$.
  In the former case we can apply~\cref{lm:reading-nxt-key-val} and we are done.
  In the latter case the statement is true as the only events
  writing to~$\loc[\var{head}.key]$ are in $G.E_0$.
\end{proof}

\begin{lemma}[Visible links]
\label{lm:visible-links}
  For all~$\vec{e} \in G.\enum{\ghb}$,
  and all~$\recEndOf(\vec{e})<i<\len{\vec{e}}$,
  if
$ \vec{e}(i) \in \UReads_{\loc[x.nxt]} $
  and
  $ \rvalOf(\vec{e}(i)) = \tup{b,y} $,
  then there is a~$\recEndOf(\vec{e}) \leq j \leq i$ such that
  $
    \mem{\upto{\vec{e}}{j}}(\loc[x.nxt]) = \tup{b,y}
  $,
  and both~$x$ and~$y$ are initialised at $\vec{e}(j)$.
\end{lemma}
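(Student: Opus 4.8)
The plan is to pin down, for the read value $\tup{b,y}$, a position $j\le i$ at which $\loc[x.nxt]$ still holds $\tup{b,y}$ and at which both $x$ and $y$ have already been initialised, splitting on whether $\vec{e}(i)$ is a plain read or an update. If $\vec{e}(i)\in\Reads_{\loc[x.nxt]}$, the argument is immediate: by \cref{lm:uread-nxt-ghb-last} we have $\mem{\upto{\vec{e}}{i-1}}(\loc[x.nxt])=\rvalOf(\vec{e}(i))=\tup{b,y}$, and since a read does not write, also $\mem{\upto{\vec{e}}{i}}(\loc[x.nxt])=\tup{b,y}$. Taking $j=i$, \cref{lm:reach-init} gives that $x$ is initialised at $\vec{e}(i)$ and \cref{lm:pointed-at-init} gives that $y$ is initialised at $\vec{e}(i)$, closing this case.

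The substantive case is $\vec{e}(i)\in\Updates_{\loc[x.nxt]}$. Here \cref{lm:upd-read-last} yields $\mem{\upto{\vec{e}}{i-1}}(\loc[x.nxt])=\rvalOf(\vec{e}(i))=\tup{b,y}$, but the update overwrites the field, so I must locate some $j\le i-1$. Let $j_0$ be the largest index $\le i-1$ with $\vec{e}(j_0)\in\UWrites_{\loc[x.nxt]}$ (it exists since the value is defined); then $\wvalOf(\vec{e}(j_0))=\tup{b,y}$ and $\mem{\upto{\vec{e}}{j}}(\loc[x.nxt])=\tup{b,y}$ for every $j$ in the window $[j_0,i-1]$, as no later write touches $\loc[x.nxt]$ before $i$. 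For the endpoint $y$: when $\recEndOf(\vec{e})<j_0$, \cref{lm:w-nxt-init} shows $y$ is initialised at $\vec{e}(j_0)$, and I would record the observation that initialisation is monotone along $\ghb$ (the witness edge $\ghb?->$ of \cref{def:init-addr} composes with any outgoing $\ghb$, and the $G.E_0$ clause is position-independent), so $y$ stays initialised at every $\ghb$-later event. When instead $j_0\le\recEndOf(\vec{e})$, then $\vec{e}(j_0)\in G.E_0\inters\UWrites_{\loc[x.nxt]}$, so $x$ is initialised at every event by the first clause of \cref{def:init-addr}, while $\mem{\upto{\vec{e}}{\recEndOf(\vec{e})}}\in\recovered(\kvs)\subseteq\volatile(\kvs)$ forces $\loc[y.nxt]$ to be written in $G.E_0$ too (the case $y=\nullptr$ being vacuous), so $j=\recEndOf(\vec{e})$ works.

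It remains to find, in the update case with $\recEndOf(\vec{e})<j_0$, a $j\in[j_0,i-1]$ where $x$ is initialised. If $\vec{e}(j_0)\in\Updates_{\loc[x.nxt]}$, then \cref{lm:u-init} gives $x$ initialised at $\vec{e}(j_0)$ and I take $j=j_0$. The delicate sub-case is when $\vec{e}(j_0)$ is the single program write $\W{x}{nxt}{\tup{0,c}}$ of \ref{op:newnode} (so $b=0$, $y=c$), where $x$ is not yet initialised at $\vec{e}(j_0)$: by \cref{lm:u-init} it is nevertheless initialised at $\vec{e}(i)$, and since \cref{lm:unique-writes} rules out any $G.E_0$ write to $\loc[x.nxt]$, this is witnessed by the second clause of \cref{def:init-addr}, giving an initialising update $e_0=(\U{y'}{nxt}{\wtv}{\tup{0,x}})$ with $\vec{e}(j_0)\po!->e_0\ghb?->\vec{e}(i)$. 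Writing $i_0$ for the index of $e_0$, I would show $j_0<i_0\le i-1$: the first inequality is $\po\subseteq\ghb$, and for the second I use $y'\ne x$ (the predecessor $y'=p$ is a pre-existing node while $x$ is freshly allocated in the same call), so $e_0$ writes a location $\ne\loc[x.nxt]$ and hence $e_0\ne\vec{e}(i)$, turning $\ghb?->$ into strict $\ghb$. Then $i_0$ lies in the value window, $x$ is initialised at $e_0=\vec{e}(i_0)$ reflexively, and $y$ is initialised there because $\vec{e}(j_0)\po->e_0$ is a write-to-update edge lying in $\ppo\subseteq\ghb$, so monotonicity carries $y$'s initialisation from $\vec{e}(j_0)$; taking $j=i_0$ completes the proof.

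The main obstacle is precisely this last sub-case: when $\loc[x.nxt]$ has not been updated since the node's creation, neither the read lemmas nor \cref{lm:u-init} directly supply a position inside the value window at which $x$ is already initialised, so I must explicitly summon the initialising update $e_0$, verify it falls in the window (the $y'\ne x$ freshness check that rules out $e_0=\vec{e}(i)$), and transport the initialisation of $y$ forward along $\ghb$ via monotonicity of \cref{def:init-addr}.
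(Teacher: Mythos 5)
Your proof is correct; it arrives at the same three value-source cases as the paper's proof, but organises them differently and is in places more explicit. The paper performs a single case analysis on the $\rf$-source $w$ of $\vec{e}(i)$, treating reads and updates uniformly: $w\in G.E_0$ (take $j=\recEndOf(\vec{e})$), $w$ the \ref{op:newnode} write (take $j$ to index the initialising update $u$ with $w \po!-> u \ghb-> \vec{e}(i)$), or $w$ an update (take $j$ to index $w$ itself). You instead split first on the type of $\vec{e}(i)$ and dispatch plain reads with the witness $j=i$, which the paper never uses: \cref{lm:uread-nxt-ghb-last} fixes the value, and \cref{lm:reach-init,lm:pointed-at-init} already certify initialisation of $x$ and $y$ at $\vec{e}(i)$. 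In the update case your maximal write $\vec{e}(j_0)$ is the paper's $\rf$-source in disguise (by the argument behind \cref{lm:upd-read-last}), and your three sub-cases mirror the paper's exactly. What your route buys is rigor where the paper is terse: in the \ref{op:newnode} sub-case the paper simply asserts the strict edge $u \ghb-> \vec{e}(i)$, while you reconstruct the witness $e_0$ from \cref{def:init-addr} (using \cref{lm:unique-writes} to exclude the $G.E_0$ clause), verify that its index falls strictly inside the value window, and isolate the $\ghb$-monotonicity of initialisation that both proofs tacitly rely on. Two small repairs are needed: the inequality $j_0<i_0$ does not follow from ``$\po\subseteq\ghb$'', which is false under Px86 --- it holds because the edge $\vec{e}(j_0)\po!-> e_0$ goes from a write to an \emph{update}, hence lies in $\ppo\subseteq\ghb$; and the freshness claim $y'\ne x$ deserves a one-line justification (if $y'=x$, then $x$ would already be initialised at the \ref{op:find} reads that \po-precede $\Alloc{x}$, closing a $\ghb$ cycle through the chain of \cref{def:init-addr}), though this is at the level of rigor the paper itself adopts in the corresponding step.
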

\begin{proof}
  By \cref{lm:reach-init},~$x$ is initialised at~$\vec{e}(i)$.
  Consider the $\rf$ edge to~$\vec{e}(i)$,
  $ (w,\vec{e}(i)) \in \rf $.
  We consider three cases:
  \begin{casesplit}
    \case[$w\in G.E_0$]
      Then, by construction,~$y$ is also initialised in $G.E_0$,
      and so both~$x$ and~$y$ are initialised at~$j=\recEndOf(\vec{e})<i$
      and~$ \mem{\upto{\vec{e}}{\recEndOf(\vec{e})}}(\loc[x.nxt]) = \wvalOf(w) = \rvalOf(\vec{e}(i)) = \tup{b,y} $.
    \case[$w\in\Writes \setminus G.E_0$]
      Then~$w$ is generated by~\ref{op:newnode}$(x,\wtv[3])$
      and so we have
      \[
        (w \of \W{x}{nxt}{\tup{0,y}})
        \po!->
        (u \of \U{z}{nxt}{\wtv}{\tup{0,x}})
        \ghb->
        \vec{e}(i).
      \]
      There exists then~$\recEndOf(\vec{e})<j<i$ such that~$ \vec{e}(j) = u $.
      By \cref{lm:unique-writes} and \cref{lm:u-init}
      $\mem{\upto{\vec{e}}{j}}(\loc[x.nxt]) = y$
      and~$y$ is initialised, as desired.
    \case[$w \in \Updates \setminus G.E_0$]
      If $w\in\Updates$ we have $(w,\vec{e}(i))\in\ghb$,
      so $ w=\vec{e}(j) $ for some~$\recEndOf(\vec{e})<j<i$.
      By \cref{lm:u-init,lm:w-nxt-init}, $x$ and~$y$ are initialised at~$w$, and
      clearly $\mem{\upto{\vec{e}}{j}}(\loc[x.nxt]) = \tup{b,y}$, as desired.
      \qedhere
  \end{casesplit}
\end{proof}

\begin{lemma}[Links are sorted~\ref{inv:sorted}]
\label{lm:sorted-links}
  Let~$ \vec{e} \in G.\enum{\ghb} $, and
  let~$ M_i = \mem{\upto{\vec{e}}{i}} $.
  For every
      ${\recEndOf(\vec{e}) \leq i<\len{\vec{e}}}$,
  and every $x,y \in \Addr$,
  if~$ M_i(\loc[x.nxt]) = \tup{\wtv,y} $
  then
     $ M_i(\loc[x.key]) < M_i(\loc[y.key]) $.
\end{lemma}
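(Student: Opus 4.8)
\textbf{Proof plan for \cref{lm:sorted-links}.}
The plan is to prove the statement by induction on $i$, the position in a fixed $\ghb$-enumeration $\vec{e}$, strengthening the claim to the full invariant \ref{inv:sorted}: for every $i \geq \recEndOf(\vec{e})$ and every pair $x,y$ with $M_i(\loc[x.nxt]) = \tup{\wtv,y}$, we have $M_i(\loc[x.key]) < M_i(\loc[y.key])$. The base case $i = \recEndOf(\vec{e})$ follows from assumption~\eqref{prop:linkfree:init-and-rec}, which gives $M_{\recEndOf(\vec{e})} \in \recovered(\kvs) \subs \volatile(\kvs)$; the sortedness property is then exactly the constraint~\eqref{cond:linkfree:sorted-links} baked into the definition of $\volatile$.

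For the inductive step, I would compare $M_{i-1}$ and $M_i$, which differ only in the effect of $\vec{e}(i)$. If $\vec{e}(i)$ is not a write/update to any $\loc[z.nxt]$ or $\loc[z.key]$ field, the memory restricted to the relevant locations is unchanged and the claim transfers directly from the induction hypothesis (using \cref{lm:init-key-val} to argue that $\p{key}$ fields, once written, are immutable, so reading back a key value is stable). The substantive cases are when $\vec{e}(i)$ modifies a $\loc[x.nxt]$ pointer: by \cref{lm:w-nxt-init} and \cref{lm:u-init} the only such events are the writes of \ref{op:newnode} and the updates of \ref{op:insert-ok}, \ref{op:trim}, and \ref{op:delete-ok}. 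For each I would use \cref{lm:visible-links} to recover the \emph{old} link value that the relevant \po-preceding read observed, and then extract the ordering facts that the operation's guard conditions enforce. For example, an \ref{op:insert-ok} update $\U{p}{nxt}{\tup{0,c}}{\tup{0,n}}$ is \po-preceded in \ref{op:find} by reads witnessing $\loc[p.key] < k$ and $\loc[c.key] \geq k$ with $\loc[n.key] = k$; combined with \cref{lm:immutable-key} to freeze these key values, this yields $M_i(\loc[p.key]) < M_i(\loc[n.key]) < M_i(\loc[c.key])$, re-establishing sortedness across both the new $p \to n$ and $n \to c$ links. The \ref{op:trim} case swings $p$ past a marked node $c$ to its successor $s$ where $M(\loc[c.nxt]) = \tup{\wtv,s}$, so the new link $p \to s$ inherits transitively from $\loc[p.key] < \loc[c.key] < \loc[s.key]$ using the induction hypothesis on the two pre-existing links. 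The \ref{op:delete-ok} update only flips the mark bit and preserves the target address, so the key ordering is untouched.

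The main obstacle I anticipate is the bookkeeping needed to connect the value a read observed (via $\rf$) with the $\ghb$-induced memory $M_j$ at the moment it is inserted into the enumeration. The guard conditions in the traces (e.g.\ $k_i < k$ in \ref{op:find}) constrain the \emph{read} values, but the invariant is stated about the \emph{memory} $M_i$, so I must repeatedly invoke \cref{lm:uread-key-ghb-last} and \cref{lm:uread-nxt-ghb-last} to certify that a read reflects the latest preceding write in $\vec{e}$, and \cref{lm:immutable-key} to certify that key values do not subsequently change. A secondary subtlety is that the relevant \po-preceding reads occur at positions $j < i$ in the enumeration but the freshly written node $n$ in the insert case is only initialised at $\vec{e}(i)$; here I would lean on \cref{lm:init-key-val} to carry forward the value of $\loc[n.key]$ written by \ref{op:newnode} to position $i$. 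Assembling these local-to-global translations for each of the four update shapes is routine but must be done carefully to avoid implicitly assuming the very sortedness being proved for a link that has not yet been validated.
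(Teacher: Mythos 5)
Your overall strategy coincides with the paper's proof: induction along a fixed $\ghb$-enumeration, base case from the recovered-memory assumption, and a case analysis on the link-mutating events, with \cref{lm:visible-links}, \cref{lm:immutable-key} and \cref{lm:init-key-val} doing the local-to-global bookkeeping. Your treatment of \ref{op:trim} (two applications of the induction hypothesis at the earlier indices supplied by \cref{lm:visible-links}, with \cref{lm:init-key-val} transporting the key values to position~$i$) and of \ref{op:delete-ok} is exactly the paper's, and you correctly anticipate the need for \cref{lm:uread-key-ghb-last} and \cref{lm:uread-nxt-ghb-last} to connect guard reads with the $\ghb$-induced memory.

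There is, however, one concrete gap. You declare that ``the substantive cases are when $\vec{e}(i)$ modifies a $\loc[x.nxt]$ pointer'', but the invariant also mentions \p{key} fields, and \ref{op:newnode} contains two events that change invariant-relevant memory without touching any pre-existing link: the allocation $\Alloc{a}$ (which makes $\loc[a.key] \mapsto 0$ and $\loc[a.nxt] \mapsto \tup{0,\nullptr}$ defined) and the write $\W{a}{key}{k}$. For these the out-link from~$a$ is vacuous ($\nullptr \notin \Addr$), but you must additionally prove that \emph{no node points at~$a$} at position~$i$, \ie $\A y.\, M_i(\loc[y.nxt]) \ne \tup{\wtv,a}$; otherwise an in-link $y \to a$ would have its right-hand key silently changed and the inductive step for these two events is unchecked. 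The paper proves this by contradiction: any write installing such a link forces, via \cref{lm:w-nxt-init} (resp.\ \cref{lm:u-init} together with \cref{lm:unique-writes}), that~$a$ is initialised at that write, which yields either a double allocation of~$a$ (excluded by \cref{def:exec}) or an index ordering contradicting the $\ghb$-enumeration. None of the lemmas you cite supplies this freshness argument, so it must be added. A related, smaller misplacement: the link $n \to c$ enters the $\ghb$-induced memory at the $\W{n}{nxt}{\tup{0,c}}$ event of \ref{op:newnode}, not at the CAS of \ref{op:insert-ok}, so its sortedness must be verified at that earlier position (using exactly the ingredients you name, the guard reads of \ref{op:find} and \ref{op:insert-ok} plus \cref{lm:immutable-key}); your plan as written validates it only at the linearizing update, by which point the link has already been live in memory for several steps.
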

\begin{proof}
  By induction on~$i$.
  \begin{induction}
    \step[Base case~$i=\recEndOf(\vec{e})$]
      The statement holds by the assumption that
      $M_{\recEndOf(\vec{e})} \in \recovered(\kvs)$ for some~$\kvs$.
    \step[Induction step~$i>\recEndOf(\vec{e})$]
      Assume the statement holds for every~$\recEndOf(\vec{e}) \leq j < i$.
      If $\vec{e}(i) \in \Reads \union \Flushes \union \Rets$
      then $M_i = M_{i-1}$ and the induction hypothesis yields the lemma.
      For any given~$x$ and~$y$ we can also invoke the induction hypothesis if
      $\locOf(\vec{e}(i)) \inters \set{\loc[x.nxt], \loc[x.key], \loc[y.nxt], \loc[y.key]} = \emptyset$.
      We then do a case analysis on the remaining possible write events.
      For each event modifying address~$a$ we need to check
      the invariant holds for $ x=a $ and for~$y=a$.
      \begin{casesplit}
      \case[$ (\vec{e}(i) \of \Alloc{a}) $ from \ref{op:newnode}]
        We show that
          $ M_i(\loc[a.nxt]) = \tup{0,\nullptr} $, and
          $ \A y. {M_i(\loc[y.nxt]) \ne \tup{\wtv,a}} $.
        In both cases the statement is trivially satisfied.
        \begin{casesplit}
          \case*[{$ M_i(\loc[a.nxt]) = \tup{0,\nullptr} $}]
            By \cref{def:mem}.
          \case*[{$ \A y. M_i(\loc[y.nxt]) \ne \tup{\wtv,a} $}]
            Assume by contradiction that $M_i(\loc[y.nxt]) = \tup{\wtv,a}$
            for some~$y$.
            Then there is a~$j<i$ such that
            $ \vec{e}(j) \in \UWrites_{\loc[y.nxt]}$
            and $ \wvalOf(\vec{e}(j)) = \tup{\wtv,a} $.
            If~$j \leq \recEndOf(\vec{e})$,
            by $M_{\recEndOf(\vec{e})} \in \recovered(\kvs)$
            we would have a double allocation of~$a$.
            Otherwise, by~\cref{lm:w-nxt-init}~$a$ is initialised at $\vec{e}(i)$
            which implies that either there are two allocations of~$a$,
            or that~$j>i$.
            In both cases we reach a contradiction.
        \end{casesplit}
\case[$ (\vec{e}(i) \of \W{a}{key}{k}) $ from \ref{op:newnode}]
        Analogously to the previous case, we show that
        $ M_i(\loc[a.nxt]) = \tup{\wtv,\nullptr} $, and
        $ \A y. {M_i(\loc[y.nxt]) \ne \tup{\wtv,a}} $.
        \begin{casesplit}
          \case*[{$ M_i(\loc[a.nxt]) = \tup{\wtv,\nullptr} $}]
            Assume, by contradiction, that
            $M_i(\loc[a.nxt]) = \tup{\wtv,z}$ for some~$z$.
            Then there is a~$j<i$ such that
            $ \vec{e}(j) \in \UWrites_{\loc[a.nxt]}$
            and $ \wvalOf(\vec{e}(j)) = \tup{\wtv,a} $.
            From \cref{lm:unique-writes} we know the only write to
            $\loc[a.nxt]$ can be the one \po-\emph{following} $\vec{e}(i)$ in \ref{op:newnode}, which means it cannot be~$ \vec{e}(j) $.
            So~$\vec{e}(j)$ must be an update.
            From \cref{lm:u-init} we know that~$a$ must then be initialised
            at~$\vec{e}(j)$, which implies either that~$a$
            has been allocated twice,
            or that~$j>i$.
            In both cases we reach a contradiction.
          \case*[{$ \A y. M_i(\loc[y.nxt]) \ne \tup{\wtv,a} $}]
            Analogous to the corresponding subcase of the allocation case.
        \end{casesplit}
\case[$ (\vec{e}(i) \of \W{a}{nxt}{\tup{0,c}}) $ from \ref{op:newnode}]
        Similarly to the previous cases one can show
        $ \A y. M_i(\loc[y.nxt]) \ne \tup{\wtv,a} $.
        It remains to show that $M_i(\loc[a.key]) < M_i(\loc[c.key])$.
        First,
          $\vec{e}(i)$ is \po-preceded by the write to $\loc[a.key]$
          in \ref{op:newnode}$(a,k,\wtv[2])$, so~$M_i(\loc[a.key]) = k$.
        Second,
          $\vec{e}(i)$ is \po-preceded by two read events,
          $(r_0 \of \R{c}{key}{k'})$ with~$k \leq k'$ in \ref{op:find},
          and
          $(r_1 \of \R{c}{key}{k''})$ with~$k \ne k''$ in \ref{op:insert-ok}.
        By \cref{lm:immutable-key} $M_i(\loc[c.key])=k'=k''$.
        As required we obtain $k < k''$.
\case[$ (\vec{e}(i) \of \U{a}{nxt}{\wtv}{\tup{\wtv,n}}) $
            from \ref{op:insert-ok}]
        We need to show $ M_i(\loc[a.key]) < M_i(\loc[n.key]) $.
        The update \po-follows the events
        $ (r_a \of \R{a}{key}{k_a}) $ in \ref{op:find}$(k,a,\wtv)$ with~$ k_a < k $,
        and
        $ (w_c \of \W{n}{key}{k}) $ in \ref{op:newnode}$(n,k,\wtv[2])$.
        As in the previous case, by \cref{lm:immutable-key}
        we obtain $ M_i(\loc[a.key]) = k_a < k = M_i(\loc[n.key]) $.
\case[$ (\vec{e}(i) \of \U{a}{nxt}{\tup{\wtv,c}}{\tup{\wtv,s}}) $
            from \ref{op:trim}]
        We need to show $ M_i(\loc[a.key]) < M_i(\loc[s.key]) $.
        The event $\vec{e}(i)$ is \po-preceded by the event
        $ (\vec{e}(i_c) \of \R{c}{nxt}{\tup{\wtv,s}}) $.
        By applying \cref{lm:visible-links} to $\vec{e}(i)$ and $\vec{e}(i_c)$
        we obtain~$ \recEndOf(\vec{e}) \leq j_a \leq i $
              and $ \recEndOf(\vec{e}) \leq j_c \leq i_c < i $
        with
        \begin{align*}
          M_{j_a}(\loc[a.nxt]) &= c&
          M_{j_c}(\loc[c.nxt]) &= s
        \end{align*}
        and
          $a,c$ initialised at~$\vec{e}(j_a)$ and
          $c,s$ initialised at~$\vec{e}(j_c)$.
        By \cref{lm:init-key-val}, we have
        \begin{align*}
          M_{j_a}(\loc[a.key]) &=
          M_{i}(\loc[a.key])
          \\
          M_{j_a}(\loc[c.key]) &=
          M_{j_c}(\loc[c.key]) =
          M_{i}(\loc[c.key])
          \\
          M_{j_c}(\loc[s.key]) &=
          M_{i}(\loc[s.key])
        \end{align*}
        Since~$\recEndOf(\vec{e}) \leq j_c<i$ and~$\recEndOf(\vec{e}) \leq j_a<i$ we can apply the induction hypothesis
        on both and obtain:
        \[
          M_{j_a}(\loc[a.key]) <
          M_{j_a}(\loc[c.key]) =
          M_{j_c}(\loc[c.key]) <
          M_{j_c}(\loc[s.key])
        \]
        which implies
        $
          M_{i}(\loc[a.key]) <
          M_{i}(\loc[s.key])
        $
        as desired.
\case[$ (\vec{e}(i) \of \U{a}{nxt}{\tup{0,c}}{\tup{1,c}}) $
            from \ref{op:delete-ok}]
        Immediate from the induction hypothesis.
\qedhere
      \end{casesplit}
  \end{induction}
\end{proof}

\begin{lemma}[Reading links orders keys]
\label{lm:read-link-ord-keys}
  Let~$ \vec{e} \in G.\enum{\ghb} $,
  $\recEndOf(\vec{e}) < i$,
  $ \vec{e}(i) \in \UReads_{\loc[c.nxt]} $, and
  $ \rvalOf(\vec{e}(i)) = \tup{\wtv, s} $.
  Then for all~$j\geq i$
  $
    \mem{\upto{\vec{e}}{j}}(\loc[c.key]) < \mem{\upto{\vec{e}}{j}}(\loc[s.key]).
  $
\end{lemma}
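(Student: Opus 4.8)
The plan is to obtain the statement as a short corollary of three earlier invariants: \cref{lm:visible-links}, \cref{lm:sorted-links}, and \cref{lm:init-key-val}. Writing $M_j = \mem{\upto{\vec{e}}{j}}$ as in \cref{lm:sorted-links}, the essential observation is that although $\vec{e}(i)$ \emph{reads} the value $\tup{\wtv, s}$ from $\loc[c.nxt]$, the memory $M_i$ need not satisfy $M_i(\loc[c.nxt]) = \tup{\wtv,s}$: when $\vec{e}(i)$ is itself an update of $\loc[c.nxt]$ (recall $\vec{e}(i)\in\UReads_{\loc[c.nxt]}$, which includes updates), $M_i(\loc[c.nxt]) = \wvalOf(\vec{e}(i))$ records the \emph{new} link. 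Hence one cannot apply sortedness at position $i$ directly, and must instead step back to a position where the link $c \to s$ genuinely resides in memory.

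First I would apply \cref{lm:visible-links} to $\vec{e}(i)$: since $\vec{e}(i) \in \UReads_{\loc[c.nxt]}$ with $\rvalOf(\vec{e}(i)) = \tup{\wtv, s}$, there is an index $j'$ with $\recEndOf(\vec{e}) \leq j' \leq i$ such that $M_{j'}(\loc[c.nxt]) = \tup{\wtv, s}$ and, crucially, both $c$ and $s$ are initialised at $\vec{e}(j')$. Next I would invoke \cref{lm:sorted-links} at position $j'$ (legitimate, as its range of applicability includes $j' = \recEndOf(\vec{e})$): from $M_{j'}(\loc[c.nxt]) = \tup{\wtv, s}$ we obtain $M_{j'}(\loc[c.key]) < M_{j'}(\loc[s.key])$.

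Finally I would transfer these key values forward to an arbitrary $j \geq i$. Because $c$ and $s$ are initialised at $\vec{e}(j')$, \cref{lm:init-key-val} gives that their $\p{key}$ fields are constant across all positions $\geq j'$; as $j \geq i \geq j'$, we get $M_j(\loc[c.key]) = M_{j'}(\loc[c.key])$ and $M_j(\loc[s.key]) = M_{j'}(\loc[s.key])$. Chaining the three facts yields $M_j(\loc[c.key]) < M_j(\loc[s.key])$, as required.

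I expect the only delicate point to be the boundary case $j' = \recEndOf(\vec{e})$, where \cref{lm:init-key-val} is not applicable verbatim (it requires a strictly post-recovery index). Here I would instead argue directly: an initialised node has its $\p{key}$ field written at most once (\cref{lm:unique-writes}), by an event occurring at a position $\leq j'$, whether that write lies in $G.E_0$ or in a \ref{op:newnode} block; hence the $\p{key}$ value is constant for all positions $\geq j'$ regardless, and the same chaining closes the argument. This contributes no real difficulty — the whole lemma reduces to invoking the three cited invariants, the one genuine subtlety being the need to retreat from position $i$ to the visibility witness $j'$ so that sortedness applies to a link that is actually present in memory.
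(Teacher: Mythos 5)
Your proof is correct and takes essentially the same route as the paper's: retreat from position~$i$ to an earlier index where the link $c \to s$ is actually present in memory, apply \cref{lm:sorted-links} there, and transfer the key values forward using key-immutability. The only difference is packaging — the paper case-splits on $\vec{e}(i) \in \Updates$ (via \cref{lm:upd-read-last}) versus $\vec{e}(i) \in \Reads$ (via \cref{lm:reach-init} and \cref{lm:reading-nxt-key-val}), whereas you obtain the witness index uniformly from \cref{lm:visible-links}; your treatment is, if anything, slightly more careful, since you explicitly handle the boundary case $j' = \recEndOf(\vec{e})$ and the stability of $c$'s own key, both of which the paper's terse proof leaves implicit.
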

\begin{proof}
  If $ \vec{e}(i) \in \Updates $ the claim follows from~
  \cref{lm:upd-read-last,lm:sorted-links}.
If $ \vec{e}(i) \in \Reads $,
  by \cref{lm:reach-init} $c$ is initialised at $\vec{e}(i)$,
  so there is a~$\recEndOf(\vec{e}) \leq j_0<i$ with
  $ \mem{\upto{\vec{e}}{j_0}}(\loc[c.nxt]) = \tup{\wtv,s} $.
  By \cref{lm:sorted-links}
  \[
    \mem{\upto{\vec{e}}{j_0}}(\loc[c.key])
      < \mem{\upto{\vec{e}}{j_0}}(\loc[s.key]).
  \]
  By \cref{lm:reading-nxt-key-val} for all $j > j_0$,
  $ \mem{\upto{\vec{e}}{j}}(\loc[c.key]) < \mem{\upto{\vec{e}}{j}}(\loc[s.key]) $
  as desired.
\end{proof}

\begin{definition}[Reachable node]
\label{def:reach-node}
  Given some memory~$M$,
  an address~$x\in \Addr$ we write~$\reach(M,x,y)$,
  if there exist~$z_0,\dots,z_n \in \Addr$
  with $z_0 = x$, $z_n = y$ and
    $ M(\loc[z_j.nxt]) = \tup{\wtv,z_{j+1}} $
  for all~$j<n$.
  We write~$\reach(M,x)$ for $\reach(M,\var{head},x)$.
\end{definition}

\begin{lemma}[Marking is irreversible~\ref{inv:marking-irrev}]
\label{lm:marking-irrev}
  Let~$ \vec{e} \in G.\enum{\ghb} $,
  and $\recEndOf(\vec{e}) < i$.
  If~$ \mem{\upto{\vec{e}}{i}}(\loc[x.nxt]) = \tup{1,y} $
  then for all~$j\geq i$,
    $ \mem{\upto{\vec{e}}{j}}(\loc[x.nxt]) = \tup{1,y} $.
\end{lemma}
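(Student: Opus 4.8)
The plan is to show that once $\loc[x.nxt]$ holds a marked value, no subsequent event in the $\ghb$-enumeration can overwrite it. The load-bearing structural fact, read off the traces in \cref{fig:traces-find,fig:traces-ins,fig:traces-del}, is that every \emph{update} to a $\p{nxt}$ field expects an \emph{unmarked} value: the updates from \ref{op:insert-ok}, \ref{op:trim} and \ref{op:delete-ok} all have the shape $\U{x}{nxt}{\tup{0,\wtv}}{\wtv}$, so their read value carries mark bit $0$. The only way to \emph{write} a marked value is the deletion update $\U{x}{nxt}{\tup{0,\wtv}}{\tup{1,\wtv}}$ of \ref{op:delete-ok} (or an initial event in $G.E_0$), while the only plain $\p{nxt}$ writes are the fresh ones of \ref{op:newnode}, which are unmarked.

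First I would let $i_0 \leq i$ be the largest index with $\vec{e}(i_0) \in \UWrites_{\loc[x.nxt]}$; this exists because $\mem{\upto{\vec{e}}{i}}(\loc[x.nxt]) = \tup{1,y}$ is defined, and by \cref{def:mem} we have $\wvalOf(\vec{e}(i_0)) = \tup{1,y}$. It then suffices to prove that $\loc[x.nxt]$ is never written after $i_0$, i.e. that there is no $k > i_0$ with $\vec{e}(k) \in \UWrites_{\loc[x.nxt]}$: this immediately yields $\mem{\upto{\vec{e}}{j}}(\loc[x.nxt]) = \tup{1,y}$ for every $j \geq i \geq i_0$, which is the claim.

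Suppose, towards a contradiction, that such indices exist and pick the smallest one, $k > i_0$. By minimality $\vec{e}(i_0)$ is still the last write to $\loc[x.nxt]$ in $\upto{\vec{e}}{k-1}$, so $\mem{\upto{\vec{e}}{k-1}}(\loc[x.nxt]) = \tup{1,y}$. I would then split on the kind of event $\vec{e}(k)$. If $\vec{e}(k)$ is an \emph{update}, then by the structural fact its read value has the form $\rvalOf(\vec{e}(k)) = \tup{0,z}$; but \cref{lm:upd-read-last} gives $\mem{\upto{\vec{e}}{k-1}}(\loc[x.nxt]) = \rvalOf(\vec{e}(k))$, whence $\tup{1,y} = \tup{0,z}$, contradicting $0 \neq 1$. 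This is the heart of the argument and collapses to a single line once the read-value shape is noted---it is precisely where the protocol's discipline, that marking is a CAS out of an unmarked state, does the work.

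The remaining case, in which $\vec{e}(k)$ is a plain write (an $\Alloc{x}$ or the $\p{nxt}$ write of \ref{op:newnode}), is ruled out by allocation reasoning rather than by the marking bit, and I expect it to be the only fiddly part. Here there is an allocation event $\Alloc{x}$ at some position $\leq k$. On the other hand, the write $\vec{e}(i_0)$ establishing the marked value is either an initial event in $G.E_0$ or, since \ref{op:newnode} writes are unmarked, a deletion update; in the latter case \cref{lm:u-init} forces $x$ to have already been allocated before $i_0 < k$. Either way we obtain two allocations of $x$ (or an allocation reusing an initial address), contradicting the no-double-allocation condition of \cref{def:exec} together with \cref{lm:unique-writes}. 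Combining the two cases contradicts the existence of $k$, completing the proof.
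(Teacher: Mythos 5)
Your proposal is correct and follows essentially the same route as the paper's (three-line) proof: the heart in both cases is that every update to a \p{nxt} field reads an unmarked value $\tup{0,\wtv}$, so \cref{lm:upd-read-last} rules out updating a marked node, while plain writes ($\Alloc{x}$ and the \ref{op:newnode} write) are excluded by allocation freshness via \cref{lm:u-init} and \cref{lm:unique-writes}. You merely make explicit the extremal-index bookkeeping (last write $i_0$, minimal offending $k$) that the paper's case analysis on event shapes leaves implicit.
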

\begin{proof}
  At allocation, a node is unmarked.
  By \cref{lm:u-init} the write event generated by~\ref{op:newnode}
  overwrites an unmarked node and leaves it unmarked.
  Every update event reads a tuple~$\tup{0,z}$,
  so by \cref{lm:upd-read-last} they never update a marked node.
\end{proof}

\begin{lemma}[\ref{op:trim} is called on marked nodes]
\label{lm:trim-on-marked}
  Let~$ \vec{e} \in G.\enum{\ghb} $, and
  $(\vec{e}(i) \of \U{p}{nxt}{\tup{0,c}}{\tup{0,s}})$
  be generated at~\ref{ev:trim-w-nxt} in \ref{op:trim}$(p,c)$.
  Then~$ \mem{\upto{\vec{e}}{i-1}}(\loc[c.nxt]) = \tup{1,s} $.
\end{lemma}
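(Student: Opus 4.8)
The plan is to trace back the two places where \ref{op:trim} can be invoked and exploit the fact that, in both, the node $c$ has already been witnessed to be marked before the body of trim begins; marking irreversibility then freezes $\loc[c.nxt]$ to $\tup{1,s}$. Throughout, fix $\vec{e}$ and the index $i$ with $\vec{e}(i)$ the update generated at~\ref{ev:trim-w-nxt} of some call to \ref{op:trim}$(p,c)$; note $\recEndOf(\vec{e}) < i$ since trim events belong to genuine operation calls, which \po-follow recovery. By inspection of \cref{fig:traces-find,fig:traces-del}, \ref{op:trim} is called only from within \ref{op:garbage} and from \ref{op:delete-ok}.

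First I would record the internal read of $c$'s link. By the shape of \ref{op:trim}, the update $\vec{e}(i)$ is \po-preceded, within the same call, by a read $(r_s \of \R{c}{nxt}{\tup{\wtv,s}})$, where $s$ is exactly the address appearing in $\wvalOf(\vec{e}(i)) = \tup{0,s}$. Since both $r_s$ and $\vec{e}(i)$ lie in $\UReads$, \cref{lm:hb-implies-ghb} gives $r_s \ghb-> \vec{e}(i)$, so $r_s = \vec{e}(j_s)$ for some $\recEndOf(\vec{e}) < j_s < i$, and \cref{lm:uread-nxt-ghb-last} yields $\mem{\upto{\vec{e}}{j_s-1}}(\loc[c.nxt]) = \tup{\wtv,s}$. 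It thus remains to show that $c$ is marked at position $j_s-1$ (so the elided bit is $1$) and that this value persists up to $i-1$.

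For the \ref{op:delete-ok} case, the body of trim is \po-preceded by the linearization point $(u_m \of \U{c}{nxt}{\tup{0,s'}}{\tup{1,s'}})$ at~\ref{lp:delete-ok}. As $u_m$ and $\vec{e}(i)$ are updates, \cref{lm:hb-implies-ghb} gives $u_m = \vec{e}(j_m)$ with $j_m < j_s$, and \cref{lm:upd-read-last} gives $\mem{\upto{\vec{e}}{j_m}}(\loc[c.nxt]) = \tup{1,s'}$. By \cref{lm:marking-irrev} this value is frozen at all later positions, in particular $\mem{\upto{\vec{e}}{j_s-1}}(\loc[c.nxt]) = \tup{1,s'}$; comparing with the read value of $r_s$ forces $s' = s$, and freezing up to $i-1$ gives $\mem{\upto{\vec{e}}{i-1}}(\loc[c.nxt]) = \tup{1,s}$. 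For the \ref{op:garbage} case, the trim call $\ref{op:trim}(n_0,n_i)$ (with $c = n_i$) is \po-preceded by a read $(r_m \of \R{c}{nxt}{\tup{1,\wtv}})$. Again by \cref{lm:hb-implies-ghb} we get $r_m = \vec{e}(j_m)$ with $j_m < j_s$, and \cref{lm:uread-nxt-ghb-last} gives $\mem{\upto{\vec{e}}{j_m-1}}(\loc[c.nxt]) = \tup{1,y_0}$ for some $y_0$; \cref{lm:marking-irrev} freezes this, so the read $r_s$ forces $y_0 = s$ and $\mem{\upto{\vec{e}}{i-1}}(\loc[c.nxt]) = \tup{1,s}$, as required.

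The main obstacle is the bookkeeping that lifts the intra-call \po-orderings to the ordering of positions in the \ghb-enumeration $\vec{e}$: this is precisely where \cref{lm:hb-implies-ghb} is used, relying on the fact that all the events involved (the marking read or update, the internal read $r_s$, and $\vec{e}(i)$) are reads or updates. A secondary point to get right is the identification of the address $s$: the $s$ in the goal is the written value of $\vec{e}(i)$, which by the trim trace is the value read by $r_s$, so matching it against the frozen link value closes the argument. The \ref{op:flush-del} prelude of trim is harmless, as it touches only the \p{delFl} field and issues a flush, never writing $\loc[c.nxt]$.
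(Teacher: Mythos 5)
Your proof is correct and takes essentially the same route as the paper's: a case split on the two call sites of \ref{op:trim} (in \ref{op:garbage} and in \ref{op:delete-ok}), using \cref{lm:uread-nxt-ghb-last} (resp.\ \cref{lm:upd-read-last}) to pin down the link value of~$c$ at the marking witness and \cref{lm:marking-irrev} to freeze that value through position~$i-1$. The only difference is that you make explicit two steps the paper leaves implicit — identifying the address~$s$ of the trim update with the frozen successor via the internal read $r_s$, and lifting the intra-call $\po$ edges to positions in~$\vec{e}$ via \cref{lm:hb-implies-ghb} — which is sound but not a different argument.
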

\begin{proof}
  \ref{op:trim} is used in two contexts.
  \begin{casesplit}
    \case*[In \ref{op:garbage}]
      We have
      $
        (e_c \of \R{c}{nxt}{\tup{1,s'}})
        \po->
        (e_s \of \R{c}{nxt}{\tup{\wtv,s}})
        \po->
        (\vec{e}(i) \of \U{p}{nxt}{\tup{0,c}}{\tup{0,s}}).
      $
      The $\po$ edges are contained in~$\ghb$ so
      $e_c = \vec{e}(i_c)$ and
      $e_s = \vec{e}(i_s)$
      for some $ i_c < i_s < i $.
      By \cref{lm:marking-irrev,lm:uread-nxt-ghb-last} we obtain
      $ s = s' $
      and
      $ \mem{\upto{\vec{e}}{i}}(\loc[c.nxt]) = \tup{1,s} $.

    \case*[In \ref{op:delete-ok}]
      $\vec{e}(i)$ is \po-preceded by an event
      $\vec{e}(i_c) \of \U{c}{nxt}{\tup{0,s}}{\tup{1,s}}$ with $i_c<i$.
      Therefore $ \mem{\upto{\vec{e}}{i_c}}(\loc[c.nxt]) = \tup{1,s} $
      and so by \cref{lm:marking-irrev}
      $ \mem{\upto{\vec{e}}{i}}(\loc[c.nxt]) = \tup{1,s} $.
      \qedhere
  \end{casesplit}
\end{proof}

\begin{lemma}[Member nodes are reachable~\ref{inv:reach}]
\label{lm:member-reach}
  Let~$ \vec{e} \in G.\enum{\ghb} $, and
  let~$ M_i = \mem{\upto{\vec{e}}{i}} $.
  For every
      ${\recEndOf(\vec{e}) \leq i<\len{\vec{e}}}$,
  and~$x \in \Addr$,
  if
    $x$ is initialised at~$\vec{e}(j)$ for some~$j\leq i$, and
    $ M_i(\loc[x.nxt]) = \tup{0,\wtv} $,
  then $\reach(M_i, x)$.
\end{lemma}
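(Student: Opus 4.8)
The plan is to prove \cref{lm:member-reach} by induction on $i$, in exactly the style of the earlier structural invariants (\cref{lm:sorted-links,lm:marking-irrev}): fix an enumeration $\vec{e} \in G.\enum{\ghb}$, abbreviate $M_i = \mem{\upto{\vec{e}}{i}}$, and in the inductive step reason by case analysis on the action of $\vec{e}(i)$. The observation that keeps this manageable is that $\reach(M_i,x)$ depends only on the \emph{pointer component} of the $\p{nxt}$ fields, since \cref{def:reach-node} uses $M(\loc[z_j.nxt]) = \tup{\wtv,z_{j+1}}$ and ignores the mark bit. Hence only an event that alters a pointer component can affect reachability, and I must additionally track when an event first makes an address \emph{initialised} in the sense of \cref{def:init-addr}, namely when it is of the linking form $\U{y}{nxt}{\wtv}{\tup{0,x}}$.

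For the base case $i = \recEndOf(\vec{e})$ I use $M_i \in \recovered(\kvs) = \durable(\kvs)\inters\volatile(\kvs)$ from \eqref{prop:linkfree:init-and-rec}. By \cref{def:linkfree:volatile} the unmarked nodes of $M_i$ are the member nodes $X_{\lbl{s}}$ together with the uninitialised nodes $X_{\lbl{u}}$; the members are reachable by condition~\eqref{cond:linkfree:members-reachable} (cf.\ \cref{lm:linkfree:volatile-members-reach}), while $X_{\lbl{u}}$ are the garbage nodes that recovery neither created nor linked. The remaining work here is the bookkeeping that an address which is initialised at some point $j \le \recEndOf(\vec{e})$ is never one of the $X_{\lbl{u}}$ nodes, so the hypothesis genuinely restricts us to members; this hinges on a careful reading of which $\p{nxt}$ writes \cref{def:init-addr} counts as originating in $G.E_0$, and is the delicate part of the base case.

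For the inductive step $i > \recEndOf(\vec{e})$, assume the claim for all smaller indices and split on $\vec{e}(i)$. (i) If $\vec{e}(i)$ is a read, flush, return, a write to a non-$\p{nxt}$ field, or the marking update of \ref{op:delete-ok} (whose value changes $\tup{0,s}$ to $\tup{1,s}$, leaving the pointer fixed), then no pointer component changes, so $\reach(M_i,\cdot)$ and $\reach(M_{i-1},\cdot)$ coincide; moreover no new address becomes initialised (the marking update ends in a \emph{marked} value, so it is not of the linking form), and the claim follows from the induction hypothesis. (ii) If $\vec{e}(i)$ is the $\Alloc{n}$ or the $\W{n}{nxt}{\tup{0,c}}$ of \ref{op:newnode}, then $n$ is freshly allocated, so by the freshness argument already used in \cref{lm:sorted-links} nothing points to $n$ and the reachability of every other address is unchanged; and $n$ itself is excluded by the hypothesis, since its linking update has not yet occurred and so $n$ is not initialised at any $j\le i$. (iii) The substantive cases are the two pointer-redirecting updates. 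For the insert update $\U{p}{nxt}{\tup{0,c}}{\tup{0,n}}$ of \ref{lp:insert-ok}, the linked node has $M_i(\loc[n.nxt]) = \tup{0,c}$ (by \cref{lm:visible-links} and immutability), so the redirection merely splices $n$ in front of $c$; since the successful \p{CAS} shows $p$ is unmarked and initialised, $p$ is reachable by the induction hypothesis, whence $n$ and every node previously reachable through $p$ stay reachable, and this is precisely the update making $n$ initialised. For the trim update $\U{p}{nxt}{\tup{0,c}}{\tup{0,s}}$ of \ref{ev:trim-w-nxt}, \cref{lm:trim-on-marked} gives $M_{i-1}(\loc[c.nxt]) = \tup{1,s}$, so the unlinked node $c$ is marked and points to $s$; redirecting $p \to s$ therefore removes only a marked node, and every \emph{unmarked} reachable node (which reached via $p \to c \to s$ or avoided $c$ entirely) remains reachable via $p \to s$.

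The main obstacle is case~(iii): the path-surgery arguments that reachability is preserved under pointer redirection, neither orphaning an unmarked node nor failing to re-attach the spliced/bypassed successors. This is where the auxiliary invariants carry the weight — \cref{lm:trim-on-marked} to guarantee the removed node is marked (so no unmarked node is orphaned), \cref{lm:visible-links} and \cref{lm:marking-irrev} to pin down the targets of the redirected and neighbouring links, and allocation freshness to rule out stray incoming pointers. The secondary difficulty is the base-case bookkeeping separating the initialised addresses from $X_{\lbl{u}}$; conceptually routine, it still demands care about the contribution of $G.E_0$ to \cref{def:init-addr}.
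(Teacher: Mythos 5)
Your proof follows the paper's argument essentially verbatim: induction on~$i$ along the \ghb-respecting enumeration, base case discharged by $M_{\recEndOf(\vec{e})} \in \recovered(\kvs)$, and a case analysis on the \p{nxt}-modifying events in which the insert splice is handled via \cref{lm:upd-read-last,lm:u-init} (plus the fact that the new node is initialised at~$\vec{e}(i)$ and not before), the trim bypass via \cref{lm:trim-on-marked}, and the marking update of \ref{op:delete-ok} is trivial since it leaves the pointer component unchanged. Your flagged base-case subtlety about $X_{\lbl{u}}$ nodes and the $G.E_0$ disjunct of \cref{def:init-addr} is real, but the paper dispatches the base case in one line with exactly the same (unelaborated) appeal to $\recovered(\kvs)$, so it does not distinguish your route from theirs.
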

\begin{proof}
  By induction on~$i$.
  \begin{induction}
    \step[Base case~$i=\recEndOf(\vec{e})$]
      The statement holds by the assumption that
      $M_{\recEndOf(\vec{e})} \in \recovered(\kvs)$ for some~$\kvs$.
    \step[Induction step~$i>\recEndOf(\vec{e})$]
      Assume the statement holds at~$i-1$.
      The invariant is clearly preserved by any event that does not
      modify the $\p{nxt}$ field.
      We consider all events that do.
      \begin{casesplit}
        \case[$ (\vec{e}(i) \of \Alloc{x}) $ or
              $ (\vec{e}(i) \of \W{x}{key}{k}) $ from \ref{op:newnode}]
          Trivial as $x$ is not initialised at~$\vec{e}(i)$.
\case[$ (\vec{e}(i) \of \U{x}{nxt}{\tup{0,c}}{\tup{0,n}}) $
              from \ref{op:insert-ok}]
          By \cref{lm:upd-read-last,lm:u-init}
          we have that~$x$ is initialised at $\vec{e}(j)$
          for some $\recEndOf(\vec{e}) < j < i$,
          and~$M_{i-1}(\loc[x.nxt])=\tup{0,c}$.
          Therefore, by induction hypothesis,
          $\reach(M_{i-1},x)$.
          Since~$n$ is initialised at~$\vec{e}(i)$ (and not before)
          we know~$M_i(\loc[n.nxt]) = \tup{0,c}$.
          The update gives us~$M_i(\loc[x.nxt]) = \tup{0,n}$.
          Every node reachable at~$i-1$ is still reachable at~$i$,
          and~$n$ is now initialised and reachable via~$x$.
\case[$ (\vec{e}(i) \of \U{x}{nxt}{\tup{0,c}}{\tup{0,s}}) $
              from \ref{op:trim}]
          By \cref{lm:upd-read-last},
          we have that
$M_{i-1}(\loc[x.nxt])=\tup{0,c}$.
By \cref{lm:trim-on-marked},
          $M_i(\loc[c.nxt])=\tup{1,s}$.
          The update gives us~$M_i(\loc[x.nxt]) = \tup{0,s}$.
          Every node that was reachable at~$i-1$, except~$c$,
          is still reachable at~$i$.
          Since~$c$ is marked at~$i$, the statement is satisfied.
\case[$ (\vec{e}(i) \of \U{x}{nxt}{\tup{0,c}}{\tup{1,c}}) $
              from \ref{op:delete-ok}]
          Trivial as the next pointers are not modified.
\qedhere
      \end{casesplit}
  \end{induction}
\end{proof}

\begin{lemma}
\label{lm:valid1-irrev}
For all~$\vec{e} \in G.\enum{\ghb}$,
  and all~$\recEndOf(\vec{e}) \leq i < j < \len{\vec{e}}$,
  if
  $\mem{\upto{\vec{e}}{i}}(\loc[x.valid]) = 1 $
  and
  $ \mem{\upto{\vec{e}}{j}}(\loc[x.valid]) = b $
  then
  $ b=1 $.
\end{lemma}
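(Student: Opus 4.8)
The plan is to exploit the fact that \p{valid} is a write-once-to-$1$ field: the only event that ever stores the value $0$ into $\loc[x.valid]$ is the allocation $\Alloc{x}$, and this allocation is forced to come $\ghb$-before (equivalently $\mo$-before) every event that stores $1$ into $\loc[x.valid]$. Since $\mem{\cdot}$ returns the written value of the highest-index write to a location, and since writes to a single location appear in $\vec{e}$ in $\mo$-order, this means the value of $\loc[x.valid]$ along $\vec{e}$ can only go from $0$ to $1$ and never back; so once it reads $1$ at position $i$, it remains $1$ at every later position $j$.

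First I would enumerate all writes to $\loc[x.valid]$. Inspecting the traces of \cref{fig:lf-set-code} and \cref{fig:traces-ins}, the field \p{valid} is written only by: the allocation $\Alloc{x}$ in \ref{op:newnode}, which zeroes all fields and hence writes $0$; the write $\W{x}{valid}{1}$ generated at \ref{pt:makevalid} inside \ref{op:makevalid}$(x)$, which writes $1$; and writes in $G.E_0$ performed by \p{init}/recovery, which set \p{valid} to $1$ for the head and tail. Because $\mo$ is total on $\UWrites_{\loc[x.valid]}$ and $\mo \subs \ghb$, all of these writes occur in $\vec{e}$ in a single linear ($\mo$) order, and by the no-double-allocation constraint of \cref{def:exec} (cf.\ \cref{lm:unique-writes}) there is at most one $\Alloc{x}$.

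The key step is to prove that this (unique, if present) allocation $\Alloc{x}$ is $\ghb$-before every value-$1$ write $w$ to $\loc[x.valid]$. For a \ref{op:makevalid} write $w \of \W{x}{valid}{1}$, \cref{lm:newnode-init} gives that $x$ is initialised at $w$ (as $w$ is not a \ref{op:newnode} event); unfolding \cref{def:init-addr}, either $\loc[x.nxt]$ is written by an event of $G.E_0$, or $\Alloc{x} \po-> \W{x}{nxt}{\wtv} \po!-> (e_0 \of \U{y}{nxt}{\wtv}{\tup{0,x}}) \ghb?-> w$, whence $\Alloc{x} \ghb-> w$. For a $G.E_0$ write $w$ (so $x$ is the current head or tail, allocated by the sequential recovery) the matching $\Alloc{x}$ $\po$-precedes $w$ within the recovery and $\po \subs \ghb$. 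In all cases the allocation precedes $w$ in $\vec{e}$.

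Finally I would assemble the conclusion. Assuming $\mem{\upto{\vec{e}}{i}}(\loc[x.valid]) = 1$, the $\ghb$-last write to $\loc[x.valid]$ among the first $i{+}1$ events is a value-$1$ write, so it is a \ref{op:makevalid} or $G.E_0$ write; by the previous step $\Alloc{x}$ (if it exists at all) lies at a position $\leq i$. Consequently no value-$0$ write to $\loc[x.valid]$ occurs at any position in $(i,j]$, so every write to $\loc[x.valid]$ in that range has value $1$; hence the $\ghb$-last such write up to $j$ has value $1$, giving $\mem{\upto{\vec{e}}{j}}(\loc[x.valid]) = b = 1$. I expect the main obstacle to be precisely this third step: arguing uniformly that the allocation precedes validation both for \emph{helping} \ref{op:makevalid} calls (where $x$ was not allocated by the calling operation) and for the initial head/tail nodes. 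Both are dispatched by \cref{lm:newnode-init} together with uniqueness of allocation, but care is needed to cover every route by which $x$ can come to be validated.
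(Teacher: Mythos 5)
Your proof is correct and takes essentially the same route as the paper's: the only value-$0$ writes to $\loc[x.valid]$ are the (unique) allocation of~$x$ or events of $G.E_0 = G.\Init \union G.\EvOfCid{\recoveryId}$, and both are forced to occur before position~$i$ --- the allocation because $\Alloc{x}$ is $\ghb$-before any value-$1$ write (via \cref{lm:newnode-init} and \cref{def:init-addr}, exactly as you argue), and the $G.E_0$ events because they occupy precisely the positions up to $\recEndOf(\vec{e}) \leq i$. One small slip in your write enumeration: the claim that the $G.E_0$ writes to \p{valid} all store~$1$ (head/tail only) is inaccurate, since the initial memory lies in $\durable(\kvs)$ and therefore contains value-$0$ writes to \p{valid} for garbage nodes (and an arbitrary \pre\tup{\durable,\recovered}-sound recovery may also write~$0$); this does not endanger your conclusion, because all such events precede $\recEndOf(\vec{e}) \leq i$ by definition of $\recEndOf$, which is exactly how the paper's proof dispatches them alongside the allocation case.
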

\begin{proof}
  By \cref{lm:newnode-init}, $x$ is initialised at~$\vec{e}(i)$.
  The only events writing~$0$ to $ \loc[x.valid] $
  are allocations of~$x$ or events in $G.E_0$.
  In both cases such events must appear before~$i$ in $\vec{e}$,
  therefore $
    b = \mem{\upto{\vec{e}}{j}}(\loc[x.valid]) =
    \mem{\upto{\vec{e}}{i}}(\loc[x.valid]) = 1
  $.
\end{proof}

\begin{lemma}
\label{lm:makevalid-irrev}
  For all~$\vec{e} \in G.\enum{\ghb}$,
  and all~$\recEndOf(\vec{e})<i<\len{\vec{e}}$,
  if~$ \vec{e}(i) \notin \Reads $
  is \po-after the last event generated by some call
  \ref{op:makevalid}$(x)$,
then there is an event
  $(e \of \W{n}{valid}{1}) \in G.E$
with
$ (e,\vec{e}(i)) \in \ghb \inters \tso $.
  Moreover, for all $ i \leq j < \len{\vec{e}}  $,
  $ \mem{\upto{\vec{e}}{j}}(\loc[x.valid]) = 1 $.
\end{lemma}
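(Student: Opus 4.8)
The plan is to locate, for the given \ref{op:makevalid}$(x)$ call, a concrete write event $e \of \W{x}{valid}{1}$ and show $(e,\vec{e}(i))\in\ghb\inters\tso$; the ``moreover'' clause will then follow from irreversibility of validity. Since $\vec{e}(i)\notin\Reads$ and it is $\po$-after the last event of a \ref{op:makevalid} call, it is a durable event or a fence, i.e.\ $\vec{e}(i)\in\UWrites\union\Flushes\union\MFences\subs\Event\setminus\Rets$. This matters because $\Rets$ never lies in the range of $\ghb$ (none of $\ppo,\mo,\fr,\rfe$ point into a return), so restricting to non-returns is exactly what makes the claimed edge meaningful.

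First I would split on the two shapes of \ref{op:makevalid}$(x)$. If its read $\R{x}{valid}{b}$ observes $b=0$, the call emits its own write $e \of \W{x}{valid}{1}$ at \ref{pt:makevalid}, and this $e$ is the last event of the call, so $e \po-> \vec{e}(i)$. If it observes $b=1$, let $r \of \R{x}{valid}{1}$ be that read (now the last event of the call, hence $r \po-> \vec{e}(i)$); its $\rf$-source $e$ must carry value $1$, and since allocations and zero-initialisations write $0$, we get $e \of \W{x}{valid}{1}$, with $e \rf-> r \po-> \vec{e}(i)$.

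Next I would establish $(e,\vec{e}(i))\in\ghb\inters\tso$ by cases on the type of $\vec{e}(i)$. If $\vec{e}(i)\in\UWrites$: in the ``reads $0$'' case $e\po->\vec{e}(i)$ gives $(e,\vec{e}(i))\in\ppo\subs\ghb$ (the pair avoids $(\Writes\union\Flushes)\times\Reads$ and $\Rets$) and $(e,\vec{e}(i))\in\tso$ by the TSO program-order axiom; in the ``reads $1$'' case \cref{lm:wrw-ghb} applies to $\ev{\UWrites}\seq\rf\seq\ev{\Reads}\seq\po\seq\ev{\UWrites}$ and yields $\ghb\inters\tso$ in one step. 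If $\vec{e}(i)\in\Flushes$: the $\tso$ half comes from the TSO flush axiom (``reads $0$'') or from \cref{lm:wrfl-tso} (``reads $1$''), and the $\ghb$ half from $\ppo$, either via $e\po->\vec{e}(i)$ or via $e\rfe->r\ppo->\vec{e}(i)$. The $\MFences$ case is analogous using the TSO fence axiom. For the ``moreover'' clause: since $e\ghb->\vec{e}(i)$ and $\vec{e}$ respects $\ghb$ (acyclic by \cref{lm:ghb-acyclic}), $e=\vec{e}(i_e)$ for some $i_e<i$, whence $\mem{\upto{\vec{e}}{i_e}}(\loc[x.valid])=\wvalOf(e)=1$; \cref{lm:valid1-irrev} then propagates this forward, giving $\mem{\upto{\vec{e}}{j}}(\loc[x.valid])=1$ for all $j\geq i_e$, in particular for every $i\leq j<\len{\vec{e}}$.

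The main obstacle is the ``reads $1$'' subcase with an \emph{internal} reads-from edge: there $e\po->r$ lies in $\Writes\times\Reads$ and is therefore excluded from $\ppo$, so I cannot route $\ghb$ through the $e$-to-$r$ edge. The fix is to collapse $e\po->r\po->\vec{e}(i)$ into a single $\po$ edge into the non-read target $\vec{e}(i)$ (which is in $\ppo$ precisely because the target is not a read and not a return), and to rely on the fact that \cref{lm:wrw-ghb} and \cref{lm:wrfl-tso} are stated for arbitrary $\rf$ (not just $\rfe$) for the $\tso$ component. A secondary point is the routine verification that the $\rf$-source in the ``reads $1$'' case genuinely writes $1$, ruling out allocation and zero-initialisation writes.
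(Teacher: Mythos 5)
Your proof is correct and takes essentially the same route as the paper's: the same case split on whether \ref{op:makevalid}$(x)$ ends in its own write $\W{x}{valid}{1}$ or in a read of $1$, the same identification of the $\rf$-source as the witness (ruling out zero-writing allocation/initialisation events), the same appeal to \cref{lm:wrw-ghb} for the $\ghb \inters \tso$ edge, and \cref{lm:valid1-irrev} for the ``moreover'' clause. If anything you are more careful than the paper, whose read-$1$ case cites only \cref{lm:wrw-ghb} (which covers $\UWrites$ targets) even though the lemma is also applied with flush targets (as in \cref{lm:flush-ins-before-ret}, where your use of \cref{lm:wrfl-tso} and the $\ppo$ routing fills the gap), and which silently assumes $\vec{e}(i)\notin\Rets$ and glosses over the internal-$\rf$ subtlety that you resolve by collapsing $e \po-> r \po-> \vec{e}(i)$ into a single write-to-non-read $\ppo$ edge.
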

\begin{proof}
  The only events that write~$0$ to $ \loc[x.valid] $ are
  initial events or allocation events, which are both unique per-address.
  We do a case analysis on the last event~$e'$ of \ref{op:makevalid}$(x)$,
  to prove there is some~$ i_1 < i $ such that~$x$ and $ \wvalOf(\vec{e}(i_1)) = 1$.
  \begin{casesplit}
  \case[{$ (e' \of \W{x}{valid}{1}) $}]
    Then $ (e', \vec{e}(i)) \in \po \subs \ghb \inters \tso $ and
    $ \vec{e}(i_1) = e' $ for some~$ i_1 < i $.
    The rest of the claim follows by \cref{lm:valid1-irrev}.
  \case[{$ (e' \of \R{x}{valid}{1}) $}]
    Consider the edge $ (w,e') \in \rf $.
    Then\footnote{In principle, $w$ can be generated by the recovery which might choose to do an update instead of an (equivalent) write. We assume, for simplicity,
      that it uses a write, but none of the proofs depending on this lemma
      fundamentally rely on this assumption.
    }
    $(w \of \W{n}{valid}{1})$ and
    by \cref{lm:wrw-ghb} $(w,\vec{e}(i)) \in \ghb \inters \tso$
    so $ \vec{e}(i_1) = w $ for some~$ \recEndOf(\vec{e}) \leq i_1 < i $.
    The rest of the claim follows by \cref{lm:valid1-irrev}.
    \qedhere
  \end{casesplit}
\end{proof}

\begin{lemma}
\label{lm:makevalid-unmarked}
  For all~$\vec{e} \in G.\enum{\ghb}$,
  and all~$\recEndOf(\vec{e})<i<\len{\vec{e}}$,
  if
  $ (\vec{e}(i) \of \W{x}{valid}{1}) $
  then either
  $
    \mem{\upto{\vec{e}}{i-1}}(\loc[x.valid]) = 1
  $, or
  $
    \mem{\upto{\vec{e}}{i-1}}(\loc[x.nxt]) = \tup{0,\wtv}
  $.
\end{lemma}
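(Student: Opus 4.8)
The plan is to fix $\vec{e} \in G.\enum{\ghb}$ and a write $\vec{e}(i) \of \W{x}{valid}{1}$ with $\recEndOf(\vec{e}) < i$, and reason by a case split on the value $\mem{\upto{\vec{e}}{i-1}}(\loc[x.nxt])$. First I would observe that, since $i > \recEndOf(\vec{e})$, this write is necessarily generated by a call to \ref{op:makevalid}$(x)$: inspection of \cref{fig:traces-ins,fig:traces-del} shows that the only post-recovery event writing $1$ to a \p{valid} field is the write produced by \ref{op:makevalid}. If $\mem{\upto{\vec{e}}{i-1}}(\loc[x.nxt]) = \tup{0,\wtv}$, i.e. $x$ is unmarked just before $\vec{e}(i)$, the second disjunct of the lemma holds and there is nothing more to prove.

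The remaining case is $\mem{\upto{\vec{e}}{i-1}}(\loc[x.nxt]) = \tup{1,\wtv}$, and here the goal becomes $\mem{\upto{\vec{e}}{i-1}}(\loc[x.valid]) = 1$: I must show that a marked node is already valid, which is really the crux, an instance of the implicit ``marked implies valid'' invariant. I would split on whether $x$ is already marked in the recovered memory. If $\mem{\upto{\vec{e}}{\recEndOf(\vec{e})}}(\loc[x.nxt]) = \tup{1,\wtv}$, then by \eqref{prop:linkfree:init-and-rec} the memory at $\recEndOf(\vec{e})$ lies in $\recovered(\kvs) \subs \volatile(\kvs)$, so $x$ is a deleted node and $\mem{\upto{\vec{e}}{\recEndOf(\vec{e})}}(\loc[x.valid]) = 1$; \cref{lm:valid1-irrev} then propagates this to position $i-1$. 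Otherwise $x$ is unmarked at $\recEndOf(\vec{e})$ but marked at $i-1$, so some write to $\loc[x.nxt]$ in $\upto{\vec{e}}{i-1}$ carries mark bit $1$. Inspecting the possible writes to \p{nxt} (allocation and \ref{op:newnode} write $\tup{0,\cdot}$, and \ref{op:trim} writes $\tup{0,\cdot}$), the latest such write must be a marking update $u \of \U{x}{nxt}{\tup{0,s}}{\tup{1,s}}$ generated at \ref{lp:delete-ok} in some \ref{op:delete-ok}$(k)$, sitting at a position $i_u$ with $\recEndOf(\vec{e}) < i_u \leq i-1$.

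The key step is then to exploit that in \ref{op:delete-ok} the call $\ref{op:makevalid}(x)$ is \po-before $u$. Since $u \notin \Reads$ and $u$ is \po-after the last event of that $\ref{op:makevalid}(x)$, \cref{lm:makevalid-irrev} applies to $u$ and yields $\mem{\upto{\vec{e}}{j}}(\loc[x.valid]) = 1$ for every $j \geq i_u$; taking $j = i-1 \geq i_u$ gives the first disjunct. The main obstacle I anticipate is precisely this reliance on \cref{lm:makevalid-irrev} to bridge the weak-memory gap: the syntactic fact that a marking is \po-preceded by a validation does not by itself place the validating write before $u$ in the \ghb-enumeration, and it is \cref{lm:makevalid-irrev} (through its $\ghb \inters \tso$ ordering and the monotonicity of \p{valid}) that guarantees \p{valid} has already reached $1$ by $i_u$, hence by $i-1$. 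A secondary subtlety to check is the recovery boundary, namely that $i-1 \geq \recEndOf(\vec{e})$ and that the \ref{op:delete-ok} instance marking $x$ genuinely lies after recovery; both follow since $\vec{e}(i)$ is a library event strictly \ghb-after its own \po-predecessors.
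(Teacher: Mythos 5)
Your proof is correct, and its crux coincides with the paper's own: in the marked case you identify the only post-recovery event that can write a marked value to \p{nxt} as the update $\U{x}{nxt}{\tup{0,s}}{\tup{1,s}}$ at \ref{lp:delete-ok}, observe that it is \po-preceded by a call \ref{op:makevalid}$(x)$ on the very same node, and invoke \cref{lm:makevalid-irrev} to push $\mem{\upto{\vec{e}}{j}}(\loc[x.valid])=1$ forward from the update's position to $j=i-1$; that is exactly the paper's closing step. Where you genuinely diverge is in how marking that predates the era is handled. The paper extracts from the calling context of \ref{op:makevalid}$(x)$ a witness $e_0\in\UWrites_{\loc[x.nxt]}$ with $\wvalOf(e_0)=\tup{0,\wtv}$ and $(e_0,\vec{e}(i))\in\ghb$ (either the \ref{op:newnode} write or the $\rf$-source of the unmarked read in \ref{op:find}) and uses it to show the last marking write lies strictly after $e_0$ and cannot be in $G.E_0$. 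You instead split at the recovery frontier: if $x$ is already marked at $\recEndOf(\vec{e})$, then by \eqref{prop:linkfree:init-and-rec} and $\recovered(\kvs)\subs\volatile(\kvs)$ marked nodes lie in the deleted component $\volM_{\lbl{d}}$ of \cref{def:linkfree:volatile} with $\loc[x.valid]\mapsto 1$, and \cref{lm:valid1-irrev} propagates validity to $i-1$; otherwise the last write to $\loc[x.nxt]$ in the prefix must be post-recovery, hence the \ref{lp:delete-ok} update. Both decompositions work, and yours buys a slightly more direct treatment of the ``marked before the crash'' configuration via the representation invariant rather than via $\ghb$-reasoning about $e_0$. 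One small imprecision to fix: your ``remaining case'' tacitly assumes $\mem{\upto{\vec{e}}{i-1}}(\loc[x.nxt])$ is defined, whereas ``not $\tup{0,\wtv}$'' also admits undefinedness; this cannot actually occur --- every write to $\loc[x.valid]$ is $\ghb$-preceded by an initialisation of $x$, which is precisely what the paper's $e_0$ witnesses explicitly --- but your argument should say so, since an undefined \p{nxt} falls in neither branch of your dichotomy.
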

\begin{proof}
  We have $ \vec{e}(i) $ is generated by \ref{op:makevalid}$(x)$.
  Each such call is \po-preceded by either:
  \begin{itemize}
  \item $ (e_0 \of \W{x}{nxt}{\tup{0,\wtv}}) $
    generated by \ref{op:newnode} in
    \ref{op:insert-ok}.
  \item $ (e'_0 \of \R{x}{nxt}{\tup{0,\wtv}}) $
    generated by \ref{op:find} in
    \ref{op:insert-no},
    \ref{op:delete-ok},
    \ref{op:delete-no}.
    Then let $e_0 \in \UWrites_{\loc[x.nxt]}$
    be such that~$ (e_0, e'_0) \in \rf $.
    Note that if $e_0 \in G.E_0$ then for every other event $ e_0'' \in G.E_0 \inters \UWrites_{\loc[x.nxt]}$ we have $ e_0'' \ghb-> e_0 $.
  \end{itemize}
  In both cases we have
  $ e_0 \in \UWrites_{\loc[x.nxt]} $,
  $ \wvalOf(e_0) = \tup{0,\wtv} $, and
  $ (e_0, \vec{e}(i)) \in \ghb $
  so there is a~$i_0 < i$ with~$ \vec{e}(i_0) = e_0 $.
If $ \mem{\upto{\vec{e}}{i-1}}(\loc[x.nxt]) = \tup{0,\wtv} $
  we are done.
  Otherwise, there is a largest $ i_1 < i $
  such that~${\wvalOf(\vec{e}(i_1)) = \tup{1,\wtv}}$.
  If~$ i_1 < i_0 $ then we would have
  $ \mem{\upto{\vec{e}}{i-1}}(\loc[x.nxt]) = \tup{0,\wtv} $
  which is a contradiction.
  Since~$ i_1 > i_0 $,
  $\vec{e}(i_1) \notin G.E_0$, otherwise we would have~$e_0 \in G.E_0$
  and $ e_0 \ghb-> \vec{e}(i_1) $.
  The only event generated by the program compatible with~$ \vec{e}(i_0) $
  is an update $ \U{x}{nxt}{\tup{0,\wtv}}{\tup{1,\wtv}} $
  (from \ref{lp:delete-ok}).
  Such an event is \po-preceded by a call to \ref{op:makevalid}.
  Therefore, by \cref{lm:makevalid-irrev} we have
  $\mem{\upto{\vec{e}}{i-1}}(\loc[x.valid]) = 1$
  which proves our claim.
\end{proof}

\subsection{Linearizability}

In this section we focus on proving
\cref{cond:perst-valid} of the \masterthm.
That is, we provide~$\lp$, $\hres{\lp}$, $r$ and~$\volatile$ such that
$\tup{\lp, \hres{\lp}, r, \ghb, \volatile}$ \pre\recEndOf-validates
any execution of~$\LinkFreeImpl[op]$
(assuming soundness of $\LibImpl[rec]$).

\begin{definition}[Linearization points for \LinkFree]
\label{def:linkfree:linpt}
  Given an execution~$G$ of $ \LinkFreeImpl[op] $,
  we define the finite partial functions
  $
    \lp \from G.\CallId \pto G.\UReads
  $
  and
  $
    r \from G.\CallId \pto \V
  $
  as the smallest such that:
  \begin{itemize}
    \item if $\callOf(c) = \tup{\p{insert}, \wtv}$ then
      \begin{itemize}
        \item If there is an event $e\in G.\EvOfCid{c}$ generated
              at~\ref{lp:insert-ok}
              then $ \lp(c) = e $ and $r(c) = \p{true}$.
        \item If there is an event $(r\of\Ret{\p{false}}) \in G.\EvOfCid{c}$
              then there is an event~$ e \in G.\EvOfCid{c} $ generated at
              \ref{lp:find} in \ref{lp:insert-no};
              then we set $ \lp(c) = e $ and $r(c) = \p{false}$.
      \end{itemize}
    \item if $\callOf(c) = \tup{\p{delete}, \wtv}$ then
      \begin{itemize}
        \item If there is an event $e\in G.\EvOfCid{c}$ generated
              at~\ref{lp:delete-ok}
              then $ \lp(c) = e $ and $r(c) = \p{true}$.
        \item If there is an event $(r\of\Ret{\p{false}}) \in G.\EvOfCid{c}$
              then
              $r(c)=\p{false}$ but
              we leave $\lp(c) = \bot$ as this call is handled by~$\hres{\lp}$.
\end{itemize}
  \end{itemize}
\end{definition}

\begin{lemma}
  $\tup{\lp, r, \ghb, \volatile}$ is a linearization strategy.
\end{lemma}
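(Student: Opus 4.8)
The plan is to verify, component by component, the requirements of \cref{def:lin-strategy} for the tuple $\tup{\lp, r, \ghb, \volatile}$, treating most of the work as routine unfolding of \cref{def:linkfree:linpt} against the trace templates of \cref{fig:traces-ins,fig:traces-del}. Three of the four components are nearly immediate. The relation $\ghb$ is acyclic by \cref{lm:ghb-acyclic} applied to the fixed Px86-consistent execution $G$. The map $\volatile \from \KVS' \to \powerset(\Mem)$ is a state-representation function by construction (\cref{def:linkfree:volatile}). Finiteness of $\lp$ and $r$ follows from finiteness of $G.E$, hence of $G.\CallId$.

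Next I would establish that $\lp$ and $r$ are well-defined partial functions of the required types. By inspection of \cref{def:linkfree:linpt}, every event assigned as $\lp(c)$ is drawn from $G.\EvOfCid{c}$, so $\cidOf(\lp(c)) = c$; moreover the events named at \ref{lp:insert-ok} and \ref{lp:delete-ok} are updates and those named at \ref{lp:find} are reads, so $\rng(\lp) \subs \Updates \union \Reads = G.\UReads \subs G.E$, as required. Well-definedness and injectivity then reduce to two observations about the trace structure: (i) the events of each call identifier $c$ are generated by exactly one of the templates INSERT\_OK, INSERT\_NO, DELETE\_OK, DELETE\_NO, and these templates are mutually exclusive, being distinguished by their terminal return event and by whether they contain an event at \ref{lp:insert-ok} or \ref{lp:delete-ok}; and (ii) since $\cidOf(\lp(c)) = c$, distinct calls receive linearization events with distinct call ids, hence distinct events, giving injectivity. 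Observation (i) guarantees the defining clauses never fire simultaneously with conflicting assignments, so $\lp$ and $r$ are single-valued.

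Finally I would discharge the three side-conditions of \cref{def:lin-strategy}. The first, $\cidOf(\lp(c)) \notin \set{\bot, \recoveryId}$, is immediate from $\cidOf(\lp(c)) = c \in \CallId$. The second, $\dom(\lp) \subs \dom(r)$, holds because every clause setting $\lp(c)$ simultaneously sets $r(c)$; the only clause setting $r$ without $\lp$ is the failed-delete case, which enlarges $\dom(r)$ but not $\dom(\lp)$. The third, $r(\cidOf(e)) = \valOf(e)$ for every $e \in G.\Rets$, is where the trace analysis does the real work: for a return event $(e \of \Ret{v})$ with $\cidOf(e) = c$, I read off from the matching template that $v = \p{true}$ exactly when the call contains the update at \ref{lp:insert-ok} or \ref{lp:delete-ok} (in which case the relevant clause sets $r(c) = \p{true}$), and $v = \p{false}$ exactly in the INSERT\_NO and DELETE\_NO templates (where $r(c) = \p{false}$).

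The only step with any subtlety is the well-definedness argument, and even there the difficulty is purely bookkeeping: one must confirm from \cref{fig:traces-ins,fig:traces-del} that the four templates partition the per-call event sets and that their distinguishing events—the successful CAS versus the failing read, together with the value of the terminal return—line up with the clauses of \cref{def:linkfree:linpt}. No memory-model reasoning beyond acyclicity of $\ghb$ is needed, since this lemma asserts only that the tuple has the correct shape, not yet that it \recEndOf-validates $G$.
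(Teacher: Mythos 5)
Your proof is correct and takes essentially the same route as the paper, whose entire proof is the one-line remark that the claim is ``straightforward by inspecting \cref{def:linkfree:linpt}''---you simply carry out that inspection explicitly. Your treatment of the one genuinely noteworthy point (failed deletes lie in $\dom(r)\setminus\dom(\lp)$, which is consistent with the required inclusion $\dom(\lp)\subs\dom(r)$ since that condition is one-directional) matches the intent of the paper's definition, so nothing further is needed.
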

\begin{proof}
  Straightforward by inspecting \cref{def:linkfree:linpt}.
\end{proof}

\begin{theorem}
\label{th:linkfree:lp-validates}
  Assume an arbitrary \pre\tup{\durable,\recovered}-sound~$\LibImpl[rec]$.
  If~$G$ is an execution of $\tup{\LinkFreeImpl[op],\LibImpl[rec]}$,
  then $\tup{\lp, r, \ghb, \volatile}$ \pre\recEndOf-validates~$G$.
\end{theorem}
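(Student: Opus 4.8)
The plan is to verify directly the two clauses of the validating-strategy definition (\cref{def:valid-strategy}), by an (induction-free) case analysis on the single event $\vec{e}(i)$, working under the standing hypotheses that $M_{i-1} := \mem{\upto{\vec{e}}{i-1}} \in \volatile(\kvs)$ for some $\kvs \in \KVS'$ and that $\histOf[\lp]{r}(\upto{\vec{e}}{i-1}) \in \LegalPath{\kvs_0}{\kvs}$, where $\mem{\upto{\vec{e}}{\recEndOf(\vec{e})}} \in \recovered(\kvs_0) \subs \volatile(\kvs_0)$ by \eqref{prop:linkfree:init-and-rec}. The structural facts I need are already available as the standalone lemmas of \cref{sec:invariants}, so the work reduces to classifying $\vec{e}(i)$ according to the procedure that generated it (\cref{fig:traces-find,fig:traces-ins,fig:traces-del}) and checking that stuttering events keep $M_i := \mem{\upto{\vec{e}}{i}}$ in $\volatile(\kvs)$, while the events selected by \cref{def:linkfree:linpt} realise the $\Delta$-transition dictated by $\KVS'$. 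By \cref{lm:mem-safety} no $\Err$ event occurs, so these are the only cases. Throughout, \cref{lm:upd-read-last}, \cref{lm:uread-nxt-ghb-last} and \cref{lm:immutable-key} are the bridge relating the value an event reads or writes to the contents of $M_{i-1}$ and $M_i$.

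First I would dispatch the \textbf{stuttering cases}. Reads, flushes, fences and returns leave memory untouched, so $M_i = M_{i-1} \in \volatile(\kvs)$ is immediate. An $\Alloc{n}$ and the subsequent writes of \ref{op:newnode} act on a fresh node with $\loc[n.nxt] \mapsto \tup{0,\nullptr}$ and $\loc[n.valid] \mapsto 0$, which by \cref{lm:member-reach} is neither reachable nor pointed at, so it sits in $\volM_{\lbl{u}}$ and $\kvs$ is unchanged. The flag writes $\W{x}{valid}{1}$, $\W{x}{insFl}{1}$, $\W{x}{delFl}{1}$ touch fields that $\volatile$ either leaves unconstrained or pins at exactly the written value; using \cref{lm:makevalid-unmarked} together with reachability one shows a $\W{x}{valid}{1}$ never fires on a node of $\volM_{\lbl{u}}$ (it is always a reachable member or an already-marked node), so $x$ stays in $\volM_{\lbl{s}}$ or $\volM_{\lbl{d}}$. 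Finally, the unlinking update $\U{p}{nxt}{\tup{0,c}}{\tup{0,s}}$ of \ref{op:trim} acts, by \cref{lm:trim-on-marked}, on an already-marked $c \in \volM_{\lbl{d}}$; swinging $\loc[p.nxt]$ past $c$ removes only a deleted node from the reachable chain, and \cref{lm:sorted-links} (giving $M_{i}(\loc[p.key]) < M_i(\loc[c.key]) < M_i(\loc[s.key])$) together with \eqref{cond:linkfree:members-reachable} shows sortedness and member-reachability survive, so again $M_i \in \volatile(\kvs)$.

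Next come the \textbf{linearization points}. For a successful insert, $\vec{e}(i) = \U{p}{nxt}{\tup{0,c}}{\tup{0,n}}$: from the preceding \ref{op:find} and the check $k \ne k''$ in \ref{op:insert-ok} I obtain, via \cref{lm:immutable-key}, $M_{i-1}(\loc[p.key]) < k < M_{i-1}(\loc[c.key])$, and by \cref{lm:upd-read-last} $M_{i-1}(\loc[p.nxt]) = \tup{0,c}$ with $p$ reachable (\cref{lm:member-reach}); since $p$ points \emph{directly} to $c$, sortedness forces $k \notin \dom(\kvs)$. After the update $n$ becomes reachable with key $k$ between those of $p$ and $c$, so $M_i \in \volatile(\kvs \dunion \tup{k,\tup{n,v}})$ and $(\kvs, \kvs \dunion \tup{k,\tup{n,v}}) \in \Delta(\p{insert},\tup{k,v},\p{true})$, as required. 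The one delicate point, which I expect to be the crux, is that \ref{op:makevalid} runs \emph{after} this CAS, so $n$ may still carry $\loc[n.valid] \mapsto 0$ at $\vec{e}(i)$; this is sound precisely because $\volM_{\lbl{s}}$ admits $\loc[x.valid] \mapsto \wtv$, the raison d'être of the wildcard in \cref{def:linkfree:volatile}. For a failed insert, $\vec{e}(i)$ is the read $\R{c}{nxt}{\tup{0,\wtv}}$ of \ref{lp:find} with $M_{i-1}(\loc[c.key]) = k$; \cref{lm:uread-nxt-ghb-last} gives $M_{i-1}(\loc[c.nxt]) = \tup{0,\wtv}$, so by \cref{lm:member-reach} $c$ is a reachable member holding $k$, whence $k \in \dom(\kvs)$ and $(\kvs,\kvs) \in \Delta(\p{insert},\tup{k,v},\p{false})$ with $M_i = M_{i-1}$. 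For a successful delete, $\vec{e}(i) = \U{c}{nxt}{\tup{0,s}}{\tup{1,s}}$: before the CAS $c$ is unmarked, reachable and holds $k$ (so $\kvs(k) = \tup{c,v}$), and the preceding \ref{op:makevalid}$(c)$ with \cref{lm:makevalid-irrev} ensures $\loc[c.valid] \mapsto 1$; after marking $c$ migrates into $\volM_{\lbl{d}}$ and $\kvs$ loses $k$, yielding the $\Delta(\p{delete},k,\p{true})$ transition. Failed deletes have $\lp(c) = \bot$ in \cref{def:linkfree:linpt} and emit only reads, so they fall under the stuttering clause, their genuine linearization being deferred to $\hres{\lp}$ and handled by the generalised theorem.

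The residual effort is the routine re-establishment of \eqref{cond:linkfree:sorted-links}--\eqref{cond:linkfree:marked-nxt-init} after each structural update, but since sortedness, reachability, irreversibility of marking and of validity, and the ``trim-on-marked'' property are all supplied by \cref{sec:invariants}, this amounts to tracking which of $\volM_{\lbl{s}},\volM_{\lbl{d}},\volM_{\lbl{u}}$ each touched address occupies before and after $\vec{e}(i)$. The main conceptual obstacle is thus not the key arithmetic but the mismatch between linearization and validation: a successful insert linearizes when its node is \emph{linked} yet not yet \emph{persisted-valid}, and it is exactly the wildcard \p{valid} field of $\volatile$ that reconciles the two.
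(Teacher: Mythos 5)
Your proposal is correct and follows essentially the same route as the paper's proof: an induction-free case analysis on $\vec{e}(i)$ against \cref{def:valid-strategy}, dispatching stuttering events (including the \p{valid}/\p{insFl}/\p{delFl} writes via \cref{lm:makevalid-unmarked} and \cref{lm:trim-on-marked}) and then realising the $\Delta$-transitions at the three linearization points using \cref{lm:upd-read-last}, \cref{lm:immutable-key}, \cref{lm:member-reach} and sortedness, with failed deletes deferred to $\hres{\lp}$. You even isolate the same crux the paper relies on — the wildcard \p{valid} field in $\volM_{\lbl{s}}$ reconciling linearization-before-validation for successful inserts — so only routine details (e.g.\ explicitly noting that $n \in X_{\lbl{u}}$ makes $n$ fresh in $\kvs$, as the $\KVS'$ insert transition requires) separate your sketch from the paper's write-up.
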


\begin{proof}
To prove it \pre\recEndOf-validates~$G$,
  consider an arbitrary $ \vec{e} \in G.\enum{\ghb} $,
  an index~$i$ with $ \recEndOf(\vec{e}) < i < \len{\vec{e}} $,
  and a state~$ \kvs \in \KVS' $,
  such that~$ \mem{\upto{\vec{e}}{i-1}} \in \volatile(\kvs) $.
  By \cref{def:linkfree:volatile},
  $
    \mem{\upto{\vec{e}}{i-1}} =
      \volM_{\lbl{s}} \dunion
      \volM_{\lbl{d}} \dunion
      \volM_{\lbl{u}}
  $ for some
    $\volM_{\lbl{s}}$,
    $\volM_{\lbl{d}}$,
    and $\volM_{\lbl{u}}$.
  Let~$ \LinPt = \set{ \lp(c) | c \in \dom(\lp) } $.

  We then check \cref{cond:stutter,cond:linpt-trans}
  by a case analysis on~$\vec{e}(i)$.
  \begin{casesplit}
  \case[$ \vec{e}(i) \notin \LinPt $]
    Then only \cref{cond:stutter} applies.
    We prove that
    $ \mem{\upto{\vec{e}}{i}} \in \volatile(\kvs) $
    by case analysis.
    \begin{casesplit}
\case[$ \vec{e}(i) \in \Reads \union \Flushes $]
      Then, trivially,
      $\mem{\upto{\vec{e}}{i}}=\mem{\upto{\vec{e}}{i-1}} \in \volatile(\kvs)$.
\case[$ (\vec{e}(i) \of \W{n}{insFl}{1}) $ from~\ref{op:flush-ins}]
      If $\loc[n.insFl] \in \dom(\volM_{\lbl{s}})$,
      then \[
        \mem{\upto{\vec{e}}{i}} =
          \volM_{\lbl{s}}[\loc[n.insFl] \mapsto 1] \dunion
          \volM_{\lbl{d}} \dunion
          \volM_{\lbl{u}}
          \in \volatile(\kvs).
      \]
      The cases
      where $\loc[n.insFl] \in \dom(\volM_{\lbl{d}})$
      or $\loc[n.insFl] \in \dom(\volM_{\lbl{u}})$
      are analogous, as none of these sets constrain the \p{insFl} field.
\case[$ (\vec{e}(i) \of \W{n}{delFl}{1}) $ from~\ref{op:flush-del}]
      The events of \ref{op:flush-del} are generated
      in calls to \ref{op:trim}, which occur in~\ref{op:garbage} and \ref{op:delete-ok}.
      The occurrence in~\ref{op:garbage} is \po-preceded by a read
      $ (r\of\R{n}{nxt}{\tup{1,\wtv}}) $,
      so by \cref{lm:uread-nxt-ghb-last,lm:marking-irrev}
      $ \mem{\upto{\vec{e}}{i-1}}(\loc[n.nxt]) = \tup{1,\wtv} $.
      Similarly, the occurrence in \ref{op:delete-ok} is \po-preceded by
      an event $ (w \of \U{n}{nxt}{\tup{0,\wtv}}{\tup{1,\wtv}}) $, and
      consequently $ (w,\vec{e}(i)) \in\ghb $.
      In this case too we conclude, from \cref{lm:marking-irrev},
      that $ \mem{\upto{\vec{e}}{i-1}}(\loc[n.nxt]) = \tup{1,\wtv} $.
      This implies that $ \loc[n.nxt] \in \dom(\volM_{\lbl{d}}) $.
      Since $\volM_{\lbl{d}}$ does not constrain the \p{delFl} field,
      $\mem{\upto{\vec{e}}{i}} \in \volatile(\kvs)$.
\case[$ (\vec{e}(i) \of \U{p}{nxt}{\tup{0,c}}{\tup{0,s}}) $ from~\ref{op:trim}]
      By \cref{lm:trim-on-marked},
      $ \mem{\upto{\vec{e}}{i-1}}(\loc[c.nxt]) = \tup{1,s} $.
      Moreover, by \cref{lm:upd-read-last},
      $ \mem{\upto{\vec{e}}{i-1}}(\loc[p.nxt]) = \tup{0,c}$.
      This implies,
        $ \loc[c.nxt] \in \dom(\volM_{\lbl{d}}) $ and
        $ \loc[p.nxt] \in \dom(\volM_{\lbl{s}}) $.
      By \cref{lm:read-link-ord-keys}
      $
        \mem{\upto{\vec{e}}{i-1}}(\loc[p.key]) <
        \mem{\upto{\vec{e}}{i-1}}(\loc[c.key]) <
        \mem{\upto{\vec{e}}{i-1}}(\loc[s.key]).
      $
      We thus have
      $
        \mem{\upto{\vec{e}}{i}} =
          \volM_{\lbl{s}}[\loc[p.nxt] \mapsto \tup{0,s}] \dunion
          \volM_{\lbl{d}} \dunion
          \volM_{\lbl{u}}
          \in \volatile(\kvs).
      $
\case[$ (\vec{e}(i) \of \W{n}{valid}{1}) $ from~\ref{op:makevalid}]
      By \cref{lm:makevalid-unmarked} we have two cases:
      \begin{casesplit}
      \case[{$ \mem{\upto{\vec{e}}{i-1}}(\loc[n.valid]) = 1 $}]
        Then the claim follows trivially since the memory is unchanged.
      \case[{$ \mem{\upto{\vec{e}}{i-1}}(\loc[n.nxt]) = \tup{0,\wtv} $}]
        By \cref{lm:newnode-init} $n$ is initialised at~$\vec{e}(i)$;
        by \cref{lm:member-reach} we have~$ \reach(\mem{\upto{\vec{e}}{i}}, n) $ holds.
        Since $\reach$ is insensitive to the value of \p{valid},
        we have $ \reach(\mem{\upto{\vec{e}}{i-1}}, n) $.
        By \cref{lm:linkfree:volatile-reach-sd} we have that
        $ n \in X_{\lbl{s}} \dunion X_{\lbl{d}} $.

        We cannot have $ n \in X_{\lbl{d}} $ because all such nodes are marked.
        From $ n \in X_{\lbl{s}} $ we have
        $
          \mem{\upto{\vec{e}}{i}} =
            \volM_{\lbl{s}}[\loc[n.valid] \mapsto 1] \dunion
            \volM_{\lbl{d}} \dunion
            \volM_{\lbl{u}}
            \in \volatile(\kvs).
        $
      \end{casesplit}
\case[$ \vec{e}(i) \in\Writes $ from~\ref{op:newnode}]
      The allocation~$\Alloc{n}$
      has the effect of adding a new node to~$X_{\lbl{u}}$;
      note that since we assume allocated memory is zeroed,
      we automatically have $\loc[n.valid] = 0$.
      The writes to the other fields are irrelevant
      for the abstract key value store as they modify an uninitialised node.
      The write to the \p{nxt} field preserves
      \eqref{cond:linkfree:sorted-links} by virtue of
      \cref{lm:sorted-links}.
    \end{casesplit}
  \case[$ \vec{e}(i) \in \LinPt $]
    Let $c \in \dom(\lp)$ be so that
    $\vec{e}(i) = \lp(c)$.
    \begin{casesplit}
\case[$ \vec{e}(i) $ generated at \ref{lp:find}]
      By \cref{def:linkfree:linpt}, the operations returned
        $ r(c) = \p{false} $,
      and
        $ \callOf(c) = \tup{\p{insert}, \tup{k,v}} $.
Since $\vec{e}(i) \in \Reads$,
      $\mem{\upto{\vec{e}}{i}}=\mem{\upto{\vec{e}}{i-1}} \in \volatile(\kvs)$
      and we have to prove
      $ (\kvs,\kvs) \in \Delta(\p{insert}, \tup{k,v}, \p{false}) $.
It suffices to prove $k \in \dom(\kvs)$.
        Since the operation returned, we have
        an event $r \in G.\EvOfCid{c}$ with
        \[
          \vec{e}(i) = (\lp(c) \of \R{n}{nxt}{\tup{0,\wtv}})
          \ghb->
          (r \of \R{n}{key}{k}) = \vec{e}(j)
        \]
        for some~$j > i$.
        By \cref{lm:reach-init}, $n$ is initialised at $\vec{e}(i)$.
        By \cref{lm:member-reach,lm:uread-nxt-ghb-last,lm:uread-key-ghb-last}
        we have
        $ \mem{\upto{\vec{e}}{i}}(\loc[x.nxt]) = \tup{0,\wtv} $, and
        $ \mem{\upto{\vec{e}}{j}}(\loc[x.key]) = k $, and
        $ \reach(\mem{\upto{\vec{e}}{i}}, n) $.
        By \cref{lm:immutable-key},
        $ \mem{\upto{\vec{e}}{i}}(\loc[x.key]) = k $.
        $\mem{\upto{\vec{e}}{i}}=\mem{\upto{\vec{e}}{i-1}} \in \volatile(\kvs)$,
        we have that~$n$ must belong, reachable and unmarked,
        to $ \volM_{\lbl{s}} $,
        and so $k \in \kvs$.
\case[$ \vec{e}(i) $ generated at \ref{lp:insert-ok}]
      By \cref{def:linkfree:linpt},
        $ r(c) = \p{true} $
      and
        $ \callOf(c) = \tup{\p{insert}, \tup{k,v}} $.
      We show that it must be the case that~$ k \notin \dom(\kvs) $,
      $n$ is fresh in~$\kvs$,
      and that
      $
        \mem{\upto{\vec{e}}{i}}
          \in \volatile(\kvs \dunion \tup{k,\tup{n,v}})
      $
      which would suffice, since
      $
        (\kvs,\kvs \dunion \tup{k,\tup{n,v}}) \in
          \Delta(\p{insert}, \tup{k,v}, \p{true}).
      $

      We have $ (\vec{e}(i) \of \U{p}{nxt}{\tup{0, a}}{\tup{0, n}}) $,
      therefore, by \cref{lm:upd-read-last},
      $ \mem{\upto{\vec{e}}{i-1}}(\loc[p.nxt]) = \tup{0,a} $.

      The event~$\vec{e}(i)$ is \po-preceded by a call to
      \ref{op:newnode}$(n, k, v, a)$,
      but $n$ is not yet initialised at any $\vec{e}(j)$ with $j<i$.
      Therefore all the writes to fields of~$n$
      appearing before~$i$ in~$\vec{e}$ were generated by \ref{op:newnode}$(n, k, v, a)$,
      or there would be a contradiction with \cref{lm:newnode-init}.
      From this we obtain
      \begin{align*}
        \mem{\upto{\vec{e}}{i-1}}(\loc[n.key]) &= k
        &
        \mem{\upto{\vec{e}}{i-1}}(\loc[n.val]) &= v
        &
        \mem{\upto{\vec{e}}{i-1}}(\loc[n.nxt]) &= \tup{0,a}
        \\
        \mem{\upto{\vec{e}}{i-1}}(\loc[n.valid]) &= 0
        &
        \mem{\upto{\vec{e}}{i-1}}(\loc[n.insFl]) &= 0
        &
        \mem{\upto{\vec{e}}{i-1}}(\loc[n.delFl]) &= 0
      \end{align*}

      By \cref{lm:immutable-key} and the reads that \po-precede $\vec{e}(i)$,
      $ \mem{\upto{\vec{e}}{i-1}}(\loc[p.key]) < k < \mem{\upto{\vec{e}}{i-1}}(\loc[p.key]) $.
      By \cref{lm:u-init,lm:member-reach}, we have
      $ \reach(\mem{\upto{\vec{e}}{i-1}}, p) $,
      which, by \cref{lm:linkfree:volatile-reach-sd} implies
      $ p \in X_{\lbl{s}} $.
      Then by \eqref{cond:linkfree:sorted-links}
      any node in $X_{\lbl{s}}$ storing key~$k$ would
      need to be between~$p$ and~$a$ which is a contradiction.
      This proves $ k \notin \dom(\kvs) $.

      Since~$n$ is unmarked and invalid in~$ \mem{\upto{\vec{e}}{i-1}} $,
      it must belong to~$X_{\lbl{u}}$.
      This also implies~$n$ is fresh in~$\kvs$.
      We split $ \volM_{\lbl{u}} = M_0 \dunion M_n$
      such that $ \dom(M_n) = \set{\loc[n.f] | \p{f} \in \Field } $.
      Then:
      \[
        \mem{\upto{\vec{e}}{i}} =
          (\volM_{\lbl{s}}[\loc[p.nxt] \mapsto \tup{0,n}]
          \dunion M_n) \dunion
          \volM_{\lbl{d}} \dunion
          M_0
          \in \volatile(\kvs \dunion \tup{k,\tup{n,v}}).
      \]
\case[$ \vec{e}(i) $ generated at \ref{lp:delete-ok}]
      By \cref{def:linkfree:linpt},
        $ r(c) = \p{true} $
      and
        $ \callOf(c) = \tup{\p{delete}, k} $.
      We show that it must be the case that~$ k \in \dom(\kvs) $,
      and that
      $
        \mem{\upto{\vec{e}}{i}}
          \in \volatile(\kvs \setminus \tup{k,\kvs(k)})
      $
      which would suffice, since
      $
        (\kvs,\kvs \setminus \tup{k,\kvs(k)}) \in
          \Delta(\p{delete}, k, \p{true}).
      $

      We have $ \vec{e}(i) \of \U{a}{nxt}{\tup{0, s}}{\tup{1, s}} $,
      therefore, by \cref{lm:upd-read-last},
      $ \mem{\upto{\vec{e}}{i-1}}(\loc[a.nxt]) = \tup{0,s} $.
      By \cref{lm:u-init,lm:member-reach}, we have
      $ \reach(\mem{\upto{\vec{e}}{i-1}}, a) $,
      which, by \cref{lm:linkfree:volatile-reach-sd} implies
      $ {a \in X_{\lbl{s}}} $.
      Moreover, by \ref{cond:linkfree:members-reachable},
      $s \in \volM_{\lbl{s}} \union \volM_{\lbl{d}}$.
      Since $ \vec{e}(i) $ is \po-preceded by
      a call to \ref{op:makevalid}$(a)$,
      we know that, by \cref{lm:makevalid-irrev},
      $ \mem{\upto{\vec{e}}{i-1}}(\loc[a.valid]) = 1 $.
      We thus can partition $ \volM_{\lbl{s}} = M_0 \dunion M_c$
      such that $ \dom(M_a) = \set{\loc[a.f] | \p{f} \in \Field } $.
      Then:
      \[
        \mem{\upto{\vec{e}}{i}} =
          M_0 \dunion
          (\volM_{\lbl{d}} \dunion M_a^{\strut}[\loc[a.nxt] \mapsto \tup{1,s}]) \dunion
          \volM_{\lbl{u}}
          \in \volatile(\kvs \setminus \tup{k,\kvs(k)}).
      \qedhere
      \]
    \end{casesplit}
  \end{casesplit}
\end{proof}

\subsubsection{Hindsight}

We now handle failed deletes by hindsight linearization.
We start by some basic transition invariant on reachability of nodes.
Then we prove \cref{lm:linkfree:hind-delete}, the ``hindsight lemma''
for \p{delete} which states that there is a point during the execution
of \p{delete}$(h,k)$ where the state encoded in volatile memory
does not contain~$k$.
The proof can, crucially,
rely on having already established by \cref{th:linkfree:lp-validates}
that the \vo-induced linearization of the non-hindsight calls is legal.
We then use the hindsight lemma to define a suitable $\hres{\lp}$.

\begin{lemma}
\label{lm:reach-preserv}
  Let $(\kvs_1, \kvs_2) \in \Delta(\wtv[2])$,
      $ \vec{e} \in \enum{\ghb[G]} $,
      $ M_j = \mem{\upto{\vec{e}}{j}} $, and
      $i\geq\recEndOf(\vec{e})$.
  Moreover, assume
    $M_i \in \volatile(\kvs_1)$,
    $M_{i+1} \in \volatile(\kvs_2)$,
    $\kvs_1(k)=\kvs_2(k)=\tup{y,v}$,
  and $ \reach(M_i,x) $ or $M_i(\loc[x.nxt])=\tup{1,\wtv}$.
  Then, $\reach(M_i,x,y) \implies \reach(M_{i+1},x,y)$.
\end{lemma}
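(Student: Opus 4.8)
The plan is to fix the single event $e = \vec{e}(i+1)$ responsible for the step from $M_i$ to $M_{i+1}$, and to argue by a case analysis on $e$ that any $M_i$-path witnessing $\reach(M_i,x,y)$ can be turned into an $M_{i+1}$-path witnessing $\reach(M_{i+1},x,y)$. The crucial observation is that a single event modifies at most one $\p{nxt}$ field, so $M_{i+1}$ and $M_i$ agree on every $\p{nxt}$ pointer except possibly one; moreover, reachability (\cref{def:reach-node}) inspects only the address component of a $\p{nxt}$ pointer, ignoring the mark bit. Up front I would record the two facts that constrain $y$: from $M_{i+1}\in\volatile(\kvs_2)$ and $\kvs_2(k)=\tup{y,v}$, \cref{def:linkfree:volatile} gives $y\in X_{\lbl{s}}$ with $M_{i+1}(\loc[y.nxt])=\tup{0,\wtv}$, i.e. $y$ is unmarked in $M_{i+1}$; and the hypothesis $\kvs_1(k)=\kvs_2(k)$ guarantees that the step does not delete or relocate the node representing $k$.

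First I would dispatch the easy events. If $e$ does not write a $\p{nxt}$ field (a read, flush, fence, return, or a write to $\p{key}/\p{val}/\p{valid}/\p{insFl}/\p{delFl}$), then $M_{i+1}$ and $M_i$ coincide on all $\p{nxt}$ pointers, so the same path works. If $e$ is the $\Alloc{a}$ or the $\W{a}{nxt}{\wtv}$ of \ref{op:newnode}, then $a$ is freshly allocated and hence pointed to by no node in $M_i$, so it cannot lie on the $M_i$-path and the path is untouched. If $e$ is the marking update $e\of\U{a}{nxt}{\tup{0,s}}{\tup{1,s}}$ of \ref{op:delete-ok}, the target component of $\loc[a.nxt]$ is unchanged, so the reachability graph is unchanged and the path persists.

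Next comes the insertion update $e\of\U{p}{nxt}{\tup{0,c}}{\tup{0,n}}$ of \ref{op:insert-ok}: here $n$ is fresh (hence absent from the $M_i$-path) and only $\loc[p.nxt]$ changes, from $c$ to $n$, while $\loc[n.nxt]=\tup{0,c}$ already holds in $M_{i+1}$. Any use of the edge $p\to c$ on the $M_i$-path is replaced by the two edges $p\to n\to c$, lengthening but not breaking the path, and all other edges survive verbatim. The genuinely delicate case, and the main obstacle, is the trimming update $e\of\U{p}{nxt}{\tup{0,c}}{\tup{0,s}}$ of \ref{op:trim}. By \cref{lm:trim-on-marked} we have $M_i(\loc[c.nxt])=\tup{1,s}$, and by \cref{lm:marking-irrev} this persists to $M_{i+1}$, so $c$ is marked in $M_{i+1}$; since $y$ is unmarked in $M_{i+1}$, we conclude $c\ne y$. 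Now $e$ changes only $\loc[p.nxt]$, from $c$ to $s$. Along the $M_i$-path the keys strictly increase by sortedness (\cref{lm:sorted-links}), so the path is acyclic and $p$ occurs on it at most once; if it uses the edge $p\to c$, then since $c\ne y$ the node $c$ is not its endpoint, and the successor of $c$ on the path is exactly the target of $M_i(\loc[c.nxt])$, namely $s$. I would therefore excise $c$, reconnecting $p$ directly to $s$ via the new pointer $M_{i+1}(\loc[p.nxt])=\tup{0,s}$, while every edge not emanating from $p$ is preserved; this yields the required $M_{i+1}$-path. If the $M_i$-path does not use $p\to c$ at all, then no edge it uses is affected and it survives unchanged.

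Finally I would note the role of the standing hypothesis $\reach(M_i,x)\lor M_i(\loc[x.nxt])=\tup{1,\wtv}$: it ensures that $x$ is a genuine node of the structure (reachable, hence in $X_{\lbl{s}}\dunion X_{\lbl{d}}$ by \cref{lm:linkfree:volatile-reach-sd}, or marked and thus in $X_{\lbl{d}}$), which rules out $x$ being an as-yet-unlinked fresh node and keeps the treatment of the path's source uniform. The one point requiring care is the bookkeeping in the trim case, namely that excising the single marked node $c$ and rerouting through $s$ really does produce a valid chain of $\p{nxt}$ edges in $M_{i+1}$; this is precisely where acyclicity (to know $p$ appears once) and $c\ne y$ (to know $c$ has a successor on the path) are both needed.
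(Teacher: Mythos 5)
Your proof is correct, but it takes a genuinely different route from the paper's. You perform a syntactic case analysis on the single event $\vec{e}(i+1)$ and repair the witnessing path locally: edges are untouched for non-\p{nxt} writes and for the marking update (whose address component is unchanged), the edge $p\to c$ is subdivided into $p\to n\to c$ for insertions, and the marked node $c$ is excised for trims, using \cref{lm:trim-on-marked}, \cref{lm:marking-irrev}, and acyclicity from \cref{lm:sorted-links}. The paper's proof barely looks at the event: it splits the witnessing path $x_0\dots x_n$ at the \emph{first unmarked} node $x_j$, keeps the marked prefix verbatim by \cref{lm:marking-irrev}, derives $\reach(M_i,x_j)$ from \eqref{cond:linkfree:marked-nxt-init}, concludes that $x_j$ represents some key $k'\leq k$ in $\kvs_1$, and then exploits legality of the abstract step $(\kvs_1,\kvs_2)\in\Delta$: either $k'$ survives into $\kvs_2$, in which case $\reach(M_{i+1},x_j,y)$ is \emph{re-derived semantically} from the representation invariants \eqref{cond:linkfree:sorted-links} and \eqref{cond:linkfree:members-reachable} of $M_{i+1}\in\volatile(\kvs_2)$, discarding the old tail of the path entirely; or the step is the deletion of $k'$, which merely marks $x_j$ and preserves the tail. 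The paper's route buys robustness: it needs no enumeration of the \p{nxt}-writing events, so it would survive adding new operations unchanged, at the price of leaning on the abstract transition and the volatile invariants. Your route is more elementary and self-contained, but it is only as good as the exhaustiveness of the event enumeration; to close it fully you should note explicitly that $i\geq\recEndOf(\vec{e})$ excludes recovery writes and that clients act on disjoint locations, and you should back two asserted memory facts by the paper's lemmas: that the update's read value yields $M_i(\loc[p.nxt])=\tup{0,c}$ follows from \cref{lm:upd-read-last}, and that nothing points at the fresh node $n$ in $M_i$ (so $n$ is off the path, and $x\neq n$ by your standing hypothesis) is exactly the argument made via \cref{lm:unique-writes} and \cref{lm:w-nxt-init} inside the proof of \cref{lm:sorted-links}.
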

\begin{proof}
  Let $\reach(M_i,x,y)$ be witnessed by
  linked nodes $x_0,x_1,\dots,x_n$
  where $x_0=x$ and $x_n=y$.
  Let~$j \leq n$ be the smallest index such that
  $ M_i(\loc[x_j.nxt]) = \tup{0,\wtv} $ ---
  since~$y$ is unmarked, we know at least one such~$j$ exists.
  From \cref{lm:marking-irrev} we know that the links
  between the nodes from~$x_0$ to $x_{j-1}$ (if any)
  will be preserved in~$M_{i+1}$;
  the claim would therefore follow from proving that
  $ \reach(M_{i+1}, x_j, y) $.

  First we can show~$\reach(M_i, x_j)$.
  We have two cases.
  If $ x = x_j $, then, since the node is unmarked,
  by assumption we have~$\reach(M_i,x)$.
  Otherwise, $ x \ne x_j $ and $ x $ is marked.
  Then, by $M_i \in \volatile(\kvs_1)$ and \eqref{cond:linkfree:marked-nxt-init},
  we have $\reach(M_i, x_j)$.
  
  Then, from $\reach(M_i, x_j)$ we conclude~$ \tup{x_j,k'} \in \kvs_1 $
  for some~$k'\leq k$.
  Since the move from $\kvs_1$ to $\kvs_2$ is legal,
  we have two cases.
  Either $\tup{x_j,k'} \in \kvs_2$,
  in which case $ \reach(M_{i+1}, x_j, y) $ ---
  by $\kvs_2(k)=\tup{y,v}$, $M_{i+1} \in \volatile(\kvs_2)$,
  \eqref{cond:linkfree:sorted-links} and \eqref{cond:linkfree:members-reachable}.
  Otherwise, $\vec{e}_{i+1}$ is the linearization point of a deletion of~$k'$.
  In this case, the event simply marks~$x_j$, thus preserving the links.
\end{proof}

\begin{lemma}[Hindsight for \p{delete}]
\label{lm:linkfree:hind-delete}
  Given any $ \vec{e} \in \enum{\ghb[G]} $,
  if $(\ret \of \Ret{\p{false}}) \in \vec{e}$
  and $\callOf(\ret) = \tup{\p{delete}, k}$,
  then there exist
  $j_1 \leq i \leq j_2$ such that
  $ \cidOf(\vec{e}(j_1)) = \cidOf(\vec{e}(j_2)) = \cidOf(\ret) $,
  $ \vec{e}(j_1),\vec{e}(j_2) \in \Reads $, and
  $ \mem{\upto{\vec{e}}{i}} \in \volatile(\kvs)$
  for some $\kvs$ with $ k \notin \kvs $.
\end{lemma}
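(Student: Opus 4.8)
The plan is to prove the lemma by contradiction, exhibiting the two bracketing read events and then showing that assuming $k$ is present throughout the interval between them is impossible. First I would fix the two reads: take $j_1$ to be the index in $\vec{e}$ of the read event $r_p$ of the final $\p{find}$ call of this $\p{delete}$ (line \ref{ev:find:rp}, reading $\loc[p.nxt] = \tup{0,\wtv}$) and $j_2$ to be the index of the read at line \ref{lp:find} (reading $\loc[c.nxt] = \tup{0,\wtv}$). Both are in $\Reads$ and belong to $\cidOf(\ret)$, and since $r_p$ is $\po$-before the line-\ref{lp:find} read and $\po \subseteq \ghb$, we get $j_1 < j_2$. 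Using \cref{th:linkfree:lp-validates} (whose inductive proof shows every prefix memory $\mem{\upto{\vec{e}}{j}}$ with $j \geq \recEndOf(\vec{e})$ lies in $\volatile(\kvs_j)$ for some $\kvs_j$), together with \cref{lm:reach-init}, \cref{lm:uread-nxt-ghb-last}, \cref{lm:member-reach}, \cref{lm:linkfree:volatile-reach-sd} and \cref{lm:immutable-key}, I would record the endpoint facts: at $j_1$ the node $p$ is a reachable, unmarked member with $\mem{\upto{\vec{e}}{j_1}}(\loc[p.key]) < k$, and at $j_2$ the node $c$ is a reachable, unmarked member with $\mem{\upto{\vec{e}}{j_2}}(\loc[c.key]) = k'$ and $k' > k$ (strictness because the delete failed).

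Next I would assume, for contradiction, that $k \in \dom(\kvs_j)$ for every $j \in [j_1, j_2]$, and pin down the witness node. Since each consecutive pair $(\kvs_j,\kvs_{j+1})$ is either a stutter or a legal transition in $\Delta$, and since a successful delete of $k$ is the only transition that removes $k$, the continuous presence of $k$ forbids any such delete in the interval; hence no successful insert of $k$ occurs either, and the $k$-entry is unchanged throughout, i.e.\ $\kvs_j(k) = \tup{y,v}$ for a fixed node $y$ and value $v$. By \cref{lm:linkfree:volatile-members-reach}, $y$ is therefore reachable, unmarked, with $\mem{\upto{\vec{e}}{j}}(\loc[y.key]) = k$ at every $j \in [j_1,j_2]$.

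The heart of the argument is a reachability-tracking induction establishing $\reach(\mem{\upto{\vec{e}}{j}}, p, y)$ for all $j \in [j_1,j_2]$. The base case at $j_1$ follows from the sorted-backbone structure of $\volatile$, since $p$ and $y$ are both members with $\mem{\upto{\vec{e}}{j_1}}(\loc[p.key]) < k = \mem{\upto{\vec{e}}{j_1}}(\loc[y.key])$. For the inductive step I would invoke \cref{lm:reach-preserv} on the linearization-point transitions — its side condition ``$p$ reachable or marked'' is maintained because an unmarked reachable node can only leave the backbone by first being marked (unlinking is a $\p{trim}$, which by \cref{lm:trim-on-marked} acts only on marked nodes) — and give a direct argument via \cref{lm:marking-irrev} and \cref{lm:trim-on-marked} for the stuttering steps that unlink an intermediate marked node, where reachability to $y$ is preserved because the unlinked node is skipped while its frozen successor stays on the path. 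Finally, I would derive the contradiction at the $\p{find}$ read that links into $c$: there the predecessor pointer reads $\tup{\wtv, c}$, so by \cref{lm:uread-nxt-ghb-last} this link holds in the current memory, and combining it with $\reach(\cdot, p, y)$ and sortedness (\cref{lm:sorted-links}, \cref{lm:read-link-ord-keys}) forces the unique forward chain from $p$ to reach $c$ no later than $y$, whence $\mem{\cdot}(\loc[c.key]) \leq k$, contradicting $c.key > k$. This refutes the assumption, so some $i \in [j_1,j_2]$ has $k \notin \dom(\kvs_i)$, and we take $\kvs = \kvs_i$.

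The step I expect to be the main obstacle is this last one, in combination with the $\p{trim}$ cases of the induction: because $\p{find}$ reaches $c$ through a \ref{op:garbage} sub-traversal that trims marked nodes, the immediate predecessor of $c$ in the read that links to it may itself be a marked node rather than $p$, so I must argue carefully — using the freezing of marked nodes' successors (\cref{lm:marking-irrev}) and that trimmed nodes are marked (\cref{lm:trim-on-marked}) — that the continuously-present, unmarked $k$-node $y$ necessarily lies on the forward chain from $p$ strictly before $c$ at the relevant read time, which is precisely the sortedness violation that cannot occur.
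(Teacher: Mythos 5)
Your setup coincides with the paper's proof: the same two reads $j_1$ (\ref{ev:find:rp}) and $j_2$ (\ref{lp:find}) of the final \ref{op:find}, the same contradiction hypothesis yielding a fixed witness $\kvs_j(k)=\tup{y,v}$ throughout $[j_1,j_2]$ (only a delete can remove $k$), and the same use of \cref{lm:reach-preserv} to transport reachability-to-$y$ in time. But your endgame has a genuine gap. Your invariant is anchored at the \emph{fixed} source $p$: writing $M_j$ for $\mem{\upto{\vec{e}}{j}}$, you maintain $\reach(M_j,p,y)$ and then, at the read $j^*$ that links into $c$, you want sortedness to force $c$ to appear ``no later than $y$'' on the chain from $p$. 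That presupposes $c$ lies on the \emph{current} chain from $p$ at time $j^*$, which does not follow: the node $n$ whose \p{nxt} field is read at $j^*$ (with value $\tup{\wtv,c}$, current by \cref{lm:uread-nxt-ghb-last}) may be a marked node already unlinked from the list, and its frozen link $n \to c$ may have been written \emph{before} $j_1$ --- outside the window in which you know $k$ is continuously present --- at which time no node holding $k$ need have lain between $n$ and $c$. As a static configuration, ``the chain from $p$ passes through $y$ on its way to $c$'' together with ``a detached marked $n$ with frozen $M_{j^*}(\loc[n.nxt])=\tup{1,c}$ and $n$'s key below $k$'' satisfies every constraint of $\volatile$: sortedness \eqref{cond:linkfree:sorted-links} constrains each individual link, and \eqref{cond:linkfree:marked-nxt-init} only requires marked nodes to point at member or deleted nodes, not to rejoin the backbone before $y$. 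So no contradiction is extractable from $\reach(M_{j^*},p,y)$ plus the single link; the missing information is about the \emph{traversal's history}, not the memory at $j^*$. Your closing paragraph flags exactly this obstacle, but the proposed resolution (``$y$ lies on the forward chain from $p$ strictly before $c$'') is the unproven claim itself, and the fixed-source invariant is too weak to deliver it.

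The paper closes this gap by threading the property along the traversal instead of anchoring it at $p$: it maintains, for the node \emph{currently being traversed}, that it reaches $y$ in the memory at the moment its \p{nxt} field is read. Each \ref{op:garbage} read transfers the property one hop --- if the current node differs from $y$ (it has key below $k$, or is marked while $y$ is unmarked by \cref{lm:marking-irrev}) and reaches $y$, then its current successor also reaches $y$ --- and \cref{lm:reach-preserv} transports it in time between consecutive reads, its side condition holding because traversed nodes are either unmarked (hence reachable by \cref{lm:member-reach}) or marked. At the final hop $c$ itself reaches $y$ at time $j^*$, and sortedness of all links (\cref{lm:sorted-links}) forces $c$'s key $k_2$ to be strictly below $y$'s key $k$, contradicting $k < k_2$. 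Replacing your fixed-source induction with this per-node, hop-by-hop invariant repairs the proof; everything before that point in your proposal (choice of $j_1,j_2$, strictness of $k_1 < k < k_2$, the constancy of the witness pair, the endpoint facts) matches the paper's argument.
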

\begin{proof}
  We prove the claim by instantiating
    $ \vec{e}(j_1) = \ref{ev:find:rp} \in \Reads $ and
    $ \vec{e}(j_2) = \ref{lp:find} \in \Reads $,
  in the call to $ \ref{op:find}(k,p,c) $
  in \ref{op:delete-no}.
  Note that $ \vec{e}(j_1) \po-> \vec{e}(j_2) $ and thus $j_1 < j_2$.
  Let~$ M_j = \mem{\upto{\vec{e}}{j}} $.
  Let~$n_1$ and $n_2$ be such that
  $ (\vec{e}(j_1) \of \R{n_1}{nxt}{\tup{0,\wtv}}) $
  and
  $ (\vec{e}(j_2) \of \R{n_2}{nxt}{\tup{0,\wtv}}) $.
  Both reads are immediately \po-followed by reads of the keys~$k_1$ and~$k_2$
  of~$n_1$ and~$n_2$, respectively.
  The keys are such that $ k_1 < k < k_2 $.
  By \cref{lm:reach-init} $n_1$ is initialized at $\vec{e}(j_1)$,
  and $n_2$ is initialized at $\vec{e}(j_2)$.
  By \cref{lm:init-key-val} and \cref{lm:uread-key-ghb-last} we have
  $ M_{j_1}(\loc[n_1.key]) = k_1 $,
  $ M_{j_2}(\loc[n_2.key]) = k_2 $.
  Furthermore, by \cref{lm:uread-key-ghb-last}
  $ M_{j_1}(\loc[n_1.nxt]) = \tup{0, \wtv} $, and
  $ M_{j_2}(\loc[n_2.nxt]) = \tup{0, \wtv} $.
  By \cref{lm:member-reach}, we have
  $ \reach(M_{j_1}, n_1) $, and
  $ \reach(M_{j_2}, n_2) $.

  We know from \cref{th:linkfree:lp-validates} that for all~$j_1 \leq j \leq j_2$
  $M_j \in \volatile(\kvs_j)$ for some legal sequence of stores
  $\kvs_j \in \KVS'$.
  Now assume, towards a contradiction, that the required~$i$ does not exist.
  This implies that, for some $ \tup{y,v} $,
  $\kvs_j(k) = \tup{y,v}$, for all~$j_1 \leq j \leq j_2$.
  The pair associated with~$k$ must be constant: to change it in a legal way,
  there would be first a deletion of $k$, which would lead to a contradiction.

  Note that~$n_1$ will still be either unmarked and reachable or marked
  in any $M_j$ with $ j_1 \leq j \leq j_2 $,
  so we can iterate \cref{lm:reach-preserv} to obtain
  $\reach(M_j, n_1, y)$.

  Between $\vec{e}(j_1)$ and $\vec{e}(j_2)$,
  \ref{op:garbage} traverses a (possibly zero) number of marked nodes.
  Consider the first read $ \vec{e}(j_1') \of \R{n_1}{nxt}{\tup{\wtv, n_1'}} $
  in \ref{op:garbage}, with $j_1< j_1' < j_2$.
  From~$\reach(M_{j_1'}, n_1, y)$ and $M_{j_1'}(\loc[n_1.nxt]) = \tup{\wtv, n_1'}$ (by \cref{lm:uread-nxt-ghb-last}) we obtain
  $\reach(M_{j_1'}, n_1', y)$.

  In the case where~$n_1'$ is found marked,
  \ref{op:garbage} would proceed by reading its next pointer
  at some $ (\vec{e}(j_2') \of \R{n_1'}{nxt}{\tup{0,n_2'}}) $.
  Again by iterating \cref{lm:reach-preserv} we obtain
  $\reach(M_{j_2'}, n_1', y)$ and we can repeat the argument.

  Finally, if we find no more garbage ahead, we have $n_1' = n_2$.
  Then, by \cref{lm:marking-irrev} and $M_{j_2}(\loc[n_2.nxt]) = \tup{0, \wtv}$,
  $n_2$ is unmarked in~$M_{j_1'}$.
  By \cref{lm:member-reach} $\reach(M_{j_1'}, n_2)$.
  This leads to a contradiction with $\reach(M_{j_1'}, n_2, y)$ and $ k < k_2 $:
  $n_2$ and $y$ are both unmarked and reachable,
  but $\reach(M_{j_1'}, n_2, y)$ would contradict the sortedness invariant.
\end{proof}

\begin{definition}[Hindsight resolution for \LinkFree]
\label{def:linkfree:hres}
  Given an execution~$G$ of $ \LinkFreeImpl[op] $,
  we define the finite partial function
  $
    \hres{\lp} \from \enum{\ghb[G]} \to \CallId \pto \Nat
  $
  by letting $ \hres{\lp}(\vec{e})(c) \is i $
  if $\callOf(\ret) = \tup{\p{delete}, k}$ and
  $(\ret \of \Ret{\p{false}}) \in \vec{e}$
  and~$i$ is the \emph{minimal} index found in \cref{lm:linkfree:hind-delete};
  $\hres{\lp}(\vec{e})(c)$ is undefined otherwise.
\end{definition}

\begin{theorem}
\label{th:linkfree:hres-validates}
  $\tup{\lp, \hres{\lp}, r, \ghb, \volatile}$ is a linearization strategy.
  Moreover,
  $\tup{\lp, \hres{\lp}, r, \ghb, \volatile}$
  \pre\recEndOf-validates~$G$.
\end{theorem}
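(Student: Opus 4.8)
The statement to prove is Theorem~\ref{th:linkfree:hres-validates}, which asserts that $\tup{\lp, \hres{\lp}, r, \ghb, \volatile}$ is a full linearization strategy (Definition~\ref{def:full-lin-strategy}) and that it \pre\recEndOf-validates~$G$ in the sense of the full-strategy notion (the version in the appendix's Definition~\ref{def:valid-strategy}, with the two extra hindsight conditions~\ref{cond:hind-after-rec} and~\ref{cond:hind-linpt}).

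\textbf{Overall structure.} The plan is to split the proof along the two clauses of the full-strategy validation definition. First I would verify the purely structural requirements making this a full strategy: the underlying tuple $\tup{\lp, r, \ghb, \volatile}$ is already a linearization strategy (this was checked immediately after Definition~\ref{def:linkfree:linpt}), so it remains to check the hindsight-specific well-formedness condition~\ref{cond:hind-dom}, namely $\dom(\hres{\lp}(\vec{e})) \inters \dom(\lf) = \emptyset$. This is immediate from the definitions: by Definition~\ref{def:linkfree:hres}, $\hres{\lp}(\vec{e})(c)$ is defined only for calls $c$ with a return event $\Ret{\p{false}}$ and $\callOf(c) = \tup{\p{delete},k}$, and for exactly those calls Definition~\ref{def:linkfree:linpt} deliberately leaves $\lp(c) = \bot$; so the two domains are disjoint by construction.

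\textbf{The validation clauses.} For the main claim I would invoke Theorem~\ref{th:linkfree:lp-validates}, which already establishes that the non-hindsight strategy $\tup{\lp, r, \ghb, \volatile}$ \pre\recEndOf-validates~$G$ (i.e.\ Conditions~\ref{cond:stutter} and~\ref{cond:linpt-trans} hold). This discharges the first half of the full-strategy definition for free. What remains are the two hindsight conditions. Condition~\ref{cond:hind-after-rec}, $\hres{\lp}(\vec{e})(c) > \recEndOf(\vec{e})$, follows because the index produced by the hindsight lemma lies strictly between two read events $\vec{e}(j_1), \vec{e}(j_2)$ belonging to a non-recovery call~$c$, and all events of such calls come \po-after (hence \ghb-after, by the volatile-order requirements) the recovery; so the index is past $\recEndOf(\vec{e})$. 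Condition~\ref{cond:hind-linpt} requires that at position $\hres{\lp}(\vec{e})(c)$ the hindsight call can take an identity step, i.e.\ $\mem*{\upto{\vec{e}}{\hres{\lp}(\vec{e})(c)-1}} \in \volatile(q) \implies (q,q) \in \Delta(\callOf(c), r(c))$. Here $\callOf(c)=\tup{\p{delete},k}$ and $r(c)=\p{false}$, so I must show $k \notin q$, which is exactly the content of the hindsight lemma for delete, Lemma~\ref{lm:linkfree:hind-delete}: it guarantees that at the chosen index~$i$ the volatile memory encodes some $\kvs$ with $k \notin \kvs$, and then $(\kvs,\kvs) \in \Delta(\p{delete},k,\p{false})$ by the specification.

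\textbf{Main obstacle.} The genuinely hard mathematical work, namely establishing that there exists a point during a failed delete at which $k$ is absent from the volatile store, has already been quarantined into Lemma~\ref{lm:linkfree:hind-delete}, whose proof leverages the reachability-preservation Lemma~\ref{lm:reach-preserv} and, crucially, the legality of the \ghb-induced linearization already proven in Theorem~\ref{th:linkfree:lp-validates}. Consequently the theorem I am asked to prove is mostly a bookkeeping assembly: matching each clause of Definition~\ref{def:valid-strategy} to an already-established fact. The one subtlety I would be careful about is aligning the index conventions: Lemma~\ref{lm:linkfree:hind-delete} provides $j_1 \le i \le j_2$ with $\vec{e}(j_1),\vec{e}(j_2)\in\Reads$ of the same call, whereas Condition~\ref{cond:hind-linpt} is phrased in terms of the memory \emph{before} index $\hres{\lp}(\vec{e})(c)$; since Definition~\ref{def:linkfree:hres} picks the \emph{minimal} such~$i$ and the volatile encoding is preserved by non-linearization steps, I would argue that the store at position $i-1$ also omits~$k$ (or adjust the lemma's index by one), so that the identity transition is enabled exactly where the definition demands. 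This index alignment, rather than any deep argument, is where I expect the only real friction.
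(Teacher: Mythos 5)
Your proposal is correct and takes essentially the same route as the paper, whose entire proof of \cref{th:linkfree:hres-validates} is the one-liner ``Direct consequence of \cref{lm:linkfree:hind-delete,th:linkfree:lp-validates}''; your clause-by-clause assembly (well-formedness of the full strategy via disjointness of $\dom(\hres{\lp}(\vec{e}))$ and $\dom(\lp)$, \cref{th:linkfree:lp-validates} for the base validation conditions, and \cref{lm:linkfree:hind-delete} for the two hindsight conditions) is exactly the bookkeeping the paper leaves implicit. One refinement on the index-alignment friction you flagged: it is a genuine off-by-one in \cref{def:linkfree:hres} itself, and of your two proposed fixes only shifting the index by one is sound in general --- since the definition takes the \emph{minimal} $i$ with $k$ absent, whenever $i > j_1$ the event $\vec{e}(i)$ must be one that removes $k$ from the encoded state (a linearization point of a successful delete of~$k$, since reads leave the encoded state unchanged and would contradict minimality), so $\mem{\upto{\vec{e}}{i-1}}$ still encodes a state containing~$k$ and your fallback of arguing that the store at $i-1$ also omits~$k$ would fail precisely there.
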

\begin{proof}
  Direct consequence of \cref{lm:linkfree:hind-delete,th:linkfree:lp-validates}.
\end{proof}

\subsection{Durable Linearizability}

We now define the persistency thresholds of operations,
and prove durable linearizability.

\begin{definition}[Persistency thresholds for \LinkFree]
\label{def:linkfree:pers-pt}
  Given an execution~$G$ of $ \LinkFreeImpl[op] $,
  we define the finite partial function
  $
    \pt \from G.\CallId \pto G.\Persisted
  $
  and set $ \PRd \subs G.\CallId $,
  as the smallest such that:
  \begin{itemize}
    \item if $\callOf(c) = \tup{\p{insert}, \wtv}$ then
      \begin{itemize}
        \item if $\lp(c) = (e \of \U{p}{nxt}{\wtv}{\tup{0, n}} )$ then
          $
            \pt(c) = \min_{\nvo}\set{e' | (e'\of\W{n}{valid}{1}) \in G.\Persisted}
          $\\
          (which is undefined if~$G.\Persisted$ contains no write to \loc[n.valid])
        \item if there is an event $(r\of\Ret{\p{false}}) \in G.\EvOfCid{c}$
              then~$ c \in \PRd $.
      \end{itemize}
    \item if $\callOf(c) = \tup{\p{delete}, \wtv}$ then
      \begin{itemize}
        \item if there is an event $e\in G.\EvOfCid{c} \inters G.\Persisted$ generated
              at~\ref{lp:delete-ok}
              then $ \pt(c) = e $
        \item if there is an event $(r\of\Ret{\p{false}}) \in G.\EvOfCid{c}$
              then~$ c \in \PRd $.
      \end{itemize}
  \end{itemize}
   By construction we have $\pt$ is injective,
   $\dom(\pt) \subs \dom(\lp)$,
   $ \PRd \inters \dom(\pt) = \emptyset $,
   and
   all calls in $ \PRd $ do not modify the abstract state.
\end{definition}

\subsubsection{Flush Before Returning}

\begin{lemma}[\ref{op:flush-ins} flushes]
\label{lm:flush-ins-before-ret}
  Assume~$G$ is an execution containing all the events
  generated by a call to
  \ref{op:flush-ins}$(n)$,
  for some~$n$ not written to by initial or recovery events.
  Then there are
  $ w, \var{fl}, e_1 \in G.E $ such that
  \[
  \smash{
    (w \of \W{n}{valid}{1})
    \nvo->
    (\var{fl} \of \FL{n})
    \nvo->
    (e_1 \of \W{n}{insFl}{1}).
  }
  \]
\end{lemma}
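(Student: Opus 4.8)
The plan is to locate, inside $G$, the write $e_1$ to $\loc[n.insFl]$ together with the flush emitted by the \ref{op:flush-ins}$(n)$ call that produced it, and then to step back one program-order position to the preceding \ref{op:makevalid}$(n)$ call to extract $w$. First I would observe that every \ref{op:flush-ins}$(n)$ call begins with a read $\R{n}{insFl}{b}$ and continues with $\FL{n} \po-> \W{n}{insFl}{1}$ exactly when $b=0$. Since $n$ is untouched by initial and recovery events and $\loc[n.insFl]$ is only ever set by \ref{op:flush-ins}, always through a plain write and never an update, a write $e_1 \of \W{n}{insFl}{1}$ must exist in $G$: if the call in the hypothesis read $b=0$ it generated $e_1$ itself, and otherwise its read $\R{n}{insFl}{1}$ reads-from such an $e_1$ (the ``helping'' case). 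Either way $e_1$ sits in the $b=0$ branch of some \ref{op:flush-ins}$(n)$ call, which hence also contains a flush $fl \of \FL{n}$ with $fl \po-> e_1$.

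To obtain $fl \nvo-> e_1$ I would chain two Px86 axioms: from $fl \po-> e_1$ with $fl \in \Flushes$ and $e_1 \in \UWrites \subs \Durable$, axiom \eqref{axiom:tso-fl-wufl} gives $fl \tso-> e_1$, and then \eqref{axiom:nvo-fl-d} gives $fl \nvo-> e_1$.

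To obtain $w \nvo-> fl$ I would use that, in both the failing and the successful insert branch, the \ref{op:flush-ins}$(n)$ call carrying $fl$ is immediately \po-preceded by a \ref{op:makevalid}$(n)$ call, which starts with a read $\R{n}{valid}{b'}$ lying \po-before $fl$. If $b'=0$, \ref{op:makevalid} emits $w \of \W{n}{valid}{1}$ with $w \po-> fl$, so \eqref{axiom:tso-fl-wufl} again yields $w \tso-> fl$; if $b'=1$, the read reads-from a plain write $w \of \W{n}{valid}{1}$ and \cref{lm:wrfl-tso} yields $w \tso-> fl$. Since $w$ writes $\loc[n.valid]$ and $fl$ flushes $n$, and an entire node fits in a single cache line, both events lie in $\Durable_{\CL}$ for the cache line $\CL$ of $n$, so \eqref{axiom:nvo-cl} would upgrade $w \tso-> fl$ to $w \nvo-> fl$. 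Composing the two edges yields $w \nvo-> fl \nvo-> e_1$, as required.

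The hardest part will be the cache-line step establishing $w \nvo-> fl$: it is the only place genuinely relying on the strengthened axiom \eqref{axiom:nvo-cl} (rather than the per-location \eqref{axiom:nvo-loc} of the original Px86\textsubscript{sim}), and it needs the modelling facts that all fields of $n$ share one cache line and that $\FL{n}$ touches that line. Everything else — the case split on the value read from $\loc[n.insFl]$ (covering helping) and from $\loc[n.valid]$, plus the bookkeeping that these two fields of $n$ only carry plain writes because $n$ is never written by initial or recovery events — I expect to be routine.
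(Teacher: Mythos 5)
Your proposal is correct and takes essentially the same route as the paper's proof: the paper performs the identical case split on the value read from $\loc[n.insFl]$ (resolving the $(\R{n}{insFl}{1})$ case through the $\rf$-edge to the helping \ref{op:flush-ins}$(n)$ call, exactly as you do), and it obtains $w \tso-> \var{fl}$ by invoking \cref{lm:makevalid-irrev}, whose proof is precisely your inlined case analysis on \ref{op:makevalid}$(n)$ via \cref{lm:wrw-ghb}/\cref{lm:wrfl-tso}. One inessential quibble with your closing remark: the step from $w \tso-> \var{fl}$ to $w \nvo-> \var{fl}$ does not actually hinge on the strengthened axiom \eqref{axiom:nvo-cl}, since even the original Px86\textsubscript{sim} model provides \eqref{axiom:nvo-wu-fl} for a write \tso-before a flush on the same cache line; the strengthening is genuinely needed elsewhere, \eg to persist \p{valid} after \p{key} without an extra flush (cf.~\cref{rm:px86}).
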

\begin{proof}
  We do a case analysis on the events generated by \ref{op:flush-ins}$(n)$.
  \begin{casesplit}
  \case[$
      (e_0 \of \R{n}{insFl}{0})
      \po->
      (\var{fl} \of \FL{n})
      \po->
      (e_1 \of \W{n}{insFl}{1})
    $]
    Then by \cref{lm:makevalid-irrev}
    there is a $ (w \of \W{n}{valid}{1}) $ with
    $ w \tso-> \var{fl} $, which implies $ w \nvo-> \var{fl} $.
    Notice that $\var{fl} \nvo-> e_1$.
  \case[$(e_0 \of \R{n}{insFl}{1}) $]
    Consider the \rf-edge $ (e_1 \of \W{n}{insFl}{1}) \rf-> e_0 $.
    We have that since~$n$ is not written to by initial or recovery events,
    $ e_1 $ must be generated by some \ref{op:flush-ins}$(n)$.
    By definition we thus have $(\var{fl} \of \FL{n}) \po-> e_1$
    which implies $ \var{fl} \nvo-> e_1 $.
    Since every call to \ref{op:flush-ins}$(n)$ is \po-preceded by
    a call \ref{op:makevalid}$(n)$, we can apply \cref{lm:makevalid-irrev}
    to~$\var{fl}$ and obtain the desired~$(w \of \W{n}{valid}{1})$.
  \qedhere
  \end{casesplit}
\end{proof}

\begin{lemma}[Flush before delete returns]
\label{lm:flush-del-before-ret}
  Let
    $ (\ret \of \Ret{\p{true}}) \in G.\EvOfCid{c}$ and
    $ (u \of \U{n}{nxt}{\wtv}{\wtv}) \in G.\EvOfCid{c} $
    generated at~\ref{lp:delete-ok}
  for some~$n$.
  Then there are
  $ \var{fl}, e_1 \in G.E $ such that
  \[
  \smash{
    (u \of \U{n}{nxt}{\tup{0,\wtv}}{\tup{1,\wtv}})
    \nvo->
    (\var{fl} \of \FL{n})
    \nvo->
    (e_1 \of \W{n}{delFl}{1}).
  }
  \]
\end{lemma}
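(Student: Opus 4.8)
The plan is to follow the proof of \cref{lm:flush-ins-before-ret} closely, with the marking update~$u$ playing the role that the validity write played there and \ref{op:flush-del} playing the role of \ref{op:flush-ins}. By construction of \ref{op:delete-ok}, the successful update~$u$ at \ref{lp:delete-ok} is \po-followed by the single call $\ref{op:trim}(p,n)$, which begins with $\ref{op:flush-del}(n)$; I would therefore perform a case analysis on the events generated by this $\ref{op:flush-del}(n)$, according to the value~$b$ it reads from $\loc[n.delFl]$.

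In the branch $b=0$, the call generates $(\R{n}{delFl}{0}) \po-> (\var{fl} \of \FL{n}) \po-> (e_1 \of \W{n}{delFl}{1})$, and $u \po-> \var{fl}$ since $u$ \po-precedes the enclosing \ref{op:trim}. As $u \in \UWrites$ and $\var{fl} \in \Flushes$, \eqref{axiom:tso-fl-wufl} gives $u \tso-> \var{fl}$; because the whole node~$n$ fits in one cache line, $u$ and $\var{fl}$ are durable events on the same cache line, so \eqref{axiom:nvo-cl} yields $u \nvo-> \var{fl}$. Likewise $\var{fl} \po-> e_1$ with $e_1 \in \UWrites \subs \Durable$ gives $\var{fl} \tso-> e_1$ by \eqref{axiom:tso-fl-wufl} and then $\var{fl} \nvo-> e_1$ by \eqref{axiom:nvo-fl-d}, closing this case.

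The interesting branch is $b=1$, where $\ref{op:flush-del}(n)$ performs no flush of its own; here the read $(e_0 \of \R{n}{delFl}{1})$ must obtain its value from some $(w' \of \W{n}{delFl}{1})$ with $w' \rf-> e_0$, and the task is to locate a flush between~$u$ and~$w'$. First I would rule out~$w'$ being an initial or recovery write: since $u$ reads $\tup{0,\wtv}$ from $\loc[n.nxt]$, \cref{lm:upd-read-last} and \cref{lm:marking-irrev} force~$n$ to be unmarked in the recovered initial memory, whence by \cref{def:linkfree:recoverable} $\loc[n.delFl]$ holds~$0$ there, and the recovery never writes~$1$ to a \p{delFl} field. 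Thus~$w'$ is a \ref{op:flush-del} write and, lying in that procedure's $b=0$ branch, is \po-preceded by some $(\var{fl}' \of \FL{n})$. Its enclosing $\ref{op:trim}(\cdot,n)$ acts on a marked~$n$; since~$u$ is the unique update marking~$n$ (\cref{lm:marking-irrev}) and our own call is the one issuing~$u$, that trim must occur inside a \ref{op:garbage} traversal, hence is \po-preceded by a read $(r \of \R{n}{nxt}{\tup{1,\wtv}})$ which, by marking irreversibility, reads from~$u$, giving $u \rfe-> r \po-> \var{fl}'$. Then \cref{lm:wrfl-tso} yields $u \tso-> \var{fl}'$, and \eqref{axiom:nvo-cl} (same cache line) gives $u \nvo-> \var{fl}'$, while $\var{fl}' \po-> w'$ gives $\var{fl}' \nvo-> w'$ as before; taking $\var{fl} = \var{fl}'$ and $e_1 = w'$ completes the chain. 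The main obstacle is exactly this $b=1$ case: the flush witnessing durability belongs to a concurrent helping thread rather than to the deleting call itself, so the argument hinges on recognising the $\ev{\UWrites} \seq \rf \seq \ev{\Reads} \seq \po \seq \ev{\Flushes}$ pattern of \cref{lm:wrfl-tso} running from the marking~$u$ through the helper's marked-\p{nxt} read to its flush, and on the cache-line strengthening \eqref{axiom:nvo-cl} to pass from~$\tso$ to~$\nvo$.
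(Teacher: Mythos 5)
Your proof is correct and follows essentially the same route as the paper's: a case split on the branch taken by the \ref{op:flush-del} call in the trim \po-following~$u$, with the $b=0$ branch closed directly and the $b=1$ branch closed by tracing the \rf-edge back to a helper's \ref{op:flush-del}, identifying the marked-\p{nxt} read's source as~$u$ via uniqueness of the marking update, and applying \cref{lm:wrfl-tso} together with the cache-line $\nvo$ guarantee. The only (harmless) deviations are presentational: you rule out the \ref{op:delete-ok} context for the writer's trim outright, where the paper instead handles it by showing its marking update equals~$u$, and you spell out the initial/recovery exclusion and the exact \tso/\nvo axioms that the paper leaves implicit.
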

\begin{proof}
We do a case analysis on the events generated by \ref{op:flush-del}$(n)$
in the call to~\ref{op:trim}$(\wtv,n)$
that \po-precedes~$\ret$.
  \begin{casesplit}
  \case[$
    \R{n}{delFl}{0} \po-> (\var{fl}\of\FL{n}) \po-> (e_1\of\W{n}{delFl}{1})
  $]
    The call to \ref{op:trim} is \po-after~$u$.
    Therefore we have $ u \nvo-> \var{fl} \nvo-> e_1 $ as desired.

  \case[$ (e_0 \of \R{n}{delFl}{1}) $]
    Consider the \rf-edge $ (e_1 \of \W{n}{delFl}{1}) \rf-> e_0 $.
    The event~$e_1$ must come from a call to~\ref{op:flush-del}$(n)$.
        Such call would also generate~$(\var{fl} \of \FL{n}) \po-> e_1$.
    \ref{op:flush-del}$(n)$ is only called by \ref{op:trim}$(n)$,
    which in turn is called in two contexts:
    \begin{casesplit}
    \case*[In \ref{op:garbage}]
      We have
      $
        w_1 \rf->
        (r \of \R{n}{nxt}{\tup{1,\wtv}})
        \po->
        \var{fl}.
      $
      By \cref{lm:wrfl-tso}, $w_1 \tso-> \var{fl}$ and so
      $w_1 \nvo-> \var{fl}$
      since $\locOf(w_1)=\locOf(\var{fl})=n$.

      Establishing that~$w_1 = u$ would prove this case.
      Assume $w_1 \ne u$.
      We have~$ w_1 \mo/-> u $,
      since $u$~reads~$\tup{0,\wtv}$.
      However~$ u \mo-> w_1 $ is also impossible since the only events
      generated by the program that write~$\tup{1,\wtv}$ to $\loc[n.nxt]$
      are updates from~$\tup{0,\wtv}$.

    \case*[In \ref{op:delete-ok}]
      We have
      $(u_1 \of \U{c}{nxt}{\tup{0,\wtv}}{\tup{1,\wtv}}) \po-> \var{fl}$.
      Similarly to the previous case we can conclude~$u_1 = u$.
      \qedhere
    \end{casesplit}
  \end{casesplit}
\end{proof}

\begin{theorem}
\label{th:linkfree:rets-persist}
  $ \cidOf(G.\Rets) \subs \dom(\pt) \union \PRd $.
\end{theorem}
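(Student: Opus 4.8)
The plan is to fix an arbitrary returned call $c \in \cidOf(G.\Rets)$, so that $G.\EvOfCid{c}$ contains a return event $(\ret \of \Ret{v})$, and to argue by case analysis on the operation $\callOf(c)$ and the return value $v$. Two of the four cases are immediate from \cref{def:linkfree:pers-pt}: if $v = \p{false}$, then whether $\callOf(c)$ is an insert or a delete, the presence of the event $(\ret \of \Ret{\p{false}}) \in G.\EvOfCid{c}$ places $c \in \PRd$ by construction. This leaves the two successful cases, where I must exhibit a persisted persistency point to conclude $c \in \dom(\pt)$.

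For a successful insert ($\callOf(c) = \tup{\p{insert},\wtv}$ and $v = \p{true}$), the linearization point $\lp(c)$ is the update generated at~\ref{lp:insert-ok}, so a fresh node $n$ was allocated by~\ref{op:newnode} during the call. Since the call returned, every event it generates is present in $G.E$, in particular all the events of the \ref{op:flush-ins}$(n)$ call that \po-precedes $\ret$; moreover $n$ is written by neither initial nor recovery events, as it is freshly allocated and hence distinct from all initial and recovery addresses by the no-double-allocation clause of \cref{def:exec}. I would then apply \cref{lm:flush-ins-before-ret} to obtain events $(w \of \W{n}{valid}{1})$ and $(\var{fl} \of \FL{n})$ with $w \nvo-> \var{fl}$. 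Since $\Init \union \Flushes \subs G.\Persisted$ by \cref{def:exec}, we have $\var{fl} \in G.\Persisted$, and the $\nvo$-downward closure of $G.\Persisted$ (\cref{def:exec}) then yields $w \in G.\Persisted$. Hence the set $\set{e' | (e'\of\W{n}{valid}{1}) \in G.\Persisted}$ is nonempty, so $\pt(c)$, defined as its $\nvo$-minimum, exists and $c \in \dom(\pt)$.

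For a successful delete ($\callOf(c) = \tup{\p{delete},\wtv}$, $v = \p{true}$) the argument is structurally identical but simpler, as $\pt(c)$ is the linearization point itself. Here $\lp(c)$ is the update $(u \of \U{n}{nxt}{\tup{0,\wtv}}{\tup{1,\wtv}})$ generated at~\ref{lp:delete-ok}. Because the call returned $\p{true}$, I would invoke \cref{lm:flush-del-before-ret} to get $(\var{fl}\of\FL{n})$ with $u \nvo-> \var{fl}$; again $\var{fl} \in G.\Persisted$ since $\Flushes \subs G.\Persisted$, and downward closure of $\nvo$ gives $u \in G.\Persisted$. As $u \in G.\EvOfCid{c} \inters G.\Persisted$ is generated at~\ref{lp:delete-ok}, \cref{def:linkfree:pers-pt} sets $\pt(c) = u$, so $c \in \dom(\pt)$. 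Combining the four cases yields $\cidOf(G.\Rets) \subs \dom(\pt) \union \PRd$.

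The essential difficulty is entirely localized in \cref{lm:flush-ins-before-ret,lm:flush-del-before-ret}, which establish the orderings $w \nvo-> \var{fl}$ and $u \nvo-> \var{fl}$; their hard branches (where the current call skips the flush because the flag is already set) require tracking that some other call performed the flush and relating the relevant writes through $\mo$ and $\rf$. Granting those two lemmas, the present theorem reduces to two robust model-level facts of \cref{def:exec} — that flushes are always persisted and that $G.\Persisted$ is downward-closed under $\nvo$ — so the remaining work is just the routine bookkeeping of the case split.
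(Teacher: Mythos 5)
Your proof is correct and follows essentially the same route as the paper's: the same case split on return value and operation, the same appeal to \cref{lm:flush-ins-before-ret} and \cref{lm:flush-del-before-ret}, and the same closing step via synchronous flushes being in $G.\Persisted$ together with $\nvo$-downward closure. Your added remarks — that the freshness of $n$ discharges the side condition of \cref{lm:flush-ins-before-ret}, and that the return event guarantees all events of the flush call are present — are correct elaborations of details the paper leaves implicit.
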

\begin{proof}
  Consider an arbitrary~$\ret \in G.\Rets$ and let $c = \cidOf(\ret)$.
  If $ \valOf(\ret) = \p{false} $ then $ c \in \PRd $.
  Otherwise, we have $ \valOf(\ret) = \p{true} $ and we have to show
  $ \pt(c) \ne \bot $.
  We examine two cases:
  \begin{casesplit}
  \case[$ \callOf(c) = \tup{\p{insert}, k, v} $]
    Recall from \cref{def:linkfree:pers-pt}~$
      \pt(c) = \min_{\nvo}\set{e' | (e'\of\W{n}{valid}{1}) \in G.\Persisted}
    $;
    therefore $\E e' \in {G.\Persisted}.(e'\of\W{n}{valid}{1})$
    would imply $ \pt(c) \ne \bot $.
    Since $\ret \in G.\EvOfCid{c}$, there is an event
    $ (u \of \U{\wtv}{nxt}{\wtv}{\tup{0, n}}) \in G.\EvOfCid{c} $
    generated at \ref{lp:insert-ok}, for some freshly generated~$n$.
    By \cref{lm:flush-ins-before-ret},
    there are $w,\var{fl} \in G.E$ with
    $ (w \of \W{n}{valid}{1}) \nvo-> (\var{fl} \of \FL{n}) $.
    By \cref{def:exec},~$\var{fl} \in G.\Persisted$, and so
    $ w \in G.\Persisted $ as required.

  \case[$ \callOf(c) = \tup{\p{delete}, k} $]
    Since $\ret \in G.\EvOfCid{c}$, there is an event
    $ (u \of \U{n}{nxt}{\wtv}{\wtv}) \in G.\EvOfCid{c} $
    generated at~\ref{lp:delete-ok}
    for some~$n$.
    Therefore by \cref{lm:flush-del-before-ret}
    there is
    $ \var{fl} \in G.E $ such that
    $ u \nvo-> (\var{fl} \of \FL{n}) $.
    Since~$\var{fl} \in G.\Persisted$,
    we have $u\in G.\Persisted$ too.
    From \cref{def:linkfree:pers-pt} then
    $ \pt(c) = u $.
    \qedhere
  \end{casesplit}
\end{proof}

\subsubsection{Voidability of voided calls}

\begin{theorem}
\label{th:linkfree:voided-voidabile}
  Let $c \in \dom(\lp) \setminus (\dom(\pt) \union \PRd)$,
  and $\vec{e} \in {\enum[G.E]{\vo}}$.
  If $
    \hres{\lp}[\vec{e}] = \vec{e}' \concat \lp(c) \concat \vec{e}''
  $ then $
      \tup{\callOf(c), r(c)}
  $ is
  \pre \h-voidable,
  where $
    \h = \restr{\histOf[\lp]{r}(\vec{e}'')}{(\dom(\pt) \union \PRd)}
  $.
\end{theorem}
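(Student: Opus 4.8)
The plan is to prove the contrapositive: assume $\tup{\callOf(c), r(c)}$ is \emph{not} \pre\h-voidable and derive $c \in \dom(\pt) \union \PRd$, contradicting the hypothesis $c \in \dom(\lp) \setminus (\dom(\pt) \union \PRd)$. By \cref{lm:kvs-voidable}, a non-voidable call must be a successful insert or a successful delete of some key $k$, and moreover $\h$ must contain at least one call to an operation on the same key $k$ with $\tup{\callOf(c),r(c)} \concat \h$ legal. Since $\h$ is the restriction of $\histOf[\lp]{r}(\vec{e}'')$ to $\dom(\pt) \union \PRd$, this offending call corresponds to some \emph{persisted} call $c'$ whose linearization point occurs in $\vec{e}''$, i.e.\ $\ghb$-after $\lp(c)$. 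First I would isolate this $c'$ and note that, by \cref{th:linkfree:lp-validates,th:linkfree:hres-validates}, the $\ghb$-induced history is legal, so I may restrict attention to the finitely many legal pairings of $\callOf(c)$ and $\callOf(c')$ on key $k$.

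The core of the argument is a case analysis mirroring the informal proof in \cref{sec:eval}. I would split on whether $c$ is a successful insert or a successful delete of $k$, and for each on the nature of the $\ghb$-later persisted call $c'$ on $k$:
\begin{itemize}
  \item If $\callOf(c) = \tup{\p{insert}, k, v}$ with $r(c) = \p{true}$, then $\lp(c)$ is the update $\U{p}{nxt}{\tup{0,\wtv}}{\tup{0,n}}$ that links in a fresh node $n$. The $\ghb$-later persisted call $c'$ on $k$ is either a successful delete of $k$ or a failed insert of $k$. Using the volatile invariants (sortedness \eqref{cond:linkfree:sorted-links}, reachability \eqref{cond:linkfree:members-reachable}) established along the $\ghb$-sequence, I would argue that $c'$ must operate on the \emph{same} node $n$; this is exactly why the enhanced $\KVS'$ specification (tracking node addresses) was introduced. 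Since $c'$ persisted, the successful CAS at \ref{lp:delete-ok} persisted (for the delete case), and because $\loc[n.nxt]$ and $\loc[n.valid]$ share a cache line, \eqref{axiom:nvo-cl} forces the write setting $\loc[n.valid]$ to $1$ to persist as well. In the failed-insert case, $c'$ runs \ref{op:makevalid} followed by \ref{op:flush-ins} on $n$ before returning, so by \cref{lm:flush-ins-before-ret} a write $\W{n}{valid}{1}$ is $\nvo$-before a persisted flush and hence persisted. Either way $G.\Persisted$ contains a write $\W{n}{valid}{1}$, so $\pt(c) \ne \bot$, i.e.\ $c \in \dom(\pt)$.
  \item If $\callOf(c) = \tup{\p{delete}, k}$ with $r(c) = \p{true}$, then $\lp(c)$ is the marking update at \ref{lp:delete-ok}. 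A symmetric argument on the $\ghb$-later persisted $c'$ on $k$ (a successful insert or failed delete of $k$) shows that the marking update itself must have persisted, using cache-line coherence and the flush-before-return guarantee \cref{lm:flush-del-before-ret}. This again yields $c \in \dom(\pt)$.
\end{itemize}

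In each branch I would invoke the relevant invariant lemmas: \cref{lm:makevalid-irrev} to transfer a read of a set validity bit to a persisted write, \cref{lm:marking-irrev} and \cref{lm:sorted-links} to pin down that $c$ and $c'$ act on the same node, and the cache-line axiom \eqref{axiom:nvo-cl} to bundle the $\p{nxt}$ and $\p{valid}$ writes into a single persistency step. The main obstacle I anticipate is the node-identity argument: establishing that the $\ghb$-later persisted operation on $k$ really touches the very node $n$ that $c$ inserted/marked, rather than some other node also holding $k$. This requires combining the legality of the $\ghb$-linearization (so that between $\lp(c)$ and $\lp(c')$ no intervening operation on $k$ changes the node representing $k$) with the $\KVS'$-level invariant that the address associated with $k$ is constant across contiguous same-key operations—precisely the strengthening motivated in \cref{rm:spec-determinism}. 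Once node identity is secured, the persistency conclusion follows mechanically from cache-line coherence and the flush lemmas, completing the contradiction and hence the theorem.
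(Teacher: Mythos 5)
Your overall skeleton is the paper's: argue by contradiction via \cref{lm:kvs-voidable}, isolate the first persisted call $c'$ on the same key $k$ in $\h$, use legality of the \ghb-induced linearization to pin down node identity, and then force $\pt(c) \ne \bot$. Your insert branch is essentially the paper's proof verbatim: for a voided successful insert of node $n$, a later persisted failed insert yields a persisted $\W{n}{valid}{1}$ via \cref{lm:flush-ins-before-ret}, and a later persisted successful delete yields one via \cref{lm:makevalid-irrev} plus the cache-line axiom (the valid-write is \tso-before the persisted marking CAS on the same address $n$, hence \nvo-before it). That part is correct.

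The delete branch, however, has a genuine gap. You invoke ``cache-line coherence and the flush-before-return guarantee \cref{lm:flush-del-before-ret}'', but neither applies. \Cref{lm:flush-del-before-ret} has as hypothesis a return event $\Ret{\p{true}}$ in $G.\EvOfCid{c}$ --- and a voided call $c$ cannot have returned, since by \cref{th:linkfree:rets-persist} every returned call lies in $\dom(\pt) \union \PRd$, which is exactly what $c$ is assumed not to be; if the lemma were applicable the theorem would be vacuous. Cache-line coherence also cannot bridge the gap here: the marking update $\lp(c)$ is on node $n$, while the persistency point of a later successful insert $c'$ of $k$ is a write to the \p{valid} field of a \emph{fresh} node $n' \ne n$ (and for a failed delete $c'$ there is no persistency point at all, only a hindsight index), so axiom \eqref{axiom:nvo-cl} gives no $\nvo$ edge. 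The missing idea is the \emph{helping} flush inside \ref{op:trim}: any later operation on $k$ must, before its (hindsight) linearization, either observe $n$ unreachable or traverse it marked, and in both cases some \ref{op:trim}$(\wtv,n)$ has executed \ref{ev:trim-w-nxt}, which is \po-preceded by \ref{op:flush-del}$(n)$; hence $\lp(c)$ is \nvo-before a flush $\FL{n}$, flushes are always persisted, so $\lp(c) \in G.\Persisted$ and $\pt(c) = \lp(c)$ --- the desired contradiction. For the failed-delete case of $c'$ this additionally requires the hindsight-window reachability argument of \cref{lm:linkfree:hind-delete} to locate the traversal relative to $\hres{\lp}$, a step your sketch does not account for.
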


\begin{proof}
  We prove the statement by
  case analysis on $ \tup{\callOf(c), r(c)} $.
  \begin{casesplit}
  \case[$\tup{\p{insert}, k,v, \p{false}}$]
    The call is \pre h-voidable for every~$h$
    since it does not alter the abstract state.

  \case[$\tup{\p{delete}, k, \p{false}}$]
    The call is \pre h-voidable for every~$h$
    since it does not alter the abstract state.

  \case[$\tup{\p{insert}, k,v, \p{true}}$]
We show that if we assume the call $\tup{\callOf(c), r(c)}$
    is not \pre h-voidable,
    we would contradict the assumption that $c \notin \dom(\pt) \union \PRd$.
    By \cref{lm:kvs-voidable},
    we only need to rule out the cases where
    $\tup{\callOf(c), r(c)} \concat h$ is legal but
    $h$ contains an operation on~$k$.
    Let $ \tup{c',v'} $ be
    the first such occurrence in~$h$.
    By construction, $\tup{c',v'}$ has executed its linearization point,
    and has been persisted.
    We consider the two legal cases:
    \begin{casesplit}
    \case[$\tup{c',v'}=\tup{\p{insert}, k,\wtv, \p{false}}$]
      By legality of the sequence,
      the node~$n$ inserted by~$c$ is the node found by~$c'$.
      Since~$c' \in \PRd$, it returned, and therefore
      it executed \ref{op:flush-ins}$(n)$.
      We know~$n$ was freshly allocated by~$c$ so it is not written to by initial or recovery events.
      We can therefore apply \cref{lm:flush-ins-before-ret} and get
      events
      \[
      \smash{
        (w \of \W{n}{valid}{1})
        \nvo->
        (\var{fl} \of \FL{n})
        \nvo->
        (e_1 \of \W{n}{insFl}{1}).
      }
      \]
      Since $\var{fl} \in G.\Persisted$, $w \in G.\Persisted$.
      By definition~$ \pt(c) $ will either be~$w$ or some
      event with the same label that \nvo-precedes~$w$
      (and thus that would be in $G.\Persisted$ too),
      leading to a contradiction.

    \case[$\tup{c',v'}=\tup{\p{delete}, k, \p{true}}$]
      By legality of the sequence,
      the node~$n$ inserted by~$c$ is the node deleted by~$c'$.
      Since the persisted threshold of~$c'$ is defined,
      we have $\lp(c') = \pt(c') \in G.\Persisted$.
      The event~$\lp(c')$
      (generated at~\ref{op:delete-ok})
      is \po-preceded by a call to \ref{op:makevalid}$(n)$.
      By \cref{lm:makevalid-irrev} we get
      \[
        (e \of \W{n}{valid}{1})
        \tso->
        (\lp(c') \of \U{n}{nxt}{\wtv}{\wtv})
      \]
      which implies
      $ e \nvo-> \lp(c') = \pt(c') \in G.\Persisted $.
      Therefore, $\pt(c)$ must be either~$e$ or some
      event with the same label that \nvo-precedes~$w$
      (and thus that would be in $G.\Persisted$ too),
      leading to a contradiction.
    \end{casesplit}

  \case[$\tup{\p{delete}, k, \p{true}}$]
    We show that if we assume the call $\tup{\callOf(c), r(c)}$
    is not \pre h-voidable,
    we would contradict the assumption that $c \notin \dom(\pt) \union \PRd$.
    By \cref{lm:kvs-voidable},
    we only need to rule out the cases where
    $\tup{\callOf(c), r(c)} \concat h$ is legal but
    $h$ contains an operation on~$k$.
    Let $ \tup{c',v'} $ be
    the first such occurrence in~$h$.
    By construction, $\tup{c',v'}$ has executed its linearization point,
    and has been persisted.
    Consider the two legal cases:
    \begin{casesplit}
    \case[$\tup{c',v'}=\tup{\p{delete}, k, \p{false}}$]
      The call~$c'$ is placed in~$h$ by the hindsight resolution~$\hres{\lp}$.
      Let~$i$ be such that $ \lp(c) = \vec{e}(i) $, and
          $n = \addrOf(\lp(c))$ be the removed node.
      From \cref{th:linkfree:lp-validates} we know that
      $\reach(\mem{\upto{\vec{e}}{i}}, n)$ holds.
      Now consider the~$j_1$ and~$j_2$
      identified in \cref{lm:linkfree:hind-delete},
      \ie the indexes in $\vec{e}$ of \ref{ev:find:rp} and \ref{lp:find}
      in the call of $\ref{op:find}(k,p,c)$ in \ref{op:delete-no}.
      We have two cases: either $\reach(\mem{\upto{\vec{e}}{j'}}, n)$ is false
      for some $j' < j_2$, or $\ref{op:find}(k,p,c)$ would traverse it,
      finding it marked.
      In both cases \ref{ev:trim-w-nxt} of a call to \ref{op:trim}$(\wtv,n)$
      would have been executed.
      Since \ref{ev:trim-w-nxt} is \po-preceded by a call to
      \ref{op:flush-del}$(n)$,
      we obtain
      $
        \lp(c) \nvo-> (\var{fl} \of \FL{n}) \in G.E.
      $
            Since~$\var{fl} \in G.\Persisted$,
      $ \lp(c) \in G.\Persisted $ and so $ \pt(c) = \lp(c) $.

    \case[$\tup{c',v'}=\tup{\p{insert}, k,\wtv, \p{true}}$]
      Similar to the previous case, we can show \ref{op:trim} is called
      on the deleted node before the linearization point of $c'$ is executed,
      which implies that $\lp(c)$ is flushed.
    \end{casesplit}
  \end{casesplit}
\end{proof}

\subsubsection{Ordering of Non-Commuting Calls}

\begin{lemma}
\label{lm:linkfree:non-commut-ghb}
  Let~$c, c'\in \dom(\pt)$ be two calls such that
  $
    \tup{\callOf(c), r(c)}
      \not\comm{\AbsState}{\Delta}
    \tup{\callOf(c'), r(c')}
  $.
  Then $
      \lp(c) \ghb-> \lp(c')
      \lor
      \lp(c') \ghb-> \lp(c).
    $
\end{lemma}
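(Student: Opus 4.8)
The plan is to reduce the statement to a structural fact about update events on the nodes carrying the key $k$. First I would unfold what non-commutativity means in $\KVS'$: by the transition relation $\Delta$, two operations fail to commute only when they act on a common key, so the hypothesis $\tup{\callOf(c), r(c)} \not\comm{\AbsState}{\Delta} \tup{\callOf(c'), r(c')}$ forces $c$ and $c'$ to operate on the same key $k$. Since moreover $c,c' \in \dom(\pt)$, \cref{def:linkfree:pers-pt} tells us neither is a failed call (those are placed in $\PRd$), so both are successful inserts or successful deletes of $k$. Consequently, by \cref{def:linkfree:linpt}, both $\lp(c)$ and $\lp(c')$ are \emph{update} events: an insert linearizes with $\U{p}{nxt}{\tup{0,\wtv}}{\tup{0,n}}$ (linking a fresh node $n$ carrying $k$), and a delete linearizes with $\U{a}{nxt}{\tup{0,s}}{\tup{1,s}}$ (marking the node $a$ carrying $k$).

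Next I would fix an enumeration $\vec{e} \in G.\enum{\ghb}$. By \cref{th:linkfree:hres-validates} and \cref{lm:linpt-validates-legal} the induced history is legal, so its restriction to successful operations on $k$ strictly alternates between inserts (requiring $k$ absent) and deletes (requiring $k$ present). I would then prove that the linearization points of \emph{consecutive} successful operations on $k$ are $\ghb$-comparable and conclude the general case by transitivity, noting $\ghb$ is acyclic by \cref{lm:ghb-acyclic}: any two successful operations on $k$ are joined in this alternating sequence by a chain of consecutive pairs, each contributing a comparable pair of linpts. For a consecutive pair I distinguish two shapes. If it is an insert of $n$ immediately followed by a delete, the $\KVS'$ address tracking pins the deleted node to be exactly $n$, so the delete's linpt is an update on $\loc[n.nxt]$; since $n$ became reachable only through the write performed by the insert's linpt, the delete's traversal must read (directly, or after intervening trims) the link created by that write, yielding a $\rf$/$\ppo$ path and hence a $\ghb$ edge from the insert's linpt to the delete's. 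Symmetrically, if a delete of a node $x$ is followed by an insert, the insert can observe $k$ absent only by seeing $x$ marked (or unlinked by a trim that itself read $x$ marked), reading from the delete's marking update, which again produces a $\ghb$ edge. The reachability and read-from steps are justified by \cref{lm:visible-links}, \cref{lm:pointed-at-init}, \cref{lm:member-reach}, and \cref{lm:hb-implies-ghb}, with \cref{lm:marking-irrev} ensuring that marks and the frozen next-pointers of marked nodes are stable along the traversal.

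The main obstacle I expect is precisely this ``the traversal reads the right link'' step in the presence of concurrent trims: the node created by an insert may be relinked several times before a later operation on $k$ reaches it, so the $\ghb$ path must be routed through the sequence of $\loc[\wtv.nxt]$ updates (each an update, hence linked by $\mo \subseteq \ghb$) rather than through a single $\rf$ edge. I would handle this by using \cref{lm:visible-links} to expose, for each link actually observed during a traversal, a \emph{reachable} earlier state in which that link was written, and then chain these observations with marking-irreversibility (\cref{lm:marking-irrev}) and sortedness (\cref{lm:sorted-links}) to show the observed link is $\ghb$-after the insert's linpt. Once consecutive-pair comparability is established, closing under transitivity of $\ghb$ yields $\lp(c) \ghb-> \lp(c')$ or $\lp(c') \ghb-> \lp(c)$, as required.
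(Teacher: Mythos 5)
Your proposal is correct and follows essentially the same route as the paper's proof: it reduces, via legality of the $\ghb$-induced history and transitivity of $\ghb$, to contiguous successful same-key pairs, then handles insert-before-delete by showing the insert's publishing update is $\ghb$-before the delete's marking update, and delete-before-insert via the trim/marking-read argument reused from the voidability proof. The only cosmetic difference is that you unfold the $\rf$/$\ppo$ chains explicitly where the paper packages the same fact as ``$n$ is initialised at $\lp(c')$'' (\cref{def:init-addr}).
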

\begin{proof}
  Let $\vec{e} \in \enum[G.E]{\ghb}$.
  By \cref{th:linkfree:lp-validates} we know the sequence of linearization points
  in $\vec{e}$ is legal.
  We know from the definition of $\KVS'$ that $c,c'\in \dom(\pt)$
  do not commute if they are (successful) calls on the same key.
  By transitivity of~$\ghb$ we only need to prove the claim for
  contiguous calls on the same key~$k$,
  \ie calls $c,c' \in \dom(\pt)$ such that
  $ \vec{e}(i) = \lp(c) $ and $ \vec{e}(j) = \lp(c') $ implies that
  $ i<j $ and
  if $ \vec{e}(i') = \lp(c'') $ for some $i<i'<j$ and $c''\in \dom(\pt)$, then $c''$ is not a call on~$k$.
  This leaves us with two cases:
  \begin{casesplit}
  \case[$\tup{c ,v }=\tup{\p{insert}, k, \p{true}}$,
        $\tup{c',v'}=\tup{\p{delete}, k, \p{true}}$]
    Since the two calls are contiguous, we know the node~$n$ added by~$\lp(c)$
    is the one marked for deletion by~$\lp(c')$.
    Since~$n$ is initialized at~$ \lp(c') $,
    we get $\lp(c) \ghb-> \lp(c')$.
  \case[$\tup{c ,v }=\tup{\p{delete}, k, \p{true}}$,
        $\tup{c',v'}=\tup{\p{insert}, k, \p{true}}$]
    Let $n$ be the node marked by $\lp(c)$.
    As argued in the proof of \cref{th:linkfree:voided-voidabile},
    \ref{op:trim}$(\wtv, n)$ is called before~$\lp(c')$ which implies
    the update to the \p{nxt} field of the node deleted by~$c$
    is read \ghb-before~$\lp(c')$.
    \qedhere
  \end{casesplit}
\end{proof}

\begin{theorem}
\label{th:linkfree:commuting-calls}
  For any~$c, c'\in \dom(\pt)$, either:
  \begin{itemize}
  \item $
      \tup{\callOf(c), r(c)}
        \comm{\AbsState}{\Delta}
      \tup{\callOf(c'), r(c')}
    $, or
  \item $
      \pt(c) \nvo-> \pt(c')
      \implies
      \lp(c) \ghb-> \lp(c').
    $
  \end{itemize}
\end{theorem}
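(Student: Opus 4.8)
The plan is to prove Theorem~\ref{th:linkfree:commuting-calls} (which is precisely condition~\ref{cond:main:vo-nvo-agree-commute} instantiated for the link-free set) by reducing the general statement to the specific non-commuting cases and then establishing the required $\nvo$ ordering from the flushes issued by the algorithm. First I would observe that if $\tup{\callOf(c), r(c)} \comm{\AbsState}{\Delta} \tup{\callOf(c'), r(c')}$ the first disjunct holds and we are done, so the real content is in the non-commuting case. By the structure of the $\KVS'$ specification, two persisted calls $c,c' \in \dom(\pt)$ fail to commute exactly when they are both successful operations on the same key~$k$ (a successful insert and a successful delete, or vice versa). In that case, by \cref{lm:linkfree:non-commut-ghb}, $\lp(c)$ and $\lp(c')$ are $\ghb$-comparable, so since $\vo = \ghb$ the assumption $\pt(c) \nvo-> \pt(c')$ needs only to be shown to imply $\lp(c) \ghb-> \lp(c')$.

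Next I would argue by contraposition within the non-commuting case: assuming $\lp(c') \ghb-> \lp(c)$ (the only alternative ordering given $\ghb$-comparability), I would derive $\pt(c') \nvo-> \pt(c)$, contradicting $\pt(c) \nvo-> \pt(c')$ since $\nvo$ is a strict total order on $G.\Persisted$. Concretely, there are two symmetric subcases to handle, mirroring the two cases in \cref{lm:linkfree:non-commut-ghb}. If $c$ is a successful insert of~$k$ and $c'$ an adjacent successful delete of~$k$, then by the volatile invariants on the $\ghb$-induced memory (established in \cref{th:linkfree:lp-validates}) the node~$n$ inserted by~$c$ is the one deleted by~$c'$; the call $c'$ executes \ref{op:makevalid}$(n)$ before its linearization point $\lp(c') = \pt(c')$, so by \cref{lm:makevalid-irrev} some $(e \of \W{n}{valid}{1})$ satisfies $e \tso-> \lp(c')$, hence $e \nvo-> \pt(c')$. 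Since $\pt(c)$ is by definition the $\nvo$-minimal such write to $\loc[n.valid]$, we get $\pt(c) \nvo-> \pt(c')$ when $\lp(c) \ghb-> \lp(c')$ and, in the contrapositive ordering, the analogous inequality. The dual subcase (successful delete followed by adjacent successful insert) uses the fact, already exploited in \cref{th:linkfree:voided-voidabile} and \cref{lm:linkfree:non-commut-ghb}, that \ref{op:trim} is called on the deleted node before $\lp(c')$, so the flush in \ref{op:flush-del} forces $\pt(c) = \lp(c) \nvo-> \lp(c')$-related events to be persisted in the matching order.

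The main obstacle I anticipate is not any single inference but the bookkeeping needed to pin down, in each subcase, exactly which durable event serves as the persistency point and to chain the $\tso$-to-$\nvo$ liftings correctly through the cache-line condition \eqref{axiom:nvo-cl} and the flush-ordering condition \eqref{axiom:nvo-fl-d}. In particular, for the insert/delete subcase one must be careful that $\pt(c)$, defined as $\min_{\nvo}$ over \emph{persisted} writes to $\loc[n.valid]$, really is $\nvo$-before the $\W{n}{valid}{1}$ event witnessed by \cref{lm:makevalid-irrev}; this is immediate from minimality but must be stated. For the delete/insert subcase the subtlety is that $\pt(c) = \lp(c)$ is an update to $\loc[c.nxt]$ whereas the relevant flush is on the cache line of the node, so the argument that the flush in \ref{op:flush-del} orders $\pt(c)$ before the subsequent linearization relies on \ref{op:trim} having already observed the marked next-pointer; I would lean on \cref{lm:flush-del-before-ret} and \cref{lm:wrfl-tso} here, exactly as in the flush-before-return argument. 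Once these event identifications are made, the contradiction with totality of $\nvo$ closes the proof.
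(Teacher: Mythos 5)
Your proposal is correct and follows essentially the same route as the paper's proof: both reduce to adjacent non-commuting same-key pairs, combine \cref{lm:linkfree:non-commut-ghb} with strict totality of $\nvo$ to recast the claim as ``$\lp(c) \ghb-> \lp(c')$ implies $\pt(c) \nvo-> \pt(c')$'', and discharge the two subcases via \cref{lm:makevalid-irrev} together with the $\nvo$-minimality (and persistence-closure) of $\pt$ for insert-before-delete, and via the \ref{op:trim}/\ref{op:flush-del} flush for delete-before-insert. The only cosmetic difference is that you route the second subcase through \cref{lm:flush-del-before-ret} and \cref{lm:wrfl-tso}, whereas the paper cites the identical argument embedded in \cref{th:linkfree:voided-voidabile}; the underlying reasoning, including the cache-line lifting of $\tso$ into $\nvo$ that you flag explicitly, is the same.
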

\begin{proof}
  We prove that for all non-commuting~$c,c' \in \dom(\pt)$,
  $
    \lp(c) \ghb-> \lp(c')
    \implies
    \pt(c) \nvo-> \pt(c')
  $.
  This, by totality of $\nvo$ and \cref{lm:linkfree:non-commut-ghb} implies
  our claim.
  By transitivity and legality of the volatile linearization,
  we only need to consider legal non-commuting contiguous pairs of calls.
  \begin{casesplit}
  \case[$\tup{c ,v }=\tup{\p{insert}, k, \p{true}}$,
        $\tup{c',v'}=\tup{\p{delete}, k, \p{true}}$]
    Since the two calls are contiguous, we know the node~$n$ added by~$\lp(c)$
    is the one marked for deletion by~$\lp(c')$.
    Since a call to \ref{op:makevalid}$(n)$ \po-precedes $\lp(c')$,
    Therefore, by \cref{lm:makevalid-irrev}, there is some event~$w$
    with
    $ (w:\W{n}{valid}{1}) \tso-> \lp(c') $.
    This implies $ \pt(c) \nvo?-> w \nvo-> \lp(c') = \pt(c') $.
  \case[$\tup{c ,v }=\tup{\p{delete}, k, \p{true}}$,
        $\tup{c',v'}=\tup{\p{insert}, k, \p{true}}$]
    Let $n$ be the node marked by $\lp(c)$.
    As argued in the proof of \cref{th:linkfree:voided-voidabile},
    \ref{op:trim}$(\wtv, n)$ is called before~$\lp(c') \tso-> \pt(c')$
    (by \cref{lm:makevalid-irrev}) which implies
    $\Updates_n \ni \pt(c) = \lp(c) \nvo-> (\var{fl} \of \FL{n}) \nvo-> \pt(c')$.
    \qedhere
  \end{casesplit}
\end{proof}

\subsubsection{Correctness of persisted state}

\begin{lemma}
\label{lm:newnode-nvo-valid}
  If~$(w \of \W{n}{valid}{1}) \in G.E \setminus E_0$,
  then a call to $\ref{op:newnode}(n)$ has been executed
  and all its events~$e$ are such that
  $ e \nvo-> w $.
\end{lemma}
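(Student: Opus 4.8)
The plan is to trace the write $w$ back to the code that produced it, recover the corresponding \ref{op:newnode} call, and then convert the resulting program-order and reads-from facts into $\nvo$ facts by exploiting that all fields of a node share a single cache line.

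First I would argue that $w$ is generated by a call to \ref{op:makevalid}$(n)$: the only actions writing $1$ to $\loc[n.valid]$ are the one at \cref{pt:makevalid} and the initialisation/recovery writes, and the latter lie in $E_0$, which $w$ does not. Since $\addrOf(w)=n$ and $w$ is not a \ref{op:newnode} event (\ref{op:newnode} only allocates and writes \p{key}, \p{val}, \p{nxt}), \cref{lm:newnode-init} gives that $n$ is initialised at $w$, so by \cref{def:init-addr} there are two cases. If $n$ were initialised by an $E_0$ write to $\loc[n.nxt]$, then $n$ is a node of the recovered memory: by \cref{lm:linkfree:volatile-reach-sd} a reachable such node lies in $X_{\lbl{s}}\cup X_{\lbl{d}}$ and hence has $\loc[n.valid]=1$ already at $\recEndOf$, and since \p{valid} is never reset every read of it returns $1$, so \ref{op:makevalid}$(n)$ performs no write; an unreachable (garbage) node, on the other hand, stays unreachable (inserts only link in freshly allocated nodes) and so is never returned by \p{find} nor passed to \ref{op:makevalid}. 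Either way no such $w$ could exist, contradicting the hypothesis, so we must be in the second case: a call to \ref{op:newnode}$(n,k,v,c)$ has been executed, and since its write to \p{nxt} is its last action (\cref{ev:newnode-w-nxt}), prefix-closedness yields all four of its events $\Alloc{n}$, $w_{\lbl{key}}$, $w_{\lbl{val}}$, and $w_{\lbl{nxt}}$.

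Second, I would establish the $\nvo$ ordering. All \ref{op:newnode} events and $w$ are durable writes on fields of $n$, hence on a single cache line, so by \eqref{axiom:nvo-cl} their $\nvo$ order coincides with $\tso$, which is total on them by \eqref{axiom:tso-total}; it therefore suffices to show each \ref{op:newnode} event is $\tso$-before $w$. As $\Alloc{n}\po-> w_{\lbl{key}}\po-> w_{\lbl{val}}\po-> w_{\lbl{nxt}}$ are program-order steps between writes, \eqref{axiom:tso-po} chains them in $\tso$, reducing the goal to proving $w_{\lbl{nxt}}\tso-> w$. When $w$ arises in \ref{op:insert-ok} this is immediate, since there $\ref{op:newnode}(n)\po-> w$ and \eqref{axiom:tso-po} applies directly. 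In the \ref{op:insert-no} and \ref{op:delete-ok} cases, $w$ is $\po$-preceded by a read $r\in\Reads_{\loc[n.nxt]}$ observing some $w'\in\UWrites_{\loc[n.nxt]}$; since $n$ is not $E_0$-initialised, $w'\notin E_0$, and $w_{\lbl{nxt}}$ is the $\mo$-minimum of $\UWrites_{\loc[n.nxt]}$ (the unique non-update, non-initial write by \cref{lm:unique-writes}, with every update reading it or a later write), so $w_{\lbl{nxt}}\mo-> w'$ or $w_{\lbl{nxt}}=w'$, whence $w_{\lbl{nxt}}\tso-> w'$ by \eqref{axiom:tso-mo}. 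Finally $w'\rf-> r\po-> w$ with \cref{lm:wrw-ghb} gives $w'\tso-> w$, and composing yields $w_{\lbl{nxt}}\tso-> w$.

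The main obstacle I expect is precisely this last step: propagating the order to \emph{all} \ref{op:newnode} events rather than only to the uniquely-written \p{key} field. Reading \p{key} bounds only $w_{\lbl{key}}$, which is $\po$-\emph{earlier} than $w_{\lbl{val}}$ and $w_{\lbl{nxt}}$ and hence says nothing about them; the order of the later writes must instead be recovered through the \p{nxt} read, where the delicacy is that $r$ need not observe $w_{\lbl{nxt}}$ itself but possibly a later update, forcing the $\mo$-minimality argument. It is here that the same-cache-line persistency axiom \eqref{axiom:nvo-cl} is indispensable, because the link-in update that connects \ref{op:newnode}'s events to $w$ in $\ghb$ lives on a different cache line and carries no $\nvo$ information on its own.
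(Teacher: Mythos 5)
Your proof is correct and follows essentially the same route as the paper's: trace $w$ back to a call to \ref{op:makevalid}$(n)$, split on whether $w$ is \po-after \ref{op:newnode}$(n,\wtv[3])$ or only \po-after a read of $\loc[n.nxt]$, and lift the resulting $\rf \seq \po$ path to $\tso$ (\cref{lm:wrw-ghb}) and then to $\nvo$ via the same-cache-line axiom \eqref{axiom:nvo-cl}. You are in fact somewhat more careful than the paper's one-paragraph proof at two points it elides — that the read may observe a later update on $\loc[n.nxt]$ rather than the \ref{op:newnode} write itself (your $\mo$-minimality step), and that the case of $n$ being initialised by $E_0$ events must be ruled out before concluding that a \ref{op:newnode}$(n)$ call was executed — so no changes are needed.
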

\begin{proof}
  The event~$w$ must have been generated by a call to \ref{op:makevalid}$(n)$.
  If $w$ is \po-after \ref{op:newnode}$(n,\wtv[3])$
  then, since all the events involved are writes on the same address~$n$,
  $w$ is also \nvo-after the \ref{op:newnode}$(n,\wtv[3])$ events.
  In all the remaining cases,
  the \ref{op:makevalid}$(n)$ generating~$w$ is called
  \po-after a read $(r \of \R{n}{nxt}{\wtv})$.
  Since~$n$ must be initialised at~$r$, there is a~$\rfe$ edge
  from the write~$w_0$ to~$\loc[n.nxt]$ in \ref{op:newnode}$(n,\wtv[3])$
  and $w$.
  This means that $w_0 \tso-> w$, which implies $w_0 \nvo-> w$.
\end{proof}

\begin{lemma}
\label{lm:mark-nvo-delfl}
  If $ (w \of \W{n}{delFl}{1}) \in G.E \setminus E_0$,
  then there is a $ (u \of \U{n}{nxt}{\tup{0,\wtv}}{\tup{1,\wtv}}) \in G.E $
  with $ u \nvo-> w $,
  or a $w_0 \in G.E_0 \inters \UWrites$ with $\wvalOf(w_0) = \tup{1,\wtv}$
  and $ w_0 \nvo-> w $.
\end{lemma}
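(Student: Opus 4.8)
The statement (\cref{lm:mark-nvo-delfl}) asserts that whenever the flag $\loc[n.delFl]$ is persistently set to $1$ (outside the initial/recovery events), the write $w$ is $\nvo$-preceded either by a marking update of $\loc[n.nxt]$ or by an initial marked write to $\loc[n.nxt]$. The approach mirrors the proof of \cref{lm:flush-del-before-ret}: trace back how $(w \of \W{n}{delFl}{1})$ can be generated by the implementation. By inspecting the traces, the only events writing $1$ to $\loc[n.delFl]$ come from \ref{op:flush-del}$(n)$, which is invoked solely from within \ref{op:trim}$(\wtv,n)$. So the first step is to observe that $w$ is immediately $\po$-preceded by a flush $(\var{fl} \of \FL{n})$ in the same \ref{op:flush-del}$(n)$ call, giving $\var{fl} \po-> w$ and hence (since both act on the cache line of $n$, via \eqref{axiom:tso-fl-wufl} and \eqref{axiom:nvo-cl}) $\var{fl} \nvo-> w$.

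\textbf{Key steps.} Next I would determine the two calling contexts of \ref{op:trim}$(p,n)$, exactly as done in \cref{lm:flush-del-before-ret}. In the \ref{op:delete-ok} context, $\var{fl}$ is $\po$-preceded by the update $(u \of \U{n}{nxt}{\tup{0,\wtv}}{\tup{1,\wtv}})$ generated at \ref{lp:delete-ok}; this gives $u \po-> \var{fl} \nvo-> w$, and since $u$ and $\var{fl}$ share the cache line of $n$, $u \nvo-> w$, discharging the first disjunct. In the \ref{op:garbage} context, $\var{fl}$ is $\po$-preceded by a read $(r \of \R{n}{nxt}{\tup{1,\wtv}})$; consider the $\rf$-source $w_1$ of $r$, i.e.\ $w_1 \rf-> r \po-> \var{fl}$. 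I would apply \cref{lm:wrfl-tso} to conclude $w_1 \tso-> \var{fl}$, and since $\locOf(w_1) = \locOf(\var{fl}) = \loc[n.nxt]$ (on the cache line of $n$), $w_1 \nvo-> \var{fl} \nvo-> w$. The final sub-step is to identify $w_1$: it is a $\UWrites$ event writing $\tup{1,\wtv}$ to $\loc[n.nxt]$. Either $w_1 \in G.E_0$, yielding the second disjunct directly, or $w_1 \notin G.E_0$, in which case—by the earlier observation (used in \cref{lm:sorted-links} and \cref{lm:member-reach}) that program-generated writes of $\tup{1,\wtv}$ to $\loc[n.nxt]$ are exactly the marking updates from \ref{lp:delete-ok}—we have $w_1$ is such an update, giving the first disjunct.

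\textbf{Main obstacle.} The delicate point is the $\rf$-source analysis in the \ref{op:garbage} branch: I must rule out that $w_1$ is anything other than an initial event or a genuine marking update. A newly-allocated node's $\loc[n.nxt]$ is written (via \ref{op:newnode}) to an \emph{unmarked} value $\tup{0,\wtv}$, and the only way the mark bit becomes $1$ is through the update at \ref{lp:delete-ok} (by \cref{lm:marking-irrev}, marking is irreversible and only happens via such updates). Thus the value $\tup{1,\wtv}$ forces $w_1$ to be either initial or a marking update, exactly as needed. I would invoke \cref{lm:marking-irrev} to make this airtight. The rest is a routine cache-line/$\nvo$ bookkeeping argument of the kind already carried out for \cref{lm:flush-del-before-ret}, so no new machinery is required.
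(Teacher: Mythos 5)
Your proposal is correct and follows essentially the same route as the paper's proof: trace $w$ back through \ref{op:flush-del}$(n)$ to the two calling contexts of \ref{op:trim}$(\wtv,n)$, use the $\po$-preceding update in the \ref{op:delete-ok} case and the $\rf$-source of the read $\R{n}{nxt}{\tup{1,\wtv}}$ in the \ref{op:garbage} case, and transfer $\tso$ to $\nvo$ via the shared cache line of $n$'s fields. The only (harmless) difference is that you route the chain through the flush $\var{fl}$ using \cref{lm:wrfl-tso}, whereas the paper relates the update or $\rf$-source directly to $w$; both discharge the same obligation.
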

\begin{proof}
  The event~$w$ must have been generated by a call to \ref{op:flush-del}$(n)$,
  which is only called by \ref{op:trim}$(\wtv,n)$.
  Each such call is \po-after either some read $(r \of \R{n}{nxt}{\tup{1,\wtv}})$,
  or an update $ (u \of \U{n}{nxt}{\tup{0,\wtv}}{\tup{1,\wtv}}) $.
  In either case, there is either an update like~$u$ or a write~$w_0 \in E_0$
  with $\wvalOf(w_0) = \tup{1,\wtv}$, that are \tso-before $w$.
  Since these are all writes to the same locations,
  this implies $u \nvo-> w$ or $w_0 \nvo-> w$ respectively.
\end{proof}

\begin{lemma}
  $\tup{\pt, r, \restr{\nvo}{G.\Persisted}, \durable}$
  is a linearization strategy.
\end{lemma}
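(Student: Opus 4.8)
The plan is to verify that $\tup{\pt, r, \restr{\nvo}{G.\Persisted}, \durable}$ satisfies each clause of \cref{def:lin-strategy}. Recall that a linearization strategy is a tuple $\tup{\lf, r, \rel, \alpha}$ subject to three requirements: $\lf$ must be injective, the events it returns must have a non-recovery call identifier (\cref{cond:main:pers-subs-lp} of the definition), $\dom(\lf) \subs \dom(r)$, and $\rel$ must be acyclic. Here $\lf = \pt$, $\rel = \restr{\nvo}{G.\Persisted}$, $\alpha = \durable$, and $r$ is the return-value function from \cref{def:linkfree:linpt}. Most of these are established by construction in \cref{def:linkfree:pers-pt}, so the proof is mostly a matter of citing the right earlier facts.

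First I would check injectivity of $\pt$ and the domain constraints. \Cref{def:linkfree:pers-pt} explicitly states ``By construction we have $\pt$ is injective, $\dom(\pt) \subs \dom(\lp)$.'' Since $\dom(\lp) \subs \dom(r)$ (indeed $\dom(r) = \dom(\lp)$ from \cref{def:linkfree:linpt}), we get $\dom(\pt) \subs \dom(r)$ immediately. For the call-identifier condition, each $\pt(c)$ is either a write $\W{n}{valid}{1}$ (for inserts) or the update event at \ref{lp:delete-ok} (for deletes); in both cases the event belongs to $G.\EvOfCid{c}$ with $c \in \CallId$, so $\cidOf(\pt(c)) = c \ne \bot$ and $\ne \recoveryId$, as these events are generated by library operations, not by the recovery. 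The last requirement, that $r$ agree with any present return event, is inherited directly from the corresponding property of $\lp$ and $r$ noted at the end of \cref{def:linkfree:linpt}.

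The remaining condition is acyclicity of $\restr{\nvo}{G.\Persisted}$. This is the one genuinely semantic point, but it is not hard: $\nvo$ is required to be a \emph{strict total order} on $\Durable$ by \cref{def:exec}, and $G.\Persisted \subs \Durable$, so its restriction is a strict partial order, hence acyclic. I would simply invoke the definition of execution here.

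The main obstacle, such as it is, is purely bookkeeping: ensuring that every field of \cref{def:lin-strategy} is matched against the correct by-construction clause of \cref{def:linkfree:pers-pt} and \cref{def:linkfree:linpt}, and in particular being careful that the persistency points for inserts (which are $\nvo$-minimal writes to $\loc[n.valid]$, not local events) still carry the call identifier $c$ and still live in $\dom(\lp)$. Since the definition guarantees these writes exist only when the insert linearized and were chosen from $G.\Persisted$, all the required structural facts hold. I therefore expect the proof to read essentially as ``straightforward by inspecting \cref{def:linkfree:pers-pt}, together with acyclicity of $\nvo$ from \cref{def:exec},'' mirroring the one-line proof given for the analogous volatile-strategy lemma.
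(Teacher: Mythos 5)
Your verification is in substance the paper's proof — the paper disposes of this lemma with ``Straightforward by inspecting \cref{def:linkfree:pers-pt}'' — and the conditions you enumerate are the right ones: injectivity and $\dom(\pt) \subs \dom(\lp) \subs \dom(r)$ are stated by construction, agreement of~$r$ with any return events carries over from \cref{def:linkfree:linpt}, and acyclicity of $\restr{\nvo}{G.\Persisted}$ is immediate since $\nvo$ is a strict total order on $\Durable$ and $G.\Persisted \subs \Durable$ by \cref{def:exec}.

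One justification step is wrong, however, and it concerns exactly the subtlety you flagged at the end. For a successful insert~$c$, $\pt(c)$ is the \emph{$\nvo$-minimal persisted} write $\W{n}{valid}{1}$, and because of helping (\ref{op:makevalid} and \ref{op:flush-ins} are executed on~$n$ by concurrent failed inserts of the same key, by deletes, and by \p{contains}), that write may be issued by a \emph{different} call. So your claim that ``in both cases the event belongs to $G.\EvOfCid{c}$'' and hence $\cidOf(\pt(c)) = c$ is false for inserts — this is precisely why the \masterthm{} imposes its $\hb$-locality condition only on~$\lp$ and never on~$\pt$. The lemma survives because \cref{def:lin-strategy} demands only $\cidOf(\pt(c)) \ne \bot$ and $\cidOf(\pt(c)) \ne \recoveryId$: every write of~$1$ to $\loc[n.valid]$ originates from some operation's \ref{op:makevalid}, since~$n$ is freshly allocated by~$c$ and hence untouched by initial and recovery events (no double allocations) — which is the weaker fact that your parenthetical ``generated by library operations, not by the recovery'' actually establishes, so the needed conclusion still follows from your own observation. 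A second, minor slip: in the full link-free development the equality $\dom(r) = \dom(\lp)$ you cite does not hold — failed deletes have $r(c) = \p{false}$ but $\lp(c) = \bot$, being linearized by hindsight via~$\hres{\lp}$ — only the inclusion $\dom(\lp) \subs \dom(r)$ holds, which is all your argument requires.
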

\begin{proof}
  Straightforward by inspecting \cref{def:linkfree:pers-pt}.
\end{proof}

\begin{lemma}
\label{lm:mark-irrev-nvo}
  Let $ \vec{e} = \enum{\nvo} $.
  If $ \mem{\upto{\vec{e}}{i}}(\loc[x.nxt]) = \tup{1,y} $ then
     $ \mem{\upto{\vec{e}}{j}}(\loc[x.nxt]) = \tup{1,y} $.
\end{lemma}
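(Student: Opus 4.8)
The plan is to prove the statement for all $j \geq i$ (the intended reading, mirroring \cref{lm:marking-irrev}) by reducing it to the fact that the marking of a \p{nxt} field is $\nvo$-maximal among all writes to that field. Write $N = \UWrites_{\loc[x.nxt]}$ for the set of writes and updates to the single location $\loc[x.nxt]$. By \cref{def:mem}, the value $\mem{\upto{\vec{e}}{i}}(\loc[x.nxt])$ equals $\wvalOf(u)$ where $u$ is the $\nvo$-maximal element of $N$ occurring in $\upto{\vec{e}}{i}$; the hypothesis thus gives $\wvalOf(u) = \tup{1,y}$, so $u$ is a marking event. It then suffices to show that $u$ is $\nvo$-maximal in the whole of $N$: once this holds, for every $j \geq i$ no element of $N$ lies $\nvo$-after $u$, so $u$ remains the $\nvo$-last write to $\loc[x.nxt]$ in $\upto{\vec{e}}{j}$, and therefore $\mem{\upto{\vec{e}}{j}}(\loc[x.nxt]) = \tup{1,y}$.

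First I would show that the three orders $\nvo$, $\mo$ and $\ghb$ all coincide on $N$. Since every event of $N$ acts on the single location $\loc[x.nxt]$, all of $N$ lies in the durable events of one cache line $\CL \in \CacheLine$. As $\mo$ is total on $N$ and $\mo \subs \tso$ by \eqref{axiom:tso-mo}, axiom \eqref{axiom:nvo-cl} gives $\restr{\mo}{N} \subs \restr{\nvo}{N}$; because both $\mo$ and $\nvo$ are strict total orders on $N$, they are equal there. Likewise $\mo \subs \ghb$ by \cref{def:ghb}, and $\ghb$ is acyclic by \cref{lm:ghb-acyclic}, so the restriction of $\ghb$ to $N$ is again $\mo$. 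Hence $\restr{\nvo}{N} = \restr{\mo}{N} = \restr{\ghb}{N}$.

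Given this coincidence, $u$ is $\nvo$-maximal in $N$ exactly when it is $\ghb$-maximal in $N$, so I would establish the latter. Enumerate $\vec{e}_g = \enum{\ghb}$ and let $i'$ be the $\ghb$-position of $u$; since the only events writing $\tup{1,\wtv}$ to $\loc[x.nxt]$ are the marking updates of \ref{op:delete-ok} (which occur after recovery) or durable events of $G.E_0$ representing already-deleted nodes, we have $\mem{\upto{\vec{e}_g}{i'}}(\loc[x.nxt]) = \tup{1,y}$. By \cref{lm:marking-irrev} the $\ghb$-memory at $\loc[x.nxt]$ then stays $\tup{1,y}$ for all positions $\geq i'$. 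Consequently no element of $N$ can occur $\ghb$-after $u$: a plain write there would store some $\tup{0,\wtv}$ (the only writes to \p{nxt} come from \ref{op:newnode}, are unmarked, and are unique by \cref{lm:unique-writes}), changing the value; and an update there reads $\tup{0,\wtv}$ by its label, contradicting—via \cref{lm:upd-read-last}—that the $\ghb$-immediately preceding value is $\tup{1,y}$. Thus $u$ is $\ghb$-maximal, hence $\nvo$-maximal, in $N$, which completes the argument.

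The main obstacle is the bookkeeping around the coincidence of the three orders on $N$ together with the correct handling of the initial/recovery case: when $u \in G.E_0$ the value $\tup{1,y}$ originates from a deleted node supplied by $\durable$, and I must ensure \cref{lm:marking-irrev} still applies at $u$'s $\ghb$-position. This is precisely covered by the base case of \cref{lm:marking-irrev}, whose recovered memory $M_{\recEndOf(\vec{e}_g)} \in \recovered(\kvs)$ may already contain marked nodes, so no separate reachability argument for garbage nodes is required.
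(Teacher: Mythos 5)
Your proof is correct, but it takes a genuinely different route from the paper's. The paper disposes of this lemma in one line by re-running the structural argument of \cref{lm:marking-irrev} directly in the persistent world: all events affecting $\loc[x.nxt]$ are updates that either keep a node unmarked or mark an unmarked one, and the $\rf$ edges between these updates are preserved by $\nvo$ (each update reads its immediate $\mo$-predecessor, and $\mo$ on a single location stays within one cache line, so \eqref{axiom:nvo-cl} keeps the chain in the same order), whence the $\nvo$-induced value at $\loc[x.nxt]$ evolves monotonically from unmarked to marked. You instead prove a transfer principle: from \eqref{axiom:tso-mo}, \eqref{axiom:nvo-cl}, totality of $\mo$ on $\UWrites_{\loc[x.nxt]}$, and acyclicity of $\ghb$ (\cref{lm:ghb-acyclic}), the orders $\nvo$, $\mo$ and $\ghb$ coincide on that set, so $\nvo$-maximality of the marking event reduces to $\ghb$-maximality, which you discharge by invoking the already-proven volatile lemma \cref{lm:marking-irrev} together with \cref{lm:upd-read-last,lm:unique-writes}. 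Both arguments ultimately rest on the same structural facts about the code (updates to \p{nxt} read unmarked values; plain writes are unique, unmarked initialisations); what yours buys is a reusable bridge---any per-location monotonicity property of $\ghb$-induced memory lifts automatically to $\nvo$-induced memory---at the cost of pulling in the $\ghb$ machinery that the paper's direct proof avoids. Two small remarks: your treatment of the $u \in G.E_0$ case is actually more explicit than the paper's (whose one-liner silently confines attention to library events), but \cref{lm:marking-irrev} has no inductive ``base case''---its proof is a direct case analysis---so what you need there is simply that the recovered memory at $\recEndOf$ may already contain marked nodes, combined with the same case analysis on post-recovery events, which your argument otherwise supplies.
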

\begin{proof}
  Immediate by observing that all the events affecting \loc[x.nxt] are updates
  that either keep a node unmarked or mark an unmarked node,
  and that the $\rf$ edges between them are preserved by~$\nvo$.
\end{proof}

\begin{theorem}
\label{th:linkfree:pers-state}
  $
    \histOf[\pt]{r}(\enum[G.\Persisted]{\nvo})
    \in \LegalFrom[\KVS',\Delta]{\kvs_0}
      \implies
        \tup{\pt, r, \restr{\nvo}{G.\Persisted}, \durable}
  $ \pre\initOf-validates~$G$.
\end{theorem}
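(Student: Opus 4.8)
The plan is to prove that the persistent strategy $\tup{\pt, r, \restr{\nvo}{G.\Persisted}, \durable}$ \pre\initOf-validates $G$ under the assumption that $\histOf[\pt]{r}(\enum[G.\Persisted]{\nvo}) \in \LegalFrom[\KVS',\Delta]{\kvs_0}$. Following \cref{def:valid-strategy}, I fix an arbitrary $\vec{e} \in \enum[G.\Persisted]{\nvo}$ and proceed by induction on the position $i$ in the sequence, checking the two obligations \cref{cond:stutter,cond:linpt-trans}. The key leverage is the legality assumption: since the $\nvo$-induced history is assumed legal, at each step $i$ where $\mem{\upto{\vec{e}}{i-1}} \in \durable(q)$ and the prefix history is in $\LegalPath{\kvs_0}{q}$, I know that the abstract transition enabled by the next persistency point is one that is legal at $q$ — in particular, for a successful insert of $k$ I may assume $k \notin \dom(q)$, and for a successful delete of $k$ I may assume $k \in \dom(q)$. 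This is exactly the simplification the \masterthm\ scheme provides, and it is what lets the persistent argument bypass the missing link invariants.

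First I would handle \cref{cond:stutter}, the case where $\vec{e}(i)$ is \emph{not} a persistency point. Here I must show $\mem{\upto{\vec{e}}{i}} \in \durable(q)$. Since $\durable$ only constrains the \p{key}, \p{val}, \p{valid}, \p{nxt}-mark-bit and \p{delFl} fields in the structured way of \cref{def:linkfree:recoverable}, and crucially places no constraint on the \emph{link targets} in \p{nxt} (only the mark bit matters), most writes are irrelevant. I would enumerate the persisted durable events: writes to \p{insFl} are unconstrained by $\durable$; the update to \p{nxt} at \ref{ev:trim-w-nxt} in \ref{op:trim} only changes a link target on an already-unmarked node and preserves the mark bit; writes to \p{key} and \p{val} only occur on freshly allocated (hence garbage, \p{valid}${}=0$) nodes via \ref{op:newnode}; and a write to \p{valid} that is not a persistency point is, by \cref{lm:newnode-nvo-valid}, either redundant or on a node already accounted for. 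Each of these leaves the $\durable$-encoded state unchanged, which is the required stuttering property.

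Then I would handle \cref{cond:linpt-trans}, where $\vec{e}(i) = \pt(c)$ for some $c$. There are two subcases by $\callOf(c)$. For a successful delete, $\pt(c)$ is the update $\U{n}{nxt}{\tup{0,s}}{\tup{1,s}}$ marking $n$; by \cref{lm:mark-irrev-nvo} this transitions the node from the $\durM_{\lbl{s}}$ region to $\durM_{\lbl{d}}$, and assuming legality gives $k \in \dom(q)$, so I obtain $\mem{\upto{\vec{e}}{i}} \in \durable(q \setminus \tup{k,q(k)})$ matching $\Delta(\p{delete},k,\p{true})$. For a successful insert, $\pt(c)$ is the \nvo-minimal $\W{n}{valid}{1}$; by \cref{lm:newnode-nvo-valid} the entire \ref{op:newnode}$(n)$ block (writing \p{key}, \p{val}, and the unmarked \p{nxt}) is \nvo-before $\pt(c)$, so immediately after $\pt(c)$ the node $n$ has all fields of a member node, and the legality assumption gives $k \notin \dom(q)$; thus $\mem{\upto{\vec{e}}{i}} \in \durable(q \dunion \tup{k,\tup{n,v}})$, matching $\Delta(\p{insert},\tup{k,v},\p{true})$.

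The main obstacle I anticipate is establishing that at the insert persistency point $\pt(c)$, the node $n$ is genuinely \emph{fresh} in the durable representation — i.e.\ that no other persisted node already carries key $k$ as a valid unmarked member, so that adding $n$ yields a well-formed $\durM_{\lbl{s}} \dunion \durM_{\lbl{d}} \dunion \durM_{\lbl{g}}$ decomposition. Unlike the volatile argument (\cref{th:linkfree:lp-validates}), I cannot appeal to the sortedness and reachability invariants \eqref{cond:linkfree:sorted-links}--\eqref{cond:linkfree:members-reachable}, since persisted memory need not form a sorted list. Instead, the freshness must come \emph{entirely} from the legality assumption $k \notin \dom(q)$ combined with the bijective correspondence between keys in $\dom(q)$ and the valid unmarked nodes of $\durM_{\lbl{s}}$; I would argue that a second valid unmarked node for $k$ in the persisted memory would force $k$ to already be in $q$ via the induction hypothesis $\mem{\upto{\vec{e}}{i-1}} \in \durable(q)$, contradicting legality. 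Making this partitioning argument airtight — reconciling the raw persisted memory with the abstract $\durable$ decomposition at each step — is where the bulk of the careful bookkeeping lies, and it is the step most likely to require the auxiliary invariants on \p{delFl} and \p{valid} irreversibility (\cref{lm:mark-nvo-delfl,lm:valid1-irrev}) to exclude spurious node classifications.
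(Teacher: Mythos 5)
Your proposal is correct and takes essentially the same route as the paper's proof: an induction following \cref{def:valid-strategy} in which the legality assumption supplies $k \notin \dom(\kvs)$ (resp.\ $k \in \dom(\kvs)$) at insert (resp.\ delete) persistency points, the stutter cases are discharged because $\durable$ ignores link targets and \p{insFl}, and the insert-freshness obstacle you single out is resolved exactly as in the paper — via the $\nvo$/$\mo$-minimality of the first write of 1 to \p{valid} (so $n$ sits in $\durM_{\lbl{g}}$ at step $i-1$) together with the observation that a second valid unmarked node holding $k$ would force $k \in \dom(\kvs)$ by $M_{i-1} \in \durable(\kvs)$, contradicting legality. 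The only divergences are bookkeeping-level: your stutter enumeration omits allocations (trivial) and writes to \p{delFl} — which the paper discharges with precisely the lemmas you name, \cref{lm:mark-nvo-delfl,lm:mark-irrev-nvo} — and in the delete transition the paper additionally pins down $\kvs(k) = \tup{n,v}$ (needed because the $\KVS'$ transition removes the registered address) from the facts that $n$ is valid, unmarked and keyed $k$ in $M_{i-1}$, a step your insert-side ``main obstacle'' argument covers symmetrically.
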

\begin{proof}
Let
  $ \vec{e} = \enum[G.\Persisted]{\nvo} $, and
  $M_i \is \mem{\upto{\vec{e}}{i}}$.
Also let
  $ M_{\initOf(\vec{e})} \in \durable(\kvs_0) $,
  $ \initOf(\vec{e}) < i < \len{\vec{e}} $, and
  $ \kvs \in \KVS' $.
Assume:
\begin{align*}
  \histOf[\pt]{r}(\vec{e})
    &\in \LegalFrom[\KVS',\Delta]{\kvs_0}
  &
  \histOf[\pt]{r}(\upto{\vec{e}}{i-1})
    &\in \LegalPath[\KVS',\Delta]{\kvs_0}{\kvs}
  &
  M_{i-1}
    &\in \durable(\kvs)
\end{align*}
From the last assumption we let
$
  M_{i-1} =
    \durM_{\lbl{s}} \dunion
    \durM_{\lbl{d}} \dunion
    \durM_{\lbl{g}}
$
as per \cref{def:linkfree:recoverable}.
We then check \cref{cond:stutter,cond:linpt-trans}
by a case analysis on~$\vec{e}(i)$.
\begin{defenum}
  \item If $ \vec{e}(i) \ne \pt(c) $ for all~$c$, we have to prove
    $ M_{i} \in \durable(\kvs) $.
    The claim is trivial for recovery events
    by the assumption of soundness of the recovery.
    We consider each possible write or update.
    \begin{casesplit}
    \case[$ (\vec{e}(i) \of \Alloc{x} $]
      Then all the fields are zero and the node can be added to $\durM_{\lbl{g}}$ yielding $M_i \in \durable(\kvs)$.
    \case[$ \vec{e}(i) \in \Writes $ generated by $\ref{op:newnode}(x)$]
      Then, by \cref{lm:newnode-nvo-valid} we have
      $M_{i-1}(\loc[x.valid])=0$, which means $x$ belongs to $\durM_{\lbl{g}}$, where the fields \p{key}, \p{val} and \p{nxt}
      are not constrained.
      The node~$x$ can then belong to the garbage nodes of $M_{i}$ too.
    \case[$ (\vec{e}(i) \of \W{n}{valid}{1}) $]
      Then $ \vec{e}(i) $ cannot be the first such write in $\mo$ order
      or it would be the persistency point of some call.
      Then since there is some other $ (w \of \W{n}{valid}{1}) \mo-> \vec{e}(i) $
      we have $ w \nvo-> \vec{e}(i) $ and so by \cref{lm:makevalid-irrev},
      $M_{i-1}(\loc[n.valid]) = 1$ and so $ M_i = M_{i-1} $.
    \case[$ (\vec{e}(i) \of \W{n}{delFl}{1}) $]
      By \cref{lm:mark-nvo-delfl} and \cref{lm:mark-irrev-nvo}
      we have $M_{i-1}(\loc[n.nxt])=\tup{1,\wtv}$.
      This means $n$ belongs to~$ \durM_{\lbl{d}} $,
      which does not constrain the \p{delFl} field.
      We can thus keep the node in the deleted nodes in $M_{i-1}$.
    \case[$ (\vec{e}(i) \of \W{n}{insFl}{1}) $]
      Irrelevant as $\durable$ does not constrain the \p{insFL} field.
    \case[$ (\vec{e}(i) \of \U{p}{nxt}{\tup{b,n}}{\tup{b',n'}}) $]
      For this not to be a persistency point we have $b=b'=0$.
      The update is then irrelevant as $\durable$ does not constrain
      the marking bit of the \p{nxt} field.
    \end{casesplit}
\item If $ \vec{e}(i) = \pt(c) $ for some~$c$, we have to prove
    $ \mem{\upto{\vec{e}}{i}} \in \durable(\kvs') $
    for some~$\kvs'$ such that
    $ (\kvs,\kvs') \in \Delta(\callOf(c), r(c)) $.\begin{casesplit}
    \case[$ \pt(c) = (\vec{e}(i) \of \W{n}{valid}{1}) $]
      Then the call is a successful insert of some~$k$ and~$v$.
      From legality of the whole sequence we must have $k \notin \dom(\kvs) $.
      Since \ref{op:insert-ok} calls \ref{op:newnode} first,
      by \ref{lm:newnode-nvo-valid} we have
      $ M_{i-1}(\loc[n.key]) = k $, and
      $ M_{i-1}(\loc[n.val]) = v $.
      Since $\pt(c)$ is defined as the \mo-first event
      writing 1 to $\loc[n.valid]$,
      we have $ M_{i-1}(\loc[n.valid]) = 0 $.
      We therefore have that $n$ belongs to $\durM_{\lbl{g}}$,
      making $n$ fresh in $\kvs$.
      Note that~$n$ can only be marked at
      \ref{lp:delete-ok} in \ref{op:delete-ok},
      which is \po-preceded by a call to $\ref{op:makevalid}(n)$.
      Therefore, by \cref{lm:makevalid-irrev}
      we also have $ M_{i-1}(\loc[n.nxt]) = \tup{0,\wtv} $.
      From this also follows that, by \cref{lm:mark-nvo-delfl},
      $ M_{i-1}(\loc[n.delFl]) = 0 $.
      The node~$n$ can therefore be transferred from $\durM_{\lbl{g}}$
      to $\durM_{\lbl{s}}$ in $M_i$,
      inducing the desired state change
      from $\kvs$ to $\kvs \dunion \tup{k, n, v}$.

    \case[$ \pt(c) = (\vec{e}(i) \of \U{n}{nxt}{\tup{0,\wtv}}{\tup{1,\wtv}}) $]
      Then the call is a successful delete of some~$k$.
      The persistency point is \po-preceded by some
      $(r \of \R{n}{key}{k})$
      and a call to \ref{op:makevalid}$(n)$.
      So by~\cref{lm:newnode-nvo-valid,lm:init-key-val} we have
      $M_{i-1}(\loc[n.key]) = k$.
      Moreover,
      by noting $ \rf\seq\ev{\Updates} \subs \nvo $ and $ \mo \subs \nvo $,
      we have $ M_{i-1}(\loc[n.nxt]) = \rvalOf(\vec{e}(i)) = \tup{0,\wtv} $.
      By legality of the whole call sequence we have $ k \in \kvs $,
      and therefore, since it is valid,
      $n$ belongs to $\durM_{\lbl{s}}$ and $ \kvs(k) = \tup{n,v} $
      for some~$v$.
      By marking~$n$, $\vec{e}(i)$ has the effect of moving
      $n$ to $\durM_{\lbl{d}}$,
      inducing the desired change in the state from
      $\kvs$ to $\kvs \setminus \set{\tup{k,n,v}}$.
    \end{casesplit}
\end{defenum}
\end{proof}

\subsection{Soundness of Recovery}
\label{sec:recovery-sound}

The verification of the recovery (in \cref{fig:lf-set-code}) is routine:
since it is sequential, no weak behaviour is observable,
and the verification can be equivalently done under SC.
It is easy to check that,
from a memory~$M \in \durable(\kvs)$ the recovery procedure will
produce a memory
$M' \in \durable(\kvs) \inters \volatile(\kvs) = \recovered(\kvs)$:
it only affects links, so it preserves $\durable$ and it produces
a sorted list as required by $\volatile$.

\begin{lemma}
\label{lm:linkfree:recovery-sound}
  The recovery implementation $\LinkFreeImpl[rec]$
  is \pre\tup{\durable,\recovered}-sound.
\end{lemma}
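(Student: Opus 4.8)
The statement to prove is \cref{lm:linkfree:recovery-sound}, asserting that the recovery implementation $\LinkFreeImpl[rec]$ of \cref{fig:lf-set-code} is $\tup{\durable,\recovered}$-sound in the sense of \cref{def:sound-recovery}. My plan is to discharge the two conditions of \cref{def:sound-recovery} separately, exploiting the fact that recovery runs sequentially and so---as noted in \cref{sec:recovery-sound}---all reasoning can be carried out under sequential consistency, where $\tso$ coincides with $\po$ and the memory induced by any $\vo$-respecting enumeration is the obvious sequential memory.

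First I would fix an arbitrary execution $G$ of $\tup{\LibImpl[op],\LinkFreeImpl[rec]}$ with $G.\Init \in \durable(\kvs)$ for some $\kvs\in\KVS'$, unfolding $G.\Init$ into $\durM_{\lbl{s}} \dunion \durM_{\lbl{d}} \dunion \durM_{\lbl{g}}$ as in \cref{def:linkfree:recoverable}. For \Cref{cond:recovery-recovers}, I assume the recovery has returned, so I must show the memory obtained from $\restr{G.\po}{G.\Init \union \EvOfCid{\recoveryId}}$ lies in $\recovered(\kvs) = \durable(\kvs) \inters \volatile(\kvs)$. The recovery first calls \p{init()} to allocate fresh head and tail sentinels, then iterates over the nodes, and for each node that is \p{valid} and unmarked (i.e.\ $\loc[n.nxt] = \tup{0,\wtv}$) with a key in $\Key$ it calls \p{seqInsert} to splice the node into a sorted linked list rooted at the new head. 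The key observation is that recovery only writes \p{nxt} fields (and allocates the fresh sentinels); it never touches \p{key}, \p{val}, \p{valid}, \p{insFl}, or \p{delFl}. Consequently the partition into $X_{\lbl{s}}$ (valid, unmarked), $X_{\lbl{d}}$ (marked), and $X_{\lbl{g}}$ (invalid) is exactly the set of keys $k$ with $\durM_{\lbl{s}}$ entries, which by \cref{def:linkfree:recoverable} is precisely $\dom(\kvs)$. The traversal identifies exactly these $X_{\lbl{s}}$ nodes and threads them into a sorted list, so the resulting memory satisfies \eqref{cond:linkfree:sorted-links}, \eqref{cond:linkfree:members-reachable}, and \eqref{cond:linkfree:marked-nxt-init} of \cref{def:linkfree:volatile}, placing it in $\volatile(\kvs)$; and since no durable-relevant field was altered, it remains in $\durable(\kvs)$, giving membership in $\recovered(\kvs)$.

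For \Cref{cond:recovery-stutters}, I must show that for every recovery write $w$, inserting $w$ into the sequence preserves the encoded durable state. This follows from the same observation: every $w \in G.\EvOfCid{\recoveryId} \inters \UWrites$ writes either to a \p{nxt} field (via \p{seqInsert} or \p{init}) or initializes a freshly allocated sentinel. Since $\durable$ (\cref{def:linkfree:recoverable}) constrains \p{nxt} only up to its \emph{marking bit}---which recovery never sets, as it only ever writes $\tup{0,\wtv}$-shaped links---and the sentinels lie outside $X_{\lbl{s}} \dunion X_{\lbl{d}} \dunion X_{\lbl{g}}$ (they will simply accumulate as ignored garbage, or under the footnote convention of \cref{def:basic-lf:recoverable} be absorbed), such a write leaves invariant the decomposition witnessing $\mem{\cdots} \in \durable(q)$. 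Thus $\mem{G.\Init \concat \vec{e} \concat w \concat \vec{e}'} \in \durable(q) \iff \mem{G.\Init \concat \vec{e} \concat \vec{e}'} \in \durable(q)$ for every $q$, as required.

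The main obstacle I anticipate is the first condition: verifying that \p{seqInsert} correctly reconstructs a \emph{sorted} list from an unordered scan, so that \eqref{cond:linkfree:sorted-links} and \eqref{cond:linkfree:members-reachable} genuinely hold upon termination. This requires a standard sequential loop invariant---that after processing any prefix of \p{nodes}, the list rooted at the head is a sorted linked list of exactly the valid unmarked nodes seen so far---which is routine but must be stated carefully, particularly to confirm that the marked and uninitialized nodes (which recovery leaves with their original $\tup{0,\wtv}$ or $\tup{1,\wtv}$ links and are \emph{not} spliced in) still satisfy the weak sortedness clauses \eqref{cond:linkfree:sorted-links} and \eqref{cond:linkfree:marked-nxt-init}. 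Since these nodes retain their durable-memory links, which by $M_{\lbl{d}}$/$M_{\lbl{g}}$ need not form a connected list, this reduces to checking that leaving them untouched cannot violate any constraint $\volatile$ imposes, which it cannot because $\volatile$ permits arbitrary (sorted) dangling links on deleted and uninitialized nodes. I would therefore keep this part brief, as it is exactly the kind of standard SC sequential reasoning the paper flags as routine.
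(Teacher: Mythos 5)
Your proposal is correct and takes essentially the same route as the paper, whose own proof is a brief sketch making exactly your two points: the stuttering condition holds because recovery only modifies \p{nxt} fields, which $\durable$ constrains only up to the marking bit that recovery never changes (with sentinels excluded by convention), and the recovery-establishes-$\recovered$ condition follows by routine sequential (SC) reasoning showing that \p{seqInsert} rebuilds a sorted list of precisely the valid, unmarked, key-holding nodes, \ie exactly $X_{\lbl{s}}$, yielding a memory in $\durable(\kvs) \inters \volatile(\kvs) = \recovered(\kvs)$. Your additional care about the fresh sentinel nodes, the \p{seqInsert} loop invariant, and the untouched links of marked and garbage nodes merely spells out details the paper dismisses as routine.
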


\subsection{Putting It All Together}

\begin{theorem}[Durable linearizability of \LinkFree]
  The link-free store implementation
  $ \tup{\LinkFreeImpl[op],\LinkFreeImpl[rec]} $
  is durably linearizable.
\end{theorem}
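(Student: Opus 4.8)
The plan is to assemble the durable linearizability of the link-free store by invoking the machinery developed throughout the section, since almost all the hard work has already been done in the preceding lemmas and theorems. The final statement is a corollary of \cref{th:rec-decoupling}: to conclude that $\tup{\LinkFreeImpl[op],\LinkFreeImpl[rec]}$ is durably linearizable, it suffices to exhibit functions $\durable,\recovered \from \AbsState \to \powerset(\Mem)$ with $\emptyset \in \durable(\initState)$ such that $\LinkFreeImpl[rec]$ is \pre\tup{\durable,\recovered}-sound and $\LinkFreeImpl[op]$ is \pre\tup{\durable,\recovered}-linearizable.

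First I would fix $\durable$, $\volatile$, and $\recovered = \durable \inters \volatile$ as given in \cref{def:linkfree:recoverable,def:linkfree:volatile}, and note that the empty store is durably representable, i.e.\ $\emptyset \in \durable(\emptyset)$ (the initial state of $\KVS'$). Soundness of recovery is exactly \cref{lm:linkfree:recovery-sound}. The bulk of the argument is then to establish \pre\tup{\durable,\recovered}-linearizability, which I would obtain by applying the \hyperref[th:master-full]{General \thename\ Theorem} (\cref{th:master-full}). This requires supplying the volatile order $\vo = \ghb$ (acyclic by \cref{lm:ghb-acyclic}), the set $\HBEv = G.\UReads$, the maps $\lp$, $\hres{\lp}$, $r$ from \cref{def:linkfree:linpt,def:linkfree:hres}, and $\pt$, $\PRd$ from \cref{def:linkfree:pers-pt}, and then discharging each of conditions \ref{cond:pers-subs-lp}--\ref{cond:perst-valid}. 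The well-formedness conditions \ref{cond:pers-subs-lp}, \ref{cond:hind-subs-rets}, \ref{cond:perst-ret-id} hold by construction; \ref{cond:vo-pres-hb} is \cref{lm:hb-implies-ghb}; \ref{cond:linpt-hb} and \ref{cond:hind-hb} follow since linearization/hindsight events carry the call's own identifier (so one can take $e_1,e_2$ within the call). The substantive conditions are then matched to the main results: \ref{cond:linpt-valid} is \cref{th:linkfree:hres-validates}; \ref{cond:rets-persist} is \cref{th:linkfree:rets-persist}; \ref{cond:voided-linpt-commute} is \cref{th:linkfree:voided-voidabile}; \ref{cond:vo-nvo-agree-commute} is \cref{th:linkfree:commuting-calls}; and \ref{cond:perst-valid} is \cref{th:linkfree:pers-state}.

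A subtlety I would address explicitly is the determinism requirement of the \masterthm. As noted in \cref{rm:spec-determinism}, $\KVS'$ is nondeterministic because of the fresh-address choice in inserts, while $\KVS$ is deterministic. I would therefore prove durable linearizability against the deterministic quotient $(\KVS,\Delta_\KVS) = (\KVS',\Delta)/{\sim}$, applying the theorem to the quotient specification and lifting the $\KVS'$-level invariants (which are what the lemmas actually establish) through the quotient; since $\sim$ only forgets node identities, legality and the representation functions descend to the quotient, and linearizability against $\KVS$ is implied by linearizability against the finer $\KVS'$.

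The main obstacle is not in this final assembly step but lies upstream, in the lemmas being cited: concretely, \cref{th:linkfree:pers-state} (correctness of the persisted memory, where one must assume the \nvo-induced history is legal and show persistency points enact the durable transitions) and the hindsight argument for failed deletes culminating in \cref{lm:linkfree:hind-delete,th:linkfree:voided-voidabile}, which rely delicately on the cache-line granularity of \eqref{axiom:nvo-cl} and on the flush-placement lemmas \cref{lm:flush-ins-before-ret,lm:flush-del-before-ret}. For the final theorem itself, the only care needed is to verify the hypotheses of \cref{th:master-full} are met in the right order and that the quotient manoeuvre of \cref{rm:spec-determinism} is invoked so the determinism precondition is satisfied; everything else is a direct citation of the established results.
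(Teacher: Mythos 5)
Your proposal is correct and follows essentially the same route as the paper's own proof: recovery soundness via \cref{lm:linkfree:recovery-sound}, decoupling via \cref{th:rec-decoupling}, and then an application of the General \thename\ Theorem with exactly the paper's parameters ($\vo=\ghb$, $\HBEv = G.\UReads$, the maps $\lp$, $\hres{\lp}$, $r$, $\pt$, $\PRd$), discharging each condition by citing the same results (\cref{lm:hb-implies-ghb}, \cref{th:linkfree:hres-validates}, \cref{th:linkfree:rets-persist}, \cref{th:linkfree:voided-voidabile}, \cref{th:linkfree:commuting-calls}, \cref{th:linkfree:pers-state}). Your explicit invocation of the deterministic quotient from \cref{rm:spec-determinism} is a point the paper handles only in that remark rather than restating it in the final proof, so it is a welcome precision rather than a genuine deviation.
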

\begin{proof}
  By \cref{lm:linkfree:recovery-sound} the recovery is \pre\tup{\durable,\recovered}-sound.
  By \cref{th:rec-decoupling}
  we are left to prove
  $\LibImpl[op]$ is \pre\tup{\durable,\recovered}-linearizable.
  We do so by showing all the conditions of the \masterthm\ are satisfied.
  \begin{defenum}[itemsep=.4\baselineskip]
    \item $ \dom(\pt) \subs \dom(\lp). $
      \\ By \cref{def:linkfree:pers-pt} and \cref{def:linkfree:linpt}.
\item $
        \A \vec{e} \in {\enum[G.E]{\ghb}}.
          \dom(\hres{\lp}(\vec{e})) =
            \PRd \setminus \dom(\lp).
        $
      \\ By \cref{def:linkfree:linpt} and \cref{def:linkfree:hres}.
\item $
          \A c \in \PRd.\;
            \Delta(\callOf(c),r(c)) \subs \idOn{\AbsState}.
        $
\\ By \cref{def:linkfree:pers-pt}.
\item $
        \ev{\UWrites} \seq \hb \seq \ev{\UWrites}
          \subs \ghb
      $
      \\ By \cref{lm:hb-implies-ghb}.
\item $
        \A c \in \dom(\lp).
          \E e_1,e_2 \in {G.\EvOfCid{c}}.
            \smash{e_1 \hb?-> \lp(c) \hb?-> e_2}.
       $
      \\ By \cref{def:linkfree:linpt} we have $\cidOf(\lp(c))=c$ so the condition
      is proven using $e_1=e_2=\lp(c)$.
\item For any~$\vec{e} \in {\enum[G.E]{\ghb}}$ and
          $c\in\dom(\hres{\lp}(\vec{e}))$:
      \begin{itemize}
      \item $
          \E i_1,i_2.
            \vec{e}(i_1), \vec{e}(i_2) \in \UWrites
            \land
            c=\cidOf(\vec{e}(i_1))=\cidOf(\vec{e}(i_2))
            \land
            i_1 \leq \hres{\lp}(\vec{e})(c) \leq i_2.
      $
      \end{itemize}
      By \cref{def:linkfree:hres,lm:linkfree:hind-delete}.
\item $\tup{\lp, \hres{\lp}, r, \ghb, \volatile}$
      \pre\recEndOf-validates~$G$.
      \\ By \cref{th:linkfree:hres-validates}.
\item $
        \cidOf(G.\Rets) \subs \dom(\pt) \union \PRd.
      $
      \\ By \cref{th:linkfree:rets-persist}.
\item For any~$c \in \dom(\lp) \setminus (\dom(\pt) \union \PRd)$,
          and all $\vec{e} \in {\enum[G.E]{\ghb}}$:
      \begin{itemize}
      \item If $
          \hres{\lp}[\vec{e}] = \vec{e}' \concat \lp(c) \concat \vec{e}''
        $ then $
            \tup{\callOf(c), r(c)}
        $ is
        \pre \h-voidable,\\
        where $
          \h = \restr{\histOf[\lp]{r}(\vec{e}'')}{(\dom(\pt) \union \PRd)}
        $.
      \end{itemize}
      By \cref{th:linkfree:voided-voidabile}.
\item For any~$c, c'\in \dom(\pt)$, either:
      \begin{itemize}
      \item $
          \tup{\callOf(c), r(c)}
            \comm{\AbsState}{\Delta}
          \tup{\callOf(c'), r(c')}
        $, or
      \item $
          \smash{\pt(c) \nvo-> \pt(c')}
          \implies
          \smash{\lp(c) \ghb-> \lp(c')}.
        $
      \end{itemize}
      By \cref{th:linkfree:commuting-calls}.
\item $
      \histOf[\pt]{r}(\enum[G.\Persisted]{\nvo})
      \in \LegalFrom[\KVS',\Delta]{\kvs}
        \implies
          \tup{\pt, r, \restr{\nvo}{G.\Persisted}, \durable}
      $ \pre\initOf-validates~$G$.
      \\ By \cref{th:linkfree:pers-state}.
\qedhere
  \end{defenum}
\end{proof}

\subsection{The \p{contains} operation}
\label{sec:contains}

\begin{wrapfigure}[15]{R}{25ex}\begin{sourcecode}[gobble=2,lineskip=-2pt]
  def contains(h, k):
    <|_,c|> = h.nxt
    while(c.key < k):
      <|_,c|> = c.nxt
    if c.key != k:
      return false @      \label{line:contains-ret-f1}@
    if c.nxt == <|1,_|>:
      flushDel(c)
      return false
    makeValid(c)
    flushIns(c)
    return true
\end{sourcecode}
   \vspace*{-.5em}
  \caption{
    A wait-free membership check operation.
  }
  \label{fig:linkfree-contains}
\end{wrapfigure}

The full algorithm includes a wait-free \p{contains} operation,
shown in \cref{fig:linkfree-contains},
traverses the list ignoring whether the nodes are marked or not.
If no node holding~$k$ was found, the operation returns \p{false}.
Otherwise, if the node holding~$k$ is marked, the node is flushed
and the operation again returns \p{false}.
If it was not marked, then the operation helps persisting the insert
and returns \p{true}.

Both the optimization of \p{find} and the \p{contains} operation
introduce the need for hindsight proofs.
Consider a \p{contains($k$)} call returning \p{false}
at line~\ref{line:contains-ret-f1}, for example.
Nowhere in the whole call, the code checks the marking bits of the
nodes it traverses.
This means that it might be traversing nodes that have been logically
removed and be reading the key of a node that is not reachable from
the head any more.
The reason this is correct is that since we start the traversal from the head,
there must have been some point after the traversal started,
where the nodes were reachable.
This kind of ``after the fact'' argument is called
hindsight linearization.

The techniques we used to handle failed \p{delete} operations
can be used to handle $\p{contains}(k)$:
we can prove that, for every $\vec{e} \in \enum[G]{\ghb}$,
between the first read $\vec{e}(i_1)$ of \p{h.nxt}
and the last read $\vec{e}(i_2)$ of a field of \p{c},
there is an index $ i_1 \leq i \leq i_2 $ such that
$ \mem{\upto{\vec{e}}{i}} \in \volatile(\kvs)$
for some $S$ with $ k \notin \dom(\kvs) $.
This provides both the definition of $\hres{\lp}(\vec{e})$ in terms of such~$i$,
and the proof of conditions~\ref{cond:hind-linpt} and \ref{cond:hind-hb}.

\end{document}